\definecolor{docnotelinkcolor}{rgb}{0,0,0.4}%
\newtheorem{mainproposition}{Proposition}
\newtheorem*{mainresult}{Main Result}
\def\section{%
  \@startsection
    {section}%
    {1}%
    {\z@}%
    {1.1cm \@plus1ex \@minus .2ex}%
    {.6cm}%
    {%
      \normalfont\small\bfseries
      \centering
    }%
}%
\def\subsection{%
  \@startsection
    {subsection}%
    {2}%
    {\z@}%
    {0.95cm \@plus1ex \@minus .2ex}%
    {.55cm}%
    {%
     \normalfont\small\bfseries
     \centering
    }%
}%
\renewcommand*{\eqref}[1]{%
  \hyperref[{#1}]{\textup{\tagform@{\ref*{#1}}}}%
}
\newcommand\Hminz{\@HHbase{\HHSym}{\mathrm{min},0}}
\newcounter{appendix}
\def\appendixlabel{%
  \setcounter{appendix}{\numexpr\value{section}-1\relax}%
  \refstepcounter{appendix}%
  \label}
\apptocmd\appendix{%
}
\def\qq{\textquote}
\newcommand\bibalias[2]{%
  \@namedef{bibali@#1}{#2}%
}
\newtoks\biba@toks
\let\bibalias@oldcite\cite
\def\cite{%
  \@ifnextchar[{%
    \biba@cite@optarg%
  }{%
    \biba@cite{}%
  }%
}
\newcommand\biba@cite@optarg[2][]{%
  \biba@cite{[#1]}{#2}%
}
\newcommand\biba@cite[2]{%
  \biba@toks{\bibalias@oldcite#1}%
  \def\biba@comma{}%
  \def\biba@all{}%
  \@for\biba@one@:=#2\do{%
    \edef\biba@one{\expandafter\@firstofone\biba@one@\@empty}%
    \@ifundefined{bibali@\biba@one}{%
      \edef\biba@all{\biba@all\biba@comma\biba@one}%
    }{%
      \PackageInfo{bibalias}{%
        Replacing citation `\biba@one' with `\@nameuse{bibali@\biba@one}'
      }%
      \edef\biba@all{\biba@all\biba@comma\@nameuse{bibali@\biba@one}}%
    }%
    \def\biba@comma{,}%
  }%
  %
  %
  \immediate\write\@auxout{\noexpand\bgroup\noexpand\renewcommand\noexpand\citation[1]{}\noexpand\citation{#2}\noexpand\egroup}%
  %
  %
  \edef\biba@tmp{\the\biba@toks{\biba@all}}%
  \biba@tmp%
}
\begin{document}

\title{Fundamental work cost of quantum processes}

\author{Philippe Faist}
\email[]{phfaist@caltech.edu}
\affiliation{Institute for Theoretical Physics, ETH Zurich, 8093 Switzerland}
\affiliation{Institute for Quantum Information and Matter, Caltech,
  Pasadena CA, 91125 USA}
\author{Renato Renner}
\email[]{renner@phys.ethz.ch}
\affiliation{Institute for Theoretical Physics, ETH Zurich, 8093 Switzerland}

\date{May 1, 2018}

\begin{abstract}
  Information-theoretic approaches provide a promising avenue for extending the
  laws of thermodynamics to the nanoscale.
  Here, we provide a general fundamental lower limit, valid for systems with an
  arbitrary Hamiltonian and in contact with any thermodynamic bath, on the work
  cost for the implementation of any logical process.  This limit is given by a
  new information measure---the coherent relative entropy---which accounts for
  the Gibbs weight of each microstate.
  The coherent relative entropy enjoys a collection of natural properties
  justifying its interpretation as a measure of information, and can be
  understood as a generalization of a quantum relative entropy difference.
  As an application, we show that the standard first and second laws of
  thermodynamics emerge from our microscopic picture in the macroscopic limit.
  Finally, our results have an impact on understanding the role of the observer
  in thermodynamics: Our approach may be applied at any level of knowledge---for
  instance at the microscopic, mesoscopic or macroscopic scales---thus providing
  a formulation of thermodynamics that is inherently relative to the observer.
  We obtain a precise criterion for when the laws of thermodynamics can be
  applied, thus making a step forward in determining the exact extent of the
  universality of thermodynamics and enabling a systematic treatment of
  Maxwell-demon-like situations.
\end{abstract}

\maketitle

\section{Introduction}

Thermodynamics enjoys an extraordinary universality---applying to heat engines,
chemical reactions, electromagnetic radiation, and even to black holes.  Thus, we are naturally
led to further apply it to small-scale quantum systems.
In such a context, the information content of a system plays a key role:
Landauer's principle states that logically irreversible information processing
incurs an unavoidable thermodynamic cost~\cite{Landauer1961_5392446Erasure}.
Landauer's principle has generated a new line of research in which information
and thermodynamic entropy are treated on an equal
footing~\cite{Bennett1982IJTP_ThermodynOfComp}, in turn providing a resolution
to the paradox of Maxwell's demon~\cite{Bennett2003_NotesLP}.
In the context of statistical mechanics, a significant effort has also been
made to elucidate the role of the second law~\cite{%
  Lenard1978_thermodynamical,%
  Kurchan2000arXiv_fluctuation,%
  Tasaki2000arXiv_Jarzynski,%
  Tasaki2000arXiv_dynamics,%
  Ikeda2015AoP_unitary,%
  Iyoda2017PRL_manybody}.  Statistical mechanics has further provided important
contributions to understanding the interplay between information and
thermodynamics~\cite{%
  Jaynes1957PR_InfThStatMech,Jaynes1957PR_InfThStatMech2,%
  Shizume1995_HeatGeneration,%
  Piechocinska2000PRA,%
  Popescu2006NPhys_entanglement,%
  Hanggi2009_brownian,%
  Anders2010_LPQuantDomain,%
  Sagawa2012PRL_FluctThm,%
  Goold2015PRL_nonequilibrium}, with works studying the energy requirements of
information processing~\cite{Sagawa2008PRL_discrete,Sagawa2009PRL_minimal,%
  Parrondo2015NPhys_thermodynamics}.
This has also led to an improved understanding of nanoengines and
information-driven thermodynamic
devices~\cite{Szilard1929ZeitschriftFuerPhysik,%
  Scovil1959_masers,%
  Geusic1967_quantum,%
  Alicki1979_davies,%
  Geva1992_classical,%
  Lloyd1997_demon,%
  Lloyd2000_ultimate,%
  Linden2010PRL_SmallFridge,%
  Abah2012PRL_single,%
  Roulet2016PRE_rotor}, %
paving the way for experimental demonstrations~\cite{Baugh2005_algocool,%
  Toyabe2010NPhys_experimental,%
  Vidrighin2016PRL_photonic}.

When studying the thermodynamics of small-scale quantum systems, it is
particularly relevant to define the thermodynamic framework precisely.  A
customary approach, the resource theory approach, is to investigate the state
transformations that are possible after imposing a restriction on the types of
elementary physical operations that are allowed.
Such frameworks have enabled us to understand general conditions under which it is
possible to transform one state into another%
~\cite{Janzing2000_cost,%
  Horodecki2003PRA_NoisyOps,%
  Brandao2013_resource,%
  Horodecki2013_ThermoMaj,%
  Skrzypczyk2014NComm_individual,%
  Gour2015PR_resource,%
  Brandao2015PNAS_secondlaws,%
  Brandao2015PRL_resource} and %
to study erasure and work extraction in the single-shot
regime~\cite{Dahlsten2011NJP_inadequacy,%
  Aberg2013_worklike,%
  Egloff2015NJP_measure%
}.  Such results have been extended to the case where quantum side information
is available~\cite{delRio2011Nature,%
  delRio2014arXiv_relative}, to situations with multiple thermodynamic
reservoirs~\cite{Barnett2013_beyond,%
  YungerHalpern2016PRE_beyond,%
  PerarnauLlobet2016NJP_GGE,%
  Lostaglio2017NJP_noncommutativity,%
  Guryanova2016NatComm_multiple,%
  YungerHalpern2016NatComm_NATSandNATO}, and to the case of a finite bath
size~\cite{Reeb2014NJP_improved,Richens2017arXiv_finite,%
  Tajima2017PRE_finite,Ito2016arXiv_performance}.  The role of coherence and
catalysis has been underscored~\cite{Aberg2014PRL_catalytic,%
  Cwiklinski2015PRL_limitations,%
  NellyNg2015NJP_limits,%
  Lostaglio2015PRX_coherence,%
  Lostaglio2015NC_beyond,%
  Korzekwa2016NJP_extraction,%
  Winter2016PRL_coherence,%
  Dana2017PRA_beyond,%
  Cirstioiu2017arXiv_gauge}, %
the effect of correlations studied~\cite{Oppenheim2002PRL_thermodynamical,%
  Bruschi2015PRE_creating,%
  Majenz2017PRL_catdecoupling,%
  Lostaglio2015PRL_absence,%
  Bera2016arXiv_laws}, and the efficiency of nanoengines
investigated~\cite{Gallego2014NJP_correlated,%
  Woods2015arXiv_more,%
  Hayashi2017PRA_measurement,%
  Ito2016arXiv_performance}. %
Fully quantum fluctuation relations~\cite{Aberg2016arXiv_fluctuation} and a
second-law equality~\cite{Alhambra2016PRX_equality} have been derived, and
further connections to the recoverability of quantum information have been
exhibited~\cite{Wehner2015arXiv_reversibility}.  Furthermore, fully quantum state
transformations were
characterized~\cite{Buscemi2017PRA_relative,Gour2017arXiv_entropic}.  We refer
to ref.~\cite{Goold2016JPA_review} for a more comprehensive review covering these
approaches to quantum information thermodynamics.

Our main result is a fundamental limit to the work cost of any logical process
implemented on a system with any Hamiltonian and in contact with any type of
thermodynamic reservoir.
It accounts for the necessary changes in the energy level populations in
the system, as well as for the thermodynamic cost of resetting any information
that needs to be discarded by the logical process.  It is valid for a single
instance of the process and ignores unlikely events, thus capturing statistical
fluctuations of the work cost.

Our thermodynamic framework is specified by imposing a restriction on the
operations which can be carried out, along with introducing a battery system
allowing us to
invest resources to overcome this restriction.  The restriction we consider here
is to impose that the allowed operations must be \emph{Gibbs-preserving maps},
that is, mappings for which the thermal state is a fixed point.  This framework
is a natural generalization of the setup in ref.~\cite{Faist2015NatComm} and has
close ties to resource theory approaches~\cite{Horodecki2003PRA_NoisyOps,%
  Horodecki2013_ThermoMaj,%
  Brandao2015PNAS_secondlaws}.  Gibbs-preserving maps are the most generous set
of physical evolutions that can be allowed for free, in the sense that if any
non-Gibbs-preserving map is allowed for free, arbitrary work can be extracted,
rendering the framework trivial.  Since in most existing thermodynamic
frameworks the allowed free operations preserve the thermal state, our bound
still holds in other standard settings such as the framework of thermal
operations~\cite{Horodecki2013_ThermoMaj,%
  Brandao2015PNAS_secondlaws}.  (However, if one considers catalytical
processes, more general transformations can be carried out, and hence additional
care has to be taken in order to apply our framework, e.g., by including the
catalyst explicitly as part of the process~\cite{Brandao2015PNAS_secondlaws,%
  NellyNg2015NJP_limits,Lostaglio2015PRL_absence,%
  Wehner2015arXiv_reversibility}.)
As a battery system, we consider an \emph{information battery}, that is, a
memory register of qubits that are all individually either in a pure state or
in a maximally mixed state.  The pure qubits are a resource that can be
invested in order to implement logical processes that are not Gibbs preserving.

Our main result is expressed in terms of a new purely information-theoretic
quantity, the \emph{coherent relative entropy}.  The coherent relative entropy
observes several natural properties, such as a data processing inequality,
invariance under isometries, and a chain rule, justifying its interpretation as
an entropy measure.  It is a generalization of both the min- and max-relative
entropy as well as the conditional min- and max-entropy.  In the asymptotic
limit of many independent repetitions of the process (the i.i.d.\@ limit), the
coherent relative entropy converges to the difference of the usual quantum
relative entropy of the input state and the output state relative to the Gibbs
state.  Our quantity hence adds structure to the collection of entropy measures
forming the smooth entropy framework~\cite{PhDRenner2005_SQKD,%
  Datta2009IEEE_minmax,%
  BookTomamichel2016_Finite}.

In fact, our result may be phrased in purely information-theoretic terms,
abstracting out physical notions such as energy or temperature in an operator
$\Gamma$, which may be interpreted as assigning abstract ``weights'' to
individual quantum states.  In the case of a system in contact with a heat bath,
these weights are simply the Gibbs weights, where at inverse temperature $\beta$,
the value $e^{-\beta E}$ is assigned to each energy level of energy $E$.
Our main result then quantifies how many pure qubits need to be invested, or how
many pure qubits may be distilled, while carrying out a specific logical process
given as a completely positive, trace-preserving map, subject to the restriction
that the implementation must globally preserve the joint $\Gamma$ operator of
the system and the battery.
In this picture, the coherent relative entropy intuitively measures the amount
of information ``forgotten'' by the logical process, conditioned on the output
of the process, and counted relative to the ``weights'' encoded in the $\Gamma$
operator.

Our framework can be applied to the macroscopic limit, to study transitions
between thermodynamic states of a large system.  (For instance, an isolated gas
in a box that is in a microcanonical state may undergo a process that brings
the gas to another microcanonical state of different energy and volume.)
Remarkably, it turns out that the work cost of any mapping relating two
thermodynamic states, as given by the coherent relative entropy, is equal to the
difference of a potential evaluated on the input and the output state,
regardless of the details of the logical process.  For an isolated system, we
show that this potential is precisely the thermodynamic entropy.  By coupling
the system to another system that plays the role of a piston, i.e., that is
capable of reversibly furnishing work to the system, we recover the standard
second law of thermodynamics relating the entropy change of the system to the
dissipated heat.

Our framework naturally treats thermodynamics as a subjective theory, where a
system can be described from the viewpoint of different observers.  One may thus
account for varying levels of knowledge about a quantum system.  This feature
allows us to systematically analyze Maxwell-demon-like situations.  Furthermore, we
find a criterion that certifies that the laws of thermodynamics hold in a
coarse-grained picture.  For instance, this criterion is not fulfilled in the
case of Maxwell's demon, signaling that a naive application of the laws of
thermodynamics to the gas may be disrupted by the presence of the demon.  We
hence obtain a precise notion of when the laws of thermodynamics can be applied,
contributing to the long-standing open question of the exact extent of the
universality of thermodynamics.

The results presented in this paper have been, to a large extent, reported in the
recent Ph.D. thesis of one of the authors~\cite{PhDPhF2016}.

The remainder of the paper is structured as follows.  In
\autoref{main-sec:the-framework}, we present the general setup in which our results
are derived. %
In \autoref{main-sec:results}, %
we explain our main result, the work cost of any process in contact with any
type of reservoir (\autoref{main-sec:results-main}); we then provide a collection of
properties of our new entropy measure (\autoref{main-sec:results-cohrelentr}), a
study of a special class of states whose properties make them suitable ``battery
states'' for storing extracted work (\autoref{main-sec:results-batteries}), a discussion
of how the macroscopic laws of thermodynamics emerge from our microscopic
considerations (\autoref{main-sec:results-macro}), and an analysis of how to relate
the views of different observers in our framework (\autoref{main-sec:results-obs}).
\autoref{main-sec:discussion} concludes with a discussion and an outlook.

\section{A framework of restricted operations}
\label{main-sec:the-framework}

Consider a system $S$ described by a Hamiltonian $H_S$.  In the framework of
Gibbs-preserving maps, an operation $\Phi(\cdot)$ is forbidden if it does not
satisfy $\Phi(e^{-\beta H_S}/Z) = e^{-\beta H_S}/Z$, where $\beta$ is a given
fixed inverse temperature and $Z=\tr[e^{-\beta H_S}]$. In other words, $\Phi(\cdot)$ is
forbidden if it does not preserve the thermal state.  Now, observe that the
condition on $\Phi(\cdot)$ depends on $\beta$ and $H_S$ only via the thermal
state, so we can rewrite the condition in a more general, but abstract, way as
follows: An operation $\Phi(\cdot)$ is forbidden if it does not preserve some
given fixed operator $\Gamma$, that is, if it does not satisfy
$\Phi(\Gamma) = \Gamma$.  We trivially recover Gibbs-preserving maps by setting
$\Gamma=e^{-\beta H_S}$.  For technical reasons and for convenience, we choose
to loosen the condition on $\Phi$ from being trace preserving to being trace
nonincreasing; correspondingly, we only require  that
$\Phi(\Gamma)\leqslant\Gamma$, instead of demanding strict equality.
By enlarging the class of allowed operations, we
can only obtain a more general bound.  The advantage of this abstract version of
the Gibbs-preserving-maps model is that our framework and its corresponding
results may be potentially applied to any setting where a restriction of the
form $\Phi(\Gamma)\leqslant\Gamma$ applies,
for any given $\Gamma$ which does not
necessarily have to be related to a Gibbs state.
The way $\Gamma$ should be defined is determined by which restriction of the
form $\Phi(\Gamma)\leqslant\Gamma$ makes sense to require in the particular
setting considered.
Finally, it proves convenient to consider non-normalized $\Gamma$ operators
(this becomes especially relevant if we consider different input and output
systems).  For instance, in the case of a system with Hamiltonian $H$ in contact
with a heat bath at inverse temperature $\beta$, the trace of
$\Gamma=e^{-\beta H}$ actually encodes the canonical partition function of the
system.

Our framework is defined in its full generality as follows.  To each system $S$
corresponds an operator $\Gamma_S$, which may be any positive semidefinite
operator.
We then define as \emph{free operations} those completely positive,
trace-nonincreasing maps $\Phi_{A\to B}$, mapping operators on a system $A$ to
operators on another system $B$, which satisfy
\begin{align}
  \Phi_{A\to B}(\Gamma_A) \leqslant \Gamma_B\ .
  \label{main-eq:infframework-model-admissible-operations-definition}
\end{align}
One may think of the $\Gamma$ operator as assigning to each state in a certain
basis a ``weight'' characterizing how ``useless'' it is.  As a convention, if
$\Gamma_S$ has eigenvalues equal to zero, then the corresponding eigenstates are
considered to be impossible to prepare---these states will never be observed.
In the following, a map
obeying~\eqref{main-eq:infframework-model-admissible-operations-definition} will be
referred to as a \emph{$\Gamma$-sub-preserving map}.

As mentioned above, in the case of a system $S$ with Hamiltonian $H_S$ in
contact with a single heat bath at inverse temperature $\beta$, we essentially
recover the usual model of Gibbs-preserving maps by setting
$\Gamma=e^{-\beta H_S}$.
In the case of multiple conserved charges such as a Hamiltonian $H_S$, number
operator $N_S$, etc., we recover the relevant Gibbs-preserving maps model by
setting $\Gamma=e^{-\beta (H_S - \mu N_S + \ldots )}$, with the corresponding
chemical potentials, as expected; furthermore, the physical charges do not
have to commute~\cite{Guryanova2016NatComm_multiple,%
  YungerHalpern2016NatComm_NATSandNATO}.

Our framework is designed to be as tolerant as possible (to the extent that our
allowed operations are ultimately a set of quantum channels), so as to result in
the strongest possible fundamental limit.  We start with this observation in the
case of thermodynamics with a single heat bath: If we allow any physical
evolution for free that does not preserve the thermal state, then we may create
an arbitrary number of copies of a nonequilibrium quantum state for free;
however, this renders our theory trivial since usual thermodynamical models allow us
to extract work from many copies of a nonequilibrium state.  Accordingly,
quantum thermodynamics models that can be written as a set of allowed physical
maps (such as thermal operations) necessarily have the Gibbs state as fixed
point, ensuring that our fundamental limit applies for those models as well.
We note that models in which catalysis is permitted allow for more general state
transformations~\cite{Brandao2015PNAS_secondlaws,%
  NellyNg2015NJP_limits,Lostaglio2015PRL_absence,%
  Wehner2015arXiv_reversibility}, exploiting the fact that, for a forbidden
transition $\sigma\not\to\rho$, there might exist some state $\zeta$ such that
$\sigma\otimes\zeta\to\rho\otimes\zeta$ (where $\zeta$ may be chosen
suitably depending on $\sigma$ and $\rho$).  In order to apply our framework in
such a context, we can consider the catalyst explicitly.  For instance, in the
context of catalytic thermal operations~\cite{Brandao2015PNAS_secondlaws}, after
the catalyst has been included in the picture, the physical evolution that is
applied is a thermal operation and thus has to be Gibbs preserving.  Ultimately,
the correct choice of framework depends on the underlying physical model:
For instance, in a macroscopic isolated gas, the whole system evolves according
to an energy-preserving unitary, and under suitable independence assumptions, the
evolution of an individual particle is well modeled by a thermal operation;
however, other situations might warrant the inclusion of a catalyst, for
instance, in a paranoid adversarial setting in which an eavesdropper may
manipulate a thermodynamic system.  In the first case, our ultimate limits apply
straightforwardly, whereas in the second, one would need to include the catalyst
explicitly.

Work storage systems are often modeled explicitly but are mostly equivalent in
terms of how they account for work~\cite{Bennett1982IJTP_ThermodynOfComp,%
  BookFeynmanLecturesOnComputation1996,%
  Horodecki2013_ThermoMaj,%
  Aberg2014PRL_catalytic,%
  Skrzypczyk2014NComm_individual,%
  Faist2015NatComm}.  Among these, the \emph{information battery} is easily
generalized to our abstract setting.
An information battery is a register $A$ of $n$ qubits whose $\Gamma$ operator
is $\Gamma_A=\Ident_A$. (If $\Gamma_A=e^{-\beta H_A}$ for an inverse
temperature $\beta$ and a Hamiltonian $H_A$, the requirement that
$\Gamma_A=\Ident_A$ is fulfilled by choosing the completely degenerate
Hamiltonian $H_A=0$.)  The register starts in a state where $\lambda_1$ qubits
are maximally mixed and $n-\lambda_1$ qubits are in a pure state.  In the final
state, we require that $\lambda_2$ qubits are maximally mixed and $n-\lambda_2$
are in a pure state.  The difference $\lambda = \lambda_1-\lambda_2$ is the
number of pure qubits extracted or ``distilled.''  In this way, we may invest a
number of pure qubits in order to enable a process that is not a free
operation, or we may try to extract pure qubits from a process that is already
a free operation.

Depending on the physical setup, the $\lambda$ pure battery qubits can
themselves be converted explicitly to some physical resource, such as mechanical
work.  In the case where we have access to a single heat bath at temperature
$T$, a pure qubit can be reversibly converted to and from $kT\ln2$ work using a
Szil\'ard engine~\cite{Szilard1929ZeitschriftFuerPhysik}, where $k$ is
Boltzmann's constant; thus, a process from which we can extract $\lambda$ pure
qubits is a process from which we can extract $\lambda\cdot kT\ln(2)$ work using
the heat bath.
More generally, we may replace the information battery entirely by other battery
models, such as corresponding generalizations to our framework of the work bit
(the ``wit'')~\cite{Brandao2015PNAS_secondlaws}, or the ``weight
system''~\cite{Skrzypczyk2014NComm_individual,Aberg2014PRL_catalytic}.  These
work storage models are known to be
equivalent~\cite{Brandao2015PNAS_secondlaws}; the equivalence persists in our
framework, with a suitable generalization of the ``extracted resource''
$\lambda$.
In the presence of several physical conserved charges, and corresponding
thermodynamic baths, the number $\lambda$ of pure qubits extracted acts as a
common currency that allows us to convert between the different resources. Hence,
a number $\lambda$ of extracted pure qubits may be stored in different forms of
physical batteries, corresponding to different forms of work, such as
\emph{chemical work}~\cite{Guryanova2016NatComm_multiple,%
  YungerHalpern2016NatComm_NATSandNATO}.
Hence, the quantity $\lambda$ should be thought of as a dimensionless value,
expressed in number of qubits, characterizing the ``extracted resource value''
of the logical process independently of which type of battery is actually used
in the implementation, in the same spirit as the \emph{free entropy} of
ref.~\cite{Guryanova2016NatComm_multiple}, and bearing some similarity to
currencies in general resource
theories~\cite{Coecke2016IC_mathematical,Kraemer2016arXiv_currencies}.

The main question we address may thus be reduced to the following form
(\autoref{main-fig:setup}).
\begin{figure}
  \centering
  \includegraphics{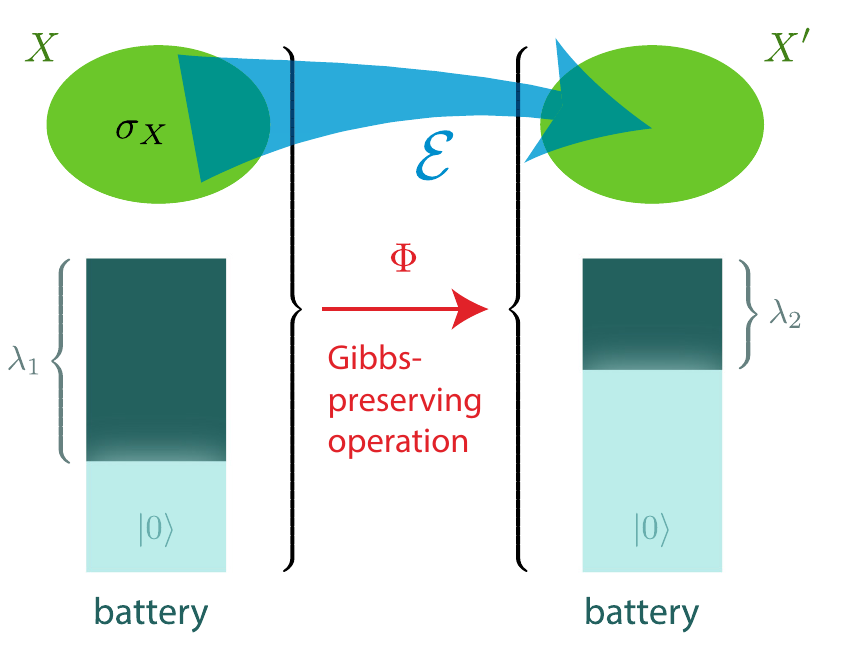}
  \caption{Implementation of a logical process $\mathcal{E}$ (any quantum
    process) using thermodynamic operations.  The process acts on $X$ and has
    output on $X'$, and is implemented by acting on the system and the battery
    with a joint Gibbs-preserving operation.  The battery starts with a depletion
    state $\lambda_1$ and finishes with a depletion state $\lambda_2$.  The
    overall extracted work is given by the difference $\lambda_1-\lambda_2$.}
  \label{main-fig:setup}
\end{figure}
Given operators $\Gamma_X,\Gamma_{X'}\geqslant 0$, an input state $\sigma_X$,
and a logical process $\mathcal{E}_{X\to X'}$ (that is, a trace-nonincreasing,
completely positive map), the task is to find the maximum number of qubits that
can be extracted, or the minimum number of qubits that need to be invested, in
order to implement the logical process on the given input state.  Note that we
require the correlations between the input and the output to match those specified by 
$\mathcal{E}_{X\to X'}$, a condition that is not equivalent to just requiring
that the given input state $\sigma_X$ is transformed into the given output state
$\mathcal{E}_{X\to X'}(\sigma_X)$.  Equivalently, we require that the
implementation acts as the process
$(\mathcal{E}_{X\to X'}\otimes\operatorname{id}_{R_X})$ on a purified state
$\ket{\sigma}_{XR_X}$ of the input, where $\operatorname{id}_{R_X}$ denotes the
identity process on $R_X$.

Finally, we ignore improbable events with total probability $\epsilon$, which is
necessary in order to obtain meaningful physical
results~\cite{Alhambra2016PRX_probability}.  Indeed, in textbook thermodynamics
when calculating the work cost of compressing an ideal gas, for instance, one
ignores the exceedingly unlikely event where all gas particles conspire to hit
against the piston at much greater force than on average, a situation that
would require more work for the compression but that happens with
overwhelmingly negligible probability.  For our purposes, we may optimize the
zero-error work cost over states that are $\epsilon$-approximations of the
required state~\cite{Faist2015NatComm}, which is a standard approach in quantum
information and cryptography~\cite{PhDRenner2005_SQKD,PhDTomamichel2012}.

At this point, it is useful to introduce the notion of the \emph{process matrix}
associated with the pair $(\mathcal{E}_{X\to X'},\sigma_X)$ of the logical
process and input state.  First, we define a reference system $R_X$ of the same
dimension as $X$ and choose some fixed bases $\{\ket k_X\}$ and
$\{\ket k_{R_X} \}$ of $X$ and $R_X$.  Then, we define the \emph{process matrix}
of the pair $(\mathcal{E}_{X\to X'},\sigma_X)$ as the bipartite quantum state
$\rho_{X'R_X} = (\mathcal{E}_{X\to
  X'}\otimes\operatorname{id}_{R_X})(\proj{\sigma}_{X:R_X})$, where
$\ket{\sigma}_{X:R_X} = \sigma_X^{1/2}\,(\sum \ket{k}_X\otimes\ket{k}_{R_X})$.
The process matrix corresponds to the Choi matrix of $\mathcal{E}_{X\to X'}$,
yet it is ``weighted'' by the input state $\sigma_X$ in the sense that the reference
state is $\sigma_{XR_X}$ instead of a maximally entangled state.  The process
matrix is in one-to-one correspondence with the pair
$(\mathcal{E}_{X\to X'}, \sigma_X)$ except for the part of
$\mathcal{E}_{X\to X'}$ that acts outside the support of $\sigma_X$; i.e., the
specification of $\rho_{X'R_X}$ uniquely determines $\sigma_X$ as well as the
logical process $\mathcal{E}_{X\to X'}$ on the support of $\sigma_X$.  The
reduced states $\sigma_X$ and $\sigma_{R_X}$ of $\ket\sigma_{XR_X}$ are related
by a partial transpose, $\sigma_{R_X} = \sigma_X^T$.  Intuitively, the reference
system $R_X$ may be thought of as a ``mirror system'' which ``remembers'' what
the input state to the process was.

As a further remark,
one might be worried that the relaxation of the set
of allowed operations from $\Gamma$-preserving and trace-preserving maps to
$\Gamma$-sub-preserving and trace-nonincreasing maps is too drastic. Indeed,
while yielding a valid bound, the relaxed set of operations is unphysical and we
might thus obtain a looser bound than necessary.
In fact, this is not the case.  Rather, trace-nonincreasing,
$\Gamma$-sub-preserving processes are a technical convenience, which allows for
more flexibility in the characterization of what the process effectively does in
the situations of interest to us while ignoring other irrelevant situations;
yet, ultimately, we show that an equivalent implementation can be carried out as a
single trace-preserving, $\Gamma$-preserving map.
For instance, consider a box separated into two equal-volume compartments, one
of which contains a single-particle gas (a setup known as a Szil\'ard
engine~\cite{Szilard1929ZeitschriftFuerPhysik}).  The particle may be in one of
two states, $\ket{\mathrm{L}},\ket{\mathrm{R}}$, representing the particle being
located in either the left or right compartment. If the particle is located in the
left compartment, then work can be extracted by attaching a piston to the
separator and letting the gas expand in contact with a heat bath. Yet, if we
know the particle to be initially in the left compartment, it makes no
difference what the process would have done had the particle been in the right
compartment---that situation is irrelevant.  Hence, we may define the
corresponding ``effective process'' as the trace-nonincreasing map, which maps
$\ket{\mathrm{L}}$ to the maximally mixed state (allowing us to extract work) and
which maps $\ket{\mathrm{R}}$ to the zero vector.  Evidently, the full actual
physical implementation is a trace-preserving process, yet it is convenient to
represent the ``relevant part'' of this process using a trace-nonincreasing map.
Crucially, both mappings have the same process matrix, given that the input state
is $\ket{\mathrm L}$.
This picture is justified on a formal level: We show that any
trace-nonincreasing, $\Gamma$-sub-preserving map $\tilde\Phi$ can be dilated in
the following way. There exists a trace-preserving, $\Gamma$-preserving map over
an additional ancilla whose process matrix is as close to a given $\rho_{X'R_X}$
as the process matrix of $\tilde\Phi$ combined with a transition on the ancilla
between two eigenstates of the $\Gamma$ operator
(\autoref{prop:dilation-of-Gsp-to-Gp} in the Appendix).

\section{Results}
\label{main-sec:results}

\subsection{Fundamental work cost of a process}
\label{main-sec:results-main}

Consider two systems $X$ and $X'$ with corresponding operators
$\Gamma_X$ and $\Gamma_{X'}$, respectively, as described above and as imposed by the
appropriate thermodynamic bath~\cite{Barnett2013_beyond,%
  YungerHalpern2016PRE_beyond,%
  Guryanova2016NatComm_multiple,%
  YungerHalpern2016NatComm_NATSandNATO}.
We consider any input state $\sigma_X$ as well as any logical process
$\mathcal{E}_{X\to X'}$, i.e., any completely positive, trace-preserving map.
With a reference system $R_X$ of the same dimension as $X$, which purifies the
input state as $\ket\sigma_{XR_X}$, the logical process and the input state
jointly define the process matrix
$\rho_{X'R_X} =
(\mathcal{E}_{X\to{}X'}\otimes\operatorname{id}_{R_X})(\sigma_{XR_X})$.

Our main result is phrased in terms of the \emph{coherent relative entropy},
defined as
\begin{align}
  \hat{D}_{X\to X'}^\epsilon(\rho_{X'R_X} \Vert \Gamma_X, \Gamma_{X'})
  ~=~
  \max_{\substack{
  \mathcal{T}(\Gamma_X) \leqslant 2^{-\lambda}\Gamma_{X'}\\
  \mathcal{T}(\sigma_{XR_X}) \approx_\epsilon \rho_{X'R_X}}} \lambda\ ,
  \label{main-eq:coh-rel-entr-optimization-def}
\end{align}
where the optimization ranges over completely positive, trace-nonincreasing maps
$\mathcal{T}_{X\to X'}$.
The notation `$\approx_\epsilon$' signifies proximity of the quantum states in
terms of the \emph{purified distance}, a distance measure derived from the
fidelity of the quantum states related to the ability to distinguish the two
states by a measurement~\cite{Tomamichel2010IEEE_Duality,%
  PhDTomamichel2012,%
  BookTomamichel2016_Finite}, which is closely related to the \emph{quantum
  angle}, \emph{Bures distance} and \emph{infidelity} distance
measures~\cite{BookBengtssonZyczkowski2006_Geometry,%
  BookNielsenChuang2000}.

The definition~\eqref{main-eq:coh-rel-entr-optimization-def} is independent of which
purification $\ket\sigma_{XR_X}$ is chosen on $R_X$, noting that $\rho_{X'R_X}$
also depends on this choice.
Furthermore, we use the shorthand
$\hat{D}_{X\to X'}(\rho_{X'R_X} \Vert \Gamma_X, \Gamma_{X'}) :=
\hat{D}_{X\to{}X'}^{\epsilon=0}(\rho_{X'R_X} \Vert \Gamma_X, \Gamma_{X'})$.

At this point, we may formulate our main contribution:
\begin{mainresult}
  The optimal implementation of the process $\mathcal{E}_{X\to X'}$ on the input
  state $\sigma_X,$ with free operations acting jointly on the system $X$ and an
  information battery, can extract a number $\lambda_{\mathrm{optimal}}$ of pure
  qubits given by the coherent relative entropy,
  \begin{align}
    \label{main-eq:main-result-qubits}
    \lambda_{\mathrm{optimal}}
    = \hat{D}_{X\to X'}^\epsilon(\rho_{X'R_X} \Vert \Gamma_X, \Gamma_{X'})\ .
  \end{align}
  If $\lambda_{\mathrm{optimal}}<0$, then the implementation needs to invest at
  least $-\lambda_{\mathrm{optimal}}$ pure qubits.
\end{mainresult}

The resources required to carry out the process, counted in terms of
$\lambda_{\mathrm{optimal}}$ pure qubits, may be converted into physical work.
For instance, if we have access to a heat bath at temperature $T$, we may
convert each pure qubit into $kT\ln(2)$ work and vice versa, and thus the work
extracted by an optimal implementation of the process is
\begin{align}
  \label{main-eq:main-result}
  W = kT\ln(2)\cdot
  \hat{D}_{X\to X'}^\epsilon(\rho_{X'R_X} \Vert \Gamma_X, \Gamma_{X'})\ .
\end{align}
In fact, it is not necessary to implement the process using the information
battery at all, and the resources may be directly supplied by a variety of other
battery models.  The work can even be supplied by a macroscopic pistonlike
system, as we will see later.

Here, we provide  the main technical ingredients to understand the idea of the
proof of our main result, while deferring details to
\autoref{appx:sec:properties-of-the-framework} and
\autoref{appx:sec:coh-rel-entr-def-and-props}.

A central step in our proof is a characterization of how much battery charge
needs to be invested in order to implement exactly any completely positive,
trace-nonincreasing map $\mathcal{T}_{X\to X'}$.  Such maps are those over which
we optimize in~\eqref{main-eq:coh-rel-entr-optimization-def} to define the coherent
relative entropy.
The work yield, or negative work cost, of performing $\mathcal{T}_{X\to X'}$
with $\Gamma$-sub-preserving processes using an information battery is given by
``how $\Gamma$-sub-preserving'' the process is:
\begin{mainproposition}
  \label{main-prop:main-equiv-battery-models}
  Let $\mathcal{T}_{X\to X'}$ be a completely positive, trace-nonincreasing map
  and let $y\in\mathbb{R}$.  Then, the following are equivalent:
  \begin{enumerate}[label=(\alph*)]
  \item The map $\mathcal{T}_{X\to X'}$ satisfies
    \begin{align}
      \mathcal{T}_{X\to X'}(\Gamma_X) \leqslant 2^{-y}\,\Gamma_{X'}\ ;
    \end{align}
  \item For a large enough battery $A$ (with $\Gamma_A=\Ident_A$) and for any
    $\lambda_1,\lambda_2\geqslant 0$ such that
    $\lambda_1-\lambda_2 \leqslant y$, there exists a trace-nonincreasing,
    $\Gamma$-sub-preserving map $\Phi_{XA\to X'A}$ satisfying for all
    $\omega_X$,
    \begin{multline}
      \Phi_{XA\to X'A}\bigl(
      \omega_X \otimes \bigl(2^{-\lambda_1}\Ident_{2^{\lambda_1}}\bigr)
      \bigr)
      \\
      = \mathcal{T}_{X\to X'}(\omega_X) \otimes
      \bigl(2^{-\lambda_2}\Ident_{2^{\lambda_2}}\bigr)
      \ ,
    \end{multline}
    where $2^{-\lambda}\Ident_{2^{\lambda}}$ denotes a uniform mixed state of
    rank $2^{\lambda}$ on system $A$.
  \end{enumerate}
\end{mainproposition}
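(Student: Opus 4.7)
The plan is to prove the equivalence by an explicit construction in each direction, the main insight being that the operator inequality $\mathcal{T}(\Gamma_X) \leqslant 2^{-y}\Gamma_{X'}$ translates, essentially tautologically, into the existence of a $\Gamma$-sub-preserving dilation on the battery via a measure-and-prepare strategy.

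For (a) $\Rightarrow$ (b), I fix $\lambda_1,\lambda_2 \geqslant 0$ with $\lambda_1 - \lambda_2 \leqslant y$, and let $A$ (with $\Gamma_A = \Ident_A$) be large enough to contain orthogonal projectors $\Pi_1, \Pi_2$ of ranks $2^{\lambda_1}, 2^{\lambda_2}$ onto the supports of the input and output mixed-qubit states. I propose the dilation
\begin{align*}
  \Phi_{XA\to X'A}(\omega_{XA})
  = \mathcal{T}_{X\to X'}\!\bigl(\mathrm{tr}_A\bigl[(\Ident_X \otimes \Pi_1)\,\omega_{XA}\,(\Ident_X \otimes \Pi_1)\bigr]\bigr)
  \otimes\bigl(2^{-\lambda_2}\Pi_2\bigr)\ .
\end{align*}
This is CP and trace-nonincreasing by inspection, being a composition of a projective branch, a partial trace, $\mathcal{T}$, and tensoring with a subnormalized state. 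A direct computation using $\Pi_1^2 = \Pi_1$ and $\mathrm{tr}\,\Pi_1 = 2^{\lambda_1}$ confirms the required action on $\omega_X \otimes 2^{-\lambda_1}\Pi_1$. Applying $\Phi$ to $\Gamma_X \otimes \Ident_A$ yields $2^{\lambda_1-\lambda_2}\mathcal{T}(\Gamma_X) \otimes \Pi_2$, which by hypothesis (a) and $\lambda_1 - \lambda_2 \leqslant y$ is bounded above by $\Gamma_{X'} \otimes \Ident_A$, establishing $\Gamma$-sub-preservation.

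For (b) $\Rightarrow$ (a), I insert $\omega_X = \Gamma_X$ into the defining identity of (b) to obtain $\Phi(\Gamma_X \otimes 2^{-\lambda_1}\Pi_1) = \mathcal{T}(\Gamma_X) \otimes 2^{-\lambda_2}\Pi_2$. Combining this with $\Gamma$-sub-preservation $\Phi(\Gamma_X \otimes \Ident_A) \leqslant \Gamma_{X'} \otimes \Ident_A$ and CP monotonicity (using $\Pi_1 \leqslant \Ident_A$) gives $\mathcal{T}(\Gamma_X) \otimes 2^{-\lambda_2}\Pi_2 \leqslant 2^{-\lambda_1}\Gamma_{X'} \otimes \Ident_A$. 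Sandwiching both sides with $\langle\phi|_A\,\cdot\,|\phi\rangle_A$ for any unit vector $|\phi\rangle$ in the support of $\Pi_2$ strips the battery and yields $\mathcal{T}(\Gamma_X) \leqslant 2^{-(\lambda_1-\lambda_2)}\Gamma_{X'}$. Choosing $(\lambda_1,\lambda_2) = (y,0)$ if $y\geqslant 0$ or $(0,-y)$ if $y<0$ recovers (a) exactly.

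The proof is largely routine: the single substantive step is the verification that the candidate dilation is $\Gamma$-sub-preserving, which is precisely where hypothesis (a) enters. I anticipate only bookkeeping subtleties---ensuring $\lambda_1,\lambda_2$ are taken so that $2^{\lambda_1}, 2^{\lambda_2}$ index valid subspace dimensions, and that $A$ is chosen large enough to accommodate both projectors---both handled by the ``large enough battery'' clause of (b). The content of the proposition is really that the naive measure-and-prepare implementation is already optimally thrifty with respect to the $\Gamma$ budget.
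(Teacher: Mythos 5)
Your construction and argument are essentially the paper's. For (a)$\Rightarrow$(b), the paper routes through a more general battery equivalence, but the map it ultimately produces when specialized to the information battery is exactly your measure-and-prepare dilation $\mathcal{T}\bigl(\tr_A[\Pi_1(\cdot)\Pi_1]\bigr)\otimes 2^{-\lambda_2}\Pi_2$. For (b)$\Rightarrow$(a), your sandwiching by $\ket{\phi}_A$ is the same computation as the paper's $\tr_A[\Pi_2(\cdot)]$ step and gives the identical intermediate bound $\mathcal{T}(\Gamma_X)\leqslant 2^{-(\lambda_1-\lambda_2)}\Gamma_{X'}$.

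The one imprecision is your final step. You write that choosing $(\lambda_1,\lambda_2)=(y,0)$ (or $(0,-y)$) ``recovers (a) exactly,'' but the $\lambda_i$ must index actual uniform states of rank $2^{\lambda_i}$, so $2^{\lambda_1}$ and $2^{\lambda_2}$ must be integers. If $2^{|y|}$ is not an integer you cannot make this choice; instead you should pick a sequence $(\lambda_1^{(n)},\lambda_2^{(n)})$ with integer ranks and $\lambda_1^{(n)}-\lambda_2^{(n)}\uparrow y$, and conclude $\mathcal{T}(\Gamma_X)\leqslant 2^{-y}\Gamma_{X'}$ by continuity of the right-hand side. This is exactly the limiting argument the paper invokes, and it is a trivial fix to your write-up, not a conceptual gap.
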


\autoref{main-prop:main-equiv-battery-models} shows that if there is an allowed
operation in our framework which implements a given completely positive,
trace-nonincreasing map $\mathcal{T}$ exactly while charging the battery by an
amount $\lambda$, then the mapping must necessarily satisfy
$\mathcal{T}(\Gamma)\leqslant 2^{-\lambda}\Gamma$.  Conversely, for any
trace-nonincreasing map $\mathcal{T}$ satisfying
$\mathcal{T}(\Gamma)\leqslant2^{-\lambda}\Gamma$ for some value $\lambda$, there
exists an operation in our framework acting on the system and a battery system
which implements $\mathcal{T}$ while charging the battery by some value
$\lambda$.  This operation is a trace-nonincreasing, $\Gamma$-sub-preserving map
acting on the system and the battery.  From this operation, we can then construct
a fully $\Gamma$-preserving, trace-preserving map that implements
$\mathcal{T}$, as argued at the end of the previous section.

Our main result then exploits \autoref{main-prop:main-equiv-battery-models} in order
to answer the original question, that is, to find the optimal battery charge
extraction when implementing approximately a logical process $\mathcal{E}$ on an
input state $\sigma$.  In effect, one needs to optimize the implementation cost
over all maps $\mathcal{T}$ whose process matrix is $\epsilon$-close to the
required process matrix.  This optimization corresponds precisely to the one
carried out in the definition of the coherent relative entropy
in~\eqref{main-eq:coh-rel-entr-optimization-def}.  (If $\sigma_X$ is full rank and if
$\epsilon=0$, then necessarily $\mathcal{T} = \mathcal{E}$; in general, however,
a better candidate $\mathcal{T}$ may be found.)

\subsection{The coherent relative entropy and its properties}
\label{main-sec:results-cohrelentr}

The coherent relative entropy
$\hat{D}_{X\to X'}^\epsilon(\rho_{X'R_X} \Vert \Gamma_X, \Gamma_{X'})$ defined
in~\eqref{main-eq:coh-rel-entr-optimization-def} intuitively measures the amount of
information discarded during the process, relative to the weights represented in
$\Gamma_X$ and $\Gamma_{X'}$. It ignores unlikely events of total probability
$\epsilon$, a parameter that can be chosen freely.
Its interpretation as a measure of information is justified by the collection of
properties it satisfies, which are natural for such measures, and since it
reproduces known results in special cases.  We provide an overview of the
properties of this quantity here, and refer to
\autoref{appx:sec:coh-rel-entr-def-and-props} for the technical details.

\paragraph{Elementary properties.}
The coherent relative entropy obeys some trivial bounds.  Specifically,
\begin{align}
  &-\log_2\tr(\Gamma_{X}) - \log_2\,\norm[\big]{\Gamma_{X'}^{-1}}_\infty
  \nonumber\\[0.5ex]
  &\hspace*{3em}\leqslant~
    \hat{D}_{X\to X'}^\epsilon(\rho_{X'R_X} \Vert \Gamma_X, \Gamma_{X'})
    + \log_2(1-\epsilon^2)
    \nonumber\\[0.5ex]
  &\hspace*{3em}\leqslant~
  \log_2\,\norm[\big]{\Gamma_{X}^{-1}}_\infty + \log_2\tr(\Gamma_{X'})
    \ .
\end{align}
These bounds have a natural interpretation in the context of a single heat bath
at inverse temperature $\beta=1/(kT)$.  The extracted work may never exceed an
amount corresponding to starting in the highest energy level of the system and
finishing in the Gibbs state; similarly, it may never be less than the amount
corresponding to starting in the Gibbs state and finishing in the highest
excited energy level. (A correction is added to account for additional work
that can be extracted by exploiting the $\epsilon$ accuracy tolerance.)

Under scaling of the $\Gamma$ operators, the coherent relative entropy simply
acquires a constant shift: For any $a,b>0$,
\begin{multline}
  \hat{D}_{X\to X'}^\epsilon(\rho_{X'R_X} \Vert a\Gamma_X, b\Gamma_{X'})
  \\
  =
  \hat{D}_{X\to X'}^\epsilon(\rho_{X'R_X} \Vert \Gamma_X, \Gamma_{X'})
  + \log_2\frac{b}{a}\ .
  \label{main-eq:coh-rel-entr-scaling}
\end{multline}
In the case of a single heat bath at inverse temperature $\beta=1/(kT)$, this
property simply corresponds to the fact that if the Hamiltonians of the input
and output systems are translated by some constant energy shifts, then the
difference in the shifts should simply be accounted for in the work cost.
Indeed, if $H_X\to H_X + \Delta E_X$ and $H_{X'}\to H_{X'} + \Delta E_{X'}$,
then $\Gamma_X\to e^{-\beta\Delta E_X}\,\Gamma_X$,
$\Gamma_{X'}\to e^{-\beta\Delta E_{X'}}\,\Gamma_{X'}$ and the optimal extracted
work of a process, given by $kT\ln(2)$ times the coherent relative entropy, has
to be adjusted according to~\eqref{main-eq:coh-rel-entr-scaling} by
$kT\ln(2)\log_2(e^{-\beta\Delta E_{X'}}/e^{-\beta\Delta E_X}) = \Delta
E_{X}-\Delta E_{X'}$.

\paragraph{Recovering known entropy measures.}

In special cases we recover known results in single-shot quantum thermodynamics,
reproducing existing entropy measures from the smooth entropy
framework~\cite{PhDRenner2005_SQKD,BookTomamichel2016_Finite}.

In the case of a system described by a trivial Hamiltonian, the
work cost of erasing a state to a pure state is given by the
max-entropy~\cite{Dahlsten2011NJP_inadequacy}, a measure that characterizes
data compression or information reconciliation~\cite{Koenig2009IEEE_OpMeaning};
similarly, preparing a state from a pure state allows us to extract an amount of
work given by the min-entropy of the state, a measure that characterizes the
amount of uniform randomness that can be extracted from the state.  These results turn
out to be special cases of considering the work cost of any arbitrary quantum
process for systems with a trivial Hamiltonian~\cite{Faist2015NatComm}, which is
given by the conditional max-entropy of the discarded information conditioned on
the output of the process:
\begin{multline}
  \hat{D}_{X\to X'}^\epsilon(\rho_{X'R_X} \Vert \Ident_X, \Ident_{X'})
  \\
  \approx -\hat{H}_\mathrm{max}^\epsilon(E|X')
  = \hat{H}_\mathrm{min}^\epsilon(E|R_X)\ ,
  \label{main-eq:coh-rel-entr-degenerate-Hamiltonian-Hmax}
\end{multline}
where $\ket\rho_{EX'R_X}$ is a purification of $\rho_{X'R_X}$ and where
$\hat{H}_\mathrm{max}^\epsilon(E|X')$ and $\hat{H}_\mathrm{min}^\epsilon(E|R_X)$
are the smooth conditional max-entropy and min-entropy which were introduced in
ref.~\cite{PhDRenner2005_SQKD}, and are also known as the alternative
conditional max-entropy and
min-entropy~\cite{Tomamichel2011TIT_LeftoverHashing}.  A precise meaning of the
approximation in~\eqref{main-eq:coh-rel-entr-degenerate-Hamiltonian-Hmax} is provided
in \autoref{appx:sec:coh-rel-entr-def-and-props}.

We recover more known results with an arbitrary Hamiltonian in contact with a
heat bath by considering state formation and work extraction of a quantum
state~\cite{Aberg2013_worklike,Horodecki2013_ThermoMaj}.  It is known that the
work that can be extracted from a quantum state, or that is required to form a
quantum state, is given by the min-relative entropy and the max-relative
entropy, respectively; these single-shot relative entropies were introduced in
ref.~\cite{Datta2009IEEE_minmax} and are related to hypothesis testing~\cite{%
  Buscemi2010IEEETIT_capacity,%
  Brandao2011IEEETIT_oneshot,%
  Tomamichel2013_hierarchy,%
  Wang2012PRL_oneshot,%
  Matthews2014IEEETIT_blocklength,%
  Mosonyi2014CMP_hypothesis}.
We show that if the input or output system is trivial, then
\begin{subequations}
\begin{align}
  \hat{D}_{X\to\varnothing}^\epsilon(\rho_{R_X} \Vert \Gamma_X, 1)
  &\approx D_\mathrm{min,0}^\epsilon(\rho_X\Vert\Gamma_X)\ ; \\
  \hat{D}_{\varnothing\to X'}^\epsilon(\rho_{X'} \Vert 1, \Gamma_{X'})
  &\approx -D_\mathrm{max}^\epsilon(\rho_{X'}\Vert \Gamma_{X'})\ ,
\end{align}
\end{subequations}
matching the previously known results.  We note that a trivial system as output
or input of a process is equivalent to mapping to or from a pure, zero-energy
eigenstate; this is because the coherent relative entropy is insensitive to
energy eigenstates (or more generally, eigenstates of the $\Gamma$ operator)
that have no overlap with the corresponding input or output state.

\paragraph{Data processing inequality and chain rule.}
The coherent relative entropy satisfies a data processing inequality: If an
additional channel is applied to the output, mapping the Gibbs weights to other
Gibbs weights, then the coherent relative entropy may only increase.  In other
words, for any channel $\mathcal{F}_{X'\to X''}$,
\begin{multline}
  \hat{D}_{X\to X'}^\epsilon(\rho_{X'R_X} \Vert \Gamma_X, \Gamma_{X'})
\\  \leqslant
  \hat{D}_{X\to X''}^\epsilon(\mathcal{F}_{X'\to X''}(\rho_{X'R_X})
  \Vert \Gamma_X, \mathcal{F}_{X'\to X''}(\Gamma_{X'}))
  \ .
\end{multline}
Intuitively, this holds because the final state after the application of
$\mathcal{F}_{X'\to X''}$ is less valuable as it is closer to the Gibbs state,
and hence more work can be extracted by the optimal process realizing the
total operation $X\to X''$.

The coherent relative entropy also obeys a natural chain rule: The work
extracted during two consecutive processes may only be less than an optimal
implementation of the total effective process.  We refer to
\autoref{appx:sec:coh-rel-entr-def-and-props} for a technically precise
formulation.

\paragraph{Asymptotic equipartition.}
An important property of the coherent relative entropy is its asymptotic
behavior in the limit of many independent copies of the process (known as the
\emph{i.i.d. limit}).  In this regime, the coherent relative entropy converges
to the difference in the quantum relative entropies of the input state to the
output state, which is consistent with previous results in quantum
thermodynamics~\cite{Brandao2013_resource,%
  Brandao2015PNAS_secondlaws}:
\begin{multline}
  \lim_{n\to\infty}
  \frac1n
  \hat{D}_{X^n\to X'^n}^\epsilon(\rho_{X'R_X}^{\otimes n}
  \Vert \Gamma_X^{\otimes n}, \Gamma_{X'}^{\otimes n})
  \\
  = D(\sigma_X\Vert\Gamma_X) - D(\rho_{X'}\Vert\Gamma_{X'})\ ,
\end{multline}
recalling that $\sigma_X$ is the input state of the process and $\rho_{X'}$ the
resulting output state, and where $\epsilon$ is small and either kept constant
or taken to zero slower than exponentially in $n$.
Crucially, the average work cost of performing a process in the i.i.d.\@ regime
with Gibbs-preserving operations does not depend on the details of the process,
but only on the input and output states, as was already the case for systems
described by a trivial Hamiltonian~\cite{Faist2015NatComm}.

\paragraph{Miscellaneous properties.}
We show a collection of further properties, including the following: The
coherent relative entropy is equal
to zero for a pure process matrix, which corresponds to an identity mapping, for
any input state and for $\epsilon=0$; the smooth coherent relative entropy can
be bounded in both directions as differences of known entropy measures; the
coherent relative entropy does not depend on the details of the process if the
input state is of the form $\Gamma_X/\tr(\Gamma_X)$ (e.g., a Gibbs state), and
it reduces, in this case, to a difference of input and output relative entropies and
hence only depends on the output of the process.

\subsection{Battery states and robustness to smoothing}
\label{main-sec:results-batteries}

Previous work has already shown the equivalence of several battery models known
in the literature~\cite{Brandao2015PNAS_secondlaws}, notably the information
battery, the work bit
(``wit'')~\cite{Horodecki2013_ThermoMaj,Brandao2015PNAS_secondlaws}, and the
``weight''
system~\cite{Skrzypczyk2014NComm_individual,Alhambra2016PRX_equality}.
Our framework allows us to make this equivalence manifest, by singling out a class of
states on any system for which the system can act as a battery.
These states exhibit the property that they are reversibly interconvertible (as
in ref.~\cite{Gallego2016NJP_work})---the resources invested in a transition
from one battery state to another can be recovered entirely and
deterministically by carrying out the reverse transition.

For any system $W$ with a corresponding $\Gamma_W$, we consider as battery
states those states of the form
\begin{align}
  \label{main-eq:battery-state}
  \tau(P) = \frac{P\Gamma_W P}{\tr(P\Gamma_W)}\ ,
\end{align}
where $P$ is a projector such that $[P,\Gamma_W]=0$.  In the presence of a
single heat bath at inverse temperature $\beta$, this class of states includes,
for instance, individual energy eigenstates or also maximally mixed states on a
subspace of an energy eigenspace.  We define the value of a particular battery
state $\tau(P)$ as
\begin{align}
  \label{main-eq:battery-state-F}
  \Lambda(\tau(P)) = -\log_2\tr(P\Gamma_W)\ .
\end{align}
We require the system $W$ to start in such a battery state $\tau(P)$ and
to end in another such state $\tau(P')$ corresponding to another
projector $P'$ with $[P',\Gamma_W]=0$.  The following proposition asserts that
the system $W$ can act as a battery enabling exactly the same state transitions on
another system $S$ as an information battery with charge difference
$\lambda_1-\lambda_2 = \Lambda(\tau(P')) - \Lambda(\tau(P))$ (see
\autoref{appx:sec:properties-of-the-framework} for proofs):
\begin{mainproposition}
  \label{main-prop:main-equiv-battery-models-rest}
  Let $\mathcal{T}_{X\to X'}$ be a completely positive, trace-nonincreasing map,
  and let $y\in\mathbb{R}$.  Then, statements (a) and (b) in
  \autoref{main-prop:main-equiv-battery-models} are further equivalent to the following:
  \begin{enumerate}[label=(\alph*)]\setcounter{enumi}{2}
  \item For any quantum system $W$ with corresponding $\Gamma_W$, and for any
    projectors $P, P'$ satisfying $[ P, \Gamma_{W} ] = [ P', \Gamma_{W} ] = 0$
    such that $\Lambda(\tau(P')) - \Lambda(\tau(P)) \leqslant y$, there
    exists a $\Gamma$-sub-preserving, trace-nonincreasing map $\Phi_{XW\to X'W}$
    such that for all $\omega_X$,
    \begin{align}
      \Phi_{XW\to X'W}\bigl( \omega_X \otimes \tau(P) \bigr)
      = \mathcal{T}_{X\to X'}(\omega_X)
      \otimes \tau(P')
      \ .
    \end{align}
  \end{enumerate}
\end{mainproposition}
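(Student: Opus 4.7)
The plan is to close the logical loop by adding the two implications $(\mathrm{a})\Rightarrow(\mathrm{c})$ and $(\mathrm{c})\Rightarrow(\mathrm{b})$, since $(\mathrm{a})\Leftrightarrow(\mathrm{b})$ is already given by \autoref{main-prop:main-equiv-battery-models}.

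For $(\mathrm{c})\Rightarrow(\mathrm{b})$ I would simply specialize: the information battery of (b) is the case $W=A$ with $\Gamma_A=\Ident_A$, for which any projector $P$ of rank $2^{\lambda_1}$ yields the state $\tau(P)=2^{-\lambda_1}\Ident_{2^{\lambda_1}}$ and $\Lambda(\tau(P))=-\lambda_1$. Given $\lambda_1-\lambda_2\leqslant y$, choosing $P,P'$ of these ranks yields $\Lambda(\tau(P'))-\Lambda(\tau(P))=\lambda_1-\lambda_2\leqslant y$, so (c) produces precisely the map required by (b).

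For $(\mathrm{a})\Rightarrow(\mathrm{c})$, given $\mathcal{T}_{X\to X'}$ with $\mathcal{T}(\Gamma_X)\leqslant 2^{-y}\Gamma_{X'}$ and any admissible $W,P,P'$ with $y':=\Lambda(\tau(P'))-\Lambda(\tau(P))\leqslant y$, I would construct $\Phi$ as a product map with an auxiliary measure-and-prepare channel on $W$:
\begin{align}
  \Phi_{XW\to X'W}(\rho_{XW}) = \bigl(\mathcal{T}_{X\to X'}\otimes\mathcal{M}_W\bigr)(\rho_{XW})\ , \quad
  \mathcal{M}_W(\rho_W) := \tr(P\rho_W)\,\tau(P')\ .
\end{align}
This $\mathcal{M}$ is manifestly completely positive and trace-nonincreasing, so $\Phi$ is too. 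By construction $\mathcal{M}(\tau(P))=\tr(P\tau(P))\,\tau(P')=\tau(P')$, hence $\Phi(\omega_X\otimes\tau(P))=\mathcal{T}(\omega_X)\otimes\tau(P')$ as required. For the $\Gamma$-sub-preserving condition, I use $\tr(P\Gamma_W)=2^{-\Lambda(\tau(P))}$ and $\tau(P')=2^{\Lambda(\tau(P'))}P'\Gamma_W P'$ to compute
\begin{align}
  \mathcal{M}(\Gamma_W) = 2^{-\Lambda(\tau(P))}\,\tau(P') = 2^{y'}\,P'\Gamma_W P' \leqslant 2^{y'}\,\Gamma_W\ ,
\end{align}
the last inequality following from $P'\leqslant\Ident$ and $[P',\Gamma_W]=0$. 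Combining with the hypothesis on $\mathcal{T}$,
\begin{align}
  \Phi(\Gamma_X\otimes\Gamma_W) \leqslant \bigl(2^{-y}\Gamma_{X'}\bigr)\otimes\bigl(2^{y'}\Gamma_W\bigr) = 2^{y'-y}\,\Gamma_{X'}\otimes\Gamma_W \leqslant \Gamma_{X'}\otimes\Gamma_W\ ,
\end{align}
since $y'\leqslant y$.

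I don't anticipate a real obstacle here: the proof is essentially bookkeeping. The only point requiring a moment of care is ensuring that $\mathcal{M}$ is $\Gamma_W$-sub-preserving with the precise factor $2^{y'}$ rather than something weaker, which hinges on the identity $\mathcal{M}(\Gamma_W)=2^{y'}P'\Gamma_W P'$ and the commutation of $P'$ with $\Gamma_W$. A subsidiary remark is that one implicitly needs $\tr(P\Gamma_W),\tr(P'\Gamma_W)>0$ for $\tau(P),\tau(P')$ to be well defined, but this is already built into the definition~\eqref{main-eq:battery-state}.
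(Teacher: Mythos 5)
Your proposal is correct and essentially reproduces the paper's argument: your map $\Phi = \mathcal{T}\otimes\mathcal{M}$ with $\mathcal{M}(\rho_W)=\tr(P\rho_W)\,\tau(P')$ coincides exactly with the map $\Phi''(\cdot)=\mathcal{T}[\tr_W(P\,(\cdot))]\otimes\tau(P')$ used in the proof of \autoref{prop:equiv-battery-models} (step $(\text{i})\Rightarrow(\text{v})$), and the $\Gamma$-sub-preservation check is the same bookkeeping via $P'\Gamma_W P'\leqslant\Gamma_W$. The $(\text{c})\Rightarrow(\text{b})$ specialization to $\Gamma_W=\Ident$ and projectors of ranks $2^{\lambda_1},2^{\lambda_2}$ also matches the paper's $(\text{v})\Rightarrow(\text{ii})$ step.
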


The information battery, the wit as well as the weight system are themselves
special cases of this general battery system.  Indeed, the states
$2^{-\lambda_i}\Ident_{2^{\lambda_i}}$ of the information battery can be cast in
the form~\eqref{main-eq:battery-state}, with $P=\Ident_{2^{\lambda_i}}$ since
$\Gamma=\Ident$ for the information battery; the corresponding value of the
state is indeed $\Lambda(\tau(P)) = -\lambda_i$.  Similarly, in the case of the
wit and of the weight system, and in the presence of a single heat bath at
inverse temperature $\beta$ such that $\Gamma_W=e^{-\beta H_W}$, the relevant
states are energy eigenstates $\ket{E}_W$, whose value is precisely their energy,
up to a factor $\beta$: $\Lambda(\tau(\proj{E}_W)) = \beta E$.  The
equivalence of these models is thereby manifest.

As can be expected, the battery states of the general form $\tau(P)$ are
reversibly interconvertible, implying that for any process that maps $\tau(P)$
to $\tau(P')$ on a system, the coherent relative entropy is equal to the
difference $\Lambda(\tau(P))-\Lambda(\tau(P'))$.

This general formulation enables us to prove an interesting property of these
battery states---they are robust to small imperfections.  Indeed, when
implementing a process on a system $S$ using a battery $W$, it makes no
difference whether one optimizes over $\epsilon$-approximations of the overall
process on the joint system $S\otimes W$, or over $\epsilon$-approximations on
$S$ only with no imperfections on the battery state (as the smooth coherent
relative entropy is defined above).  More precisely, we prove that the smooth
coherent relative entropy is exactly the optimal difference in the charge state
of the battery while capturing all implementations that include slight
imperfections on the battery for any battery system:
\begin{multline}
  \hat{D}_{X\to X'}^\epsilon(\rho_{X'R_X} \Vert \Gamma_X, \Gamma_{X'})
  \\
  = \max_{\substack{W, P_W,P'_{W},\\ \Phi_{XW\to X'W}}}
  -\log_2\frac{\tr(P'_{W}\Gamma_{W})}{\tr(P_{W}\Gamma_{W})}\ ,
\end{multline}
where the optimization ranges over all battery systems $W$ with corresponding
$\Gamma_W$, over all battery states corresponding to projectors $P_W, P'_{W}$
with $[P_W,\Gamma_W]=[P'_{W},\Gamma_{W}]=0$, and over all free operations
$\Phi_{XW\to X'W}$ which are an $\epsilon$-approximation of a joint process
$XW\to X'W$, with a resulting process matrix on the system of interest given
by $\rho_{X'R_X}$ and which induces a transition on the battery from $\tau(P_W)$
to $\tau(P'_{W})$ (see \autoref{appx:sec:robustness-battery-states}).

\subsection{Emergence of macroscopic thermodynamics}
\label{main-sec:results-macro}

We now apply our general framework to the case of macroscopic systems, and
recover the standard laws of thermodynamics as emergent from our model.
On one hand, the goal of this section is to show that our framework behaves as
expected in the macroscopic limit, further justifying it as a model for
thermodynamics.  On the other hand, the arguments presented here reinforce the
picture of the macroscopic laws of thermodynamics as emergent from microscopic
dynamics, in line with common knowledge and existing
literature~\cite{Lieb1999_secondlaw,%
  Lieb2004_guide_secondlaw,%
  Brandao2013_resource,%
  Brandao2015PNAS_secondlaws,%
  Tajima2016arXiv_large,%
  Chubb2017arXiv_beyond}, by providing an alternative explanation of this
emergence based on $\Gamma$-sub-preserving maps.  (In fact, this
emergence may be understood as defining the order relation in
refs.~\cite{Lieb1999_secondlaw,%
  Lieb2004_guide_secondlaw,%
  Lieb2013_entropy_noneq,%
  Lieb2014PRSA_meter,%
  Weilenmann2016PRL_axiomatic} as the ordering induced by transformation by
$\Gamma$-sub-preserving maps).

\paragraph{The general mechanism.}
The macroscopic theory of thermodynamics is recovered when it is possible to
single out a class of states that obey a reversible interconversion property.
More precisely, suppose there are a class of states
$\{\tau^{z_1,z_2,\ldots,z_m}\}$ specified by $m$ parameters $z_1,\ldots,z_m$,
and suppose there exists a potential $\Lambda(z_1,\ldots,z_m)$ such that for
any pair of states $\tau_{X}^{z_1,\ldots,z_m}$ and
$\tau_{X'}^{z'_1,\ldots,z'_m}$ from this class, we have, for any process
matrix $\rho_{X'R_X}$ mapping one state to the other,
\begin{multline}
  \ln(2)\cdot
  \hat{D}_{X\to X'}(\rho_{X'R_X} \Vert \Gamma_X, \Gamma_{X'})
  \\
  = \Lambda(z_1,\ldots,z_m) - \Lambda(z'_1,\ldots,z'_m)\ .
  \label{main-eq:thermo-states-dcoh-natural-thermodynamic-potential}
\end{multline}
The $\ln(2)$ factor merely serves to change the units of the coherent relative
entropy from bits, which is standard in information theory, to nats, which will
prove convenient to recover the standard laws of thermodynamics.  We call the
function $\Lambda(z_1,\ldots,z_m)$ the \emph{natural thermodynamic potential}
corresponding to the physics encoded in the $\Gamma$ operators.  In other words,
the two states $\tau^{z_1,\ldots,z_m}$ and $\tau^{z'_1,\ldots,z'_m}$ may be
reversibly interconverted, as any work invested when going in one direction may
be recovered when returning to the initial state, and this is irrespective of which
precise logical process is effectively carried out during the transition.  An
obvious choice of states with this property are states of the same form as the battery
states introduced above, which motivates recycling the same symbols $\tau$ and
$\Lambda$.
(We have set $\epsilon=0$
in~\eqref{main-eq:thermo-states-dcoh-natural-thermodynamic-potential} because
smoothing such battery-type states has no significant effect.%
)

Suppose that the parameters are sufficiently well approximated by continuous values.
This would typically be the case for a large system such as a macroscopic gas.
Consider an infinitesimal change of a state
$(z_1,\ldots,z_m)\to(z_1+d z_1, \ldots, z_m + d z_m)$.  If there is a free
operation that can perform this transition, then necessarily, the coherent
relative entropy is positive; hence,
$\Lambda(z_1+dz_1,\ldots,z_m+dz_m)\leqslant\Lambda(z_1,\ldots,z_m)$.
Conversely, if the coherent relative entropy is positive, then there necessarily
exists a free operation implementing the said transition.  We deduce that the
infinitesimal transition $z\to z+d z$ is possible with a free operation if and
only if
\begin{align}
  \label{main-eq:macro-thermo-dLambda-leq-zero}
  d\Lambda \leqslant 0\ .
\end{align}
This condition expresses the macroscopic second law of thermodynamics, as we
will see below.

We may define the generalized chemical potentials
\begin{align}
  \label{main-eq:macro-thermo-generalized-chemical-potential}
  \mu_i = \left(\frac{\partial\Lambda}{\partial z_i}\right)_{z_1,\ldots,z_{i-1},z_{i+1},\ldots,z_m}\ ,
\end{align}
where the notation $(\partial f/\partial x)_{y,z}$ denotes the partial
derivative with respect to $x$ of a function $f$, as $y$ and $z$ are kept
constant.  We may then write the differential of $\Lambda$ as
\begin{align}
  d\Lambda = \sum \mu_i\,dz_i\ .
\end{align}
The generalized potentials $\mu_i$ are often directly related to physical
properties of the system in question, such as temperature, pressure, or chemical
potential.

Under external constraints on the variables $z_1,z_2,\ldots,z_m$, we may ask
what the ``most useless thermodynamic state'' compatible with those conditions
is.  The answer is given by minimizing the potential $\Lambda$ subject to those
constraints---this is a variational principle.
For instance, if two systems with natural thermodynamic potentials
$\Lambda_1(z_1,\ldots,z_m)$ and $\Lambda_2(z_1',\ldots,z_m')$ are put into
contact under the constraints that for all $i$, $z_i+z_i'$ must be kept constant
(such as for extensive variables in thermodynamics), then we may write
$dz_i=-dz_i'$ and minimize $\Lambda = \Lambda_1+\Lambda_2$ by requiring that
\begin{align}
  0 = d\Lambda = \sum(\mu_i-\mu_i')\,dz_i\ ,
\end{align}
and we see that the minimum is attained when $\mu_i=\mu_i'$.  If the system is
undergoing suitable thermalizing dynamics, then its evolution will naturally
converge towards that point.

\paragraph{The textbook thermodynamic gas.}
We proceed to recover the usual laws of thermodynamics in this fashion for a
macroscopic isolated gas $S$ composed of many particles
(\autoref{main-fig:emergence-macro-gas}).
\begin{figure}
  \centering
  \includegraphics{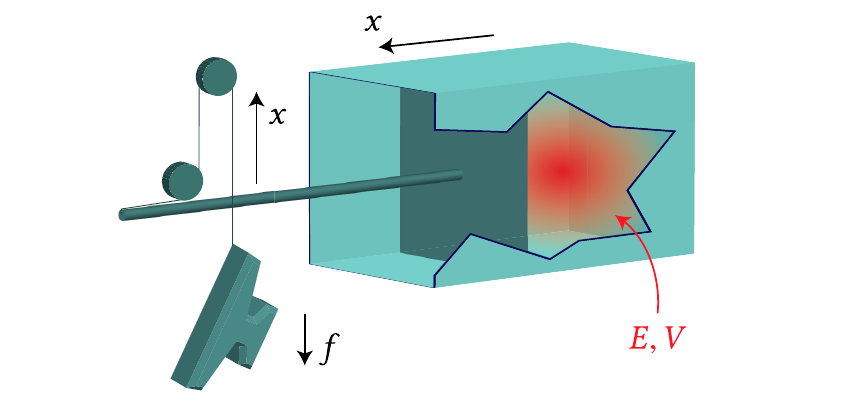}
  \caption{Macroscopic thermodynamics emerges from our framework when singling
    out a set of states that can be parametrized by continuous parameters to a
    good approximation and can be reversibly interconverted into one
    another.  We consider the case of a textbook thermodynamic gas confined in a
    box, with a piston capable of furnishing work.  In this setting, we recover
    the usual second law of thermodynamics, $dS \geqslant \delta Q/T$, relating
    the change in entropy, the dissipated heat, and the temperature.}
  \label{main-fig:emergence-macro-gas}
\end{figure}
The Hamiltonian of the gas is denoted by $H^{(V)}$, where the volume $V$
occupied by the gas is a classical parameter of the Hamiltonian that determines,
for instance, the width of a confining potential.  We assume, for simplicity, that
the number $N$ of particles constituting the gas is kept at a fixed value
throughout, restricting our considerations to the corresponding subspace.

Let us first consider the case of an isolated gas at fixed parameters $E,V$.  In
order to apply our framework, we must identify the $\Gamma$ operator, which
encodes the relevant restrictions imposed by the physics of our system.  Recall
that our restriction is meant to explicitly forbid certain types of processes,
without worrying whether a nonforbidden operation is achievable.  Here, we
assume that at fixed $E,V$, the system is isolated and hence evolves unitarily.
In particular, the projector $P^{E,V}_S$ onto the eigenspace of $H^{(V)}$
corresponding to energy $E$ is preserved.
Hence, the $\Gamma$ operator characterizing the gas alone for fixed $E,V$
can be taken as
\begin{align}
  \Gamma_S^{E,V} = P^{E,V}_S\ .
\end{align}
This is compatible with standard considerations in statistical mechanics, which
identify the state of the gas in such conditions as the maximally mixed state in
the subspace projected onto by $P^{E,V}_S$ (the microcanonical state), which we
denote by $\tau_S^{E,V} = P^{E,V}_S/\tr(P^{E,V}_S)$.  Indeed, at fixed $E,V$ on
the control system, an allowed transformation may not change this state.

Now, we would like to account for changes in $E,V$.  It is convenient to
introduce a physical control system $C$, which plays the following roles: It
stores the information about all the controlled external parameters of the state
in which the gas was prepared---here, the parameters are $E,V$; furthermore, it
provides the necessary physical constraints on the gas and physical resources
necessary for transformations, taking on the role of a battery.  In our case,
the control system includes a piston that confines the gas to a volume $V$ and
is capable of furnishing the energy required to change the state of the gas.
For concreteness, we imagine that the piston is balanced by a weight, causing
the piston to exert a force $f$ on the gas.  The force $f$ may be tuned by
varying the weight.
The states of the control system are $\ket{e,x}_C$, where $e$ is the energy
stored in the control system and $x$ the position of the piston.  The energy $e$
is the potential energy of the weight, and it must be equal to
$e = E_\mathrm{tot} - E$ as enforced by total energy conservation, where
$E_\mathrm{tot}$ is the fixed total energy of the joint $CS$ system.
Furthermore, $x$ determines the volume of the gas as $V=A\cdot x$, where $A$ is
the surface of the piston.
If the control system were isolated and not coupled to the gas, then the
nonforbidden operations on the control system would be those preserving the
operator $\Gamma_C^0 = \sum_{e,x} g_{e,x}\proj{e,x}_C$, where $g_{e,x}$ encodes
the relevant physics of the control system: It decreases as either $e$ increases
or $x$ increases, meaning that a state $\ket{e,x}_C$ cannot be brought to the
state $\ket{e',x}_C$ with $e'>e$ or $\ket{e,x'}_C$ with $x'>x$.  In other words,
we do not forbid reducing the weight charge or lowering it.

The coupling between the control system and the gas can be enforced with a
$\Gamma$ operator of the form
\begin{align}
  \Gamma_{CS} = \sum_{e,x}
  g_{e,x} \proj{e,x}_C\otimes P^{E=E_{\mathrm{tot}}-e,V=Ax}_S\ .
  \label{main-eq:emergence-Gamma-CS}
\end{align}
If the control system is the state $\ket{e,x}_C$, then any allowed operation
must preserve the operator $\Gamma^{E,V}_S$ for the corresponding
$E=E_\mathrm{tot}-e$ and $V=Ax$.  Furthermore~\eqref{main-eq:emergence-Gamma-CS}
accounts for the physics of the control system itself with the coefficient
$g_{e,x}$.

The states
$\tau_{CS}^{e,x} = \proj{e,x}_C\otimes\tau_S^{E=E_{\mathrm{tot}}-e,V=Ax}$ are of
the form~\eqref{main-eq:battery-state}; hence, they are reversibly
interconvertable as
per~\eqref{main-eq:thermo-states-dcoh-natural-thermodynamic-potential} and they are a
valid class of states for our macroscopic description.  The corresponding
natural thermodynamic potential is given as per~\eqref{main-eq:battery-state-F},
\begin{align}
  \Lambda_{CS}(e,x)
  &= \Lambda_C(e,x) + \Lambda_S(E_{\mathrm{tot}}-e,Ax)\ ,
\end{align}
where we have defined $ \Lambda_C(e,x) = -\ln g_{e,x} $ and
$\Lambda_S(E,V) = -\ln \tr(P^{E,V}_S)$.  Observe that
$\tr P^{E,V}_S = \Omega_S(E,V)$ is the microcanonical partition function, and
hence $\Lambda_S(E,V)$ is, up to Boltzmann's constant $k$ and a minus sign,
the quantity $S(E,V) = k\ln\Omega_S(E,V)$, which is known as the thermodynamic
entropy of the gas:
\begin{align}
  \Lambda_S(E,V) = -k^{-1} S(E,V)\ .
\end{align}

As the gas is macroscopic, we assume that the parameters $E,V$ are well
approximated by continuous variables.  It is useful to define the conjugate
variables to $e,x$ and $E,V$ via the differentials of $\Lambda_C$ and
$\Lambda_S$:
\begin{subequations}
  \begin{align}
    d\Lambda_C &= \nu_e\, de + \nu_x\, dx\ ;
    \label{main-eq:emergence-diff-Lambda-C}\\
    d\Lambda_S &= \mu_E\, dE + \mu_V\, dV\ ,
    \label{main-eq:emergence-diff-Lambda-S}
  \end{align}
\end{subequations}
with the coupling inducing the relations $dE=-de$ and $dV = A\,dx$.  The force
$f$ exerted by the piston onto the gas is given by
$f=(\partial e/\partial x)_{\Lambda_C}$.
Using~\eqref{main-eq:emergence-diff-Lambda-C} we see that
$de = \nu_e^{-1}(d\Lambda_C - \nu_x dx)$, and hence $f = -\nu_x/\nu_e$.  The
thermodynamic work provided by the piston is the mechanical work performed by
the weight,
\begin{align}
  \delta W = -f\cdot dx = \frac{\nu_x}{\nu_e}\,dx\ .
\end{align}
Any operation mapping two states $\tau_{CS}^{e,x}\to\tau_{CS}^{e+de,x+dx}$ which
obeys our global restriction, i.e.\@ which preserves the
operator~\eqref{main-eq:emergence-Gamma-CS}, must
obey~\eqref{main-eq:macro-thermo-dLambda-leq-zero} or, equivalently,
$d\Lambda_S\leqslant -d\Lambda_C$; hence,
\begin{align}
  d\Lambda_S \leqslant -\nu_e\,de - \nu_x\,dx
  = \nu_e\,(dE - \delta W) = \nu_e\,\delta Q\ ,
  \label{main-eq:emergence-second-law-for-LambdaS-dQ}
\end{align}
where we have defined the change in energy of the gas that is not due to
thermodynamic work as \emph{heat}: $\delta Q = dE - \delta W$.

The temperature of the gas is defined as
$T_\mathrm{gas} = (\partial S/\partial E)^{-1} = -(k\mu_E)^{-1}$ as in standard
textbooks, as the conjugate variable corresponding to entropy.  The control
system also acts as a heat bath, so we define its temperature $T$ as the
temperature of a gas that it would be ``in equilibrium'' with, in the sense that our
variational principle is achieved.  The potential $\Lambda_{CS}$ attains its
minimum under the constraints $dE=-de$ and $dV=A\cdot dx$ if
$0=d\Lambda_{CS}=(\mu_E-\nu_e)\,dE + (\mu_V+A^{-1}\nu_x)\,dV$, implying that
$\mu_E=\nu_e$ and hence $T=-(k\nu_e)^{-1}$.
We may now write~\eqref{main-eq:emergence-second-law-for-LambdaS-dQ} in its more
traditional form,
\begin{align}
  dS\geqslant \frac{\delta Q}{T}\ .
  \label{main-eq:macro-second-law-traditional}
\end{align}

Our control system is in fact another example of a battery system.  Indeed, it
can convert another form of a useful resource, mechanical work, into the
equivalent of pure qubits for enabling processes on the system, while still
working under the relevant global constraints such as conservation of energy.

The thermodynamic gas illustrates a situation in which the macroscopic second
law of thermodynamics is recovered as emergent.  Note that the argument can also be
applied  to a system with different relevant physical quantities, such as
magnetic field and magnetization of a medium.

\subsection{Observers in thermodynamics}
\label{main-sec:results-obs}

In standard thermodynamics, one describes systems from the macroscopic point of
view.  This point of view is usually assumed only implicitly, to the point that
notions such as thermal equilibrium or the thermodynamic entropy function are
often thought of as objective properties of the system.
Yet, a closer look reveals that they can be thought of as observer-dependent
quantities, which can be extended to observers with different amounts of
knowledge about the
system~\cite{Jaynes1992_Gibbs,delRio2011Nature,delRio2014arXiv_relative}.  This
observation is at the core of a modern understanding of Maxwell's demon.

The present section begins with a brief motivation, reviewing a variant of
Maxwell's demon.  Then, we show that our framework is well suited for describing
different observers and that it provides a natural notion of coarse-graining.  Indeed,
the framework itself, thanks to the abstraction provided by the $\Gamma$
operator, is scale agnostic and can be applied consistently from any level of
knowledge about the system.
More precisely, we show how to relate two descriptions from the viewpoints of
two observers, where one observer sees a coarse-grained version of another
observer's knowledge.  The coarse-graining is given by any completely positive,
trace-preserving map.  We define a sense in which we can carry out the reverse
transformation, where one recovers the fine-grained information, given the
coarse-grained information, with the help of a recovery map.  This allows us to
relate the laws of thermodynamics in either observer's picture, where by the
``laws of thermodynamics'' in an observer's picture, we mean that the evolution
of the system is governed in their picture by $\Gamma$-sub-preserving maps.
This provides a precise criterion that can guarantee, in a given setting, that
the laws of thermodynamics hold in the coarse-grained picture or, intuitively,
that ``no Maxwell-demon-type cheating'' is happening.  Namely, if the
fine-grained picture has no more information than what can be recovered from the
coarse-grained picture, then our framework may be applied consistently from
either picture, with both observers agreeing on the class of possible processes.

Consider the variant of Maxwell's demon depicted in \autoref{main-fig:MaxwellDemon}.
\begin{figure}
  \centering
  \includegraphics{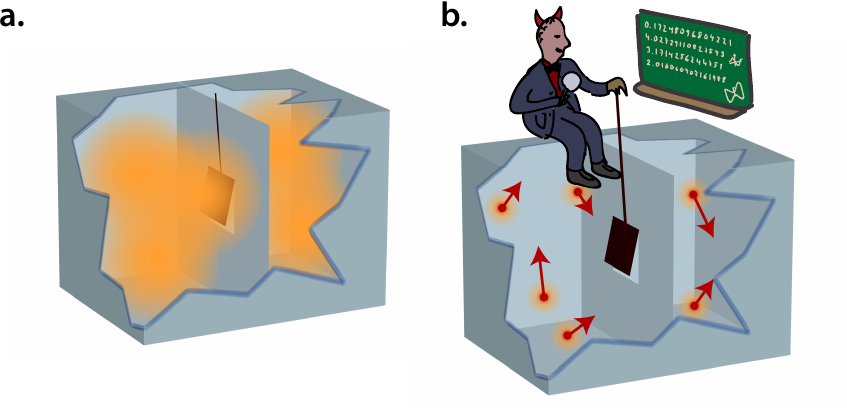}
  \caption{Maxwell's demon concentrates all particles on one side of the box by
    opening the trap door at appropriate times. \textbf{a.}~A macroscopic
    observer describing only the gas sees its entropy decrease, in apparent
    violation of the macroscopic observer's idea of the second law of thermodynamics.
    \textbf{b.}~The demon observes no entropy change, as the state of the gas is
    conditioned on his knowledge.  By modeling his memory as an explicit system,
    originally in a pure state, we may understand his actions as simply
    correlating his memory with the state of the gas.  In doing so, a macroscopic
    observer may be induced into witnessing a violation of a macroscopic second
    law.  If the demon wishes to operate cyclically, he needs to reset his
    memory register back to a pure state, which costs work according to
    Landauer's
    principle~\protect\cite{Bennett1982IJTP_ThermodynOfComp,Bennett2003_NotesLP};
    any work he might have extracted using his scheme is paid back at this
    point.}
  \label{main-fig:MaxwellDemon}
\end{figure}
A gas is enclosed in a box separated into two equal volume compartments, which
communicate only through a small trap door controlled by a demon.  The demon is
able to observe individual particles and activates the trap door at appropriate
times, letting a single particle through each time, in order to concentrate all
particles on one side of the box.
From a macroscopic perspective, and looking only at the gas, one observes an
apparent entropy decrease as the gas now occupies a smaller volume.
However, from a microscopic perspective, the demon is essentially transferring
entropy from the gas into a memory register, which is initially in a pure
state~\cite{Bennett1982IJTP_ThermodynOfComp,Bennett2003_NotesLP}.  Consider in
more detail the following process: The demon performs a series of \textsc{cnot}
gates using the gas degrees of freedom as controls and his memory qubits as
targets, which ``replicates'' the information about the gas particles into his
memory.  Since this process is unitary, it preserves the joint entropy of the
memory and the gas.  The result is a classically correlated state between the
memory register and the gas.
So, \emph{what is the entropy of the gas?}  It is now clear that the answer
depends on the observer.  The macroscopic observer sees the gas with its usual
macroscopic thermodynamic entropy, while the demon has engineered a state where
the gas has zero entropy conditioned on the side information stored in his
memory---he knows all there is to know about the gas.  Conceptually, the
thermodynamic reason for this difference is that the demon is able to extract
work from the gas, whereas the macroscopic observer is not.
Indeed, the demon can exploit the side information stored in his memory to
design a perfect trap-door opening schedule which, when executed, concentrates
all the particles on one side of the box.  (This process can itself be thought
of as \textsc{cnot} gates acting in the other direction.)  With all particles
concentrated on one side of the box, the demon can now extract work by replacing the
separator by a piston and letting the gas expand isothermally.  (Of course, the
memory register is still littered with all the information about the gas;
resetting the register costs work according to Landauer's principle, which is
where the demon pays back his extracted work if he wishes to operate
cyclically~\cite{Bennett1982IJTP_ThermodynOfComp,Bennett2003_NotesLP}.)

The above example shows that a fully general framework of thermodynamics should
be universally applicable from the point of view of any observer, accounting for
any level of knowledge one might possess about a system.  One also expects that
if an observer sees a violation of their laws of thermodynamics, while knowing
that in a finer-grained picture the corresponding laws are obeyed, then they may
attribute this effect to lack of knowledge about microscopic degrees of freedom
which the observed process exploits.
In the following, we show that our framework displays these desired properties.

Consider two observers, Alice and Bob, who have distinct degrees of knowledge
about a system.  We assume that the system's microscopic state space
$\mathscr{H}_A$, which Alice has access to, is transformed by a completely
positive, trace-preserving map
$\mathcal{F}^{\mathcal{A}\to\mathcal{B}}_{A\to B}$ to a state space
$\mathscr{H}_B$ which is used by Bob to describe the situation
(\autoref{main-fig:ObserversGamma}).
\begin{figure}
  \centering
  \includegraphics{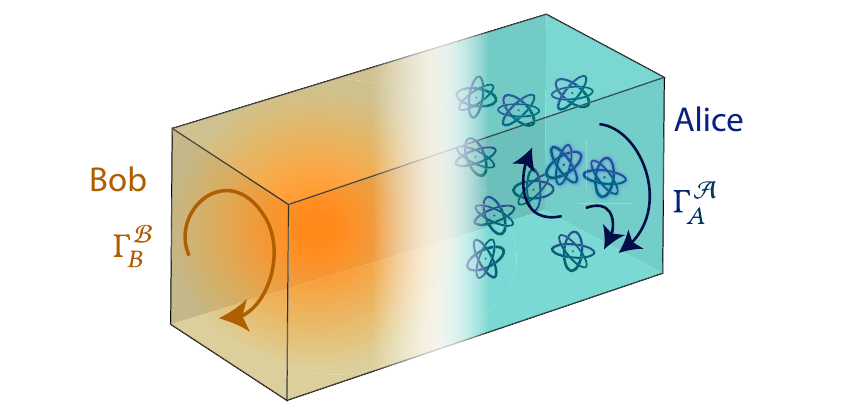}
  \caption[]{Observers in thermodynamics.  Alice has access to microscopic
    degrees of freedom of a gas, while Bob can only observe its coarse
    macroscopic properties, such as its temperature $T$, volume $V$ and pressure
    $p$.  Alice describes the evolution of the gas using Gibbs-preserving maps,
    with a Gibbs state $\Gamma_A^{\mathcal{A}}$ on the full state space of the
    many particles of the gas.  On the other hand, Bob describes the gas using
    his own knowledge---for instance, the macroscopic variables $T,V,p$---which
    in full generality we can represent as a quantum state in a state space
    $\mathscr{H}_B$ which is obtained by applying a given mapping
    $\mathcal{F}_{A\to B}^{\mathcal{A}\to\mathcal{B}}(\cdot)$ on Alice's
    state. (For instance, this map may trace out the inaccessible microscopic
    information.) States of the gas described by Bob may be transformed to
    Alice's picture by applying a suitable recovery map, such as the Petz
    map~\protect\cite{%
      Petz1986CMP_sufficient,%
      Petz1988QJM_sufficiency,%
      Petz2003RMathPhys_monotonicity,%
      Wilde2015RSPA_recoverability,%
      Beigi2016JMP_Petz,%
      Wehner2015arXiv_reversibility}.  Then, Alice's
    $\Gamma_A^\mathcal{A}$-preserving maps appear to Bob as
    $\Gamma_B^{\mathcal{B}}$-preserving maps, where Bob's
    $\Gamma_B^{\mathcal{B}}$ operator is taken to be
    $\Gamma_B^{\mathcal{B}} =
    \mathcal{F}_{A\to{}B}^{\mathcal{A}\to\mathcal{B}}(\Gamma_A^{\mathcal{A}})$.
    Conversely, operations that preserve $\Gamma_B^{\mathcal{B}}$ for Bob may
    be described by Alice as preserving $\Gamma_A^{\mathcal{A}}$.}
  \label{main-fig:ObserversGamma}
\end{figure}
For instance, Alice might have access to individual position and momenta of all
the particles of a gas, while Bob only has access to partial information given
by macroscopic physical quantities such as temperature, pressure, volume, etc.
More generally, if the microscopic system can be embedded in a bipartite
system $\mathscr{H}_{K}\otimes\mathscr{H}_{N}$ that stores, respectively, the
macroscopic information (available to both Bob and Alice) and the microscopic
information (available to Alice only), then Bob's observations can be
related to Alice's simply by tracing out the $\mathscr{H}_N$ system.

Suppose that Alice observes some microscopic dynamics happening within
$\mathscr{H}_A$ and that this evolution is $\Gamma$-preserving with a
particular operator $\Gamma_A^{\mathcal{A}}$.  How does this evolution appear to
Bob?  It turns out that for Bob, these maps are also $\Gamma$-preserving maps,
but they are relative to \emph{his} $\Gamma$ operator, which is simply given as
$\Gamma_B^{\mathcal{B}} =
\mathcal{F}^{\mathcal{A}\to\mathcal{B}}_{A\to{}B}(\Gamma_A^{\mathcal{A}})$, that
is, by transforming Alice's $\Gamma$ operator into Bob's picture.  Conversely, a
map that appears as $\Gamma^{\mathcal{B}}$-preserving to Bob is observed
by Alice as being $\Gamma^{\mathcal{A}}$-preserving.

In order to give a precise meaning to the above statements, it is necessary to
specify how a state described by Bob can be translated back to Alice's picture.
Indeed, there can be several possible states for Alice that are compatible with
Bob's state.  We describe this ``recovery process'' using a \emph{recovery map},
which gives, in a sense, the ``best guess'' of what the state on $\mathscr{H}_A$
could be, given only knowledge of Bob's state on $\mathscr{H}_B$.  More
precisely, we define the state transformation from Bob's picture to Alice's
picture as the application of a completely positive, trace-preserving map
$\mathcal{R}_{B\to A}^{\mathcal{B}\to\mathcal{A}}(\cdot)$, with the property
that
$\mathcal{R}_{B\to A}^{\mathcal{B}\to\mathcal{A}}(\Gamma_B^\mathcal{B}) =
\Gamma_A^\mathcal{A}$, recalling that
$\Gamma^{\mathcal{B}}_B = \mathcal{F}^{\mathcal{A}\to\mathcal{B}}_{A\to B}(
\Gamma^{\mathcal{A}}_{A} ) $.  This ensures that the completely useless state in
Bob's picture is mapped back to the completely useless state in Alice's picture.
An example of a suitable recovery map is the \emph{Petz recovery map}~\cite{%
  Petz1986CMP_sufficient,%
  Petz1988QJM_sufficiency,%
  Petz2003RMathPhys_monotonicity,%
  Wilde2015RSPA_recoverability,%
  Beigi2016JMP_Petz,%
  Wehner2015arXiv_reversibility}, defined as
\begin{align}
  \mathcal{R}_{B\to A}^{\mathcal{B}\to\mathcal{A}}(\cdot)
  = \Gamma^{\mathcal{A}\,1/2}_{A}\,
  \mathcal{F}^{\mathcal{A}\leftarrow\mathcal{B}\,\dagger}_{A\leftarrow B}\bigl( 
  \Gamma^{\mathcal{B}\,-1/2}_B\,(\cdot)\,\Gamma^{\mathcal{B}\,-1/2}_B \bigr)
  \, \Gamma^{\mathcal{A}\,1/2}_{A}\ ,
\end{align}
where
$\mathcal{F}^{\mathcal{A}\leftarrow\mathcal{B}\,\dagger}_{A\leftarrow B}$ is the
adjoint of the superoperator $\mathcal{F}^{\mathcal{A}\to\mathcal{B}}_{A\to B}$.
The Petz recovery map is completely positive and trace preserving, and satisfies
$\mathcal{R}_{B\to A}^{\mathcal{B}\to\mathcal{A}}(\Gamma_{B}^{\mathcal{B}}) =
\Gamma_A^{\mathcal{A}}$ (assuming that $\Gamma_B^{\mathcal{B}}$ is full rank).

Hence, given a trace-nonincreasing mapping $\mathcal{E}^{\mathcal{A}}_{A}$ in
Alice's picture, we define Bob's description of the mapping as the composed map
of transforming into Alice's picture, applying the map, and transforming back to
Bob's picture:
\begin{align}
\mathcal{E}^{\mathcal{B}}_B = \mathcal{F}^{\mathcal{A}\to\mathcal{B}}_{A\to B}
\circ \mathcal{E}^{\mathcal{A}}_A \circ
\mathcal{R}_{B\to{A}}^{\mathcal{B}\to\mathcal{A}}\ .
\end{align}
Our claim is the following: If $\mathcal{E}^{\mathcal{A}}_{A}$ satisfies
$\mathcal{E}^{\mathcal{A}}_{A}(\Gamma^{\mathcal{A}}_A) \leqslant
\Gamma^{\mathcal{A}}_A$, then $\mathcal{E}^{\mathcal{B}}_B$ satisfies
$\mathcal{E}^{\mathcal{B}}_B(\Gamma^{\mathcal{B}}_B)\leqslant
\Gamma^{\mathcal{B}}_B$.
Conversely, if we are given a trace-nonincreasing mapping
$\mathcal{E}^{\mathcal{B}}_{B}$ in Bob's picture, then this map is described in
Alice's picture as the composed map of transforming to Bob's picture, applying
the map, and transforming back:
\begin{align}
\mathcal{E}^{\mathcal{A}}_{A} =
\mathcal{R}_{B\to{A}}^{\mathcal{B}\to\mathcal{A}} \circ
\mathcal{E}^{\mathcal{B}}_{B} \circ
\mathcal{F}^{\mathcal{A}\to\mathcal{B}}_{A\to{}B}\ ;
\end{align}
we assert that if
$\mathcal{E}^{\mathcal{B}}_{B}(\Gamma^{\mathcal{B}}_B)\leqslant
\Gamma^{\mathcal{B}}_B$, then
$\mathcal{E}^{\mathcal{A}}_{A}(\Gamma^{\mathcal{A}}_A) \leqslant
\Gamma^{\mathcal{A}}_A$.

The proof of both claims is straightforward, using
$\mathcal{F}_{A\to{B}}^{\mathcal{A}\to\mathcal{B}}(\Gamma_A^{\mathcal{A}}) =
\Gamma_B^{\mathcal{B}}$ and
$\mathcal{R}_{B\to{A}}^{\mathcal{B}\to\mathcal{A}}(\Gamma^{\mathcal{B}}_B) =
\Gamma^{\mathcal{A}}_A$.  More generally, these claims hold as well for any
trace-nonincreasing, completely positive maps
$\mathcal{F}_{A\to{}B}^{\mathcal{A}\to\mathcal{B}}$,
$\mathcal{R}_{B\to{}A}^{\mathcal{B}\to\mathcal{A}}$ satisfying
$\mathcal{F}_{A\to{}B}^{\mathcal{A}\to\mathcal{B}}(\Gamma_A^{\mathcal{A}})
\leqslant \Gamma_B^\mathcal{B}$ and
$\mathcal{R}_{B\to{}A}^{\mathcal{B}\to\mathcal{A}}(\Gamma_B^{\mathcal{B}})
\leqslant \Gamma_A^\mathcal{A}$, in which case $\Gamma_B^{\mathcal{B}}$ does not
have to be full rank.

The above provides a general criterion that is able to guarantee that the laws
of thermodynamics in the coarse-grained picture are valid: If the state of the
system in Alice's picture is one that can be recovered from Bob using a fixed
recovery map, then Alice's free operations correspond to free operations in
Bob's picture, and hence Alice's laws of thermodynamics indeed translate to
Bob's idea of what the laws of thermodynamics are.

A simple example is the relation of the microcanonical to the canonical
ensemble.  (This is also known as Gibbs-rescaling, an essential tool to relate
thermal operations to noisy operations~\cite{Egloff2015NJP_measure,%
  Horodecki2013_ThermoMaj,Brandao2015PNAS_secondlaws}.)  If Alice describes
unitary dynamics within an energy eigenspace of the joint system and a large
heat bath, then Bob describes the dynamics of the system alone as
Gibbs-preserving maps.
Consider a system $S$ and a
heat bath $R$, with respective Hamiltonians $H_S$ and $H_R$ and total
Hamiltonian $H_{SR} = H_S+H_R$.  Suppose that Alice has microscopic access to
the heat bath and hence describes the situation using the state space
$A=S\otimes R$.  Assume that the global state and evolution are constrained to
unitaries within a subspace of fixed total energy $E$.  This evolution is, in
particular, $\Gamma$-sub-preserving if we choose
$\Gamma^{\mathcal{A}}_{A} = P_{SR}^{E}$, where $P_{SR}^E$ is the projector onto
the eigenspace of $H_{SR}$ corresponding to the energy $E$.
On the other hand, Bob only has access to the system $B=S$.  The mapping
$\mathcal{F}^{\mathcal{A}\to\mathcal{B}}$, which relates Alice's point of view
to Bob's, simply traces out the heat bath $R$.
Bob then describes the operator $\Gamma^{\mathcal{A}}_{A}$ as
\begin{align}
  \Gamma^{\mathcal{B}}_{S} = \tr_R(\Gamma^{\mathcal{A}}_{SR})
  = \sum_{E_S,k} g(E-E_S)\,\proj{E_S,k}_S\ ,
  \label{main-eq:relating-obs-GammaS-canonical}
\end{align}
where $g(E_R)$ is the degeneracy of the energy eigenspace of the heat bath
corresponding to the energy $E_R$, and where the vectors $\{\ket{E_S,k}_S\}$ are
the energy eigenstates on $S$ with a possible degeneracy index $k$.  Following
standard arguments in statistical mechanics, and as argued in
ref.~\cite{Horodecki2013_ThermoMaj}, we have, in typical situations and under
mild assumptions, $g(E-E_S)\propto e^{-\beta E_S}$, and we hence recover
in~\eqref{main-eq:relating-obs-GammaS-canonical} the standard canonical form of the
thermal state.
In other words, Bob describes the dynamics on $S$ as maps that preserve the
Gibbs state.

The above reasoning can be seen as a rule for transforming one observer's
picture into another; it remains important to analyze the situation in the
picture that accurately describes the state of knowledge of the input state of
the agent carrying out the operations.  The pictures are equivalent when Alice's
state of knowledge of $A$ is no more than what $B$ can recover using the
recovery map, i.e., when her input state is exactly of the form
$\mathcal{R}_{B\to A}^{\mathcal{B}\to\mathcal{A}}(\rho_B^\mathcal{B})$, where
$\rho_B^{\mathcal{B}}$ is the state of the system in Bob's picture.  However,
not all actions that Alice can perform using
$\Gamma^{\mathcal{A}}_A$-sub-preserving maps must induce a
$\Gamma^{\mathcal{B}}_B$-sub-preserving effective map on $B$.  Indeed, if
Alice's input state is more refined, i.e., if she has more fine-grained
information about the microscopic initial state than what Bob can infer, then
her actions might appear to Bob as violating his idea of the second law of
thermodynamics.  In this case, Alice may indeed perform
$\Gamma^\mathcal{A}_A$-sub-preserving operations that result in an effective
mapping on $B$ that is not $\Gamma^{\mathcal{B}}_B$-sub-preserving.  Enter
Maxwell's demon.

Our framework hence allows us to systematically analyze a variety of settings
inspired by Maxwell's demon.  Returning to our example depicted in
\autoref{main-fig:MaxwellDemon}, we identify Alice as possessing a microscopic
description of the gas and the demon, and Bob as the macroscopic observer.  The
demon, as described by Alice, can perform Gibbs-preserving operations on the
joint system of the gas $S$ and the demon's memory register $M$, which,
for simplicity, we choose to have a completely degenerate Hamiltonian $H_M=0$ and
thus $\Gamma_M = \Ident_M$.
Bob, on the other hand, describes the gas alone using standard thermodynamic
variables, say, the energy $E$, the volume $V$, and the number of particles $N$.  To relate
both points of view, we write the gas system (including a possible control
system to fix macroscopic thermodynamic variables) as a bipartite system
$S=K\otimes N$ with states of the form $\proj{E,V,N}_K\otimes\tau^{E,V,N}_N$,
where $\tau^{E,V,N}_N$ is the microcanonical state corresponding to the
macroscopic variables $E,V,N$.  We have
$\tau^{E,V,N}_N = P^{E,V,N}_N/\Omega(E,V,N)$, where $P^{E,V,N}_N$ projects onto
the subspace of the microscopic system corresponding to fixed $E,V,N$, and where
the partition function is $\Omega(E,V,N) = \tr[P^{E,V,N}_N]$.  Then, Bob's
picture is obtained from Alice's by disregarding the memory register as well as
the microscopic information, which corresponds to the mapping
$\mathcal{F}_{KNM\to K}^{\mathcal{A}\to\mathcal{B}}(\cdot) = \tr_{MN}(\cdot)$.
Alice uses the description
$\Gamma^{\mathcal{A}}_{KNM} = \sum_{E,V,N} \proj{E,V,N}_K\otimes P^{E,V,N}_N
\otimes \Ident_M$ (see previous section).  Bob, on the other hand, describes the
gas using
$\Gamma^{\mathcal{B}}_{K} =
\mathcal{F}_{KNM\to{}K}^{\mathcal{A}\to\mathcal{B}}(\Gamma^{\mathcal{A}}_{KNM})
= d_M \sum \Omega(E,V,N)\,\proj{E,V,N}_K$, where $d_M$ is the dimension of the
system $M$.
Using the fact that
$\mathcal{F}^{\mathcal{A}\leftarrow\mathcal{B}\;\dagger}_{KNM\leftarrow{}K}(\cdot)
= (\cdot)\otimes\Ident_{NM}$, the Petz recovery map corresponding to
$\mathcal{F}^{\mathcal{A}\to\mathcal{B}}_{KNM\to{}K}$ is determined to be
  \begin{align}
    \mathcal{R}_{K\to KNM}^{\mathcal{B}\to\mathcal{A}}(\cdot)
    &=  \bigl(R_{K\to KN}\,\bigl[(\cdot)\otimes\Ident_N\bigr]
      \,R_{K\leftarrow KN}^\dagger\bigr) \otimes
      \frac{\Ident_M}{d_M}
      \ ,
  \end{align}
where we have defined the operator
\begin{align}
  R_{K\to KN} = \sum_{E,V,N}\proj{E,V,N}_K\otimes
  \frac{P^{E,V,N}_N}{\sqrt{\Omega(E,V,N)}}\ .
\end{align}
Importantly, the recovery map applied to any state of the form $\ket{E,V,N}_K$
gives
\begin{multline}
  \mathcal{R}_{K\to KNM}^{\mathcal{B}\to\mathcal{A}}(
  \proj{E,V,N}_K )
  \\
  = \proj{E,V,N}_K \otimes \tau^{E,V,N}_N \otimes \frac{\Ident_M}{d_M}\ ,
\end{multline}
i.e., Bob assigns a standard thermal state to all systems that he cannot otherwise
access.
From Alice's perspective (the demon's), the memory register $M$ starts in a pure
state $\ket0_M$, in order to store the future results from observations of the
gas.
On the other hand, Bob has no way to infer this state from his macroscopic
information.  Because of this, Alice can design processes that are perfectly
$\Gamma$-sub-preserving from her perspective but which can trick Bob into
thinking he is observing a violation of the second law (as described in
\autoref{main-fig:MaxwellDemon}).  Consider, for concreteness, the following procedure:
Alice performs a unitary process mapping the state
$\proj{E,V,N}_K\otimes\tau_N^{E,V,N}\otimes\proj0_M$ to
$\proj{E,V/2,N}_K\otimes\tau_N^{E,V/2,N}\otimes(d_M^{-1}\Ident_M)$, where we
assume that the system $M$ has just the right dimension to store all the entropy
resulting from mapping a state $\tau_N^{E,V,N}$ to the state $\tau_N^{E,V/2,N}$
of lower rank (we assume, for simplicity, that the rank of $\tau_N^{E,V/2,N}$
divides that of $\tau_N^{E,V,N}$, and thus
$\Omega(E,V,N) = d_M\,\Omega(E,V/2,N)$).  Alice's process is fully $\Gamma$
preserving because it is unitary and commutes with
$\Gamma^{\mathcal{A}}_{KNM}$.
However, from Bob's perspective, the gas changed its state from $\ket{E,V,N}_K$
to $\ket{E,V/2,N}_K$, in a blatant violation of his idea of the second law of
thermodynamics!
Of course, a clever Bob would be led to infer that there exists some system
($M$) that has interacted with the gas and absorbed the surplus
entropy.  The point is, however, that Bob can still very well apply his laws of
thermodynamics (in the form of the restriction imposed by
$\Gamma$-sub-preserving maps) as long as Alice does not ``actively mess with
him.''  In other words, any observer can consistently apply the laws of thermodynamics
(in the form of our framework) from their perspective, using the restriction of
$\Gamma$-sub-preserving maps for appropriately chosen $\Gamma$ operators as long
as this restriction indeed holds.  A $\Gamma$-sub-preserving restriction
inferred from coarse-graining a finer $\Gamma$-sub-preserving restriction fails
exactly when the finer-grained observer actively makes use of their privileged
microscopic access.

A further example illustrating the necessity of treating thermodynamics as an
observer-dependent framework, where our framework could be applied, is
provided by Jaynes' beautiful treatment of the Gibbs
paradox~\cite{Jaynes1992_Gibbs}.

\section{Discussion}
\label{main-sec:discussion}

One might think that thermodynamics, as a physical theory in essence, would
require physical concepts, such as energy or number of particles, to be built
into the theory, as is done in usual textbooks.  Our results align with the
opposite view, where thermodynamics is a generic framework itself, agnostic of
any physical quantities such as ``energy,'' which can be applied to different
physical situations, in the same spirit as previously proposed
approaches~\cite{BookGiles1964_thermodynamics,%
  BookGyftopoulos2005_thermodynamics,%
  Lieb1999_secondlaw,%
  Lieb2004_guide_secondlaw,%
  Weilenmann2016PRL_axiomatic,%
  Zanchini2011_thermodynamics}.  The physical properties of the system, such as
energy, temperature, or number of particles, are accounted for in our framework
only through the abstract $\Gamma$ operator.

Our results provide an additional step in understanding the core ingredients of
thermodynamics and hence the extent of its universality.  Our approach reveals
the following picture: Given any situation where the system obeys some physical
laws that imply the restriction that the evolution must preserve (or
sub-preserve) a certain operator $\Gamma$, then purity may be invested to lift
the restriction on any process, as quantified by the coherent relative entropy;
depending on how $\Gamma$ is defined, one may express this abstract resource
in terms of a physical resource such as mechanical work.  Furthermore, if the
states of interest of our system form a class of states that happen to be
reversibly interconvertible, the macroscopic laws of thermodynamics emerge,
along with the relevant thermodynamic potential.
In a coarse-grained picture, the thermodynamic laws apply as long as our
thermodynamic coarse-graining criterion is fulfilled, namely, if the
fine-grained state is not more informative than what can be recovered from the
coarse-grained information.

The notion of macroscopic limit considered here is more general than assuming
that the state of the system is a product state $\rho^{\otimes n}$, where each
particle or subsystem is independent and identically distributed (i.i.d.).
While typical thermodynamic systems are indeed close to an i.i.d.\@ state (for
instance, the Gibbs state of many noninteracting particles is an i.i.d.\@
state), we only rely on a notion of ``thermodynamic states,'' defined by their
ability to be interconverted reversibly and with certainty.  Thermodynamic
states may include arbitrary interaction between the particles, or, in fact, may
even be defined on a small system of a few particles.  More precisely, our notion
of thermodynamic states coincides with our definition of battery states and
corresponds to a state of the form $P\Gamma P/\tr(P\Gamma)$ for a
projector $P$ that commutes with $\Gamma$.  These states can be reversibly
interconverted in our framework, and usual statistical mechanical states are
precisely of this form.  The thermodynamic states may be used as reference
charge states of a battery system, in the sense that they enable the same
processes.

The core of the framework is the $\Gamma$-sub-preserving restriction imposed on
the free operations.  The $\Gamma$ operator\ encodes all the relevant physics of
the system considered.  The restriction may be due to any physical reason---for
instance, by assuming that the evolution is modeled by thermal operations on the
microscopic level, or by otherwise justifying or assuming that the spontaneous
dynamics are thermalizing in an appropriate sense.  Furthermore,
$\Gamma$-sub-preservation may come about in any situation where one or several
conserved physical quantities are being exchanged with a corresponding
thermodynamic bath, in a natural generalization of thermal
operations~\cite{YungerHalpern2016PRE_beyond,%
  Guryanova2016NatComm_multiple,%
  YungerHalpern2016NatComm_NATSandNATO}.

Our framework is not limited to usual thermodynamics: By considering the
$\Gamma$ operator as an abstract entity, all considerations in our framework are
of a purely quantum information theoretic nature and make no explicit reference
to any physical quantity.
For instance, one can consider purity as a resource and
impose that operations sub-preserve the identity operator; our framework applies
by taking $\Gamma=\Ident$; in this way, one can recover the max-entropy as the
number of pure qubits required to perform data compression of a given state.
We might further expect connections with single-shot notions of conditional
mutual information~\cite{Fawzi2015CMP_Markov,%
  Berta2015JMP_mutual,Berta2016TIT_bounds,%
  Majenz2017PRL_catdecoupling}, which in the i.i.d.\@ case
can also be expressed as a difference of quantum relative entropies.  Our
approach is also promising for calculating remainder terms in recovery of
quantum information~\cite{Wilde2015RSPA_recoverability,%
  Datta2015JPA_Markov,Wehner2015arXiv_reversibility,%
  Buscemi2016PRA_reversibility,Sutter2016IEEETIT_pinched,%
  Sutter2016PRSA_markov}.
Furthermore, being a $\Gamma$-sub-preserving map is a semidefinite constraint,
and thus optimization problems over free operations may often be formulated as
semidefinite programs, which exhibit a rich structure and can be solved
efficiently.

Although the goal of our paper is to derive a fundamental limitation on
operations in quantum thermodynamics, one can also ask the question of whether
this limit can be achieved within a physically well-motivated set of operations.
Because our bound is given by an optimization over Gibbs-preserving maps, it is
clear that there is one such map that will attain that bound (or get
arbitrarily close).  However, it is not clear under which conditions our bound
can be approximately attained in a more practical or realistic regime such as
thermal operations (possibly combined with additional resources), as is the case
for a system described by a fully degenerate Hamiltonian~\cite{Faist2015NatComm}
or for classical systems~\cite{Faist2015NJP_Gibbs}.

The question of achievability is related to coherence in the context of
thermodynamic transformations, an issue of significant recent
interest~\cite{Korzekwa2016NJP_extraction,%
  Aberg2014PRL_catalytic,%
  Cwiklinski2015PRL_limitations,%
  NellyNg2015NJP_limits,%
  Lostaglio2015NC_beyond,%
  Lostaglio2015PRX_coherence}. %
In particular, thermal operations do not allow the generation of a coherent
superposition of energy levels, while this is allowed to some extent by
Gibbs-preserving maps, which are hence not necessarily covariant under time
translation~\cite{Faist2015NJP_Gibbs}.
Our approach suggests a possible interpretation for why this is the case: With
$\Gamma$-sub-preserving operations, one requires no assumption that the system
in question is isolated---for instance, $\Gamma$ could be the reduced state on
one party of a joint Gibbs state of a strongly interacting bipartite system.
Indeed, the example in ref.~\cite{Faist2015NJP_Gibbs} can be explained in this
way~\cite[Section~4.4.4]{PhDPhF2016}.
Still, the question of whether Gibbs-preserving maps may be implemented
approximately using a more practical framework, such as thermal operations
(perhaps under certain conditions), remains an open question.  We note, though,
that the coherence resources required in order to implement a process can be
determined using the techniques of ref.~\cite{Cirstioiu2017arXiv_gauge}.  These
general tools might thus clarify the precise coherence requirements of
implementing Gibbs-preserving maps with covariant operations.
In a similar vein, one could study the effect of catalysis in our
framework~\cite{Vidal2002PRL_Catalysis,%
  NellyNg2015NJP_limits,Lostaglio2015PRL_absence}, presumably in the context of
state transitions rather than logical processes.
A closer study of this type of situation is expected to reveal connections with
smoothed, generalized, free energies~\cite{Meer2017arXiv_smoothed} and the notion
of approximate majorization~\cite{Horodecki2017arXiv_approximate}.
Furthermore, we expect tight connections with recent results that provide a
complete set of entropic conditions for fully quantum state transformations
under either general Gibbs-preserving maps or time-covariant Gibbs-preserving
maps~\cite{Gour2017arXiv_entropic}.  As a condition on state transformations, it
automatically provides an upper bound to the amount of work one can extract when
implementing a specific process, which, in particular, implements a specific
state transformation.  Furthermore, the way the covariance constraint is
enforced in ref.~\cite{Gour2017arXiv_entropic} provides a promising approach for
including the covariance constraint in our framework as well and tightening our
fundamental bound in the context of operations which are restricted to be time
covariant.  Finally, the conditions of ref.~\cite{Gour2017arXiv_entropic} may be
used to prove the achievability of state transformations with a covariant
mapping; one could expect a suitable generalization of both frameworks to
simultaneously handle possible symmetry constraints and logical processes as
well as state transformations, and a tolerance against unlikely events using
$\epsilon$-approximations.

Finally, our framework can describe a system at any degree of coarse-graining,
including intermediate scales between the microscopic and macroscopic regimes.
We can consider, for instance, a small-scale classical memory element that stores
information using many electrons or many spins (such as everyday hard drives):
The electrons may need to be treated thermodynamically, but not the system as a
whole, since we have control over the information-bearing degrees of freedom on
a relatively small scale.
Other such examples include Maxwell-demon-type scenarios, which our framework
allows to treat systematically.
Our framework is also suitable for describing agents who possess a quantum
memory containing quantum side information about the system in question.
In other words, we provide a self-contained framework of thermodynamics, which
allows us to make the dependence on the observer explicit, underscoring the idea
that thermodynamics is a theory that is relative to the
observer~\cite{Jaynes1992_Gibbs}.

\begin{acknowledgments}
We are grateful to
Mario Berta,
Fernando Brand\~ao,
Fr\'ed\'eric Dupuis,
Lea Kr\"amer Gabriel,
David Jennings,
and
Jonathan Oppenheim
for discussions.
We acknowledge contributions from the Swiss National Science Foundation (SNSF)
via the NCCR QSIT as well as Project No.~{200020\_165843}.  PhF acknowledges
support from the SNSF through the Early PostDoc.Mobility Fellowship
No.~{P2EZP2\_165239} hosted by the Institute for Quantum Information and Matter
(IQIM) at Caltech, as well as from the National Science Foundation.
\end{acknowledgments}

\onecolumngrid
\vspace{2cm}
\section*{APPENDICES}
\vspace{1cm}
\twocolumngrid

\appendix

%
%

The appendices are structured as
follows. \autoref{appx:sec:preliminaries} offers some preliminary definitions
and notation conventions. In \autoref{appx:sec:properties-of-the-framework} we
prove the properties of our framework outlined in the main text, namely that any
trace-nonincreasing, $\Gamma$-sub-preserving map can be dilated to a
trace-preserving, $\Gamma$-preserving map, as well as the equivalence of a class
of battery models. \autoref{appx:sec:coh-rel-entr-def-and-props} is dedicated to
the definition and properties of the coherent relative
entropy. \autoref{appx:sec:robustness-battery-states} discusses the robustness
of battery states to small perturbations.  Finally,
\autoref{appx:sec:technical-utilities} provides a selection of miscellaneous
technical tools which are used in the rest of the paper.

\section{Technical Preliminaries}
\appendixlabel{appx:sec:preliminaries}

Let us first fix some notation.  The state space of a quantum system $S$ is a
Hilbert space $\Hs_S$ (in this work, we deal exclusively with finite-dimensional
spaces), the dimension of which we denote by $\abs{S}$.  A quantum state
$\rho_S$ of $S$ is a positive semidefinite operator of unit trace acting on
$\Hs_S$.  A subnormalized quantum state $\rho_S$ is defined as satisfying
$\tr\rho_S\leqslant1$.  In this work, quantum states are normalized to unit
trace unless otherwise stated.  We use the notation $A\geqslant 0$ to indicate
that an operator $A$ is positive semidefinite, and $A\geqslant B$ to indicate
that $A-B\geqslant 0$.  For any positive semidefinite operator $A_S$ acting on
$\Hs_S$ corresponding to a system $S$, we denote by $\Pi^{A_S}_S$ the projector
onto the support of $A_S$.  Furthermore, all projectors considered in this work
are Hermitian.  For each system $S$ with Hilbert space $\Hs_S$, we fix a basis
which we denote by $`{\ket k_S}$.  Between any two systems $A$ and $B$ of same
dimension (which we denote by $\Hs_A\simeq\Hs_B$ or $A\simeq B$), we may define
a reference (not normalized) entangled ket
$\ket\Phi_{A:B} := \sum_k \ket k_A\otimes\ket k_B$, as well as the partial
transpose operation
$t_{A\to B}`*(\cdot) = \tr_A`*[\Phi_{A:B}\,(\cdot)] = \sum_{kk'}
\matrixel{k}\cdot{k'}_A\,\ketbra{k'}{k}_B$ with
$\Phi_{A:B} = \proj{\Phi}_{A:B}$.  Furthermore, for any operator
$\Xi_A\geqslant 0$, a ket $\ket{\Xi}_{A:B}$ is a purification of $\Xi_A$ if and
only if there exists a ket $\ket{\Phi^\Xi}_{A:B}$ of the form
$\ket{\Phi^\Xi}_{A:B} = \sum_j \ket{\chi_j}_A\ket{\chi_j}_B$ with orthonormal
sets $`{\ket{\chi_j}_A},`{\ket{\chi_j}_B}$ such that
$\ket{\Xi}_{A:B} = \Xi_A^{1/2}\ket{\Phi^\Xi}_{A:B} =
\Xi_B^{1/2}\ket{\Phi^\Xi}_{A:B}$ with $\Xi_A = \tr_B \proj{\Xi}_{A:B}$ and
$\Xi_B = \tr_A \proj{\Xi}_{A:B}$ (Schmidt decomposition); the ket
$\ket{\Xi}_{A:B}$ is normalized if and only if $\tr\Xi_A=1$.

Throughout this paper, `$\log$' denotes the logarithm in base~2.

\subsection{Logical process and process matrix}

We denote by a \emph{logical process} a full description of a logical mapping of input
states to output states:

\begin{thmheading}{Logical process}
  A \emph{logical process} $\mathcal{E}_{X\to X'}$ is a completely positive,
  trace-preserving map, mapping Hermitian operators on $\Hs_X$ to Hermitian
  operators on $\Hs_{X'}$.
\end{thmheading}

A logical process along with an input state may be characterized by their
\emph{process matrix}, defined as the Choi-Jamio\l{}kowski map of the completely
positive map, weighted by the input state.

\begin{thmheading}{Process matrix}
  \label{defn:process-matrix}
  Let $\mathcal{E}_{X\to X'}$ be a logical process, and let $\sigma_X$ be a
  quantum state. Let $R_X$ be a system described by a Hilbert space
  $\Hs_{R_X}\simeq\Hs_X$, and let
  $\ket\sigma_{XR_X} = \sigma_X^{1/2}\,\ket\Phi_{X:R_X}$ be a purification of
  $\sigma_X$.  Then the \emph{process matrix} corresponding to
  $\mathcal{E}_{X\to X'}$ and $\sigma_X$ is defined as
  $ \rho_{X'R_X} = \mathcal{E}_{X\to X'}`\big(\proj\sigma_{XR_X}) $, where the
  identity process is understood on $R_X$.  The process matrix is itself a
  normalized quantum state.  The (unnormalized) \emph{Choi matrix} of
  $\mathcal{E}_{X\to X'}$ is
  $E_{X'R_X} = \mathcal{E}_{X\to X'}`*(\Phi_{X:R_X})$, and satisfies
  $\tr_{X'}(E_{X'R_X}) = \Ident_{R_X}$.
\end{thmheading}

The reduced states $\sigma_X$ and $\sigma_{R_X}$ of $\ket\sigma_{X:R_X}$ on
$R_X$ and $X$, respectively, are related by a partial transpose operation:
$\sigma_{R_X} = \tr_X(\sigma_{XR_X}) = t_{X\to R_X}`*(\sigma_X)$.  Furthermore,
we have the properties
$\rho_{X'R_X} = \sigma_{R_X}^{1/2}\,E_{X'R_X}\,\sigma_{R_X}^{1/2}$ and
$\rho_{R_X}=\tr_{X'}(\rho_{X'R_X})=\sigma_{R_X}$.

The process matrix in return fully determines the channel
$\mathcal{E}_{X\to X'}$ on the support of $\sigma_X$, allowing for a full
characterization of the input state as well as the logical process on the
support of the input.

\subsection{Distance measures on states}

For two quantum states $\rho,\sigma$, the trace distance is given by
$D(\rho,\sigma) = \frac12\norm{\rho - \sigma}_1$, and their fidelity is defined
as $F`*(\rho,\sigma) = \tr`\big[`(\rho^{1/2}\,\sigma\,\rho^{1/2})^{1/2}]$.  From
the fidelity one can define the \emph{purified distance}\footnote{The purified
  distance is also called \emph{Bures distance} (up to a factor of
  $2$)~\cite{BookBengtssonZyczkowski2006_Geometry} and coincides to second order
  with the quantum \emph{angle}~\cite{BookNielsenChuang2000}.} as
$P(\rho,\sigma) = \sqrt{1 -
  F^2`(\rho,\sigma)}$~\cite{Tomamichel2010IEEE_Duality,PhDTomamichel2012,%
  BookTomamichel2016_Finite}.

It will also prove convenient to work with subnormalized quantum states.
Following Refs.~\cite{Tomamichel2010IEEE_Duality,PhDTomamichel2012,%
  BookTomamichel2016_Finite}, for any two subnormalized states $\rho,\sigma$, we
define the \emph{(generalized) trace distance}
$D(\rho,\sigma) = \frac12\norm{\rho-\sigma}_1 + \frac12\abs{\tr\rho-\tr\sigma}$,
the \emph{(generalized) fidelity}
$F(\rho,\sigma) = \tr`\big[`(\rho^{1/2}\,\sigma\,\rho^{1/2})^{1/2}] +
\sqrt{(1-\tr\rho)(1-\tr\sigma)}$ and the \emph{(generalized) purified distance}
$P(\rho,\sigma) = \sqrt{1 - F^2`(\rho,\sigma)}$.
For any two subnormalized states $\rho,\sigma$, we have the useful relation
$D`(\rho,\sigma)\leqslant P`(\rho,\sigma)\leqslant
\sqrt{2\,D`(\rho,\sigma)}$.

\subsection{Semidefinite programming}

Semidefinite programming is a useful toolbox which brings a rich structure to a
certain class of optimization problems.  We follow the notation of
Refs.~\cite{Watrous2009_sdps,Watrous2011LectureNotesQIT}, where proofs to the
statements given here may also be found.

Let $A$ and $B$ be {Hermitian matrices}, let $\Phi\,`(\cdot)$ be a
Hermiticity-preserving superoperator, and let $X\geqslant 0$ be the optimization
variable, which is a Hermitian matrix constrained to the cone of positive
semidefinite matrices.  The prototypical {semidefinite program} is an
optimization problem of the following form:\footnote{Several equivalent
  prototypical forms for semidefinite programs exist in the literature.}
\begin{subequations}
  \label{eq:formalism-SDP-minform-primal-min}
  \begin{align}
    \mathrm{minimize:}\quad& \tr`(A\,X) \\
    \mathrm{subject~to:}\quad& \Phi\,`(X) \geqslant B\ .
  \end{align}
\end{subequations}
To any such problem corresponds another, related problem in terms of a different
variable $Y\geqslant 0$:
\begin{subequations}
  \label{eq:formalism-SDP-minform-dual-max}
  \begin{align}
    \mathrm{maximize:}\quad& \tr`(B\,Y) \\
    \mathrm{subject~to:}\quad& \Phi^\dagger`(Y) \leqslant A\ .
  \end{align}
\end{subequations}
The first problem is called the \emph{primal problem}, and the second,
\emph{dual problem}.  Either problem is deemed \emph{feasible} if there exists a
valid choice of the optimization variable satisfying the corresponding
constraint.  If there exists a $X\geqslant0$ such that $\Phi`(X)-B$ is positive
definite, the primal problem is said to be \emph{strictly feasible}; the dual is
\emph{strictly feasible} if there is a $Y\geqslant0$ such that
$A-\Phi^\dagger`(Y)$ is positive definite.
For these two problems, we define their optimal attained values
\begin{subequations}
  \begin{align}
    \alpha &= \inf`\big{ \tr`(A\,X) : \Phi\,`(X) \geqslant B, X\geqslant 0 }\ ; \\
    \beta &= \sup`\big{ \tr`(B\,Y) : \Phi^\dagger`(Y) \leqslant A, Y\geqslant 0 }\ ,
  \end{align}
\end{subequations}
with the convention that $\alpha=-\infty$ if the primal problem is not feasible
and $\beta=+\infty$ if the dual problem is not feasible.

For any semidefinite program, we have $\alpha\geqslant\beta$, a property called
\emph{weak duality}.  This convenient relation allows us to immediately bound
the optimal attained value of one of the two problems by picking any valid
candidate in the other.

For some pairs of problems, we may have $\alpha=\beta$.  In those cases we speak
of \emph{strong duality}.  This is often the case in practice.  A useful result
here is Slater's theorem, providing sufficient conditions for strong
duality~\cite[Theorem~2.2]{Watrous2009_sdps}.
\begin{theorem}[Slater's conditions for strong duality]
  \label{thm:Slater}\noproofref
  Consider any semidefinite program written in the
  form~\eqref{eq:formalism-SDP-minform-primal-min}, and let its dual problem be
  given by~\eqref{eq:formalism-SDP-minform-dual-max}.  Then:
  \begin{enumerate}[label=(\roman*)]
  \item if the primal problem is feasible and the dual is strictly feasible,
    then strong duality holds and there exists a valid choice $X$ for the primal
    problem with $\tr\,`(A\,X)=\alpha$;
  \item if the dual problem is feasible and the primal is strictly feasible,
    then strong duality holds and there exists a valid choice $Y$ for the dual
    problem with $\tr\,`(B\,Y)=\beta$.
  \end{enumerate}
\end{theorem}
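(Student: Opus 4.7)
The plan is to follow the standard convex-analytic proof of Slater's strong-duality theorem, based on a hyperplane separation argument in the finite-dimensional real vector space of Hermitian operators equipped with the Hilbert--Schmidt inner product. Weak duality $\alpha\geqslant\beta$ comes first and motivates the dual pair: for any primal-feasible $X$ and dual-feasible $Y$, one rewrites
\begin{align*}
\tr(AX)-\tr(BY)
= \tr\bigl((A-\Phi^\dagger(Y))X\bigr) + \tr\bigl(Y(\Phi(X)-B)\bigr),
\end{align*}
which is non-negative because each of the four operators $X,\,Y,\,A-\Phi^\dagger(Y),\,\Phi(X)-B$ is positive semidefinite.

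For claim (i), assume the primal is feasible and the dual strictly feasible. I would introduce the convex sets
\begin{align*}
C &= \bigl\{(\Phi(X)-B,\tr(AX)) : X\geqslant 0\bigr\}, \\
D &= \bigl\{(W,s) : W\geqslant 0,\ s<\alpha\bigr\},
\end{align*}
in $\mathcal{H}\times\mathbb{R}$, where $\mathcal{H}$ denotes the Hermitian operators on the relevant space. By definition of $\alpha$, the sets $C$ and $D$ are disjoint, so the Hahn--Banach separation theorem furnishes a nonzero pair $(Y,-c)$ and a scalar $\gamma$ with $\tr(YW)-cs\geqslant \gamma$ on $D$ and $\tr(Y(\Phi(X)-B))-c\tr(AX)\leqslant \gamma$ on $C$. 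Letting $W$ range over large positive multiples of arbitrary positive semidefinite operators forces $Y\geqslant 0$; sending $s\to -\infty$ forces $c\geqslant 0$; and rescaling $X\to\lambda X$ with $\lambda\to\infty$ in the second inequality forces $\Phi^\dagger(Y)\leqslant cA$.

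The main obstacle is ruling out the degenerate case $c=0$. If $c=0$, then $Y\geqslant 0$, $Y\neq 0$ and $\Phi^\dagger(Y)\leqslant 0$. Using the strictly feasible dual witness $Y_0$ with $A-\Phi^\dagger(Y_0)>0$, the family $Y_0+\lambda Y$ is dual-feasible for every $\lambda\geqslant 0$; the separating inequalities, evaluated at $X=0$ and at a primal-feasible $X_0$, can be combined to show $\tr(BY)>0$, which would drive the dual objective to $+\infty$ and contradict $\beta\leqslant\alpha<\infty$. Hence $c>0$; renormalizing $Y/c\to Y$ produces a dual-feasible point with $\tr(BY)\geqslant\alpha$, which together with weak duality gives $\alpha=\beta$. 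This step---transferring the slack in the dual strict feasibility into a strict positivity of the objective multiplier---is really where Slater's hypothesis is doing its work.

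Attainment of the primal optimum then follows from a coercivity argument: for any strictly dual-feasible $Y_0$ and any primal-feasible $X$, one has $\tr((A-\Phi^\dagger(Y_0))X)=\tr(AX)-\tr(Y_0\Phi(X))\leqslant \tr(AX)-\tr(BY_0)$, and since $A-\Phi^\dagger(Y_0)>0$ its smallest eigenvalue is strictly positive, so the sublevel sets $\{X\geqslant 0:\Phi(X)\geqslant B,\ \tr(AX)\leqslant c\}$ are bounded, hence compact, and continuity of $\tr(AX)$ yields an optimizer. Claim (ii) then follows by symmetry: rewriting the dual problem in the primal normal form (introducing a slack variable $A-\Phi^\dagger(Y)\geqslant 0$) produces an SDP whose dual is, up to routine manipulations, the original primal, and (i) applies with the roles of primal and dual interchanged.
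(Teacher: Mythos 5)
The paper states this theorem without proof (it carries a \verb|\noproofref| tag and cites Watrous's SDP review), so there is no in-paper argument to compare against; I can only check your proposal on its own terms.

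Your weak-duality decomposition and the coercivity argument for primal attainment under strict dual feasibility are both correct. The problem is the hyperplane-separation step for claim~(i), specifically the exclusion of $c=0$. With $c=0$ the separating inequalities give you, from $(W,s)=(0,s)\in D$, that $\gamma\leqslant 0$, and from $X=0\in C$, that $-\tr(YB)\leqslant\gamma$; combining yields only $\tr(BY)\geqslant 0$, not $\tr(BY)>0$. The strict inequality you need is simply not available: take $\Phi\equiv 0$, $B=0$, $A>0$. Then the primal is feasible and the dual strictly feasible, yet $\tr(BY)=0$ for every $Y$, and a separating hyperplane with $c=0$ (any $Y\geqslant 0$, $Y\neq 0$, $\gamma=0$) does exist. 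So strict \emph{dual} feasibility does not rule out $c=0$ in your setup, and the family $Y_0+\lambda Y$ gives no contradiction since the dual objective stays constant in $\lambda$.

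The hypothesis that kills $c=0$ in this separation is strict \emph{primal} feasibility: if $X_0\geqslant 0$ satisfies $\Phi(X_0)-B>0$, then $c=0$ forces $0<\tr\bigl(Y(\Phi(X_0)-B)\bigr)\leqslant\gamma\leqslant 0$, a contradiction, whence $c>0$ and $Y/c$ is a dual-feasible point with $\tr(B\,Y/c)\geqslant\alpha$. That is exactly claim~(ii), not~(i). So the correct architecture is the reverse of what you wrote: prove~(ii) directly by separation, then obtain~(i) by writing the dual in primal normal form and observing that its dual is the original primal (the symmetry you already identify at the end). Your coercivity lemma is not needed for~(i) if you take this route, since attainment of the ``dual of the dual'' is attainment of the primal; but it is a clean independent proof of primal attainment and is worth keeping.
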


We note that strong duality in itself doesn't necessarily imply the existence of
an optimal choice of variables attaining the infimum or supremum.  The existence
of optimal primal or dual choices may be explicitly stated by Slater's
conditions, or may be deduced by an auxiliary argument such as if the
constraints force the optimization region to be compact.

\section{Properties of our framework}
\appendixlabel{appx:sec:properties-of-the-framework}

\subsection{Dilation of $\Gamma$-sub-preserving maps
  to $\Gamma$-preserving maps}

For two systems $X$, $Y$, and corresponding operators
$\Gamma_X,\Gamma_Y\geqslant 0$, We say that a completely positive map
$\Phi_{X\to Y}$ is \emph{$\Gamma$-sub-preserving} if it satisfies
$\Phi`(\Gamma_X)\leqslant\Gamma_Y$.  Similarly, $\Phi_{X\to Y}$ is
\emph{$\Gamma$-preserving} if it satisfies $\Phi`(\Gamma_X)=\Gamma_Y$.

From a technical point of view, trace-preserving $\Gamma$-preserving maps don't
handle nicely systems of varying sizes or with different $\Gamma$ operators.
For example, if $X$ and $Y$ are systems with $\tr\Gamma_X\neq\tr\Gamma_Y$, there
may clearly be no $\Gamma$-preserving map from $X$ to $Y$ which is also trace
preserving.
It turns out that, by focusing on trace-nonincreasing $\Gamma$-sub-preserving
maps instead, we may circumvent the issue in a physically justified way: A
trace-nonincreasing $\Gamma$-sub-preserving map can always be seen as a
restriction of a $\Gamma$-preserving map on a larger system.  Furthermore, the
ancillas we have to include in this dilation are prepared in, or finish up in,
eigenstates of the respective $\Gamma$ operators.

\begin{proposition}[Dilation of $\Gamma$-sub-preserving maps]
  \index{dilation of $\Gamma$-sub-preserving to $\Gamma$-preserving maps}
  \label{prop:dilation-of-Gsp-to-Gp}
  Let $K$ and $L$ be quantum systems with corresponding $\Gamma_K$ and
  $\Gamma_L$.  Let $\tilde\Phi_{K\to L}$ be a trace-nonincreasing,
  $\Gamma$-sub-preserving map. Choose two arbitrary eigenvectors
  $\ket{\mathrm k}_K$ and $\ket{\mathrm l}_L$ of $\Gamma_K$ and $\Gamma_L$,
  respectively.  Then there exists a qubit system $\Hs_Q$ with corresponding
  $\Gamma_Q$ diagonal in a basis composed of two orthogonal states
  $`{\ket{\mathrm{i}}_Q,\ket{\mathrm{f}}_Q}$, such that there exists a
  trace-preserving, $\Gamma$-preserving map $\Phi_{KLQ\to KLQ}$ satisfying
  \begin{align}
    \label{eq:dilation-of-Gsp-to-Gp-condition-recover-correct-mapping}
    \tilde\Phi_{K\to L}\left(\cdot\right)
    = \matrixel{\mathrm{k}\,\mathrm{f}}{\;
    \Phi_{KLQ\to KLQ}`\big(\left(\cdot\right) \otimes
    \proj{\mathrm{l}\,\mathrm{i}}_{LQ})
    \;
    }{\mathrm{k}\,\mathrm{f}}_{KQ}\ .
  \end{align}
  Here, the joint $\Gamma$ operator on $K,L,Q$ is
  $\Gamma_{KLQ}=\Gamma_K\otimes\Gamma_L\otimes\Gamma_Q$.  Furthermore, the
  corresponding eigenvalues satisfy
  \begin{align}
    \label{eq:dilation-of-Gsp-to-Gp-condition-on-Gamma-Q}
    \matrixel{\mathrm l}{\Gamma_L}{\mathrm l}_L
    \matrixel{\mathrm i}{\Gamma_Q}{\mathrm i}_Q
    = \matrixel{\mathrm k}{\Gamma_K}{\mathrm k}_K
    \matrixel{\mathrm f}{\Gamma_Q}{\mathrm f}_Q\ .
  \end{align}
\end{proposition}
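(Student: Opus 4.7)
The plan is to construct $\Phi_{KLQ\to KLQ}$ as a sum $\tilde\Phi_1 + \Phi_2$ of two completely positive maps, where $\tilde\Phi_1$ implements the ``main'' transition and $\Phi_2$ is a carefully chosen completion that restores trace- and $\Gamma$-preservation while preserving the target projection. I would first use the freedom in $\Gamma_Q$ to enforce the compatibility condition~\eqref{eq:dilation-of-Gsp-to-Gp-condition-on-Gamma-Q}; a convenient choice is $\matrixel{\mathrm i}{\Gamma_Q}{\mathrm i}_Q=\matrixel{\mathrm k}{\Gamma_K}{\mathrm k}_K$ and $\matrixel{\mathrm f}{\Gamma_Q}{\mathrm f}_Q=\matrixel{\mathrm l}{\Gamma_L}{\mathrm l}_L$. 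Writing $\tilde\Phi(\omega) = \sum_j M_j \omega M_j^\dagger$ in Kraus form, I would define the main Kraus operators $\tilde K_j$ on $\Hs_{KLQ}$ by their action
\begin{align*}
  \tilde K_j:\ \ket\psi_K\otimes\ket{\mathrm l,\mathrm i}_{LQ}\ \mapsto\ \ket{\mathrm k}_K\otimes M_j\ket\psi_L\otimes\ket{\mathrm f}_Q
\end{align*}
and zero on inputs orthogonal to $\ket{\mathrm l,\mathrm i}_{LQ}$, setting $\tilde\Phi_1(\cdot) := \sum_j \tilde K_j(\cdot)\tilde K_j^\dagger$. A direct computation then gives $\sum_j \tilde K_j^\dagger\tilde K_j = \bigl(\sum_j M_j^\dagger M_j\bigr)\otimes\proj{\mathrm l,\mathrm i}_{LQ}\leqslant\Ident_{KLQ}$ and $\tilde\Phi_1(\Gamma_{KLQ}) = \matrixel{\mathrm l}{\Gamma_L}{\mathrm l}_L\matrixel{\mathrm i}{\Gamma_Q}{\mathrm i}_Q\,\proj{\mathrm k,\mathrm f}_{KQ}\otimes\tilde\Phi(\Gamma_K)_L$. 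Combined with $\tilde\Phi(\Gamma_K)\leqslant\Gamma_L$ and~\eqref{eq:dilation-of-Gsp-to-Gp-condition-on-Gamma-Q}, this shows that $\tilde\Phi_1$ is trace-nonincreasing and $\Gamma$-sub-preserving, and by construction it already realises~\eqref{eq:dilation-of-Gsp-to-Gp-condition-recover-correct-mapping} on its own.

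Next I would extend $\tilde\Phi_1$ to a trace- and $\Gamma$-preserving $\Phi = \tilde\Phi_1+\Phi_2$ by building $\Phi_2$ as a sum of two measure-and-prepare channels with measurement operators
\begin{align*}
  A_1 = \bigl(\Ident_K - \sum\nolimits_j M_j^\dagger M_j\bigr)\otimes\proj{\mathrm l,\mathrm i}_{LQ}\,,\qquad
  A_2 = \Ident_K\otimes\bigl(\Ident_{LQ}-\proj{\mathrm l,\mathrm i}_{LQ}\bigr)\,,
\end{align*}
and preparation states $B_1,B_2$ normalized to $\tr B_1=\tr B_2=1$. Crucially, $B_1$ is chosen to be supported entirely orthogonal to $\ket{\mathrm k,\mathrm f}_{KQ}$, which guarantees that the first channel does not contaminate the target projection; the second channel is inactive on $\ket{\mathrm l,\mathrm i}_{LQ}$ inputs and therefore cannot contaminate either. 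Requiring $\Phi(\Gamma_{KLQ})=\Gamma_{KLQ}$ turns into a single operator equation for $B_1,B_2$ once one decomposes the residual $\Gamma_{KLQ}-\tilde\Phi_1(\Gamma_{KLQ})$ into its $\ket{\mathrm k,\mathrm f}_{KQ}$-block (which is proportional to $\Gamma_L-\tilde\Phi(\Gamma_K)\geqslant 0$ and must come entirely from $B_2$) and its orthogonal ``rest'' (which can come from either).

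The main obstacle is showing that $B_1$ and $B_2$ can be chosen simultaneously positive semidefinite. A natural ansatz takes $B_1$ and the ``rest'' part of $B_2$ both proportional to the rest of $\Gamma_{KLQ}-\tilde\Phi_1(\Gamma_{KLQ})$, with the proportionality fixed by trace normalization; positivity of the resulting free parameter then reduces to the elementary inequality
\begin{align*}
  \matrixel{\mathrm l}{\Gamma_L}{\mathrm l}_L\matrixel{\mathrm i}{\Gamma_Q}{\mathrm i}_Q\bigl(\tr\Gamma_K+\tr\Gamma_L-\tr\tilde\Phi(\Gamma_K)\bigr)\leqslant\tr(\Gamma_K)\,\tr(\Gamma_L)\,\bigl(\matrixel{\mathrm i}{\Gamma_Q}{\mathrm i}_Q+\matrixel{\mathrm f}{\Gamma_Q}{\mathrm f}_Q\bigr)\,,
\end{align*}
which itself follows by splitting the two terms on the left and applying $\matrixel{\mathrm l}{\Gamma_L}{\mathrm l}_L\leqslant\tr\Gamma_L$, $\matrixel{\mathrm k}{\Gamma_K}{\mathrm k}_K\leqslant\tr\Gamma_K$, and~\eqref{eq:dilation-of-Gsp-to-Gp-condition-on-Gamma-Q}. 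Once positivity is in hand, \eqref{eq:dilation-of-Gsp-to-Gp-condition-recover-correct-mapping} is immediate since $\Phi_2$ contributes nothing to the $\matrixel{\mathrm k,\mathrm f}{\cdot}{\mathrm k,\mathrm f}_{KQ}$ projection on the relevant inputs while $\tilde\Phi_1$ already reproduces $\tilde\Phi(\omega_K)_L$. Degenerate cases (e.g.~$\tilde\Phi$ already $\Gamma$-preserving, or vanishing denominators coming from singular $\Gamma_{KLQ}$) can be absorbed via a straightforward limiting argument.
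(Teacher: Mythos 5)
Your construction is correct, and it is genuinely different from the paper's. The paper builds $\Phi_{KLQ}$ out of \emph{four} branches: the ``forward'' branch $\tilde\Phi(\matrixel{\mathrm l\,\mathrm i}{\cdot}{\mathrm l\,\mathrm i})\otimes\proj{\mathrm k\,\mathrm f}$ (identical to your $\tilde\Phi_1$), a Petz-type ``backward'' branch $\Gamma_K^{1/2}\tilde\Phi^\dagger(\Gamma_L^{-1/2}\matrixel{\mathrm k\,\mathrm f}{\cdot}{\mathrm k\,\mathrm f}\Gamma_L^{-1/2})\Gamma_K^{1/2}\otimes\proj{\mathrm l\,\mathrm i}$, and two measure-and-prepare completions $\Xi(\matrixel{\mathrm i}{\cdot}{\mathrm i})\otimes\proj{\mathrm i}$, $\Omega(\matrixel{\mathrm f}{\cdot}{\mathrm f})\otimes\proj{\mathrm f}$. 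The trace- and $\Gamma$-preservation constraints then fix $\Xi$ and $\Omega$ blockwise, and because the backward branch already absorbs the ``hard'' part of the residual, $\Xi$ and $\Omega$ can be taken of the simple form $\tr(B\,\cdot)\,\xi$ with $\xi$ proportional to what remains. Your route discards the backward branch entirely and uses only the forward branch plus two measure-and-prepare completions with measurement operators $A_1 = (\Ident_K-\sum_j M_j^\dagger M_j)\otimes\proj{\mathrm l\,\mathrm i}_{LQ}$ and $A_2 = \Ident_K\otimes(\Ident_{LQ}-\proj{\mathrm l\,\mathrm i}_{LQ})$, which together with $\sum_j\tilde K_j^\dagger\tilde K_j$ resolve $\Ident_{KLQ}$. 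What this buys you: the construction has fewer moving parts, you never need the adjoint of $\tilde\Phi$, and the positivity argument reduces to a single scalar inequality $g_{\mathrm l}g_{\mathrm i}(\tr\Gamma_K+\tr\Gamma_L-\tr\tilde\Phi(\Gamma_K))\leqslant\tr\Gamma_K\,\tr\Gamma_L\,(g_{\mathrm i}+g_{\mathrm f})$, which indeed follows by applying $g_{\mathrm l}\leqslant\tr\Gamma_L$ to the $\tr\Gamma_K$ term, $g_{\mathrm k}\leqslant\tr\Gamma_K$ combined with~\eqref{eq:dilation-of-Gsp-to-Gp-condition-on-Gamma-Q} to the $\tr\Gamma_L$ term, and dropping the negative term. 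What the paper's version buys in exchange is structural symmetry (the dilation ``looks the same'' run forwards or backwards, with the backward branch a Petz recovery of the forward one), which is conceptually aligned with the reversibility themes elsewhere in the paper, though it is not needed for the statement itself.

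Two small things to tighten if you were to write this up. First, since $R:=\Gamma_{KLQ}-\tilde\Phi_1(\Gamma_{KLQ})$ is block-diagonal with respect to $P_{\mathrm{kf}}:=\proj{\mathrm k\,\mathrm f}_{KQ}\otimes\Ident_L$ (this uses that $\ket{\mathrm k}$, $\ket{\mathrm f}$ are eigenvectors of $\Gamma_K$, $\Gamma_Q$), and since you force $P_{\mathrm{kf}}B_1P_{\mathrm{kf}}=0$, you should say explicitly that you also take $B_2$ block-diagonal; otherwise the decomposition into ``the $\ket{\mathrm k,\mathrm f}$ block of $B_2$'' and ``the rest of $B_2$'' is not a decomposition into positive parts. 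Second, the degenerate case $\tr R^{\mathrm{rest}}=0$ is handled more cleanly than by a limit: it forces $\Gamma_K\otimes\Gamma_Q = g_{\mathrm k}g_{\mathrm f}\proj{\mathrm k\,\mathrm f}_{KQ}$, hence $g_{\mathrm i}=0$ and therefore $\tr(A_1\Gamma_{KLQ})=0$, so $B_1$ is unconstrained and may be any state orthogonal to $P_{\mathrm{kf}}$ (such a state always exists since $Q$ is a qubit); similarly $\tr R=0$ makes both preparation states irrelevant. No actual limiting argument is needed.
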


This means that for any trace-nonincreasing, $\Gamma$-sub-preserving map
$\tilde\Phi_{K\to L}$, we may find a larger system and a trace-preserving,
$\Gamma$-preserving map $\Phi_{KLQ}$ such that $\tilde\Phi_{K\to L}$ is seen as
the restriction of $\Phi_{KLQ}$ to the case where the input is fixed to
$\ket{\mathrm l\,\mathrm i}_{LQ}$ on $LQ$, where we only consider the subspace
of the output in the support of $\ket{\mathrm k\,\mathrm f}_{KQ}$ on $KQ$.

If the operators $\Gamma_{K},\Gamma_L,\Gamma_Q$ come from Hamiltonians
$H_K,H_L,H_Q$ as $\Gamma_i = \ee^{-\beta H_i}$ for a fixed inverse temperature
$\beta$, then the ancillas are prepared and left in pure energy eigenstates,
specifically $\ket{\mathrm l\,\mathrm i}_{LQ}$ for the input and
$\ket{\mathrm k\,\mathrm f}_{KQ}$ for the output.  Furthermore
condition~\eqref{eq:dilation-of-Gsp-to-Gp-condition-on-Gamma-Q} ensures that the
total energy of the ancillas remains the same:
\begin{align}
  \dmatrixel{\mathrm l}{H_L}_L + \dmatrixel{\mathrm i}{H_Q}_Q
  = \dmatrixel{\mathrm k}{H_K}_K + \dmatrixel{\mathrm f}{H_Q}_Q\ .
\end{align}

Note that the apparent post-selection
in~\eqref{eq:dilation-of-Gsp-to-Gp-condition-recover-correct-mapping} is simply
a statement about the output of $\Phi$.  This is made clear by the following
corollary.  For instance, if $\tilde{\Phi}$ is trace-preserving on a certain
subspace, then as long as the input state is in that subspace, no post-selection
occurs in effect because the output state on $KQ$ is already exactly
$\ket{\mathrm{k\,f}}_{KQ}$, i.e., if we were to project the output onto that
state the projection would succeed with certainty. More generally, we show that
performing the dilated mapping with the correct input states on the ancillary
systems and without any post-selection at all, yields a process matrix which is
just as close to the ideal process matrix as the one which would have been
achieved with the original trace-decreasing map.

\begin{corollary}
  \label{cor:dilation-of-Gsp-to-Gp-cases}
  Consider the setting of \autoref{prop:dilation-of-Gsp-to-Gp}.  Then all the
  following statements hold.
  \begin{enumerate}[label=(\alph*)]
  \item Let $P$ be any projector on $\Hs_K$ and assume that $\tilde{\Phi}$ is
    trace-preserving on the support of $P$, i.e., for any state $\tau$ supported
    on $P$, it holds that $\tr(\tilde\Phi(\tau))=1$.  Then the mapping
    $\Phi_{KLQ}$ given by \autoref{prop:dilation-of-Gsp-to-Gp} satisfies
    \begin{align}
      \Phi_{KLQ\to KLQ}(\tau\otimes\proj{\mathrm{l\,i}}_{LQ}) =
      \tilde\Phi_{K\to L}(\tau)\otimes\proj{\mathrm{k\,f}}_{KQ}\ ,
    \end{align}
    for any quantum state $\tau$ supported on $P$.
  \item Let $\sigma_{KR}$ be any pure state between $K$ and a reference system
    $R$.  Assume that $\tilde{\Phi}$ satisfies
    $\tr(\tilde{\Phi}(\sigma_{KR}))=1$.  Then
    \begin{multline}
      \Phi_{KLQ\to KLQ}(\sigma_{KR}\otimes\proj{\mathrm{l\,i}}_{LQ})
      \\
      = \tilde\Phi_{K\to L}(\sigma_{KR})\otimes\proj{\mathrm{k\,f}}_{KQ}\ .
    \end{multline}
  \item Let $\sigma_{KR}$ be any pure state between $K$ and a reference system
    $R$, and let $\rho_{LR}$ be any quantum state.  Then the mapping
    $\Phi_{KLQ\to KLQ}$ provided by \autoref{prop:dilation-of-Gsp-to-Gp}
    satisfies
    \begin{multline}
      P`*(\Phi_{KLQ\to KLQ}(\sigma_{KR}\otimes \proj{\mathrm{l\,i}}_Q) ,
      \rho_{LR}\otimes\proj{\mathrm{k\,f}}_{LQ})
      \\
      = P`*(\tilde{\Phi}_{K\to L}(\sigma_{KR}), \rho_{LR})\ .
    \end{multline}
  \end{enumerate}
\end{corollary}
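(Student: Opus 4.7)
The plan is to leverage the key identity from \autoref{prop:dilation-of-Gsp-to-Gp}, namely
$$\matrixel{\mathrm{k\,f}}{\,\Phi_{KLQ\to KLQ}\bigl((\cdot)\otimes\proj{\mathrm{l\,i}}_{LQ}\bigr)\,}{\mathrm{k\,f}}_{KQ} = \tilde\Phi_{K\to L}(\cdot),$$
which says that the $\ket{\mathrm{k\,f}}_{KQ}$-component of the dilated output always reconstructs $\tilde\Phi$. Each of the three items will follow by controlling how much weight of the dilated output sits outside that component, either by trace-preservation or by a direct fidelity computation.

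For \textbf{(a)}, I would apply the defining identity with input $\tau$ supported on $P$. By hypothesis $\tr\tilde\Phi(\tau)=1$, so the projection of $\Phi_{KLQ}(\tau\otimes\proj{\mathrm{l\,i}}_{LQ})$ onto $\ket{\mathrm{k\,f}}_{KQ}$ already carries unit trace. Since $\Phi_{KLQ}$ is globally trace-preserving, the full output has trace $1$ too. Invoking the elementary fact that a positive semidefinite $M$ with $\tr M = \tr(\Pi M)$ for a projector $\Pi$ must satisfy $M = \Pi M\Pi$ then forces
$$\Phi_{KLQ\to KLQ}\bigl(\tau\otimes\proj{\mathrm{l\,i}}_{LQ}\bigr) = \tilde\Phi_{K\to L}(\tau)\otimes\proj{\mathrm{k\,f}}_{KQ}.$$
Part \textbf{(b)} follows by the same argument applied to the bipartite input $\sigma_{KR}$: the defining identity extends to the reference system $R$ by linearity (since $\Phi_{KLQ}$ acts trivially on $R$), the trace-preservation hypothesis places all the weight on the $\ket{\mathrm{k\,f}}_{KQ}$ subspace, and the output is again forced into the product form.

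For \textbf{(c)}, the main step will be to compute the fidelity $F(\omega,\eta)$ between $\omega := \Phi_{KLQ}(\sigma_{KR}\otimes\proj{\mathrm{l\,i}}_{LQ})$ and the target $\eta := \rho_{LR}\otimes\proj{\mathrm{k\,f}}_{KQ}$ from $F(\omega,\eta) = \tr\sqrt{\sqrt{\eta}\,\omega\,\sqrt{\eta}}$. Because $\sqrt{\eta} = \sqrt{\rho_{LR}}\otimes\proj{\mathrm{k\,f}}_{KQ}$ carries a projector onto $\ket{\mathrm{k\,f}}_{KQ}$, the middle factor collapses via the dilation identity to $\tilde\Phi(\sigma_{KR})\otimes\proj{\mathrm{k\,f}}_{KQ}$, giving
$$\sqrt{\eta}\,\omega\,\sqrt{\eta} \;=\; \sqrt{\rho_{LR}}\,\tilde\Phi(\sigma_{KR})\,\sqrt{\rho_{LR}}\otimes\proj{\mathrm{k\,f}}_{KQ}.$$
Taking the operator square root and trace yields $F(\omega,\eta) = \tr\sqrt{\sqrt{\rho_{LR}}\,\tilde\Phi(\sigma_{KR})\,\sqrt{\rho_{LR}}}$, which equals the generalized fidelity $F(\tilde\Phi(\sigma_{KR}),\rho_{LR})$, so the claim follows from $P=\sqrt{1-F^2}$.

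The only subtle point I anticipate is the bookkeeping between normalized and subnormalized states in \textbf{(c)}: the left-hand purified distance compares two normalized states on $KLQR$, while the right-hand one compares the possibly subnormalized $\tilde\Phi(\sigma_{KR})$ to the normalized $\rho_{LR}$. The identification of the two fidelities works precisely because $\tr\rho_{LR}=1$ kills the additive $\sqrt{(1-\tr A)(1-\tr B)}$ correction in the generalized fidelity, so no genuine obstacle arises---everything reduces to a careful but routine bookkeeping of the block structure induced by the projector $\proj{\mathrm{k\,f}}_{KQ}$.
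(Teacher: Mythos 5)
Your proof is correct and takes essentially the same approach as the paper: parts (a) and (b) rest on the same elementary observation (a positive operator whose trace is fully captured by the projection onto a subspace must be supported there), and part (c) makes explicit the paper's invocation of invariance of fidelity under projecting one argument onto the other's support, with the normalization subtlety you flag resolved exactly as in the paper by $\tr\rho_{LR}=1$.
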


\subsection{Equivalence of battery models}

Consider a logical process $\mathcal{E}_{X\to X'}$ which is not itself a free
operation (i.e., $\mathcal{E}_{X\to X'}`(\Gamma_X) \not\leqslant \Gamma_{X'}$).
It turns out that it is possible to implement this process by investing a
certain amount of resources by means of an explicit battery system.

One example of such a battery system is the \emph{information battery}.  The
{information battery} is a quantum system $A$ of dimension which we denote by
$\abs A$, and for which $\Gamma_A=\Ident_A$.  We require the battery to
initially be prepared in a state $2^{-\lambda_1}\Ident_{2^{\lambda_1}}$ and to
finish in a state $2^{-\lambda_2}\Ident_{2^{\lambda_2}}$ at the end, where both
states are simply a state with a flat spectrum of rank $2^{\lambda_1}$ or
$2^{\lambda_2}$, and where we require that $\lambda_1,\lambda_2\geqslant0$ and
that $2^{\lambda_1},2^{\lambda_2}$ are integers.  If $\lambda_1,\lambda_2$ are
themselves integers, this corresponds exactly to having $\lambda_1$ or
$\lambda_2$ qubits in a fully mixed state and the remaining qubits in a pure
state.

It is known that this model is equivalent to several other battery models known
in the literature~\cite{Brandao2015PNAS_secondlaws}, notably the work bit (or
\qq{wit})~\cite{Horodecki2013_ThermoMaj,Brandao2015PNAS_secondlaws}, or a
\qq{weight}
system~\cite{Skrzypczyk2014NComm_individual,Alhambra2016PRX_equality}.  Here, we
point out that these models are in fact different instances of a more general
description, making their equivalence manifest.

The most general system we have shown to be usable as a battery system is simply
any system $W$ with a arbitrary $\Gamma_W$ operator, which is restricted to be
in states of the form $\sigma = (P\Gamma_W P)/\tr P\Gamma_W$, where $P$ is a
projector which commutes with $\Gamma_W$.  The \qq{value} or \qq{uselessness} of
this state is given by the quantity $\log\tr`(P\Gamma_W)$. The wit, the weight, as
well as the information battery are all special cases of this general model.

The following proposition gives a necessary and sufficient condition as to when
it is possible to overcome the $\Gamma$-sub-preservation restriction by
exploiting a particular charge state change of the battery, and shows how the
different battery systems are equivalent.  This proves
Propositions~\ref*{main-prop:main-equiv-battery-models}
and~\ref*{main-prop:main-equiv-battery-models-rest} of the main text.

\begin{proposition}
  \label{prop:equiv-battery-models}
  Let $\mathcal{T}_{X\to X'}$ be a completely positive, trace-nonincreasing map.  Let
  $y\in\mathbb{R}$.  Then, the following are equivalent:
  \begin{enumerate}[label=(\roman*)]
  \item\label{item:prop-equiv-battery-models-equivstatement-criterionMapEnorm}%
    The map $\mathcal{T}_{X\to X'}$ satisfies
    \begin{align}
      \mathcal{T}_{X\to X'}`*(\Gamma_X) \leqslant 2^{-y}\,\Gamma_{X'}\ ;
    \end{align}
  \item\label{item:prop-equiv-battery-models-equivstatement-infbattery}%
    For any $\lambda_1,\lambda_2\geqslant 0$ such that
    $2^{\lambda_1},2^{\lambda_2}$ are integers and
    $\lambda_1-\lambda_2 \leqslant y$, there exists a large enough system $A$
    with $\Gamma_A=\Ident_A$ as well as a trace-nonincreasing,
    $\Gamma$-sub-preserving map $\Phi_{XA\to X'A}$ satisfying for all
    $\omega_X$,
    \begin{multline}
      \Phi_{XA\to X'A}`*(
      \omega_X \otimes `\big(2^{-\lambda_1}\Ident_{2^{\lambda_1}})
      )
      \\
      = \mathcal{T}_{X\to X'}`*(\omega_X)
      \otimes
      `\big(2^{-\lambda_2}\Ident_{2^{\lambda_2}}) \ ;
    \end{multline}
  \item\label{item:prop-equiv-battery-models-equivstatement-wit}%
    For a two-level system $Q$ with two orthonormal states
    $\ket{\mathrm 1}_Q,\ket{\mathrm 2}_Q$, and with
    $\Gamma_Q=g_{1}\proj{1}_Q + g_{2}\proj{2}_Q$ chosen such that $g_{2}/g_{1}\geqslant 2^{-y}$, there
    exists a trace-nonincreasing, $\Gamma$-sub-preserving map $\Phi'_{XQ\to X'Q}$
    satisfying for all $\omega_X$,
    \begin{align}
      \Phi'_{XQ\to X'Q}`*( \omega_X\otimes\proj{1}_Q )
      = \mathcal{T}_{X\to X'}`*(\omega_X)\otimes\proj 2_Q\ ;
    \end{align}
  \item\label{item:prop-equiv-battery-models-equivstatement-weight}%
    Let $\tilde Q$ be any system and choose two orthogonal states
    $\ket{\mathrm 1}_{\tilde Q},\ket{\mathrm 2}_{\tilde Q}$ which are eigenstates of
    $\Gamma_{\tilde Q}$ corresponding to respective eigenvalues $g_1,g_2$ which satisfy
    $g_2/g_1\geqslant 2^{-y}$.  Then there exists a trace-nonincreasing,
    $\Gamma$-sub-preserving map $\Phi'_{X\tilde Q\to X'\tilde Q}$ satisfying for all
    $\omega_X$,
    \begin{align}
      \Phi'_{X\tilde Q\to X'\tilde Q}`*( \omega_X\otimes\proj{1}_{\tilde Q} )
      = \mathcal{T}_{X\to X'}`*(\omega_X)\otimes\proj 2_{\tilde Q}\ ;
    \end{align}
  \item\label{item:prop-equiv-battery-models-equivstatement-projGamma}%
    Let $W_1,W_2$ be quantum systems with respective corresponding $\Gamma$
    operators $\Gamma_{W_1},\Gamma_{W_2}$, and let $P_{W_1}, P'_{W_2}$ be
    projectors satisfying $[ P_{W_1}, \Gamma_{W_1} ] = 0$ and
    $[ P'_{W_2}, \Gamma_{W_2} ] = 0$, such that
    \begin{align}
      \frac{\tr P'_{W_2}\Gamma_{W_2}}{\tr P_{W_1}\Gamma_{W_1}} \geqslant 2^{-y}\ .
    \end{align}
    Then there exists a $\Gamma$-sub-preserving, trace-nonincreasing map
    $\Phi''_{XW_1\to X'W_2}$ such that for all $\omega_X$,
    \begin{multline}
      \Phi''_{XW_1\to X'W_2}`*(
      \omega_X \otimes
      \frac{P_{W_1} \Gamma_{W_1} P_{W_1}}{\tr`(P_{W_1}\Gamma_{W_1})})
      \\[1ex]
      = \mathcal{T}_{X\to X'}`*(\omega_X)
      \otimes
      \frac{P'_{W_2} \Gamma_{W_2} P'_{W_2}}{\tr`(P'_{W_2}\Gamma_{W_2})}
      \ .
      \label{eq:prop-equiv-battery-models-projGamma-condition-correct-mapping}
    \end{multline}
  \end{enumerate}
\end{proposition}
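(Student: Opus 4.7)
The strategy is to establish (i) $\Leftrightarrow$ (v) directly, then note that (ii), (iii), (iv) are specializations of (v), with each of them implying (i) via essentially the same partial-trace argument used for (v) $\Rightarrow$ (i).

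For the main direction (i) $\Rightarrow$ (v), write $d_1 = \tr(P_{W_1}\Gamma_{W_1})$ and $d_2 = \tr(P'_{W_2}\Gamma_{W_2})$, and define the ``crush-and-prepare'' map
\begin{align*}
\Phi''_{XW_1\to X'W_2}(\xi)
= \mathcal{T}_{X\to X'}\bigl(\tr_{W_1}[(\Ident_X\otimes P_{W_1})\,\xi\,(\Ident_X\otimes P_{W_1})]\bigr)
\otimes \frac{P'_{W_2}\Gamma_{W_2}P'_{W_2}}{d_2}\ .
\end{align*}
This map is completely positive as a composition of CP operations and trace-nonincreasing since $\tr(P\xi P)\leq \tr(\xi)$ and $\mathcal{T}$ is trace-nonincreasing. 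The mapping condition~\eqref{eq:prop-equiv-battery-models-projGamma-condition-correct-mapping} reduces to $\tr[P_{W_1}\tau(P_{W_1})P_{W_1}]=1$, which holds by the definition of $\tau(P_{W_1})$. For $\Gamma$-sub-preservation, I would evaluate $\Phi''(\Gamma_X\otimes\Gamma_{W_1}) = d_1\,\mathcal{T}(\Gamma_X)\otimes\tau(P'_{W_2})$ and combine the hypothesis $\mathcal{T}(\Gamma_X)\leq 2^{-y}\Gamma_{X'}$ with $\tau(P'_{W_2})\leq \Gamma_{W_2}/d_2$ (a consequence of $[P'_{W_2},\Gamma_{W_2}]=0$) and $d_1/d_2\leq 2^y$ (from the projector ratio hypothesis) to conclude $\Phi''(\Gamma_X\otimes\Gamma_{W_1})\leq \Gamma_{X'}\otimes\Gamma_{W_2}$.

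The converse (v) $\Rightarrow$ (i) follows by specializing to trivial one-dimensional $W_1, W_2$ with $\Gamma_{W_1}=1$ and $\Gamma_{W_2}=2^{-y}$; the map supplied by (v) is then $\mathcal{T}$ itself, whose $\Gamma$-sub-preservation reads exactly (i). The implications (v) $\Rightarrow$ (ii), (iii), (iv) are direct specializations: for (ii), take $W_1=W_2=A$ with $\Gamma_A=\Ident_A$ and $P_{W_i}=\Ident_{2^{\lambda_i}}$, so $\tau(P_{W_i}) = 2^{-\lambda_i}\Ident_{2^{\lambda_i}}$ and $d_2/d_1=2^{\lambda_2-\lambda_1}$; for (iii) and (iv), take the two-level system with $P=\proj{\mathrm 1}$ and $P'=\proj{\mathrm 2}$, using $\tau(\proj{\mathrm i}) = \proj{\mathrm i}$ since the basis diagonalizes $\Gamma$. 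Each of (ii), (iii), (iv) $\Rightarrow$ (i) then follows by the same ``extract scalars'' argument: $\Gamma$-sub-preservation applied to $\Gamma_X\otimes\Gamma_{W_1}$, combined with the operator inequality $\Gamma_{W_1}\geq d_1\tau(P_{W_1})$ and the mapping condition extended by linearity to $\Gamma_X$, yields $d_1\mathcal{T}(\Gamma_X)\otimes\tau(P'_{W_2})\leq \Gamma_{X'}\otimes\Gamma_{W_2}$; partial-tracing against $P'_{W_2}/d_2$ on the $W_2$ side then gives $\mathcal{T}(\Gamma_X)\leq (d_2/d_1)\Gamma_{X'}$, and taking the tightest allowed ratio (via a density argument for (ii), since the achievable $\lambda_1-\lambda_2$ values take the form $\log_2 q$ for $q\in\mathbb{Q}_{>0}$ and are dense in $\mathbb{R}$) yields (i).

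The main technical obstacle is finding a map in (i) $\Rightarrow$ (v) that satisfies all four requirements simultaneously. The crush-and-prepare construction succeeds because the scalar factor $d_1$ absorbed by projecting onto $P_{W_1}$ and the scalar $1/d_2$ from preparing $\tau(P'_{W_2})$ combine into precisely the ratio $d_1/d_2\leq 2^y$ that compensates the $2^{-y}$ factor in the hypothesis and closes the $\Gamma$-sub-preservation bound; once this numerical alignment is recognized, the remaining verifications are routine.
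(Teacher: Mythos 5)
Your proof is correct and uses essentially the same core ingredients as the paper: the ``crush-and-prepare'' construction for (i) $\Rightarrow$ (v) is exactly the paper's map (with $\tr_{W_1}[P_{W_1}\xi]=\tr_{W_1}[P_{W_1}\xi P_{W_1}]$ by cyclicity), and the partial-trace/read-out bound $\mathcal{T}(\Gamma_X)\leqslant(d_2/d_1)\Gamma_{X'}$ for the reverse directions coincides with the paper's extraction arguments for (ii) $\Rightarrow$ (i) and (iii) $\Rightarrow$ (i). The only organizational differences are that you additionally close (v) $\Rightarrow$ (i) by specializing to trivial one-dimensional batteries with $\Gamma_{W_1}=1$, $\Gamma_{W_2}=2^{-y}$ (a correct shortcut the paper doesn't take; it instead completes the cycle via (v) $\Rightarrow$ (iv) $\Rightarrow$ (iii) $\Rightarrow$ (i)), and that you unify the read-out argument for (ii), (iii), (iv) $\Rightarrow$ (i) in one stroke including the density argument for (ii); both choices are sound but add no new idea beyond the paper's.
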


\subsection{Proofs}

\begin{proof}[*prop:dilation-of-Gsp-to-Gp]
  By definition, $\tilde\Phi_{K\to L}$ satisfies both
  $\tilde\Phi_{K\to L}`(\Gamma_K) \leqslant \Gamma_L$ and
  $\tilde\Phi^\dagger_{K\leftarrow L}`(\Ident_L) \leqslant \Ident_K$. Hence, let
  $F_K,G_L\geqslant 0$ such that
  \begin{subequations}
    \begin{align}
      \tilde\Phi_{K\to L}\left(\Gamma_K\right) &= \Gamma_L - G_L\ ; \\
      \tilde\Phi_{K\leftarrow L}^\dagger\left(\Ident_L\right) &= \Ident_K - F_K \ .
    \end{align}
    Let $\Pi^\Gamma_L$ be the projector onto the support of $\Gamma_L$. We have
    $\Pi^\Gamma_L\leqslant\Ident_L$ and thus
    $\tilde\Phi_{K\leftarrow L}^\dagger\left(\Pi^\Gamma_L\right) \leqslant
    \tilde\Phi_{K\leftarrow L}^\dagger\left(\Ident_L\right) \leqslant \Ident_K$.
    So define $F_K' \geqslant 0$ such that
    \begin{align}
      \tilde\Phi_{K\leftarrow L}^\dagger\left(\Pi^\Gamma_L\right) = \Ident_K - F_K'\ .
    \end{align}
  \end{subequations}
  
  Let the system $Q$ be as in the claim, with $\Gamma_Q$ diagonal in the basis
  $`{\ket{\mathrm i}_Q,\ket{\mathrm f}_Q}$.  Define now the completely positive map
  \begin{align}
    \hspace*{1ex}&\hspace*{-1ex}\Phi_{KLQ\to KLQ}\left(\cdot\right) =\\
    &\hspace*{0.5ex}
      \tilde\Phi_{K\to L}`*(\matrixel{\mathrm l\,\mathrm i}{\,\cdot\,}{\mathrm l\,\mathrm i}_{LQ})
      \otimes \proj{\mathrm k\,\mathrm f}_{KQ}
      \nonumber\\
    &+ \Gamma_K^{1/2} \tilde\Phi_{K\leftarrow L}^\dagger`*(
      \bigl(\Gamma_L^{-1/2}\bra{\mathrm k\,\mathrm f}_{KQ}\bigr) \left(\cdot\right)
      \bigl(\Gamma_L^{-1/2}\ket{\mathrm k\,\mathrm f}_{KQ}\bigr)
      )\Gamma_K^{1/2}\otimes\proj{\mathrm l\,\mathrm i}_{LQ}
      \nonumber\\
    &+\Xi_{KL\to KL}`*(\matrixel{\mathrm i}{\,\cdot\,}{\mathrm i}_Q) \otimes \proj{\mathrm i}_Q
      \nonumber\\
    &+\Omega_{KL\to KL}`*(\matrixel{\mathrm f}{\,\cdot\,}{\mathrm f}_Q) \otimes \proj{\mathrm f}_Q
      \ ,
  \end{align}
  with some completely positive maps $\Xi_{KL\to KL}$ and $\Omega_{KL\to KL}$ yet to be
  determined.

  First, note that the
  property~\eqref{eq:dilation-of-Gsp-to-Gp-condition-recover-correct-mapping} is obvious for
  this $\Phi_{KLQ}$, simply because $\ket{\mathrm i}_Q$ and $\ket{\mathrm f}_Q$ are
  orthogonal. It remains to exhibit explicit $\Xi_{KL\to KL}$ and $\Omega_{KL\to KL}$ such
  that $\Phi_{KLQ}$ is trace-preserving and $\Gamma$-preserving.  Define as shorthands
  \begin{equation}
    \begin{aligned}
      g_\mathrm{k} &= \matrixel{\mathrm k}{\Gamma_K}{\mathrm k}_K\ ; & &\hspace{2em} &
      g_\mathrm{l} &= \matrixel{\mathrm l}{\Gamma_L}{\mathrm l}_L\ ;
      \\
      g_\mathrm{i} &= \matrixel{\mathrm i}{\Gamma_Q}{\mathrm i}_Q\ ; & & &
      g_\mathrm{f} &= \matrixel{\mathrm f}{\Gamma_Q}{\mathrm f}_Q\ .
    \end{aligned}
  \end{equation}
  Note that Condition~\eqref{eq:dilation-of-Gsp-to-Gp-condition-on-Gamma-Q} is then equivalent to
  \begin{align}
    \label{eq:dilation-of-Gsp-to-Gp-proof-condition-gk-gl-gi-gf}
    g_{\mathrm l}\cdot g_{\mathrm i} = g_{\mathrm k}\cdot g_{\mathrm f}\ ,
  \end{align}
  and that this is straightforwardly satisfied for an appropriate choice of $\Gamma_Q$
  (and hence of $g_{\mathrm i},g_{\mathrm f}$).

  At this point, we'll derive conditions that $\Xi_{KL\to KL}$ and $\Omega_{KL\to KL}$
  need to satisfy in order for $\Phi_{KLQ\to KLQ}$ to map $\Gamma_{KLQ}$ onto itself and
  to be trace-preserving.  Calculate
  \begin{align}
    \hspace*{3em}&\hspace{-3em}\Phi_{KLQ\to KLQ}\left(\Gamma_{KLQ}\right) \nonumber\\
    &= g_{\mathrm l} g_{\mathrm i}\,
    \tilde\Phi_{K\to L}\left(\Gamma_K\right)\otimes \proj{\mathrm k\,\mathrm f}_{KQ}
    \nonumber\\
    &\quad+ g_{\mathrm k} g_{\mathrm f}\,\Gamma_K^{1/2}
    \tilde\Phi_{K\leftarrow L}^\dagger\left(\Pi^\Gamma_L\right)\Gamma_K^{1/2}
    \otimes \proj{\mathrm l\,\mathrm i}_{LQ}
      \nonumber\\
    &\quad+ g_{\mathrm i} \Xi_{KL\to KL}\left(\Gamma_{KL}\right)\otimes\proj{\mathrm i}_Q
      + g_{\mathrm f} \Omega_{KL\to KL}\left(\Gamma_{KL}\right)\otimes\proj{\mathrm f}_Q
      \nonumber\\
    &= \proj{\mathrm f}_Q\otimes\left[
      g_{\mathrm l} g_{\mathrm i}\left(\Gamma_L - G_L\right)\otimes\proj{\mathrm k}_{K}
      + g_{\mathrm f}\Omega_{KL\to KL}\left(\Gamma_{KL}\right)
      \right]
      \nonumber\\
    &\quad+ \proj{\mathrm i}_Q\otimes\Bigl[
      g_{\mathrm k} g_{\mathrm f}\Gamma_K^{1/2}\left(\Ident_K - F_K'\right)\Gamma_K^{1/2}
      \otimes\proj{\mathrm l}_{KQ}
      \nonumber\\
    &\quad\hspace*{13em}
      + g_{\mathrm i}\Xi_{KL\to KL}\left(\Gamma_{KL}\right)
      \Bigr]\ .
  \end{align}
  We see that in order for this last expression to equal
  $\Gamma_{KLQ} = g_\mathrm{f}\proj{\mathrm f}_Q\otimes\Gamma_{KL} +
  g_\mathrm{i}\proj{\mathrm i}_Q\otimes\Gamma_{KL}$,
  we need that the terms in square brackets above obey
  \begin{subequations}
    \begin{gather}
      g_{\mathrm l} g_{\mathrm i}\left(\Gamma_L - G_L\right)\otimes\proj{\mathrm k}_{K}
      + g_{\mathrm f}\Omega_{KL\to KL}\left(\Gamma_{KL}\right) = g_{\mathrm f}\,\Gamma_{KL}\ ;
      \label{eq:dilation-of-Gsp-to-Gp-proof-calc-cond-Omega-fwd}\\
      g_{\mathrm k} g_{\mathrm f}\Gamma_K^{1/2}\left(\Ident_K - F_K'\right)\Gamma_K^{1/2}
      \otimes\proj{\mathrm l}_{KQ} + g_{\mathrm i}\Xi_{KL\to KL}\left(\Gamma_{KL}\right) =
      g_{\mathrm i}\,\Gamma_{KL}\ .
      \label{eq:dilation-of-Gsp-to-Gp-proof-calc-cond-Xi-fwd}
    \end{gather}
  \end{subequations}

  On the other hand, the adjoint map of $\Phi_{KLQ\to KLQ}$ is relatively straightforward
  to identify as
  \begin{align}
    \hspace*{1ex}
    &\hspace{-1ex}
       \Phi_{KLQ\leftarrow KLQ}^\dagger\left(\cdot\right) =
      \nonumber\\
    &\hspace*{0.5ex}\tilde\Phi_{K\leftarrow L}^\dagger\left(
      \matrixel{\mathrm k\,\mathrm f}{\,\cdot\,}{\mathrm k\,\mathrm f}_{KQ}
      \right)\otimes \proj{\mathrm l\,\mathrm i}_{LQ}
      \nonumber\\
    &+
      \Gamma_L^{-1/2}
      \tilde\Phi_{K\to L}\left(
      \bigl(\Gamma_K^{1/2}\bra{\mathrm l\,\mathrm i}_{LQ}\bigr)\left(\cdot\right)
      \bigl(\Gamma_K^{1/2}\ket{\mathrm l\,\mathrm i}_{LQ}\bigr)
      \right) \Gamma_L^{-1/2}
      \otimes\proj{\mathrm k\,\mathrm f}_{KQ}
      \nonumber\\
    &+
      \Xi_{KL\leftarrow KL}^\dagger\left(\matrixel{\mathrm i}{\,\cdot\,}{\mathrm i}_Q\right)
      \otimes\proj{\mathrm i}_Q
      \nonumber\\
    &+
      \Omega_{KL\leftarrow KL}^\dagger\left(\matrixel{\mathrm f}{\,\cdot\,}{\mathrm f}_Q\right)
      \otimes\proj{\mathrm f}_Q\ .
  \end{align}
  We may thus now derive the conditions on $\Xi_{KL\to KL}$ and $\Omega_{KL\to KL}$ for
  $\Phi_{KLQ\to KLQ}$ to be trace-preserving.  Specifically, we need to ensure that
  $\Phi_{KLQ\leftarrow KLQ}^\dagger\left(\Ident_{KLQ}\right) = \Ident_{KLQ}$.  A
  calculation gives us
  \begin{align}
    \hspace*{3ex}
    &\hspace{-3ex} \Phi_{KLQ\leftarrow KLQ}^\dagger\left(\Ident_{KLQ}\right)
      \nonumber\\
    &= \tilde\Phi_{K\leftarrow L}^\dagger\left(\Ident_L\right)\otimes
      \proj{\mathrm l\,\mathrm i}_{LQ}
      \nonumber\\
    &\quad+ \Gamma_L^{-1/2}\tilde\Phi_{K\to L}\left(\Gamma_K\right)\Gamma_L^{-1/2}
      \otimes\proj{\mathrm k\,\mathrm f}_{KQ}
      \nonumber\\
    &\quad+\Xi_{KL\leftarrow KL}^\dagger\left(\Ident_{KL}\right)\otimes\proj{\mathrm i}_Q
      \nonumber\\
    &\quad+ \Omega_{KL\leftarrow KL}^\dagger\left(\Ident_{KL}\right)\otimes\proj{\mathrm f}_Q\ .
      \nonumber\\
    &= \proj{\mathrm f}_Q\otimes\Bigl[
      \Gamma_L^{-1/2}\left(\Gamma_L-G_L\right)\Gamma_L^{-1/2}\otimes\proj{\mathrm k}_K
      \nonumber\\
    &\quad+ \Omega_{KL\leftarrow KL}^\dagger`*(\Ident_{KL})
      \Bigr] \nonumber\\
    &\quad+ \proj{\mathrm i}_Q\otimes\Bigl[
      `*(\Ident_K-F_K)\otimes\proj{\mathrm l}_L
      \nonumber\\
    &\quad+ \Xi_{KL\leftarrow KL}^\dagger`*(\Ident_{KL}) \Bigr]\ .
  \end{align}
  Thus, for $\Phi_{KLQ\to KLQ}$ to be trace-preserving we must have
  \begin{subequations}
    \begin{gather}
      \Gamma_L^{-1/2}\left(\Gamma_L-G_L\right)\Gamma_L^{-1/2}\otimes\proj{\mathrm k}_K +
      \Omega_{KL\leftarrow KL}^\dagger\left(\Ident_{KL}\right) = \Ident_{KL} \ ;
      \label{eq:dilation-of-Gsp-to-Gp-proof-calc-cond-Omega-rev}\\
      \left(\Ident_K-F_K\right)\otimes\proj{\mathrm l}_L
      + \Xi_{KL\leftarrow KL}^\dagger\left(\Ident_{KL}\right) = \Ident_{KL}\ .
      \label{eq:dilation-of-Gsp-to-Gp-proof-calc-cond-Xi-rev}
    \end{gather}
  \end{subequations}

  Let us now explicitly construct an $\Xi_{KL\to KL}$ which satisfies
  both~\eqref{eq:dilation-of-Gsp-to-Gp-proof-calc-cond-Xi-fwd}
  and~\eqref{eq:dilation-of-Gsp-to-Gp-proof-calc-cond-Xi-rev}. These conditions may be
  written as
  \begin{subequations}
    \begin{align}
      \Xi_{KL\to KL}\left(\Gamma_{KL}\right)
      &= \Gamma_{KL} - g_\mathrm{l}\,\Gamma_K^{1/2}\left(\Ident_K-F_K'\right)\Gamma_K^{1/2}
        \otimes\proj{\mathrm l}_L =: A_{KL}\ ;
        \label{eq:dilation-of-Gsp-to-Gp-proof-calc-condition-Xi-A_KL}
      \\
      \Xi_{KL\leftarrow KL}^\dagger\left(\Ident_{KL}\right)
      &= \Ident_{KL} - \left(\Ident_K-F_K\right)\otimes\proj{\mathrm l}_L =: B_{KL}
        \label{eq:dilation-of-Gsp-to-Gp-proof-calc-condition-Xi-B_KL}
    \end{align}
  \end{subequations}
  where we have
  used~\eqref{eq:dilation-of-Gsp-to-Gp-proof-condition-gk-gl-gi-gf} and defined
  two new operators $A_{KL}$ and $B_{KL}$. Observe now that since
  $g_\mathrm{l}\,\Gamma_K^{1/2}\left(\Ident_K-F_K'\right)\Gamma_K^{1/2}
  \otimes\proj{\mathrm l}_L \leqslant \Gamma_K\otimes\left(g_{\mathrm
      l}\,\proj{\mathrm l}_L\right) \leqslant \Gamma_{KL}$, we have that
  $A_{KL}\geqslant 0$. Similarly,
  $\left(\Ident_K-F_K\right)\otimes\proj{\mathrm l}_L \leqslant \Ident_{KL}$ and
  hence $B_{KL}\geqslant 0$.  Let $\xi_{KL}$ be a quantum state defined as
  follows: If $\tr A_{KL}\neq0$, then $\xi_{KL}=A_{KL}/\tr A_{KL}$; else
  $\xi_{KL}=\Ident_{KL}/\abs{KL}$.  Then define
  \begin{align}
    \Xi_{KL\to KL}\left(\cdot\right) = \tr\left(B_{KL}\,\left(\cdot\right)\right)\,\xi_{KL}\ .
  \end{align}
  We then have
  \begin{align}
    \Xi_{KL\leftarrow KL}^\dagger\left(\Ident_{KL}\right)
    = \tr\left(\xi_{KL}\,\Ident_{KL}\right)\,B_{KL} = B_{KL}\ ,
  \end{align}
  thus satisfying condition~\eqref{eq:dilation-of-Gsp-to-Gp-proof-calc-condition-Xi-B_KL}.
  On the other hand we have
  \begin{align}
    \Xi_{KL\to KL}\left(\Gamma_{KL}\right) = \tr\left[B_{KL}\,\Gamma_{KL}\right]\,\xi_{KL}\ ,
    \label{eq:dilation-of-Gsp-to-Gp-proof-calc-condition-Xi-A_KL-calc-1}
  \end{align}
  which we need to show equals $A_{KL}$ to satisfy
  condition~\eqref{eq:dilation-of-Gsp-to-Gp-proof-calc-condition-Xi-A_KL}.
  Consider first the case where $\tr A_{KL} = 0$ and hence $A_{KL}=0$.  Then
  $\Gamma_{KL} = g_\mathrm{l}
  \Gamma_K^{1/2}\left(\Ident_K-F_K'\right)\Gamma_K^{1/2}\otimes
  \proj{\mathrm{l}}_L$, and hence $\Gamma_L=g_\mathrm{l}\proj{\mathrm l}_L$ and
  $F_K'=0$.  Since
  $\tilde\Phi_{K\leftarrow L}^\dagger\bigl(\Pi^{\Gamma}_L\bigr) \leqslant
  \tilde\Phi_{K\leftarrow L}^\dagger\left(\Ident_L\right)$, we have
  $F_K\leqslant F_K'$ and thus $F_K=0$.  Then
  $B_{KL}=\Ident_K\otimes\left(\Ident_L-\proj{\mathrm l}_L\right)$.  Thus,
  $B_{KL}$ has no overlap with
  $\Gamma_{KL} = \Gamma_K\otimes\left(g_\mathrm{l}\proj{\mathrm{l}}_L\right)$
  and
  $\text{\eqref{eq:dilation-of-Gsp-to-Gp-proof-calc-condition-Xi-A_KL-calc-1}}=0=A_{KL}$
  as required.  Now consider the case where $\tr A_{KL}\neq 0$.  We have
  \begin{multline}
    \tr A_{KL} = \tr\Gamma_{KL} - g_\mathrm{l}\tr\left[\left(\Ident_K-F_K'\right)\Gamma_K\right]
    \\
    = \tr\Gamma_{KL}
    - g_\mathrm{l}\tr\left[\tilde\Phi_{K\leftarrow L}^\dagger\left(\Pi^\Gamma_L\right)\,\Gamma_K\right]
    \\
    = \tr\Gamma_{KL}
    - g_\mathrm{l}\tr\left[\Pi^\Gamma_L\,\tilde\Phi_{K\to L}\left(\Gamma_K\right)\right]\ .
    \label{eq:dilation-of-Gsp-to-Gp-proof-calc-constr-xi-tr-1}
  \end{multline}
  Now, because $\tilde\Phi_{K\to L}\left(\Gamma_K\right) \leqslant \Gamma_L$, the operator
  $\tilde\Phi_{K\to L}\left(\Gamma_K\right)$ must lie within the support of
  $\Gamma_L$. Thus the projector in the last term
  of~\eqref{eq:dilation-of-Gsp-to-Gp-proof-calc-constr-xi-tr-1} has no effect and can be
  replaced by an identity operator. We then have
  \begin{align}
    \text{\eqref{eq:dilation-of-Gsp-to-Gp-proof-calc-constr-xi-tr-1}}
    &= \tr\Gamma_{KL} - g_\mathrm{l}\tr\left[\Ident_{L}\tilde\Phi_{K\to L}\left(\Gamma_K\right)\right]
    \nonumber\\
    &= \tr\Gamma_{KL} - g_\mathrm{l}\tr\left[\tilde\Phi_{K\leftarrow L}^\dagger\left(\Ident_L\right)\Gamma_K\right]
    \nonumber\\
    &= \tr\Gamma_{KL} - g_\mathrm{l}\tr\left[\left(\Ident_K - F_K\right)\Gamma_K\right]
    \nonumber\\
    &= \tr\Gamma_{KL} - \tr\left[\left(\Ident_K - F_K\right)\otimes\proj{\mathrm{l}}_L\Gamma_{KL}\right]
      \nonumber\\
    &= \tr\left(B_{KL}\Gamma_{KL}\right)\ .
  \end{align}
  Since $\tr(B_{KL}\Gamma_{KL})=\tr(A_{KL})$, we have
  $\text{\eqref{eq:dilation-of-Gsp-to-Gp-proof-calc-condition-Xi-A_KL-calc-1}}=A_{KL}$
  as required.  We have thus constructed $\Xi_{KL\to KL}$ such that it satisfies
  conditions~\eqref{eq:dilation-of-Gsp-to-Gp-proof-calc-cond-Xi-fwd}
  and~\eqref{eq:dilation-of-Gsp-to-Gp-proof-calc-cond-Xi-rev}.

  Let's now proceed analogously for $\Omega_{KL\to KL}$. We can rewrite
  conditions~\eqref{eq:dilation-of-Gsp-to-Gp-proof-calc-cond-Omega-fwd}
  and~\eqref{eq:dilation-of-Gsp-to-Gp-proof-calc-cond-Omega-rev} as
  \begin{align}
    \Omega_{KL\to KL}\left(\Gamma_{KL}\right)
    &= \Gamma_{KL} - g_\mathrm{k}\proj{\mathrm k}_K\otimes\left(\Gamma_L-G_L\right)
      =: C_{KL}\ ;
      \label{eq:dilation-of-Gsp-to-Gp-proof-calc-condition-Omega-C_KL} \\
    \Omega_{KL\leftarrow KL}^\dagger\left(\Ident_{KL}\right)
    &= \Ident_{KL}
      - \proj{\mathrm k}_K\otimes \Gamma_L^{-1/2}\left(\Gamma_L-G_L\right)\Gamma_L^{-1/2}
      =: D_{KL}\ ,
      \label{eq:dilation-of-Gsp-to-Gp-proof-calc-condition-Omega-D_KL}
  \end{align}
  defining the operators $C_{KL}$ and $D_{KL}$. We have
  $g_\mathrm{k}\proj{\mathrm k}_K\otimes\left(\Gamma_L-G_L\right)\leqslant
  \Gamma_{KL}$ and thus $C_{KL}\geqslant 0$. Also
  $\Gamma_L^{-1/2}\left(\Gamma_L - G_L\right)\Gamma_L^{-1/2}\leqslant\Ident_L$
  and thus $D_{KL}\geqslant 0$. Proceeding as for $\Xi_{KL\to KL}$, let
  $\omega_{KL}$ be a quantum state defined as $\omega_{KL}=C_{KL}/\tr C_{KL}$ if
  $\tr C_{KL}\neq 0$ or $\omega_{KL}=\Ident_{KL}/\abs{KL}$ otherwise.  Define
  \begin{align}
    \Omega_{KL\to KL}\left(\cdot\right)
    = \tr\left(D_{KL}\,\left(\cdot\right)\right) \omega_{KL} \ .
  \end{align}
  Then
  \begin{align}
    \Omega_{KL\leftarrow KL}^\dagger\left(\Ident_{KL}\right)
    = \tr\left(\omega_{KL}\Ident_{KL}\right)\,D_{KL} = D_{KL}\ ,
  \end{align}
  which satisfies~\eqref{eq:dilation-of-Gsp-to-Gp-proof-calc-condition-Omega-D_KL}. On the
  other hand, we have
  \begin{align}
    \Omega_{KL\to KL}\left(\Gamma_{KL}\right)
    = \tr\left(D_{KL}\Gamma_{KL}\right)\,\omega_{KL}\ ,
    \label{eq:dilation-of-Gsp-to-Gp-proof-calc-condition-Omega-C_KL-calc-1}
  \end{align}
  which we need to show is equal to $C_{KL}$.  First consider the case where
  $\tr C_{KL}=0$, i.e.\@ $C_{KL}=0$.  Then
  $\Gamma_{KL}=g_{\mathrm k}\proj{\mathrm k}_K\otimes\left(\Gamma_L-G_L\right)$,
  implying that $\Gamma_K=g_{\mathrm k}\proj{\mathrm k}_K$ and $G_L=0$.  Then
  $D_{KL}=\Ident_{KL} - \proj{\mathrm k}_K\otimes\Pi_L^{\Gamma_L} = \Ident_{KL}
  - \Pi^{\Gamma_{KL}}_{KL}$, and thus $D_{KL}$ has no overlap with
  $\Gamma_{KL}$.  It follows that
  $\text{\eqref{eq:dilation-of-Gsp-to-Gp-proof-calc-condition-Omega-C_KL-calc-1}}
  = 0 = C_{KL}$ as required.  Now assume that $\tr C_{KL}\neq 0$.  Then
  \begin{multline}
    \tr\left(D_{KL}\Gamma_{KL}\right)
    = \tr\Gamma_{KL} 
    - g_\mathrm{k} \tr\left(\left(\Gamma_L-G_L\right)\Pi^\Gamma_L\right)
    \\
    = \tr\Gamma_{KL} 
    - g_\mathrm{k} \tr\left(\Gamma_L-G_L\right)
    = \tr C_{KL}\ ,
  \end{multline}
  where the projector $\Pi^\Gamma_L$ has no effect in the second expression
  since $\Gamma_L-G_L$ is entirely contained within the support of $\Gamma_L$.
  Then again
  $\text{\eqref{eq:dilation-of-Gsp-to-Gp-proof-calc-condition-Omega-C_KL-calc-1}}
  = C_{KL}$ as required.

  We have thus constructed a completely positive, trace preserving map $\Phi_{KLQ\to KLQ}$
  which maps $\Gamma_{KLQ}$ onto itself and which
  satisfies~\eqref{eq:dilation-of-Gsp-to-Gp-condition-recover-correct-mapping}. This
  concludes the proof.
\end{proof}

\begin{proof}[*cor:dilation-of-Gsp-to-Gp-cases]
  The proofs of~(a) and~(b) exploit the following fact: If a bipartite
  (normalized) quantum state $\zeta_{AB}$ satisfies
  $\dmatrixel{\chi}{\zeta_{AB}}_{B} = \zeta'_A$ for some pure state
  $\ket{\chi}_B$ and a (normalized) quantum state $\zeta'_A$, then
  $\zeta_{AB} = \zeta'_A\otimes\proj{x}_B$.  [Indeed, $\zeta_{AB}$ must lie
  within the support of the projector $\Ident_A\otimes\proj\chi_B$ since
  $\tr({(\Ident_{AB}-(\Ident_A\otimes\proj\chi_B))}\,\zeta_{AB}) =
  \tr(\zeta_{AB})-\tr(\zeta'_A)=0$, and hence
  $\zeta_{AB} =
  (\Ident_A\otimes\proj{\chi}_B)\zeta_{AB}(\Ident_A\otimes\proj{\chi}_B) =
  \zeta_A'\otimes\proj{\chi}_B$.]

  \emph{Proof of (a):} For any quantum state $\tau$
  supported on $P$, we have by assumption
  $\dmatrixel{\mathrm{k,f}}{\Phi_{KLQ\to{}KLQ}
    (\tau_K\otimes\proj{\mathrm{l\,i}}_{LQ})} = \tilde{\Phi}_{K\to L}(\tau_K)$
  with
  $\tr(\dmatrixel{\mathrm{k,f}}{\Phi_{KLQ\to{}KLQ}
    (\tau_K\otimes\proj{\mathrm{l\,i}}_{LQ})}) =
  \tr(\tilde{\Phi}_{K\to{}L}(\tau_K)) = 1$.  Using the fact above we conclude
  that
  $\Phi_{KLQ\to KLQ}(\tau_K\otimes\proj{\mathrm{l\,i}}_{LQ}) =
  \tilde{\Phi}_{K\to L}(\tau_K)\otimes\proj{\mathrm{k\,f}}_{KQ}$.

  \emph{Proof of (b):} By assumption,
  $\dmatrixel{\mathrm{k,f}}{\Phi_{KLQ\to{}KLQ}
    (\sigma_{KR}\otimes\proj{\mathrm{l\,i}}_{LQ})} =
  \tilde{\Phi}_{K\to{}L}(\sigma_{KR})$ with
  $\tr(\dmatrixel{\mathrm{k,f}}{\Phi_{KLQ\to{}KLQ}
    (\sigma_{KR}\otimes\proj{\mathrm{l\,i}}_{LQ})}) =
  \tr(\tilde{\Phi}_{K\to{}L}(\sigma_{KR})) = 1$.  Again a straightforward
  application of the above fact yields
  $\Phi_{KLQ\to KLQ}(\sigma_{KR}\otimes\proj{\mathrm{l\,i}}_{LQ}) =
  \tilde{\Phi}_{K\to L}(\sigma_{KR})\otimes\proj{\mathrm{k\,f}}_{KQ}$.

  \emph{Proof of (c):} We know that the mapping $\Phi_{KLQ}$ provided by
  \autoref{prop:dilation-of-Gsp-to-Gp} is such that
  $\dmatrixel{\mathrm{k\,f}}{\Phi_{KLQ}
    (\sigma_{KR}\otimes\proj{\mathrm{l\,i}}_{LQ})} =
  \tilde{\Phi}_{K\to{}L}(\sigma_{KR})$.  We exploit the fact that the fidelity
  does not change if we project one state onto the support of the other state
  [indeed, we have
  $F(\sigma,\rho) = \tr`\big[`\big(\sigma^{1/2}\rho\sigma^{1/2})^{1/2}] =
  \tr`\big[`\big(\sigma^{1/2}\Pi^\sigma\, \rho\, \Pi^\sigma\sigma^{1/2})^{1/2}]
  = F(\sigma,\Pi^\sigma\,\rho\Pi^\sigma)$].  This means in turn that
  $F`\big(\Phi_{KLQ}(\sigma_{KR}\otimes\proj{\mathrm{l\,i}}_{LQ}),
  \rho_{LR}\otimes\proj{\mathrm{k\,f}}_{KQ}) =
  F\bigl(`(\Ident_{LR}\otimes\proj{\mathrm{k\,f}}_{KQ}) \;
  {\Phi_{KLQ}(\sigma_{KR}\otimes\proj{\mathrm{l\,i}}_{LQ})} \;
  `(\Ident_{LR}\otimes\proj{\mathrm{k\,f}}_{KQ}) \;\mathrel{,~}\;
  {\rho_{LR}\otimes\proj{\mathrm{k\,f}}_{KQ}}\bigr) =
  F`\big(\tilde{\Phi}_{K\to{}L}(\sigma_{KR})\otimes\proj{\mathrm{k\,f}}_{KQ}, 
  \rho_{LR}\otimes\proj{\mathrm{k\,f}}_{KQ}) =
  F`\big(\tilde{\Phi}_{K\to{}L}(\sigma_{KR}), \rho_{LR})$.  This proves the claim
  since the purified distance is defined in terms of the fidelity.
\end{proof}

\begin{proof}[*prop:equiv-battery-models]
  The proof consists in showing
  \ref{item:prop-equiv-battery-models-equivstatement-criterionMapEnorm}%
  $\,\Rightarrow$%
  \ref{item:prop-equiv-battery-models-equivstatement-projGamma}%
  $\,\Rightarrow$%
  \ref{item:prop-equiv-battery-models-equivstatement-weight}%
  $\,\Rightarrow$%
  \ref{item:prop-equiv-battery-models-equivstatement-wit}%
  $\,\Rightarrow$%
  \ref{item:prop-equiv-battery-models-equivstatement-criterionMapEnorm}
  as well as
  \ref{item:prop-equiv-battery-models-equivstatement-projGamma}%
  $\,\Rightarrow$%
  \ref{item:prop-equiv-battery-models-equivstatement-infbattery}%
  $\,\Rightarrow$%
  \ref{item:prop-equiv-battery-models-equivstatement-criterionMapEnorm}.

  {%
    \ref{item:prop-equiv-battery-models-equivstatement-criterionMapEnorm}%
    $\,\Rightarrow$%
    \ref{item:prop-equiv-battery-models-equivstatement-projGamma}:}%
  \hspace{1ex}%
  By assumption we have $\mathcal{T}_{X\to X'}`*(\Gamma_X)\leqslant 2^{-y}\Gamma_{X'}$.
  Let $\Gamma_{W_1},\Gamma_{W_2}$ and $P_{W_1},P'_{W_2}$
  satisfy the assumptions in the claim
  \ref{item:prop-equiv-battery-models-equivstatement-projGamma}, and
  define the shorthands
  \begin{align}
    \sigma^{(1)}_{W_1}
    &= \frac{P_{W_1} \Gamma_{W_1} P_{W_1}}{\tr`(P_{W_1}\Gamma_{W_1})}\ ;
    &\sigma^{(2)}_{W_2}
    &= \frac{P'_{W_2} \Gamma_{W_2} P'_{W_2}}{\tr`(P'_{W_2}\Gamma_{W_2})}\ .
  \end{align}
  Define the map
  \begin{align}
    \Phi''_{XW_1\to X'W_2}`*(\cdot)
    =  \mathcal{T}_{X\to X'}`*[
    \tr_{W_1}`*( P_{W_1} `(\cdot) )] \otimes \sigma^{(2)}_{W_2} \ .
  \end{align}
  This map is completely positive by construction, and is trace nonincreasing because it
  is a composition of trace nonincreasing maps.  We need to show that it is
  $\Gamma$-sub-preserving. We have
  \begin{align}
    \Phi''_{XW_1\to X'W_2}`*(\Gamma_X\otimes\Gamma_{W_1})
    &= `\big(\tr P_{W_1}\Gamma_{W_1}) \cdot
      \mathcal{T}_{X\to X'}`(\Gamma_{X}) \otimes \sigma^{(2)}_{W_2}
      \nonumber\\
    &\leqslant 2^{-y} \frac{\tr P_{W_1}\Gamma_{W_1}}{\tr P'_{W_2}\Gamma_{W_2}}
      \cdot \Gamma_{X'} \otimes
      `*(P'_{W_2}\Gamma_{W_2}P'_{W_2}) 
      \nonumber\\
    &\leqslant \Gamma_{X'}\otimes\Gamma_{W_2}\ ,
  \end{align}
  using the fact that $P'_{W_2}\Gamma_{W_2}P'_{W_2}\leqslant\Gamma_{W_2}$ since
  $\Gamma_{W_2}$ commutes with $P'_{W_2}$.

  {%
    \ref{item:prop-equiv-battery-models-equivstatement-projGamma}%
    $\,\Rightarrow$%
    \ref{item:prop-equiv-battery-models-equivstatement-weight}:}%
  \hspace{1ex}%
  This special case follows directly from
  \ref{item:prop-equiv-battery-models-equivstatement-projGamma} with
  $W_1= W_2= \tilde Q$, $\Gamma_{W_1}=\Gamma_{W_2}=\Gamma_{\tilde Q}$ and by
  choosing $P_{W_1}=\proj 1_{\tilde Q}$, $P'_{W_2}=\proj 2_{\tilde Q}$. Note that $g_1=\tr
  P_{W_1}\Gamma_{W_1}$ and $g_2=\tr P'_{W_2}\Gamma_{W_2}$ and hence indeed
  $`*(\tr P'_{W_2}\Gamma_{W_2})/`*(\tr P_{W_1}\Gamma_{W_1}) = g_2/g_1 \geqslant 2^{-y}$.

  {%
    \ref{item:prop-equiv-battery-models-equivstatement-weight}%
    $\,\Rightarrow$%
    \ref{item:prop-equiv-battery-models-equivstatement-wit}:}%
  \hspace{1ex}%
  This is a trivial special case of
  \ref{item:prop-equiv-battery-models-equivstatement-weight}.

  {%
    \ref{item:prop-equiv-battery-models-equivstatement-wit}%
    $\,\Rightarrow$%
    \ref{item:prop-equiv-battery-models-equivstatement-criterionMapEnorm}:}%
  \hspace{1ex}%
  Pick $\Gamma_Q$, $\ket 1_Q, \ket 2_Q$, $g_1, g_2$ such that they satisfy the
  assumptions of \ref{item:prop-equiv-battery-models-equivstatement-wit} as well
  as $g_2/g_1=2^{-y}$ and let $\Phi'_{XQ\to X'Q}$ be the corresponding mapping.
  Observe that for any $\omega_X$
  \begin{align}
    \mathcal{T}_{X\to X'}`(\omega_X) = \dmatrixel[\big]{2}{%
      \Phi'_{XQ\to X'Q}`\big(\omega_X\otimes\proj 1_Q)%
      }_Q\ .
  \end{align}
  Plugging in $\omega_X=\Gamma_X$, and using the fact that $g_1\proj 1_Q \leqslant
  \Gamma_Q$ and that $\Phi'_{XQ\to X'Q}$ is $\Gamma$-sub-preserving,
  \begin{align}
    \mathcal{T}_{X\to X'}`(\Gamma_X)
    &\leqslant \dmatrixel[\big]{2}{%
      g_1^{-1}\cdot\Phi'_{XQ\to X'Q}`\big(\Gamma_X\otimes\Gamma_Q)%
    }_Q\ .
    \nonumber\\
    &\leqslant \dmatrixel[\big]{2}{%
      g_1^{-1}\cdot\Gamma_{X'}\otimes\Gamma_Q%
    }_Q\ .
    \nonumber\\
    &= \frac{g_2}{g_1}\cdot\Gamma_{X'}
    = 2^{-y}\,\Gamma_{X'}\ .
  \end{align}

  {%
    \ref{item:prop-equiv-battery-models-equivstatement-projGamma}%
    $\,\Rightarrow$%
    \ref{item:prop-equiv-battery-models-equivstatement-infbattery}:}%
  \hspace{1ex}%
  This is in fact another special case of
  \ref{item:prop-equiv-battery-models-equivstatement-projGamma}.  Let
  $\lambda_1,\lambda_2$ such that $\lambda_1-\lambda_2\leqslant y$ and that
  $2^{\lambda_1},2^{\lambda_2}$ are integers.  Let $A$ be any quantum system of
  dimension at least $\max`{2^{\lambda_1},2^{\lambda_2}}$ and with
  $\Gamma_A=\Ident_A$.  Now we use our assumption that
  \ref{item:prop-equiv-battery-models-equivstatement-projGamma} holds.  Choose
  $W_1=W_2 = A$, $P_{W_1} = \Ident_{2^{\lambda_1}}$,
  $P'_{W_2}=\Ident_{2^{\lambda_2}}$.  Observe that
  $\tr(P_{W_1}\Gamma_{W_1})=\tr(P_{W_1}) = 2^{\lambda_1}$ and
  $\tr(P'_{W_2}\Gamma_{W_2})=\tr(P'_{W_2}) = 2^{\lambda_2}$, and hence the
  assumptions of \ref{item:prop-equiv-battery-models-equivstatement-projGamma}
  are satisfied.  Then we know that there must exist a $\Gamma$-sub-preserving,
  trace-nonincreasing map $\Phi''_{XA\to X'A}$
  obeying~\eqref{eq:prop-equiv-battery-models-projGamma-condition-correct-mapping}.
  The latter condition reads by plugging in our choices
  \begin{align}
    \Phi''_{XA\to X'A}`\big(
    \omega_X \otimes `\big(2^{-\lambda_1}\Ident_{2^{\lambda_1}})
    )
    = \mathcal{T}_{X\to X'}`*(\omega_X) \otimes
    `\big(2^{-\lambda_2}\Ident_{2^{\lambda_2}})\ ,
  \end{align}
  for all $\omega_X$.  This is exactly the condition that $\Phi$ has to fulfill, and hence
  $\Phi$ may be taken equal to the map $\Phi''$.  It follows that
  \ref{item:prop-equiv-battery-models-equivstatement-infbattery} is true.

  {%
    \ref{item:prop-equiv-battery-models-equivstatement-infbattery}%
    $\,\Rightarrow$%
    \ref{item:prop-equiv-battery-models-equivstatement-criterionMapEnorm}:}%
  \hspace{1ex}%
  Consider any $\lambda_1,\lambda_2\geqslant 0$ with $\lambda_1-\lambda_2\leqslant y$.
  Let $\Phi_{XA\to X'A}$ be the corresponding $\Gamma$-sub-preserving map given by the
  assumption that
  \ref{item:prop-equiv-battery-models-equivstatement-infbattery} holds.
  Observe that for all $\omega_X$,
  \begin{align}
    \mathcal{T}_{X\to X'}`*(\omega_X) = \tr_A\,`*{
    \Ident_{2^{\lambda_2}}\;
    \Phi_{XA\to X'A}`\big( \omega_X\otimes
    `\big(2^{-\lambda_1}\Ident_{2^{\lambda_1}}) )
    }\ .
  \end{align}
  Plugging in $\omega_X=\Gamma_X$, and using the fact that $\Phi$ is
  $\Gamma$-sub-preserving,
  \begin{align}
    \mathcal{T}_{X\to X'}`*(\Gamma_X)
    &\leqslant \tr_A\,`*{
      \Ident_{2^{\lambda_2}}\;
      \Phi_{XA\to X'A}`\big( 2^{-\lambda_1}\cdot\Gamma_X\otimes\Gamma_A)
    }
    \nonumber\\
    &\leqslant 2^{-\lambda_1}\cdot\tr_A\,`*{
      \Ident_{2^{\lambda_2}}\;
      \Gamma_{X'}\otimes\Gamma_A
    }
    \nonumber\\
    &= 2^{-`*(\lambda_1-\lambda_2)}\,\Gamma_{X'}
  \end{align}
  Statement~\ref{item:prop-equiv-battery-models-equivstatement-criterionMapEnorm}
  follows by choosing a sequence of $(\lambda_1,\lambda_2)$ with
  $\lambda_1-\lambda_2\rightarrow{}y$.
\end{proof}

\section{The coherent relative entropy}
\appendixlabel{appx:sec:coh-rel-entr-def-and-props}

\subsection{Definition and basic properties}

Consider two quantum systems $X$ and $X'$, described by respective $\Gamma$
operators $\Gamma_X$ and $\Gamma_{X'}$.  We would like to perform a logical
process from $X$ to $X'$ which is described by the process matrix
$\rho_{X'R_X}$, with a reference system $R_X\simeq X$.  As we have seen, the
process matrix uniquely identifies both an input state $\sigma_X$ and a
trace-nonincreasing, completely positive map $\mathcal{E}_{X\to X'}$ on the
support of $\sigma_X$.

Because $\rho_{X'R_X}$ only fixes the mapping on the support of $\sigma_X$,
there may be several trace-nonincreasing, completely positive maps
$\mathcal{T}_{X\to X'}$ which implement this given process matrix.
The coherent relative entropy is defined as the optimal battery usage achieved
by a $\mathcal{T}_{X\to X'}$ with fixed process matrix $\rho_{X'R_X}$, relative
to $\Gamma$ operators $\Gamma_{X}, \Gamma_{X'}$.

In fact, we allow the implementation to fail with some fixed probability
$\epsilon\geqslant 0$ which can be chosen freely.  This allow us to ignore very
improbable events.  Such a practice is standard in the smooth entropy framework,
and it is even necessary in order to make physical statements and recover the
correct asymptotic
behavior~\cite{PhDRenner2005_SQKD,PhDTomamichel2012,BookTomamichel2016_Finite}.
Hence, we allow the process matrix achieved by the optimization variable
$\mathcal{T}_{X\to X'}$ on the given input state to only be $\epsilon$-close to
the requested process matrix $\rho_{X'R_X}$.

By \autoref{prop:equiv-battery-models}, the optimal number of extracted battery
charge $y$ of a fixed $\mathcal{T}_{X\to X'}$ is given by the condition
$\mathcal{T}_{X\to X'}`*(\Gamma_X) \leqslant 2^{-y}\,\Gamma_{X'}$.  We are then
directly led to the following definition.

\begin{thmheading}{Coherent Relative Entropy}
  \label{def:coh-rel-entr}
  For a bipartite quantum normalized state $\rho_{X'R_X}$, two positive
  semidefinite operators $\Gamma_X$ and $\Gamma_{X'}$ such that
  $t_{R_X\to X}(\rho_{X'R_X})$ lies in the support of
  $\Gamma_X\otimes\Gamma_{X'}$, and for $\epsilon\geqslant 0$, the
  \emph{coherent relative entropy} is defined as
  \begin{multline}
    \DCohz[\epsilon]{\rho}{X}{X'}{\Gamma_{X}}{\Gamma_{X'}}
    = 
      \max_{\substack{
        \mathcal{T}_{X\to X'}`(\Gamma_X) \leqslant 2^{-y}\,\Gamma_{X'} \\
        \mathcal{T}_{X\leftarrow X'}^\dagger`(\Ident_{X'}) \leqslant \Ident_{R_X} \\
        P`(\mathcal{T}_{X\to X'}`(\sigma_{XR_X}), \rho_{X'R_X}) \leqslant\epsilon
      }}
      y\quad ,
    \label{eq:coh-rel-entr-def}
  \end{multline}
  where the optimization ranges over all $y\in\mathbb{R}$ and over all
  completely positive maps $\mathcal{T}_{X\to X'}$ satisfying the given
  conditions, and where we use the shorthand
  $\ket\sigma_{XR_X} = \rho_{R_X}^{1/2}\,\ket\Phi_{X:R_X}$.
\end{thmheading}

If $\epsilon=0$, we may omit the $\epsilon$ superscript altogether:
\begin{align}
  \DCohz{\rho}{X}{X'}{\Gamma_{X}}{\Gamma_{X'}}
  =
  \DCohz[\epsilon=0]{\rho}{X}{X'}{\Gamma_{X}}{\Gamma_{X'}}\ .
\end{align}

Clearly, the coherent relative entropy is monotonously increasing in $\epsilon$, as the
optimization set gets larger.

We now introduce the variable $\alpha = 2^{-y}$ and denote by $T_{X'R_X}$ the
Choi matrix of $\mathcal{T}_{X\to X'}$, allowing us to write the coherent
relative entropy as a semidefinite program.

\begin{proposition}[Semidefinite program]
  \label{prop:coh-rel-entr-SDP}
  For a bipartite quantum normalized state $\rho_{X'R_X}$, two positive
  semidefinite operators $\Gamma_X$ and $\Gamma_{X'}$ such that
  $t_{R_X\to X}(\rho_{X'R_X})$ lies in the support of
  $\Gamma_X\otimes\Gamma_{X'}$, and for $\epsilon\geqslant 0$, the
  coherent relative entropy may be written as
  \begin{align}
    \DCohz[\epsilon]{\rho}{X}{X'}{\Gamma_{X}}{\Gamma_{X'}}
    = -\log \alpha\ ,
  \end{align}
  where $\alpha$ is the optimal solution to the following semidefinite program
  in terms of the variables $T_{X'R_XE}\geqslant 0, \alpha\geqslant 0$, and dual
  variables $\mu, \omega_{X'}, X_{R_X}\geqslant 0$, with $\ket\rho_{X'R_XE}$
  being an arbitrary but fixed purification of $\rho_{X'R_X}$ into an
  environment system $E$ of dimension at least $\abs{E}\geqslant\abs{X'R_X}$:
 \medskip\par\noindent\textit{Primal problem:} \colorlet{thesisSDPDualVarColor}{black!50!white}
  \begin{subequations}
    \label{eq:SDP-coh-rel-entr-primal}
    \begin{alignat}{3}
      \text{\normalfont minimize:}&~& \alpha \quad &&& \nonumber\\
      \text{\normalfont subject to:}&~~&
      \tr_{X'}`*[T_{X'R_X}] &\leqslant \Ident_{R_X}
      &~&\textcolor{thesisSDPDualVarColor}{{}: X_{R_X}}
      \label{eq:SDP-coh-rel-entr-cond-trnoninc} \\
      & &
      \tr_{R_X}`*[T_{X'R_X}\,\Gamma_{R_X}] &\leqslant \alpha\,\Gamma_{X'}
      &~&\textcolor{thesisSDPDualVarColor}{{}:\omega_{X'}}
      \label{eq:SDP-coh-rel-entr-cond-alphaSGPM} \\
      & &
      \hspace*{-2em}
      \tr`( \rho_{R_X}^{1/2}\,T_{X'R_XE}\,\rho_{R_X}^{1/2}\,\rho_{X'R_XE} )
      &\geqslant 1 - \epsilon^2
        &~&\textcolor{thesisSDPDualVarColor}{{}: \mu}
        \label{eq:SDP-coh-rel-entr-cond-processmatrixclose}
    \end{alignat}
  \end{subequations}
  \par\noindent\textit{Dual problem:}\index{dual problem}
  \begin{subequations}
    \label{eq:SDP-coh-rel-entr-dual}
    \begin{alignat}{1}
      \text{\normalfont maximize:}&\quad
      \makebox[13em][c]{$\displaystyle
        \mu\,`(1-\epsilon^2) - \tr`(X_{R_X})
        $} \nonumber\\
      \text{\normalfont subject to:}&\quad
      \makebox[13em][c]{$\displaystyle
        \tr\,`*[\omega_{X'}\Gamma_{X'}] \leqslant 1
        $}~\textcolor{thesisSDPDualVarColor}{{}:\alpha}
      \label{eq:SDP-coh-rel-entr-conddual-tromegaGamma}\\
      &\hspace{-1em}
        \mu\, \rho_{R_X}^{1/2}\,\rho_{X'R_XE}\,\rho_{R_X}^{1/2} \leqslant
        \Ident_E\otimes`\big(
        \omega_{X'}\otimes \Gamma_{R_X} + \Ident_{X'}\otimes X_{R_X}
        )
        \nonumber\\[-0.5ex]
        &\hspace{-1em}
        \rule{15em}{0pt}~\textcolor{thesisSDPDualVarColor}{{}: T_{X'R_X}}
      \label{eq:SDP-coh-rel-entr-conddual-Z}
    \end{alignat}
  \end{subequations}
  using the shorthand $\Gamma_{R_X} = t_{X\to R_X}`(\Gamma_X)$.
\end{proposition}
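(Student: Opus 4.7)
The plan is to rewrite the maximization in~\eqref{eq:coh-rel-entr-def} as a minimization by setting $\alpha = 2^{-y}$ (so that $\DCohz[\epsilon]{\rho}{X}{X'}{\Gamma_X}{\Gamma_{X'}} = -\log\alpha$), parametrize the map $\mathcal{T}_{X\to X'}$ by its Choi matrix $T_{X'R_X} = \mathcal{T}(\Phi_{X:R_X})\geqslant 0$, and finally obtain the dual~\eqref{eq:SDP-coh-rel-entr-dual} by a standard Lagrangian argument.

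For the primal, each condition in the feasible set of~\eqref{eq:coh-rel-entr-def} translates directly to an affine constraint on $T_{X'R_X}$.  Using the identity $\mathcal{T}^\dagger(\Ident_{X'}) = t_{R_X\to X}(\tr_{X'}T_{X'R_X})$, trace-nonincreasingness becomes $\tr_{X'}T_{X'R_X}\leqslant\Ident_{R_X}$, giving~\eqref{eq:SDP-coh-rel-entr-cond-trnoninc}.  Using the Choi identity $\mathcal{T}(\Gamma_X) = \tr_{R_X}[T_{X'R_X}(\Ident_{X'}\otimes\Gamma_{R_X})]$ with $\Gamma_{R_X} = t_{X\to R_X}(\Gamma_X)$, the $\Gamma$-sub-preservation bound $\mathcal{T}(\Gamma_X)\leqslant\alpha\Gamma_{X'}$ becomes exactly~\eqref{eq:SDP-coh-rel-entr-cond-alphaSGPM}.

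The purified-distance constraint is the main nontrivial step.  Since $\rho_{R_X}^{1/2}$ acts only on $R_X$, it commutes with $\mathcal{T}_{X\to X'}$, so
\begin{align*}
  \mathcal{T}_{X\to X'}(\sigma_{XR_X}) = \rho_{R_X}^{1/2}\,T_{X'R_X}\,\rho_{R_X}^{1/2}\ .
\end{align*}
The condition $P\leqslant\epsilon$ is equivalent to $F^2\geqslant 1-\epsilon^2$, where the generalized fidelity coincides with the usual one because $\rho_{X'R_X}$ is normalized.  I would lift the Choi matrix to $T_{X'R_XE}\geqslant 0$ with $\tr_E T_{X'R_XE} = T_{X'R_X}$ on an environment $E$ of dimension at least $|X'R_X|$, so that $\rho_{R_X}^{1/2}\,T_{X'R_XE}\,\rho_{R_X}^{1/2}$ is an extension of $\mathcal{T}(\sigma_{XR_X})$ to $E$.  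Because $\rho_{X'R_XE}$ is pure, the identity $\tr[M\,\rho_{X'R_XE}] = F^2(M,\rho_{X'R_XE})$ holds for every PSD extension $M$; combined with monotonicity of fidelity under partial trace, this yields the upper bound $\tr[\rho_{R_X}^{1/2}T_{X'R_XE}\rho_{R_X}^{1/2}\rho_{X'R_XE}]\leqslant F^2(\mathcal{T}(\sigma_{XR_X}),\rho_{X'R_X})$.  The matching lower bound is attained by choosing $T_{X'R_XE}$ as a rank-one lift of the Uhlmann purification of $\mathcal{T}(\sigma_{XR_X})$ matching $\rho_{X'R_XE}$ (with a minor technicality on $\ker\rho_{R_X}$).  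Promoting $T_{X'R_XE}$ to the SDP variable, with $T_{X'R_X}$ understood as $\tr_E T_{X'R_XE}$, makes the closeness condition equivalent to~\eqref{eq:SDP-coh-rel-entr-cond-processmatrixclose}, completing the primal~\eqref{eq:SDP-coh-rel-entr-primal}.

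The dual then follows from the standard Lagrangian computation.  Assigning dual variables $X_{R_X}\succeq 0$, $\omega_{X'}\succeq 0$, $\mu\geqslant 0$ to the three primal inequalities and collecting terms, the coefficient of $\alpha$ is $1-\tr[\omega_{X'}\Gamma_{X'}]$, which must be nonnegative for a finite infimum over $\alpha\geqslant 0$, giving~\eqref{eq:SDP-coh-rel-entr-conddual-tromegaGamma}.  Rewriting $\tr[X_{R_X}\tr_{X'}T_{X'R_X}]$ and the analogous $\omega_{X'}$ term as traces against $T_{X'R_XE}$ yields the coefficient $\Ident_E\otimes(\omega_{X'}\otimes\Gamma_{R_X} + \Ident_{X'}\otimes X_{R_X}) - \mu\,\rho_{R_X}^{1/2}\,\rho_{X'R_XE}\,\rho_{R_X}^{1/2}$, whose required positivity is exactly~\eqref{eq:SDP-coh-rel-entr-conddual-Z}; the remaining constant contributions form the dual objective $\mu(1-\epsilon^2)-\tr X_{R_X}$.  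I expect the main remaining obstacle to be the proof of strong duality and attainment of the primal infimum, both of which should follow from \autoref{thm:Slater} once one exhibits a strictly feasible dual point---e.g.\ by taking $\omega_{X'}$ and $\mu$ small and $X_{R_X}$ a sufficiently large multiple of $\Ident_{R_X}$ so that~\eqref{eq:SDP-coh-rel-entr-conddual-Z} holds strictly.
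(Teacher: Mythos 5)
Your overall route is the same as the paper's: parametrize $\mathcal{T}$ by its Choi matrix, lift the fidelity condition to a linear constraint on an extended variable $T_{X'R_XE}$ using purity of $\rho_{X'R_XE}$ and Uhlmann's theorem, and then obtain the dual by a Lagrangian computation together with Slater's conditions (the paper in fact states the dual and defers the strong-duality discussion to the text right after the proposition, exhibiting the same kind of explicit strictly-feasible points you propose).

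There is, however, one spot where you are too quick. What you call a ``minor technicality on $\ker\rho_{R_X}$'' is actually the crux of the hard direction. Given a feasible $(\mathcal{T},y)$ of the original optimization, you need a PSD $T_{X'R_XE}$ that \emph{simultaneously} (i) has $\rho_{R_X}^{1/2}T_{X'R_XE}\rho_{R_X}^{1/2}=\proj\tau_{X'R_XE}$ equal to the optimal Uhlmann purification of $\mathcal{T}(\sigma_{XR_X})$, and (ii) has $E$-marginal equal to the \emph{full} Choi matrix $T_{X'R_X}$ of $\mathcal{T}$ so that the $\Gamma$-constraint~\eqref{eq:SDP-coh-rel-entr-cond-alphaSGPM} holds with $\alpha=2^{-y}$. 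A naive ``lift of $\proj\tau$'', e.g.\ $\rho_{R_X}^{-1/2}\proj\tau\rho_{R_X}^{-1/2}$ padded with zero on $\ker\rho_{R_X}$, has $E$-marginal $\Pi\,T_{X'R_X}\,\Pi$ where $\Pi$ projects onto $\mathrm{supp}\,\rho_{R_X}$; but $\Pi\,T\,\Pi$ is \emph{not} in general $\leqslant T$, nor does $\tr_{R_X}[\Pi T\Pi\,\Gamma_{R_X}]\leqslant\tr_{R_X}[T\,\Gamma_{R_X}]$ hold unless $\Pi$ commutes with $\Gamma_{R_X}$, so the constraint~\eqref{eq:SDP-coh-rel-entr-cond-alphaSGPM} can fail for $\alpha=2^{-y}$. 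The paper resolves this by taking a Stinespring dilation $V_{X\to X'E}$ of $\mathcal{T}$ with $V^\dagger V\leqslant\Ident_X$, and defining $\ket{T}_{X'R_XE}=W_E\,V_{X\to X'E}\,\ket\Phi_{X:R_X}$ with $W_E$ the Uhlmann unitary. This is rank one, its $E$-marginal is by construction the full Choi $\mathcal{T}(\Phi_{X:R_X})$ (so~\eqref{eq:SDP-coh-rel-entr-cond-trnoninc} and~\eqref{eq:SDP-coh-rel-entr-cond-alphaSGPM} hold with $\alpha=2^{-y}$), and $\rho_{R_X}^{1/2}T_{X'R_XE}\rho_{R_X}^{1/2}=W_EV\sigma_{XR_X}V^\dagger W_E^\dagger=\proj\tau$ gives~\eqref{eq:SDP-coh-rel-entr-cond-processmatrixclose}. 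You should replace ``minor technicality'' with this explicit construction; everything else in your proposal is sound.
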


In the above, the reference system $R_X$ may be understood as a ``mirror
system'' which allows us to compare how the output and the input of the process
are correlated. A classical analogue of $R_X$ would be a memory register which
stores a copy of the input.  Crucially, in the semidefinite program the ``mirror
images'' $\Gamma_{R_X}$ and $\sigma_{R_X}$ of $\Gamma_X$ and $\sigma_X$ must be
constructed consistently, using the same reference basis on $R_X$, as
encoded in the ket $\ket{\Phi}_{X:R_X}$ and the partial transpose operation
$t_{X\to R_X}(\cdot)$.
In the semidefinite program, $\Gamma_X$ needs to be represented on $R_X$, and
general Choi matrices of processes $\mathcal{T}_{X\to X'}$ need to be
represented on $X'R_X$, so in general we need $R_X\simeq X$ even if a smaller
system could hold a purification of $\sigma_X$ (for instance, if $\sigma_X$ is
already pure).  By contrast, in the definition~\eqref{eq:coh-rel-entr-def} one
could actually choose a more general $R_X$ system: Given $\sigma_{X}$ and
$\mathcal{E}_{X\to X'}$, one may choose any purification $\ket{\sigma}_{XR_X}$
and correspondingly define
$\rho_{X'R_X} = \mathcal{E}_{X\to X'}(\sigma_{XR_X})$.

The dual problem~\eqref{eq:SDP-coh-rel-entr-dual} is strictly feasible (choose,
e.g., $\omega_{X'}=\Ident_{X'}/(2\tr(\Gamma_{X'}))$, $X_{R_X}=\Ident_{R_X}$ and
$\mu=1/2$), and $T_{X'R_X}=\rho_{R_X}^{-1/2}\rho_{X'R_X}\rho_{R_X}^{-1/2}$ is a
feasible primal candidiate, and hence by Slater's sufficiency conditions
(\autoref{thm:Slater}) we have that {strong duality} holds and there always
exists optimal primal candidates. For $\epsilon>0$, the primal problem is also
strictly feasible (choose
$T_{X'R_X} = (1-\epsilon^2/2)\, \rho_{R_X}^{-1/2}\rho_{X'R_X}\rho_{R_X}^{-1/2} +
(\epsilon^2/4) \Ident_{X'R_X}/\abs{X'}$), and there always exists optimal dual
candidates as well.  However, note that for $\epsilon=0$ the primal problem is
not always strictly feasible (indeed,
constraint~\eqref{eq:SDP-coh-rel-entr-cond-processmatrixclose} is very strong
and fixes the mapping $T_{X'R_X}$ on a subspace; because it must be
trace-preserving on that subspace then~\eqref{eq:SDP-coh-rel-entr-cond-trnoninc}
cannot be satisfied strictly). This means that there is a possibility that there
is no choice of optimal dual variables.  However, since strong duality holds,
there is always a sequence of choices for dual variables whose attained
objective value will converge to the optimal solution of the semidefinite
program.

Here are first some basic properties of the coherent relative entropy.

\begin{proposition}[Trivial bounds]
  \label{prop:coh-rel-entr-trivial-bounds}
  For any $0\leqslant\epsilon<1$,  we have 
  \begin{subequations}
    \label{eq:prop-coh-rel-entr-trivial-bounds}
  \begin{align}
    &
    \DCohz[\epsilon]{\rho}{X}{X'}{\Gamma_X}{\Gamma_{X'}}
      \nonumber\\
    &\hspace*{4em}
      \geqslant - \log \tr\Gamma_{X} - \log\, \norm{\Gamma_{X'}^{-1}}_\infty
      - \log`(1-\epsilon^2)\ ;
    \label{eq:prop-coh-rel-entr-trivial-bounds-upper-bound}\\
    &\DCohz[\epsilon]{\rho}{X}{X'}{\Gamma_X}{\Gamma_{X'}}
      \nonumber\\
    &\hspace*{4em}
      \leqslant \log\,\norm{\Gamma_X^{-1}}_\infty + \log\tr\Gamma_{X'}
      - \log`(1-\epsilon^2)\ .
    \label{eq:prop-coh-rel-entr-trivial-bounds-lower-bound}
  \end{align}
  \end{subequations}
\end{proposition}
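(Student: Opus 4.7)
The plan is to prove both inequalities directly from the optimization defining the coherent relative entropy: the lower bound by exhibiting a concrete feasible $\mathcal{T}_{X\to X'}$ that achieves the target value of $y$, and the upper bound by deriving from any feasible pair $(\mathcal{T},y)$ a scalar inequality that caps $y$. Both arguments are elementary and rely only on the support hypothesis on the process matrix, the operator inequalities $A\leqslant\tr(A)\,\Ident$ for $A\geqslant 0$ and $\Ident\leqslant\|\Gamma^{-1}\|_\infty\,\Gamma$ on $\mathrm{supp}(\Gamma)$, and the definition of the generalized fidelity.

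For the lower bound, my ansatz is $\mathcal{T}_{X\to X'}(\cdot) = (1-\epsilon^2)\,\mathcal{E}_{X\to X'}\bigl(\Pi^{\sigma_X}\,(\cdot)\,\Pi^{\sigma_X}\bigr)$, where $\Pi^{\sigma_X}$ projects onto $\mathrm{supp}(\sigma_X)$. This map is CP and trace-nonincreasing, and since $\sigma_{XR_X}$ is already supported on $\Pi^{\sigma_X}$ on the $X$-side the sandwich acts trivially on the input state, so $\mathcal{T}(\sigma_{XR_X}) = (1-\epsilon^2)\,\rho_{X'R_X}$. A short generalized-fidelity calculation then yields $F(\mathcal{T}(\sigma_{XR_X}),\rho_{X'R_X}) = \sqrt{1-\epsilon^2}$ and hence purified distance exactly $\epsilon$, so the proximity constraint is saturated. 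To certify the $\Gamma$ constraint, I combine $A\leqslant\tr(A)\,\Ident$ with the trace-preservation of $\mathcal{E}$ to obtain $\mathcal{T}(\Gamma_X)\leqslant(1-\epsilon^2)\tr(\Gamma_X)\,\Ident_{X'}$, and then use the fact that on $\mathrm{supp}(\sigma_X)$ the Kraus operators of $\mathcal{E}$ funnel into $\mathrm{supp}(\rho_{X'})\subseteq\mathrm{supp}(\Gamma_{X'})$ (a consequence of the hypothesis that $t_{R_X\to X}(\rho_{X'R_X})$ is supported on $\Gamma_X\otimes\Gamma_{X'}$) in order to replace $\Ident_{X'}$ by $\Pi^{\Gamma_{X'}}\leqslant\|\Gamma_{X'}^{-1}\|_\infty\,\Gamma_{X'}$. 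Reading off the resulting $2^{-y} = (1-\epsilon^2)\,\tr(\Gamma_X)\,\|\Gamma_{X'}^{-1}\|_\infty$ gives the stated lower bound on $\hat{D}^\epsilon$.

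For the upper bound, I fix any feasible pair $(\mathcal{T},y)$. The hypothesis $\mathrm{supp}(\sigma_X)\subseteq\mathrm{supp}(\Gamma_X)$ gives the operator inequality $\sigma_X\leqslant\|\Gamma_X^{-1}\|_\infty\,\Gamma_X$, and applying the CP map together with the $\Gamma$-sub-preservation constraint yields $\mathcal{T}(\sigma_X)\leqslant\|\Gamma_X^{-1}\|_\infty\,\mathcal{T}(\Gamma_X)\leqslant\|\Gamma_X^{-1}\|_\infty\,2^{-y}\,\Gamma_{X'}$; taking traces gives $\tr\mathcal{T}(\sigma_X)\leqslant\|\Gamma_X^{-1}\|_\infty\,2^{-y}\,\tr\Gamma_{X'}$. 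The only nontrivial ingredient that remains is a lower bound on $\tr\mathcal{T}(\sigma_X)$ from the proximity constraint: monotonicity of the fidelity under the partial trace over $R_X$ gives $F(\mathcal{T}(\sigma_X),\rho_{X'})\geqslant F(\mathcal{T}(\sigma_{XR_X}),\rho_{X'R_X})\geqslant\sqrt{1-\epsilon^2}$, and since $\rho_{X'}$ is normalized the generalized fidelity is bounded above by $\sqrt{\tr\mathcal{T}(\sigma_X)}$, so $\tr\mathcal{T}(\sigma_X)\geqslant 1-\epsilon^2$. Combining yields $2^{-y}\geqslant(1-\epsilon^2)/(\|\Gamma_X^{-1}\|_\infty\,\tr\Gamma_{X'})$, which is the claimed upper bound.

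The main subtlety I expect is the bookkeeping around supports --- in particular interpreting $\|\Gamma^{-1}\|_\infty$ as the reciprocal of the smallest nonzero eigenvalue of $\Gamma$ on its support, and verifying that the image of the lower-bound ansatz on $\Gamma_X$ indeed lies in $\mathrm{supp}(\Gamma_{X'})$ (which does follow via a short Kraus-operator argument from the support hypothesis on $\rho_{X'R_X}$). A secondary short step is the generalized-fidelity computation for the pair $\bigl((1-\epsilon^2)\rho,\rho\bigr)$, where one must note that the cross term $\sqrt{(1-\tr\sigma)(1-\tr\rho)}$ from the generalized-fidelity definition vanishes because $\rho$ is normalized. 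Neither obstacle is substantial, but each warrants an explicit check before asserting the full statement.
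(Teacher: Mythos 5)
Your proof is correct, and both halves land on the stated bounds. The lower bound is essentially the same construction as the paper's: your ansatz $\mathcal{T}(\cdot)=(1-\epsilon^2)\,\mathcal{E}(\Pi^{\sigma_X}(\cdot)\Pi^{\sigma_X})$ has Choi matrix $(1-\epsilon^2)\,\rho_{R_X}^{-1/2}\rho_{X'R_X}\rho_{R_X}^{-1/2}$, which is exactly the feasible primal candidate the paper plugs in; you have simply phrased it as a channel rather than as an operator on $X'R_XE$, and the fidelity and support bookkeeping match.

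The upper bound is where you genuinely diverge. The paper treats the problem as a semidefinite program and exhibits a specific dual feasible point ($\omega_{X'}=(\tr\Gamma_{X'})^{-1}\Ident_{X'}$, $\mu = (\tr\Gamma_{X'})^{-1}\|\Gamma_{R_X}^{-1}\|_\infty^{-1}$, $X_{R_X}=0$), then invokes weak duality. You instead argue entirely on the primal side: for any feasible $\mathcal{T}$, the chain $\mathcal{T}(\sigma_X)\leqslant \|\Gamma_X^{-1}\|_\infty\,\mathcal{T}(\Gamma_X)\leqslant \|\Gamma_X^{-1}\|_\infty\,2^{-y}\,\Gamma_{X'}$ gives a trace upper bound, while $F(\mathcal{T}(\sigma_X),\rho_{X'})\geqslant\sqrt{1-\epsilon^2}$ together with $F(\omega,\rho_{X'})\leqslant\sqrt{\tr\omega}$ for normalized $\rho_{X'}$ gives the trace lower bound $\tr\mathcal{T}(\sigma_X)\geqslant 1-\epsilon^2$, and these squeeze $2^{-y}$. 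Your route is more elementary and self-contained --- it does not require the reader to have internalized the dual SDP or weak duality --- whereas the paper's route is uniform with the other SDP-based proofs in that appendix, and the same dual candidate construction foreshadows the technique used for the sharper bounds in later propositions. Both are complete; yours trades a bit of SDP machinery for one application of Hölder's inequality on the fidelity term (the $\|\rho^{1/2}\sigma^{1/2}\|_1\leqslant\sqrt{\tr\rho\,\tr\sigma}$ step), which is worth stating explicitly if you write this up.

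Two small points to surface when writing the final version: first, when you assert $\mathcal{T}(\Gamma_X)\leqslant(1-\epsilon^2)\tr(\Gamma_X)\,\Ident_{X'}$, the clean way is to observe that $\mathcal{T}(\Gamma_X)\geqslant 0$ with $\tr\mathcal{T}(\Gamma_X)\leqslant(1-\epsilon^2)\tr\Gamma_X$ and then apply $A\leqslant(\tr A)\,\Ident$ to $A=\mathcal{T}(\Gamma_X)$ directly, rather than trying to push $A\leqslant(\tr A)\,\Ident$ through $\mathcal{E}$ (a CP trace-preserving map need not satisfy $\mathcal{E}(\Ident)\leqslant\Ident$). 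Second, the replacement of $\Ident_{X'}$ by $\Pi^{\Gamma_{X'}}$ relies on $\mathcal{E}(\Pi^{\sigma_X})$ being supported on $\Gamma_{X'}$; your Kraus-operator intuition is right, and the one-line justification is $\mathcal{E}(\Pi^{\sigma_X})\leqslant\|\sigma_X^{-1}\|_\infty\,\mathcal{E}(\sigma_X)=\|\sigma_X^{-1}\|_\infty\,\rho_{X'}$, which pins the support inside $\mathrm{supp}(\rho_{X'})\subseteq\mathrm{supp}(\Gamma_{X'})$.
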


In the thermodynamic version of the framework, these bounds can be understood in
terms of work extraction.  Suppose $\Gamma_X = \Gamma_{X'} = \ee^{-\beta H_X}$
with a Hamiltonian $H_X$ and an inverse temperature $\beta$.  Then
$\log\,\norm{\Gamma_X^{-1}}_\infty$ (resp.\@
$\log\,\norm{\Gamma_{X'}^{-1}}_\infty$) is $\beta$ times the maximum {energy} of
$R$ (resp.\@ $X'$), and similarly, $\tr\Gamma_X$ (resp.\@ $\tr\Gamma_{X'}$) is
the partition function of $X$ (resp.\@ $X'$).  The partition function is
directly related to the work cost of {erasure} (resp.\@ formation) of a {thermal
  state} to (resp.\@ from) a pure energy eigenstate of zero energy.  In this
case, the bounds~\eqref{eq:prop-coh-rel-entr-trivial-bounds} correspond to the
ultimate worst and best cases respectively.  The ultimate worst case is that we
start off in a thermal state and end up in the highest energy level, whereas the
absolute best case would be to start in the highest energy eigenstate and finish
in the {Gibbs state}.

Much like the conditional entropy and relative entropy, the coherent relative
entropy is invariant under {partial isometries} of which $\rho_{X'R}$ and
$\Gamma$ operators lie in the support.  In particular, the coherent relative
entropy is completely oblivious to dimensions of the Hilbert spaces which are
not spanned by $\Gamma_R$ and $\Gamma_{X'}$.

\begin{proposition}[Invariance under isometries]
  \label{prop:coh-rel-entr-invar-under-isometries}
  Let $\tilde X$, $\tilde X'$ be new systems.  Suppose there exist partial
  isometries $V_{X\to\tilde X}$ and $V'_{X'\to \tilde X'}$ such that both
  $t_{R_X\to X}`(\rho_{R_X})$ and $\Gamma_X$ are in the support of
  $V_{X\to\tilde X}$, and both $\rho_{X'}$ and $\Gamma_{X'}$ are in the support
  of $V'_{X'\to\tilde X'}$.  Then
  \begin{multline}
    \DCohz[\epsilon]{*`(V'\otimes V)\, \rho_{X'R_X}\, `(V'\otimes V)^\dagger}%
    {\tilde X}{\tilde X'}{V\Gamma_X V^\dagger}{V'\Gamma_{X'} V'^\dagger}
    \\
    = \DCohz[\epsilon]{\rho}{X}{X'}{\Gamma_X}{\Gamma_{X'}}\ .
  \end{multline}
\end{proposition}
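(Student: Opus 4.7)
The strategy is to exhibit an explicit bijection between feasible candidates of the optimisation defining the coherent relative entropy on the two sides of the claimed equality, preserving the value of $y$. Starting from a feasible map $\mathcal{T}_{X\to X'}$ for the right-hand side, I define the candidate $\tilde{\mathcal{T}}_{\tilde X\to \tilde X'}(\cdot) := V'\,\mathcal{T}(V^\dagger(\cdot)V)\,V'^\dagger$ for the left-hand side; in the opposite direction, I set $\mathcal{T}_{X\to X'}(\cdot) := V'^\dagger \tilde{\mathcal{T}}(V(\cdot)V^\dagger)V'$. Both maps are manifestly completely positive, as they are compositions of completely positive sandwich maps.

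The $\Gamma$-sub-preserving and trace-nonincreasing constraints transfer directly. The key observation is that by assumption $V^\dagger V$ acts as the identity on the supports of $\Gamma_X$ and $\sigma_X$, and similarly $V'^\dagger V'$ acts as the identity on the supports of $\Gamma_{X'}$ and $\rho_{X'}$. For instance, in the forward direction I compute
\begin{align*}
  \tilde{\mathcal{T}}(V\Gamma_X V^\dagger)
  = V'\,\mathcal{T}(\Gamma_X)\,V'^\dagger
  \leqslant 2^{-y}\,V'\Gamma_{X'}V'^\dagger,
\end{align*}
and $\tilde{\mathcal{T}}^\dagger(\Ident_{\tilde X'}) = V\,\mathcal{T}^\dagger(V'^\dagger V')\,V^\dagger \leqslant VV^\dagger \leqslant \Ident_{\tilde X}$. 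The reverse direction is analogous, using the corresponding support condition on $V'$ to recover $\mathcal{T}(\Gamma_X) \leqslant 2^{-y}\Gamma_{X'}$ from the corresponding bound on $\tilde{\mathcal{T}}$.

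The substantive step is the transfer of the process-matrix closeness constraint. A purification of the transformed input state $V\sigma_X V^\dagger$ may be taken as $\ket{\tilde\sigma} = (V\otimes V)\ket{\sigma}_{XR_X}$; the associated reference system differs from $R_X$ at most by an isometry, which leaves the coherent relative entropy unchanged as already noted after its definition. A short calculation using $V^\dagger V\,\sigma_X\,V^\dagger V = \sigma_X$ yields
\begin{align*}
  \tilde{\mathcal{T}}(\tilde\sigma) = (V'\otimes V)\,\mathcal{T}(\sigma_{XR_X})\,(V'\otimes V)^\dagger.
\end{align*}
Because fidelity, and hence the purified distance, is invariant under common isometric conjugation of both of its arguments, the condition $P(\mathcal{T}(\sigma_{XR_X}),\rho_{X'R_X})\leqslant \epsilon$ transfers verbatim into $P(\tilde{\mathcal{T}}(\tilde\sigma),(V'\otimes V)\rho_{X'R_X}(V'\otimes V)^\dagger)\leqslant \epsilon$; the same argument with primes and tildes interchanged gives the reverse implication.

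The main obstacle is bookkeeping rather than conceptual: one has to track the three support conditions carefully (for $\Gamma_X$, $\sigma_X$, and their primed counterparts) and ensure that the reference-system identifications are compatible across both optimisations. Once this is done, the content of the proposition reduces to the isometric invariance of fidelity together with the elementary conjugation behaviour of the $\Gamma$-sub-preserving constraint.
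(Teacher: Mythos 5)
Your proof is correct and follows essentially the same route as the paper's own proof, which (even more briefly) observes that the whole optimisation lives inside the supports of the partial isometries, so conjugating the optimisation variable by $V,V'$ gives a value-preserving bijection between feasible candidates in the two problems. You spell this out more explicitly via $\tilde{\mathcal{T}} = V'\,\mathcal{T}(V^\dagger(\cdot)V)\,V'^\dagger$ and its inverse, checking each constraint in turn; the substance is the same.
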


This proposition allows us to embed states in larger dimensions, as well as to show that
it is invariant under simultaneous action of unitaries on the states and the $\Gamma$
operators.

We may also check the behavior of the coherent relative entropy under re-scaling of the
$\Gamma$ operators (as the latter need not conform to any normalization).
Intuitively, in the thermodynamic case where $\Gamma = \ee^{-\beta H}$ for a Hamiltonian
$H$ and an inverse temperature $\beta$, the transformation $\Gamma\to a\Gamma$ for a
constant factor $a$ yields the $\Gamma$ operator corresponding to the modified Hamiltonian
$H\to H-\beta^{-1}\ln a$, that is, a constant energy shift of all levels.  Consequently,
we expect that scaling the $\Gamma$ operators introduces a constant shift in the coherent
relative entropy, which would correspond to providing the required energy to compensate
for the global change in energy.

\begin{proposition}[Scaling the $\Gamma$ operators]
  \label{prop:coh-rel-entr-scaling-Gamma}
  For any $0\leqslant\epsilon<1$, and for real numbers $a,b>0$,
  \begin{multline}
    \DCohz[\epsilon]\rho{X}{X'}{a\Gamma_X}{b\Gamma_{X'}}
\\
    = \DCohz[\epsilon]\rho{X}{X'}{\Gamma_X}{\Gamma_{X'}} + \log \frac b a\ .
  \end{multline}
\end{proposition}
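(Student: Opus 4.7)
The plan is to verify the scaling identity directly from the definition~\eqref{eq:coh-rel-entr-def} of the coherent relative entropy, observing that rescaling the $\Gamma$ operators only affects the first constraint in the optimization, and does so in a way that merely shifts the admissible values of $y$ by the additive constant $\log(b/a)$.

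More precisely, I would unpack the definition with $\Gamma_X \to a\Gamma_X$ and $\Gamma_{X'} \to b\Gamma_{X'}$. The constraint $\mathcal{T}_{X\to X'}(a\Gamma_X) \leqslant 2^{-y'}\,b\Gamma_{X'}$ is equivalent, by linearity and by dividing through by the positive scalar $a$, to $\mathcal{T}_{X\to X'}(\Gamma_X) \leqslant 2^{-y'}(b/a)\Gamma_{X'} = 2^{-(y' - \log(b/a))}\Gamma_{X'}$. Meanwhile, the two remaining constraints in the optimization~\eqref{eq:coh-rel-entr-def}, namely $\mathcal{T}^\dagger(\Ident_{X'})\leqslant \Ident_{R_X}$ and $P(\mathcal{T}(\sigma_{XR_X}),\rho_{X'R_X})\leqslant\epsilon$, do not involve $\Gamma_X$ or $\Gamma_{X'}$ at all and are therefore unchanged. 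Hence the feasible set in terms of $\mathcal{T}$ is identical under both sides of the claim, with the single change that each feasible $\mathcal{T}$ contributes the value $y' = y + \log(b/a)$ on the left, where $y$ is its contribution on the right.

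Taking the supremum over $\mathcal{T}$ then yields the claimed additive shift. Alternatively (and equivalently), one can substitute $a\Gamma_X$ and $b\Gamma_{X'}$ into the semidefinite program of \autoref{prop:coh-rel-entr-SDP}: constraint~\eqref{eq:SDP-coh-rel-entr-cond-alphaSGPM} becomes $\tr_{R_X}[T_{X'R_X}\,a\Gamma_{R_X}]\leqslant\alpha' \, b\Gamma_{X'}$, which is the original constraint with $\alpha$ replaced by $\alpha'(b/a)$, while~\eqref{eq:SDP-coh-rel-entr-cond-trnoninc} and~\eqref{eq:SDP-coh-rel-entr-cond-processmatrixclose} are unaffected. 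The optimal $\alpha$ and $\alpha'$ are therefore related by $\alpha = \alpha' (b/a)$, and taking $-\log$ gives the shift $+\log(b/a)$.

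There is no real obstacle here; the only point requiring a brief comment is the support condition assumed in the definition---one should check that $t_{R_X\to X}(\rho_{X'R_X})$ lies in the support of $a\Gamma_X\otimes b\Gamma_{X'}$ if and only if it lies in the support of $\Gamma_X\otimes\Gamma_{X'}$, which is immediate since $a,b>0$ do not change the supports. This ensures that the coherent relative entropy is well-defined on both sides of the identity.
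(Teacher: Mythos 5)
Your proof is correct and takes essentially the same approach as the paper's: both exploit the fact that rescaling $\Gamma_X \to a\Gamma_X$, $\Gamma_{X'}\to b\Gamma_{X'}$ affects only the single constraint involving $\alpha$ (equivalently $2^{-y}$), and that this constraint absorbs the factor $b/a$ into $\alpha$. The paper phrases this as one inequality via optimal primal candidates plus a reverse-substitution argument, whereas you observe the feasible sets are in bijection up to the shift $y\mapsto y+\log(b/a)$; these are the same argument up to presentation.
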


The coherent relative entropy furthermore obeys a super\-additivity rule, expressing the
fact that a joint implementation of two parallel independent processes cannot be worse
than two separate implementations of each process.

\begin{proposition}[Superadditivity for tensor products]
  \label{prop:coh-rel-entr-superadditive-tensor-products}
  Let systems $X_1$, $X_1'$, $X_2$, $X_2'$ have respective $\Gamma$ operators
  $\Gamma_{X_1}$, $\Gamma_{X_1'}$, $\Gamma_{X_2}$, $\Gamma_{X_2'}$.  Let
  $\rho_{X_1'R_{X_1}}$ and $\zeta_{X_2'R_{X_2}}$ be two quantum states.  Then
  for any $\epsilon,\epsilon'\geqslant 0$,
  \begin{align}
    \hspace*{3em}&\hspace*{-3em}
    \DCohz[\epsilon'']{*\rho_{X_1'R_{X_1}}\otimes\zeta_{X_2'R_{X_2}}}%
      {X_1X_2}{X_1'X_2'}%
      {\Gamma_{X_1}\otimes\Gamma_{X_2}}{\Gamma_{X_1'}\otimes\Gamma_{X_2'}}
      \nonumber\\
    &\geqslant \DCohz[\epsilon]{\rho}{X_1}{X_1'}{\Gamma_{X_1}}{\Gamma_{X_1'}}
      \nonumber\\
    &\hspace*{5em}
      + \DCohz[\epsilon']{\zeta}{X_2}{X_2'}{\Gamma_{X_2}}{\Gamma_{X_2'}}\ ,
  \end{align}
  where $\epsilon''=\sqrt{\epsilon^2+\epsilon'^2}$.
\end{proposition}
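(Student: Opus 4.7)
The plan is to establish the inequality by constructing an explicit candidate for the joint optimization in terms of near-optimal candidates for the two individual optimizations, exploiting that tensor products preserve all the relevant structure (trace-nonincreasing property, $\Gamma$-sub-preservation with multiplicative slack, and purified distance).

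First, I would pick any $y_1 < \hat{D}_{X_1\to X_1'}^{\epsilon}(\rho \Vert \Gamma_{X_1}, \Gamma_{X_1'})$ and $y_2 < \hat{D}_{X_2\to X_2'}^{\epsilon'}(\zeta \Vert \Gamma_{X_2}, \Gamma_{X_2'})$, with corresponding completely positive, trace-nonincreasing maps $\mathcal{T}^{(1)}_{X_1\to X_1'}$ and $\mathcal{T}^{(2)}_{X_2\to X_2'}$ achieving these values in the optimization defining the coherent relative entropy. Concretely, $\mathcal{T}^{(i)}(\Gamma_{X_i}) \leqslant 2^{-y_i}\,\Gamma_{X_i'}$, the adjoint acts sub-unitally, and $P(\mathcal{T}^{(1)}(\sigma^{(1)}_{X_1R_{X_1}}), \rho_{X_1'R_{X_1}}) \leqslant \epsilon$ (similarly for $\mathcal{T}^{(2)}$ with $\epsilon'$), where $\ket\sigma^{(i)}$ purifies the corresponding reduced input state.

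Next I would propose $\mathcal{T} := \mathcal{T}^{(1)} \otimes \mathcal{T}^{(2)}$ as a candidate in the joint optimization. Complete positivity and trace-nonincreasingness follow immediately from the tensor product structure, and similarly the adjoint is sub-unital. The $\Gamma$-sub-preservation condition factorizes:
\begin{align}
\mathcal{T}(\Gamma_{X_1}\otimes\Gamma_{X_2})
&= \mathcal{T}^{(1)}(\Gamma_{X_1}) \otimes \mathcal{T}^{(2)}(\Gamma_{X_2}) \nonumber\\
&\leqslant 2^{-(y_1+y_2)}\,\Gamma_{X_1'}\otimes\Gamma_{X_2'}\ ,
\end{align}
using positivity of the tensor factors to preserve operator inequalities.

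The main (and only nontrivial) step is controlling the purified distance of the joint process matrix. Using that the purification of $\sigma_{X_1}\otimes\sigma_{X_2}$ can be taken as $\ket\sigma^{(1)}_{X_1R_{X_1}}\otimes\ket\sigma^{(2)}_{X_2R_{X_2}}$, the joint process matrix produced by $\mathcal{T}^{(1)}\otimes\mathcal{T}^{(2)}$ equals the tensor product of the individual process matrices. The fidelity is multiplicative under tensor products, so denoting the two individual fidelities by $F_1, F_2$ one has $F_{\mathrm{joint}} = F_1 F_2$, hence
\begin{align}
P_{\mathrm{joint}}^2 = 1 - F_1^2 F_2^2 \leqslant (1-F_1^2) + (1-F_2^2) = P_1^2 + P_2^2,
\end{align}
where the inequality is the elementary $1 - ab \leqslant (1-a) + (1-b)$ valid for $a,b \in [0,1]$. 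Thus $P_{\mathrm{joint}} \leqslant \sqrt{\epsilon^2 + \epsilon'^2} = \epsilon''$, so $\mathcal{T}$ is feasible at tolerance $\epsilon''$ with objective value $y_1 + y_2$. Taking suprema over $y_1, y_2$ approaching the two individual coherent relative entropies yields the claimed inequality. The only subtlety worth flagging is that the generalized purified distance (for subnormalized states) must be used, but its tensor-product behaviour is governed by the same multiplicativity of the generalized fidelity, so the argument goes through unchanged.
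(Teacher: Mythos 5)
Your proof is correct and takes essentially the same approach as the paper: both construct a tensor-product candidate from (near-)optimal candidates for the two individual problems, observe that $\Gamma$-sub-preservation and trace-nonincrease factorize, and handle the error tolerance via the bound $(1-\epsilon^2)(1-\epsilon'^2)\geqslant 1-\epsilon^2-\epsilon'^2$ (the paper states it in terms of the SDP overlap constraint, you via multiplicativity of the fidelity, but these are the same estimate).
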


In contrast to measures like the min-entropy and the max-entropy, we do not have
equality in general in
\autoref{prop:coh-rel-entr-superadditive-tensor-products}.  One may see this
with a simple example analogous to that in Ref.~\cite{Faist2015NJP_Gibbs}.
Consider two qubit systems $Q_i$ with $\Gamma_{Q_i} = g_0\proj 0 + g_1\proj 1$
(with $i=1,2$; $g_0> g_1$).  On a single system, performing the logical process
$\ket0\to\ket+ =`*(\ket 0 + \ket 1)/\sqrt2$ has a different cost than the yield
of $\ket+ \to \ket0$.\footnote{That the processes $\ket0\to\ket+$ and
  $\ket+\to\ket0$ have different work cost and yield respectively follows from
  \autoref{cor:coh-rel-entr-pureev-input-output} below.  We have
  $\Dminz{\proj+}{\Gamma} = -\log\dmatrixel+\Gamma=-\log\,`[`(g_0+g_1)/2]$ and
  $-\Dmax{\proj+}{\Gamma} = -\log\norm{\Gamma^{-1/2}\proj+\Gamma^{-1/2}}_\infty
  = -\log\dmatrixel+{\Gamma^{-1}} = -\log\,`[`(g_0^{-1}+g_1^{-1})/2]$ (the
  argument of the norm is a pure state).} However, the transition
$\ket0\otimes\ket+ \to \ket+\otimes\ket0$ can be achieved with a swap operation,
which is perfectly $\Gamma$-preserving and hence costs no pure qubits.

A further property of the coherent relative entropy can be derived in the case
where the $\Gamma$ operators are restricted by projecting them onto selected
eigenkets, while still having the process matrix lying in their support.  Then
the coherent relative entropy remains unchanged.

\begin{proposition}[Restricting the $\Gamma$ operators]
  \label{prop:coh-rel-entr-restrict-Gamma-eigenspaces}
  Let $P_X$ and $P'_{X'}$ be projectors such that $[P_X,\Gamma_X]=0$ and
  $[P'_{X'},\Gamma_{X'}]=0$.  Define $\Gamma'_X=P_X\Gamma_XP_X$ and
  $\Gamma'_{X'}=P'_{X'}\Gamma_{X'}P'_{X'}$.  Let $\rho_{X'R_X}$ be any quantum
  state with support inside that of $\Gamma'_{X'}\otimes \Gamma'_{R_X}$.  Then
  \begin{align}
    \DCohz[\epsilon]\rho{X}{X'}{\Gamma'_X}{\Gamma'_{X'}}
    = \DCohz[\epsilon]\rho{X}{X'}{\Gamma_X}{\Gamma_{X'}}\ .
  \end{align}
\end{proposition}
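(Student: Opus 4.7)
The strategy is to establish both inequalities in \autoref{prop:coh-rel-entr-restrict-Gamma-eigenspaces} by direct construction: given an optimization candidate for one side, I will produce a candidate for the other side achieving the same value $y$. Two observations make this work. First, the commutations $[P_X,\Gamma_X]=[P'_{X'},\Gamma_{X'}]=0$ yield $P_X\Gamma_X P_X=\Gamma'_X$ and $P'_{X'}\Gamma_{X'} P'_{X'}=\Gamma'_{X'}$. Second, the support hypothesis on $\rho_{X'R_X}$ forces the input state $\sigma_X$ to lie in the support of $P_X$, hence $P_X\ket\sigma_{XR_X}=\ket\sigma_{XR_X}$; similarly $P'_{X'}\rho_{X'R_X}P'_{X'}=\rho_{X'R_X}$.

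For the inequality $\hat{D}^\epsilon(\rho\Vert\Gamma'_X,\Gamma'_{X'})\leqslant \hat{D}^\epsilon(\rho\Vert\Gamma_X,\Gamma_{X'})$, I start from a candidate $\mathcal{T}'_{X\to X'}$ valid for the restricted problem and set $\mathcal{T}_{X\to X'}(\omega):=\mathcal{T}'_{X\to X'}(P_X\omega P_X)$. This map is CP, and satisfies $\mathcal{T}(\Gamma_X)=\mathcal{T}'(\Gamma'_X)\leqslant 2^{-y}\Gamma'_{X'}\leqslant 2^{-y}\Gamma_{X'}$; trace non-increasingness reads $\mathcal{T}^\dagger(\Ident_{X'})=P_X\mathcal{T}'^\dagger(\Ident_{X'})P_X\leqslant P_X\leqslant \Ident_X$; and the process-matrix condition holds because $P_X\sigma_{XR_X}P_X=\sigma_{XR_X}$ ensures $\mathcal{T}(\sigma_{XR_X})=\mathcal{T}'(\sigma_{XR_X})$, which by hypothesis is $\epsilon$-close in purified distance to $\rho_{X'R_X}$.

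For the reverse inequality, I start from a candidate $\mathcal{T}_{X\to X'}$ valid for $(\Gamma_X,\Gamma_{X'})$ with value $y$, and set $\mathcal{T}'_{X\to X'}(\omega):=P'_{X'}\mathcal{T}_{X\to X'}(P_X\omega P_X)P'_{X'}$. Complete positivity is manifest, and $\Gamma'$-sub-preservation follows from $\mathcal{T}'(\Gamma'_X)=P'_{X'}\mathcal{T}(\Gamma'_X)P'_{X'}\leqslant P'_{X'}\mathcal{T}(\Gamma_X)P'_{X'}\leqslant 2^{-y}P'_{X'}\Gamma_{X'}P'_{X'}=2^{-y}\Gamma'_{X'}$, using complete positivity of $\mathcal{T}$ and $[P'_{X'},\Gamma_{X'}]=0$. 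Trace non-increasingness is analogous. The process-matrix condition then follows by applying the CP trace-nonincreasing map $\omega\mapsto P'_{X'}\omega P'_{X'}$ to both arguments and invoking data processing of the generalized purified distance~\cite{Tomamichel2010IEEE_Duality,BookTomamichel2016_Finite}; since this map fixes $\rho_{X'R_X}$ by the support hypothesis, we conclude $P(\mathcal{T}'(\sigma_{XR_X}),\rho_{X'R_X})\leqslant P(\mathcal{T}(\sigma_{XR_X}),\rho_{X'R_X})\leqslant\epsilon$.

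The main delicate point is the purified-distance estimate in the second direction: the map $\omega\mapsto P'_{X'}\omega P'_{X'}$ is CP but not trace-preserving, so one must invoke data processing for the \emph{generalized} purified distance on subnormalized states, rather than the familiar CPTP version; an equally clean alternative is a direct Uhlmann argument, picking a purification of $\rho_{X'R_X}$ whose $X'$-marginal lies entirely within $P'_{X'}$. All remaining steps are elementary operator-inequality manipulations exploiting $P_X^2=P_X$, $P'^2_{X'}=P'_{X'}$, and the commutations with the respective $\Gamma$ operators.
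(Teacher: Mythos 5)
Your proof is correct, but it proceeds by a genuinely different route than the paper's.  The paper proves the inequality
$\DCohz[\epsilon]\rho{X}{X'}{\Gamma'_X}{\Gamma'_{X'}} \geqslant \DCohz[\epsilon]\rho{X}{X'}{\Gamma_X}{\Gamma_{X'}}$
by conjugating an optimal primal SDP candidate $T_{X'R_XE}$ by $P'_{X'}\otimes P_{R_X}$ (which at the level of maps is exactly your $\mathcal{T}'(\omega)=P'_{X'}\mathcal{T}(P_X\omega P_X)P'_{X'}$), but then for the reverse inequality it switches to the \emph{dual} SDP and projects the dual variables $\omega_{X'}\mapsto P'_{X'}\omega_{X'}P'_{X'}$, $X_{R_X}\mapsto P_{R_X}X_{R_X}P_{R_X}$, invoking weak duality.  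You instead prove both directions at the primal level: for the direction the paper handles dually, you observe that any candidate $\mathcal{T}'$ for $(\Gamma'_X,\Gamma'_{X'})$ gives rise to the candidate $\mathcal{T}(\omega)=\mathcal{T}'(P_X\omega P_X)$ for $(\Gamma_X,\Gamma_{X'})$ with the same value, using $\Gamma'_{X'}\leqslant\Gamma_{X'}$.  This avoids the SDP dual entirely and is arguably cleaner, since it works directly with the definition~\eqref{eq:coh-rel-entr-def} as a supremum over CP trace-nonincreasing maps rather than requiring duality machinery; it also sidesteps the delicate issue that for $\epsilon=0$ the primal may not be strictly feasible and optimal dual variables may not exist (the paper's dual argument is stated for ``any dual feasible candidates,'' which works by weak duality, but your argument does not need to worry about this at all).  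Your caveat about requiring monotonicity of the \emph{generalized} purified distance under CP trace-nonincreasing maps is exactly the right point to flag; this monotonicity is indeed established in the cited references and in \cite{PhDTomamichel2012}, so the step is sound.
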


Another property relates the coherent relative entropy to that with respect to
different $\Gamma$ operators which represent \qq{at least or at most as much
  weight on each state}, as represented as an operator inequality.  Intuitively,
this proposition states that if we raise the energy levels at the input and
lower the levels at the output, then the process is easier to carry out.
\begin{proposition}
  \label{prop:coh-rel-entr-relate-Gamma-operator-less-than}%
  Let $\tilde\Gamma_X\geqslant0$ and $\tilde\Gamma_{X'}\geqslant0$ be such that
  $\tilde\Gamma_X\leqslant\Gamma_X$ and $\Gamma_{X'}\leqslant\tilde\Gamma_{X'}$.
  Then for any $\epsilon\geqslant 0$,
  \begin{align}
    \DCohz[\epsilon]\rho{X}{X'}{\tilde\Gamma_X}{\tilde\Gamma_{X'}}
    \geqslant \DCohz[\epsilon]\rho{X}{X'}{\Gamma_R}{\Gamma_{X'}}\ .
  \end{align}
\end{proposition}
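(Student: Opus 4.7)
The plan is to show that any feasible pair $(\mathcal{T}_{X\to X'}, y)$ for the optimization defining $\DCohz[\epsilon]\rho{X}{X'}{\Gamma_X}{\Gamma_{X'}}$ is also feasible for the optimization defining $\DCohz[\epsilon]\rho{X}{X'}{\tilde\Gamma_X}{\tilde\Gamma_{X'}}$. Since the supremum over a larger (or equal) feasible set can only be larger, the claimed inequality follows immediately.

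Concretely, I would fix an arbitrary feasible $(\mathcal{T}_{X\to X'}, y)$ for the right-hand side, so that $\mathcal{T}$ is completely positive, trace-nonincreasing, $P(\mathcal{T}_{X\to X'}(\sigma_{XR_X}),\rho_{X'R_X})\leqslant\epsilon$, and $\mathcal{T}(\Gamma_X)\leqslant 2^{-y}\Gamma_{X'}$. The only one of these conditions that references the $\Gamma$ operators is the last one, so it suffices to verify that $\mathcal{T}(\tilde\Gamma_X)\leqslant 2^{-y}\tilde\Gamma_{X'}$. Since $\mathcal{T}$ is completely positive, it is in particular positive, hence monotone with respect to the positive semidefinite order: from $\tilde\Gamma_X\leqslant\Gamma_X$ we deduce $\mathcal{T}(\tilde\Gamma_X)\leqslant\mathcal{T}(\Gamma_X)$. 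Combining this with the feasibility condition and the assumption $\Gamma_{X'}\leqslant\tilde\Gamma_{X'}$ gives the chain
\begin{align}
\mathcal{T}(\tilde\Gamma_X)\leqslant\mathcal{T}(\Gamma_X)\leqslant 2^{-y}\Gamma_{X'}\leqslant 2^{-y}\tilde\Gamma_{X'}\ ,
\end{align}
which is exactly the required constraint with respect to $\tilde\Gamma_X,\tilde\Gamma_{X'}$. Therefore $(\mathcal{T},y)$ is feasible for the left-hand side, so the supremum on the left is at least as large as the supremum on the right.

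The only subtlety worth checking is the implicit support condition from \autoref{def:coh-rel-entr}, namely that $t_{R_X\to X}(\rho_{X'R_X})$ lies in the support of $\tilde\Gamma_X\otimes\tilde\Gamma_{X'}$ whenever it lies in the support of $\Gamma_X\otimes\Gamma_{X'}$. On the $X'$ side this holds because $\Gamma_{X'}\leqslant\tilde\Gamma_{X'}$ implies $\operatorname{supp}(\Gamma_{X'})\subseteq\operatorname{supp}(\tilde\Gamma_{X'})$; on the $X$ side we may assume without loss of generality that $\rho$ lies in $\operatorname{supp}(\tilde\Gamma_X)$ as well, for if not, then the argument above still shows that every $\mathcal{T}$ feasible on the right is feasible on the left, and no feasibility can be lost by shrinking $\Gamma_X$ to $\tilde\Gamma_X$ (the constraint only becomes weaker). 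Since the argument uses nothing beyond positivity of $\mathcal{T}$ and the transitivity of the operator order, there is no real obstacle; this is a short monotonicity argument that I would write up in just a few lines.
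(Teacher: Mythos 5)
Your proposal is correct and takes essentially the same route as the paper's proof, which also argues by re-using the optimal feasible candidate and observing that the single $\Gamma$-dependent constraint ($\mathcal{T}(\Gamma_X)\leqslant 2^{-y}\Gamma_{X'}$, or its SDP form~\eqref{eq:SDP-coh-rel-entr-cond-alphaSGPM}) remains satisfied when $\Gamma_X$ is shrunk and $\Gamma_{X'}$ is enlarged. The paper works with the primal SDP variables $(T_{X'R_XE},\alpha)$ and declares the constraint ``obviously still satisfied,'' while you phrase it via the channel-level definition and spell out the operator-order chain $\mathcal{T}(\tilde\Gamma_X)\leqslant\mathcal{T}(\Gamma_X)\leqslant 2^{-y}\Gamma_{X'}\leqslant 2^{-y}\tilde\Gamma_{X'}$ — same idea, only more explicit.
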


We further note that it is possible to rewrite the definition of the coherent
relative entropy in a slightly alternative form.
\begin{proposition}
  \label{prop:coh-rel-entr-rewrite-primal-problem-infinity-norm}
  The optimization problem defining the coherent relative entropy can be rewritten as
  \begin{multline}
    2^{-\DCohz[\epsilon]{\rho}{X}{X'}{\Gamma_X}{\Gamma_{X'}}}
    \\
    = \min_{T_{X'R_X}}\, \norm[\big]{
      \Gamma_{X'}^{-1/2}\,\tr_{R_X}`*[T_{X'R_X}\,t_{X\to R_X}(\Gamma_{X})]\,\Gamma_{X'}^{-1/2}
    }_\infty\ ,
    \label{eq:prop-coh-rel-entr-rewrite-primal-problem-infinity-norm-TXpR}
  \end{multline}
  where the minimization is taken over all positive semidefinite $T_{X'R_X}$
  satisfying both conditions~\eqref{eq:SDP-coh-rel-entr-cond-trnoninc}
  and~\eqref{eq:SDP-coh-rel-entr-cond-processmatrixclose}, and for which the
  operator $\tr_{R_X}\left(T_{X'R_X}\,t_{X\to R_X}(\Gamma_X)\right)$ lies within
  the support of $\Gamma_{X'}$.  Equivalently,
  \begin{multline}
    2^{-\DCohz[\epsilon]{\rho}{X}{X'}{\Gamma_X}{\Gamma_{X'}}}
    \\
    = \min_{\mathcal{T}_{X\to X'}}\, \norm[\big]{
      \Gamma_{X'}^{-1/2}\,\mathcal{T}_{X\to X'}`*[\Gamma_X]\,\Gamma_{X'}^{-1/2}
    }_\infty\ ,
    \label{eq:prop-coh-rel-entr-rewrite-primal-problem-infinity-norm-mathcalT}
  \end{multline}
  where the minimization is taken over all trace nonincreasing, completely
  positive maps $\mathcal{T}_{X\to X'}$ which satisfy
  $P`(\mathcal{T}_{X\to X'}`[\sigma_{XR_X}], \rho_{X'R_X}) \leqslant
  \epsilon$ and for which $\mathcal{T}_{X\to X'}\left(\Gamma_X\right)$ lies
  within the support of $\Gamma_{X'}$.
\end{proposition}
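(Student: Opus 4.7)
The plan is to read the claim as a direct reformulation of the SDP already set up in Proposition~\ref{prop:coh-rel-entr-SDP}. That SDP has $2^{-\hat{D}^\epsilon} = \min \alpha$ subject to the three constraints~\eqref{eq:SDP-coh-rel-entr-cond-trnoninc}, \eqref{eq:SDP-coh-rel-entr-cond-alphaSGPM}, and~\eqref{eq:SDP-coh-rel-entr-cond-processmatrixclose}, and crucially the scalar $\alpha$ enters only through~\eqref{eq:SDP-coh-rel-entr-cond-alphaSGPM}. I would therefore split the joint minimization into an outer minimization over $T_{X'R_X}$ satisfying~\eqref{eq:SDP-coh-rel-entr-cond-trnoninc} and~\eqref{eq:SDP-coh-rel-entr-cond-processmatrixclose}, and an inner minimization over $\alpha$ alone at fixed $T_{X'R_X}$.

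The inner step relies on the standard operator-theoretic fact that for positive semidefinite operators $A, B$, the smallest $\alpha \geq 0$ with $A \leq \alpha B$ equals $\|B^{-1/2} A B^{-1/2}\|_\infty$ (Moore--Penrose pseudoinverse) whenever the support of $A$ is contained in that of $B$, and is $+\infty$ otherwise. Applied to $A = \tr_{R_X}[T_{X'R_X}\,t_{X\to R_X}(\Gamma_X)]$ and $B = \Gamma_{X'}$, this immediately produces the objective in~\eqref{eq:prop-coh-rel-entr-rewrite-primal-problem-infinity-norm-TXpR}; the support restriction imposed in the statement of the proposition just excludes the $T_{X'R_X}$ for which no finite $\alpha$ is feasible and that therefore cannot attain the minimum.

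For the equivalent form~\eqref{eq:prop-coh-rel-entr-rewrite-primal-problem-infinity-norm-mathcalT}, I would invoke the Choi--Jamio\l{}kowski correspondence: $T_{X'R_X} \geq 0$ with $\tr_{X'}[T_{X'R_X}] \leq \Ident_{R_X}$ is exactly the Choi matrix of a trace-nonincreasing completely positive map $\mathcal{T}_{X\to X'}$; the identity $\tr_{R_X}[T_{X'R_X}\,t_{X\to R_X}(\Gamma_X)] = \mathcal{T}_{X\to X'}(\Gamma_X)$ holds by construction; and~\eqref{eq:SDP-coh-rel-entr-cond-processmatrixclose} translates directly into the purified-distance constraint $P(\mathcal{T}_{X\to X'}(\sigma_{XR_X}), \rho_{X'R_X}) \leq \epsilon$. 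The only slightly delicate point is the handling of a rank-deficient $\Gamma_{X'}$, but it is not a genuine obstacle: either $\mathcal{T}(\Gamma_X)$ escapes the support of $\Gamma_{X'}$, in which case that $\mathcal{T}$ is trivially excluded from both formulations, or it does not, in which case the pseudoinverse expression is well-defined and realizes the optimal $\alpha$ for that $\mathcal{T}$.
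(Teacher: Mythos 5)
Your proposal is correct and takes essentially the same route as the paper: decouple $\alpha$ from $T_{X'R_X}$, use that for fixed $T_{X'R_X}$ (with the appropriate support inclusion) the tightest feasible $\alpha$ in condition~\eqref{eq:SDP-coh-rel-entr-cond-alphaSGPM} equals $\norm[\big]{\Gamma_{X'}^{-1/2}\,\tr_{R_X}[T_{X'R_X}\,t_{X\to R_X}(\Gamma_X)]\,\Gamma_{X'}^{-1/2}}_\infty$, discard the infeasible candidates whose image escapes $\operatorname{supp}\Gamma_{X'}$, and pass to the channel form via Choi--Jamio\l{}kowski. The paper carries out the same inner optimization by conjugating the constraint with $\Gamma_{X'}^{-1/2}$ and observing that the resulting left-hand side lives in $\operatorname{supp}\Gamma_{X'}$, which is exactly the operator-theoretic fact you invoke.
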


Finally, we present an alternative form of the semidefinite program for the
non-smooth coherent relative entropy, i.e., in the case where $\epsilon=0$.
This version of the semidefinite program will prove useful in some later proofs.

\begin{proposition}[Non-smooth specialized semidefinite program]
  \label{prop:coh-rel-entr-nonsmooth-SDP}
  For a bipartite quantum state $\rho_{X'R_X}$, and two positive semidefinite
  operators $\Gamma_X$ and $\Gamma_{X'}$ such that $t_{R_X\to X}(\rho_{X'R_X})$
  lies in the support of $\Gamma_X\otimes\Gamma_{X'}$, the non-smooth coherent
  relative entropy can be written as
  \begin{align}
    \DCohz{\rho}{X}{X'}{\Gamma_{X}}{\Gamma_{X'}}
    &= -\log\alpha \ ;
    \label{eq:coh-rel-entr-nonsmooth-SDP-logvalue}
  \end{align}
  where $\alpha$ is the optimal solution to the following semidefinite program
  in terms of the variables $T_{X'R_X}\geqslant 0, \alpha\geqslant 0$, and dual
  variables
  $Z_{X'R_X}=Z_{X'R_X}^\dagger, \omega_{X'}\geqslant 0, X_{R_X}\geqslant 0$:

  \medskip\par\noindent\textit{Primal problem:}
  \colorlet{thesisSDPDualVarColor}{black!50!white}
  \begin{subequations}
    \begin{alignat}{3}
      \text{\normalfont minimize:}&~& \alpha &&& \nonumber\\
      \text{\normalfont subject to:}&~~&
      \tr_{X'}`*[T_{X'R_X}] &\leqslant \Ident_{R_X}
      &~&\textcolor{thesisSDPDualVarColor}{{}: X_{R_X}}
      \label{eq:SDP-coh-rel-entr-nonsmooth-cond-trnoninc} \\
      & &
      \tr_{R_X}`*[T_{X'R_X}\,t_{X\to R_X}`(\Gamma_X)] &\leqslant \alpha\,\Gamma_{X'}
      &~&\textcolor{thesisSDPDualVarColor}{{}:\omega_{X'}}
      \label{eq:SDP-coh-rel-entr-nonsmooth-cond-alphaSGPM} \\
      & &\rho_{R_X}^{1/2}\,T_{X'R_X}\,\rho_{R_X}^{1/2} &= \rho_{X'R_X}
        &~&\textcolor{thesisSDPDualVarColor}{{}: Z_{X'R_X}}
        \label{eq:SDP-coh-rel-entr-nonsmooth-cond-correctmapping}
    \end{alignat}
  \end{subequations}
  \par\noindent\textit{Dual problem:}\index{dual problem}
  \begin{subequations}
    \begin{alignat}{1}
      \text{\normalfont maximize:}&\quad
      \makebox[13em][c]{$\displaystyle
        \tr\,`*[Z_{X'R_X}\rho_{X'R_X}] - \tr X_{R_X}
        $} \nonumber\\
      \text{\normalfont subject to:}&\quad
      \makebox[13em][c]{$\displaystyle
        \tr\,`*[\omega_{X'}\Gamma_{X'}] \leqslant 1
        $}~\textcolor{thesisSDPDualVarColor}{{}:\alpha}
      \label{eq:SDP-coh-rel-entr-nonsmooth-conddual-tromegaGamma}\\
      &\hspace{-1em}
        \rho_{R_X}^{1/2}\,Z_{X'R_X}\,\rho_{R_X}^{1/2} \leqslant
        t_{X\to R_X}`(\Gamma_X)\otimes\omega_{X'} + X_{R_X}\otimes\Ident_{X'}
        \nonumber\\[-0.5ex]
        &\hspace{-1em}
        \rule{15em}{0pt}~\textcolor{thesisSDPDualVarColor}{{}: T_{X'R_X}}
      \label{eq:SDP-coh-rel-entr-nonsmooth-conddual-Z}
    \end{alignat}
\end{subequations}
\end{proposition}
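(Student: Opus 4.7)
The plan is to recast the definition of $\DCohz{\rho}{X}{X'}{\Gamma_X}{\Gamma_{X'}}$ as a semidefinite program by parameterizing the optimization variable $\mathcal{T}_{X\to X'}$ through its Choi matrix $T_{X'R_X} = \mathcal{T}_{X\to X'}(\Phi_{X:R_X})$, and then to derive the claimed dual via the standard Lagrangian recipe.

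First I would translate each of the constraints in \autoref{def:coh-rel-entr} into a condition on $T_{X'R_X}$. Complete positivity of $\mathcal{T}_{X\to X'}$ becomes $T_{X'R_X}\geqslant 0$; trace-nonincrease $\mathcal{T}^\dagger(\Ident_{X'})\leqslant\Ident_X$ becomes $\tr_{X'}[T_{X'R_X}]\leqslant\Ident_{R_X}$ after a partial transpose; and the standard identity $\mathcal{T}_{X\to X'}(A_X) = \tr_{R_X}[T_{X'R_X}\,t_{X\to R_X}(A_X)]$ converts $\mathcal{T}(\Gamma_X)\leqslant 2^{-y}\,\Gamma_{X'}$ into~\eqref{eq:SDP-coh-rel-entr-nonsmooth-cond-alphaSGPM} with $\alpha = 2^{-y}$. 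Setting $\epsilon=0$ collapses the purified-distance proximity into exact equality $\mathcal{T}_{X\to X'}(\sigma_{XR_X})=\rho_{X'R_X}$; pushing $\sigma_X^{1/2}$ through $\ket\Phi_{X:R_X}$ via $(A_X\otimes\Ident_{R_X})\ket\Phi_{X:R_X}=(\Ident_X\otimes A^T_{R_X})\ket\Phi_{X:R_X}$ rewrites the left-hand side as $\sigma_{R_X}^{1/2}T_{X'R_X}\sigma_{R_X}^{1/2}$, and since $\mathcal{E}_{X\to X'}$ is trace-preserving we have $\rho_{R_X}=\sigma_{R_X}$, producing exactly~\eqref{eq:SDP-coh-rel-entr-nonsmooth-cond-correctmapping}. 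Replacing the maximum over $y$ by the minimum over $\alpha = 2^{-y} \geqslant 0$ then gives both~\eqref{eq:coh-rel-entr-nonsmooth-SDP-logvalue} and the stated primal problem.

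Next I would assemble the Lagrangian with multipliers $X_{R_X},\omega_{X'}\geqslant 0$ for the two inequality constraints and a Hermitian $Z_{X'R_X}$ for the equality, namely
\begin{multline*}
\mathcal{L} = \alpha + \tr[X_{R_X}(\tr_{X'}T_{X'R_X} - \Ident_{R_X})] \\
+ \tr[\omega_{X'}(\tr_{R_X}[T_{X'R_X}\,t_{X\to R_X}(\Gamma_X)] - \alpha\Gamma_{X'})] \\
- \tr[Z_{X'R_X}(\rho_{R_X}^{1/2} T_{X'R_X} \rho_{R_X}^{1/2} - \rho_{X'R_X})].
\end{multline*}
Grouping the $T_{X'R_X}$-dependent terms and demanding $\inf_{T_{X'R_X}\geqslant 0}\mathcal{L}>-\infty$ forces the coefficient of $T_{X'R_X}$ to be positive semidefinite, which is precisely~\eqref{eq:SDP-coh-rel-entr-nonsmooth-conddual-Z}; grouping the $\alpha$-dependent terms and requiring the same for $\alpha\geqslant 0$ yields~\eqref{eq:SDP-coh-rel-entr-nonsmooth-conddual-tromegaGamma}; the remaining constants $\tr[Z_{X'R_X}\rho_{X'R_X}] - \tr X_{R_X}$ become the dual objective to maximize.

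The main obstacle is strong duality. Constraint~\eqref{eq:SDP-coh-rel-entr-nonsmooth-cond-correctmapping} pins $T_{X'R_X}$ on the support of $\rho_{R_X}$, so the primal is generally not strictly feasible and the primal-to-dual direction of Slater's theorem is unavailable. However, the dual is strictly feasible (take, e.g., $\omega_{X'}=\Ident_{X'}/(2\tr\Gamma_{X'})$, $X_{R_X}=\Ident_{R_X}$, $Z_{X'R_X}=0$), so the second clause of \autoref{thm:Slater} gives strong duality together with the existence of an optimal primal $T_{X'R_X}$, completing the proof.
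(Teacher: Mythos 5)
Your proposal is correct and follows essentially the same route as the paper's own proof: translate the three conditions in the definition at $\epsilon=0$ (namely $\Gamma$-sub-preservation up to $2^{-y}$, trace-nonincrease, and exact achievement of the process matrix) into conditions on the Choi matrix $T_{X'R_X}$, identify $\alpha=2^{-y}$, and read off the primal SDP; the paper states this translation directly and leaves the dual and duality discussion to the surrounding commentary, whereas you carry out the Lagrangian derivation explicitly, which is a harmless expansion. Two small imprecisions are worth flagging. First, the remark that "since $\mathcal{E}_{X\to X'}$ is trace-preserving we have $\rho_{R_X}=\sigma_{R_X}$" is unnecessary: the definition already sets $\ket\sigma_{XR_X}=\rho_{R_X}^{1/2}\ket\Phi_{X:R_X}$, so $\sigma_{R_X}=\rho_{R_X}$ by construction. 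Second, you invoke "the second clause" of \autoref{thm:Slater} after exhibiting strict feasibility of the \emph{dual}; the clause you actually need is the first one (primal feasible and dual strictly feasible $\Rightarrow$ strong duality and an attained primal optimum), so the label should be corrected, and you should also note explicitly that the primal is feasible (e.g.\ $T_{X'R_X}=\rho_{R_X}^{-1/2}\rho_{X'R_X}\rho_{R_X}^{-1/2}$ works, using the support assumption on $\rho_{X'R_X}$) before that clause applies. Neither issue affects the substance of the argument.
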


Here are the proofs corresponding to this section's propositions.

\begin{proof}[*prop:coh-rel-entr-SDP]
  Write $\ket\sigma_{XR} = \rho_{R_X}^{1/2}\ket{\Phi}_{X:R_X}$.  Let
  $\ket\rho_{X'R_XE}$ be any fixed purification of $\rho_{X'R_X}$ in an
  environment system $E$ with dimension $\abs{E}\geqslant\abs{X'R_X}$.
  
  First, consider any feasible candidates $T_{X'RE}, \alpha$
  for~\eqref{eq:SDP-coh-rel-entr-primal}.  Then, setting
  $\mathcal{T}_{X\to X'}`(\cdot) = \tr_E`( T_{X'R_XE}\,t_{X\to R_X}`(\cdot) )$
  and $y=-\log\alpha$ satisfies the requirements
  of~\eqref{eq:coh-rel-entr-def}, in particular,
  $F^2`(\mathcal{T}_{X\to X'}`(\sigma_{XR_X}),\rho_{X'R_X}) \geqslant
  \tr`(\rho_{R_X}^{1/2}\,T_{X'R_XE}\,\rho_{R_X}^{1/2}\,\rho_{X'R_XE}) \geqslant
  1-\epsilon^2$ by Uhlmann's theorem because
  $\rho_{R_X}^{1/2}\,T_{X'R_XE}\rho_{R_X}^{1/2}$ is a purification of
  $\mathcal{T}_{X\to X'}`(\sigma_{XR_X})$.

  Let $\mathcal{T}_{X\to X'}$ and $y$ be valid candidates
  in~\eqref{eq:coh-rel-entr-def}.  Thanks to Uhlmann's theorem, there exists a
  pure quantum state $\ket\tau_{X'R_XE}$ such that
  $F^2`(\mathcal{T}_{X\to X'}(\sigma_{XR_X}),\rho_{X'R_X}) =
  \tr\,`(\tau_{X'R_XE}\,\rho_{X'R_XE})$.  Let $V_{X\to X'E}$ be a Stinespring
  dilation of $\mathcal{T}_{X\to X'}$, i.e., let $V_{X\to{}X'E}$ satisfy
  $V^\dagger V\leqslant\Ident_{X}$ and
  $\mathcal{T}_{X\to X'}`(\cdot) = \tr_E`(V_{X\to{}X'E} \, `(\cdot) \,
  V^\dagger)$.  There exists a unitary $W_E$ such that
  $\ket\tau_{X'R_XE} = W_E\,V_{X\to X'E} \ket\sigma_{X:R_X}$, since those two
  states are both purifications of $\mathcal{T}_{X\to{}X'}`(\sigma_{XR_X})$.
  Now let $\ket{T}_{X'R_XE} = W_E\,V_{X\to X'E}\ket\Phi_{X:R_X}$ and
  $\alpha=2^{-y}$.  Then,
  $\tr_{X'E}`(T_{X'R_XE}) = \tr_{X}`(V^\dagger\,V\,\Phi_{X:R_X}) \leqslant
  \Ident_{R_X}$. Also,
  $\tr_{R_XE}`[T_{X'R_XE}\,\Gamma_{R_X}] = \mathcal{T}_{X\to X'}`(\Gamma_{X})
  \leqslant 2^{-y}\,\Gamma_{X'} = \alpha\,\Gamma_{X'}$.  Finally,
  $\tr`(\rho_{R_X}^{1/2}\,T_{X'R_XE}\,\rho_{R_X}^{1/2}\,\rho_{X'R_XE}) =
  \tr`(W_E\,V_{X\to X'E}\,\sigma_{XR}\,V^\dagger W_E^\dagger\,\rho_{X'R_XE}) =
  \tr`(\tau_{X'R_XE}\,\rho_{X'R_XE}) =
  F^2`(\mathcal{T}_{X\to{}X'}(\sigma_{XR_X}),\rho_{X'R_X}) \geqslant
  1-\epsilon^2$.
\end{proof}

\begin{proof}[*prop:coh-rel-entr-trivial-bounds]
  Let
  $T_{X'R_XE}=`(1-\epsilon^2) \,
  \rho_{R_X}^{-1/2}\rho_{X'R_XE}\rho_{R_X}^{-1/2}$ and note that the
  condition~\eqref{eq:SDP-coh-rel-entr-cond-processmatrixclose} is fulfilled.
  On the other hand,
  $\tr_{X'E} T_{X'RE} = `(1-\epsilon^2)\,\Pi^{\rho_{R_X}}_{R_X}\leqslant
  \Ident_{R_X}$ fulfilling~\eqref{eq:SDP-coh-rel-entr-cond-trnoninc}.  Now
  observe that
  \begin{align}
    \tr`*(T_{X'R_X} \, \Gamma_{R_X})
    = `(1-\epsilon^2) \tr`\big(\Pi^{\rho_{R_X}}_{R_X}\,\Gamma_{R_X})
    \leqslant `(1-\epsilon^2) \tr`*(\Gamma_{R_X})\ ,
  \end{align}
  and hence
  $`[`(1-\epsilon^2)\tr(\Gamma_{R_X})]^{-1} \tr_{R_X}`(T_{X'R_X}\,\Gamma_{R_X})$
  is a subnormalized quantum state, which moreover lives within the support of
  $\Gamma_{X'}$ by assumption.  Hence,
  \begin{align}
    `[`(1-\epsilon^2)\tr(\Gamma_{R_X})]^{-1}
    \tr_{R_X}`(T_{X'R_X} \, \Gamma_{R_X})
    \leqslant \Pi^{\Gamma_{X'}}_{X'}
    \leqslant \norm{\Gamma_{X'}^{-1}}_\infty \, \Gamma_{X'}\ ,
  \end{align}
  noting that $\norm{\Gamma_{X'}^{-1}}_\infty^{-1}$ is the minimal nonzero
  eigenvalue of $\Gamma_{X'}$.  Thus, taking
  $\alpha= `(1-\epsilon^2) \tr(\Gamma_{R_X}) \, \norm{\Gamma_{X'}^{-1}}_\infty$
  satisfies~\eqref{eq:SDP-coh-rel-entr-cond-alphaSGPM} yielding feasible
  primal candidates, which
  proves~\eqref{eq:prop-coh-rel-entr-trivial-bounds-upper-bound}.

  Now consider the dual problem.  Choosing
  $\omega_{X'}=`*(\tr\Gamma_{X'})^{-1}\Ident_{X'}$ immediately
  satisfies~\eqref{eq:SDP-coh-rel-entr-conddual-tromegaGamma}.  Using
  $\rho_{X'R_XE}\leqslant\Ident_{X'R_XE}$ and
  $\rho_{R_X}\leqslant\Pi^{\Gamma_{R_X}}_{R_X}$, we have
  \begin{align}
  \mu\,\rho_{R_X}^{1/2}\rho_{X'R_XE}\rho_{R_X}^{1/2}
    &\leqslant \mu\,\Pi^{\Gamma_{R_X}}_{R_X}\otimes\Ident_{X'E}
      \nonumber\\
    &= \mu\,`*(\tr\Gamma_{X'})\,
    \Ident_E\otimes\omega_{X'}\otimes\Pi_R^{\Gamma_{R_X}}
      \nonumber \\
    &\leqslant
    \mu\,`*(\tr\Gamma_{X'})\,\norm{\Gamma_{R_X}^{-1}}_\infty\,
    \Ident_E\otimes\omega_{X'}\otimes\Gamma_{R_X}\ ,
  \end{align}
  so we choose
  $\mu = `*(\tr\Gamma_{X'})^{-1}\,\norm{\Gamma_{R_X}^{-1}}_\infty^{-1}$ and
  $X_{R_X}=0$ in order to fulfill~\eqref{eq:SDP-coh-rel-entr-conddual-Z},
  which proves~\eqref{eq:prop-coh-rel-entr-trivial-bounds-lower-bound}.
\end{proof}

\begin{proof}[*prop:coh-rel-entr-invar-under-isometries]
  This is clearly the case, because the semidefinite problem lies entirely within the
  support of the isometries.  Formally, any choice of variables for the original problem
  can be mapped in the new spaces through these partial isometries, and vice versa, and
  the attained values remain the same.  Hence the optimal value of the problem is also the
  same.
\end{proof}

\begin{proof}[*prop:coh-rel-entr-scaling-Gamma]
  Consider the optimal primal candidiates $T_{X'R_XE}$ and $\alpha$ for the
  problem defining
  $2^{-\DCohz[\epsilon]{\rho}{X}{X'}{\Gamma_X}{\Gamma_{X'}}}$. Then $T_{X'R_XE}$
  and $a b^{-1}\alpha$ are feasible primal candidates for the semidefinite
  program with the scaled $\Gamma$ operators.  Hence
  \begin{align}
    2^{-\DCohz[\epsilon]{\rho}{X}{X'}{a\Gamma_X}{b\Gamma_{X'}}}
    \leqslant \frac{a}{b}\alpha
    = \frac{a}{b}\,2^{-\DCohz[\epsilon]\rho{X}{X'}{\Gamma_X}{\Gamma_{X'}}}\ .
  \end{align}
  The opposite direction follows by applying the same argument to the reverse
  situation with $\Gamma_X\to a^{-1}\Gamma_X$,
  $\Gamma_{X'}\to b^{-1}\Gamma_{X'}$.
\end{proof}

\begin{proof}[*prop:coh-rel-entr-superadditive-tensor-products]
  Let $T_{X_1'R_{X_1}E_1},\alpha_1$ and $T_{X_2'R_{X_2}E_2},\alpha_2$ be the
  optimal choice of primal variables for
  $2^{-\DCohz[\epsilon]{\rho}{X_1}{X_1'}{\Gamma_{X_1}}{\Gamma_{X_1'}}}$ and
  $2^{-\DCohz[\epsilon']{\zeta}{X_2}{X_2'}{\Gamma_{X_2}}{\Gamma_{X_2'}}}$,
  respectively.
  Now, let
  $\bar T_{X_1'X_2'R_{X_1}R_{X_2}E_1E_2} = T_{X_1'R_{X_1}E_1}\otimes
  T_{X_2'R_{X_2}E_2}$ and $\bar\alpha=\alpha_1\alpha_2$.  Then
  \begin{align}
    \tr_{R_{X_1}R_{X_2}}\bigl[
      \bar T_{X_1'X_2'R_{X_1}R_{X_2}} \Gamma_{R_{X_1}}\otimes\Gamma_{R_{X_2}}
    \bigr]
    &\leqslant \alpha_1\alpha_2\, \Gamma_{X_1'}\otimes\Gamma_{X_2'}\ ;
    \\
    \tr_{X_1'X_2'}\bigl[ \bar T_{X_1'X_2'R_{X_1}R_{X_2}} \bigr]
    &\leqslant \Ident_{R_{X_1}}\otimes\Ident_{R_{X_2}}\ ,
  \end{align}
  and
  \begin{multline}
    \tr\bigl[
    `(\rho_{R_{X_1}}^{1/2}\otimes\zeta_{R_{X_2}}^{1/2})
    \bar T_{X_1'X_2'R_{X_1}R_{X_2}E_1E_2}
    `(\rho_{R_{X_1}}^{1/2}\otimes\zeta_{R_{X_2}}^{1/2})
    \;\cdot \\
    \rho_{X_1'R_{X_1}E_1}\otimes\zeta_{X_2'R_{X_2}E_2}
    \bigr]
    \geqslant `(1 - \epsilon^2)`(1 - \epsilon'^2) \geqslant 1 - \epsilon''^2\ ,
  \end{multline}
  and hence this choice of variables is feasible for the tensor product problem.
  We then have
  \begin{multline}
    2^{-\DCohz[\epsilon'']{*\rho_{X_1'R_{X_1}}\otimes
        \zeta_{X_2'R_{X_2}}}{X_1X_2}{X_1'X_2'}%
       {\Gamma_{X_1}\otimes\Gamma_{X_2}}{\Gamma_{X_1'}\otimes\Gamma_{X_2'}}}
    \leqslant  \alpha_1\alpha_2
    \\
    = 2^{-`\big[\DCohz[\epsilon]{\rho}{X_1}{X_1'}{\Gamma_{X_1}}{\Gamma_{X_1'}}
                +\DCohz[\epsilon']{\zeta}{X_2}{X_2'}{\Gamma_{X_2}}{\Gamma_{X_2'}}]}\ .
            \tag*\qedhere
  \end{multline}
\end{proof}

\begin{proof}[*prop:coh-rel-entr-restrict-Gamma-eigenspaces]
  Let $T_{X'R_XE}$ and $\alpha$ be the optimal feasible candidates for the primal
  semidefinite problem defining
  $2^{-\DCohz[\epsilon]\rho{X}{X'}{\Gamma_X}{\Gamma_{X'}}}$.  Let
  $T'_{X'R_XE} = `(P'_{X'}\otimes P_{R_X})\, T_{X'RE} \, `(P'_{X'}\otimes
  P_{R_X})$ and $\alpha'=\alpha$, writing $P_{R_X} = t_{X\to R_X}`(P_X)$. Then
  \begin{multline}
    \tr_{X'} T'_{X'R_X}
    = P_{R_X} \tr_{X'}\left[P'_{X'} T_{X'R_X}\right] P_{R_X}
    \leqslant P_{R_X} \tr_{X'}\left(T_{X'R_X}\right) P_{R_X}
    \\
    \leqslant P_{R_X}
    \leqslant \Ident_{R_X}\ ,
  \end{multline}
  satisfying~\eqref{eq:SDP-coh-rel-entr-cond-trnoninc}, and
  \begin{multline}
    \tr`[ \rho_{R_X}^{1/2}T'_{X'R_XE}\rho_{R_X}^{1/2} \, \rho_{X'R_XE} ]
    \\    
    = \tr`[ \rho_{R_X}^{1/2}T_{X'R_XE}\rho_{R_X}^{1/2} \, \rho_{X'R_XE} ]
    \geqslant 1-\epsilon^2\ ,
  \end{multline}
  where the first equality holds because $\rho_{R_X}$ and $\rho_{X'R_XE}$
  already lie within the support of $P_{R_X}$ and
  $P'_{X'}\otimes P_{R_X}\otimes\Ident_E$, respectively, and hence those
  projectors have no effect.
  Hence~\eqref{eq:SDP-coh-rel-entr-cond-processmatrixclose} is fulfilled.  Now
  we have
  \begin{align}
    \tr_{R_X}`[T'_{X'R_X}\Gamma'_{R_X}]
    &= \tr_{R_X}`[
    `(P'_{X'}\otimes P_{R_X}) T_{X'R_X} `(P'_{X'}\otimes P_{R_X})\,
    \Gamma_{R_X} ]
    \nonumber\\
    &\leqslant P'_{X'} \, \tr_{R_X} `[ T_{X'R_X} \Gamma_{R_X} ] P'_{X'}
    \nonumber\\
    &\leqslant P'_{X'}\, `(\alpha\Gamma_{X'})\, P'_{X'}
    = \alpha' \Gamma'_{X'}\ ,
  \end{align}
  using the fact that $\Gamma'_{R_X}\leqslant\Gamma_{R_X}$ (because
  $[P_{R_X},\Gamma_{R_X}]=0$).  Hence
  \begin{align}
    2^{-\DCohz[\epsilon]\rho{X}{X'}{\Gamma'_X}{\Gamma'_{X'}}}
    \leqslant 2^{-\DCohz[\epsilon]\rho{X}{X'}{\Gamma_X}{\Gamma_{X'}}} \ .
  \end{align}
  
  Let $\mu$, $X_{R_X}$ and $\omega_{X'}$ be any dual feasible candidates for
  $2^{-\DCohz[\epsilon]\rho{X}{X'}{\Gamma_X}{\Gamma_{X'}}}$.  Now let
  $\mu'=\mu$, $X'_{R_X} = P_{R_X}\,X_{R_X}\,P_{R_X}$ and
  $\omega_{X'} = P'_{X'}\,\omega'_{X'}\,P'_{X'}$.  Then
  $\tr`(\omega'_{X'}\Gamma'_{X'}) = \tr`(\omega_{X'} \Gamma'_{X'}) \leqslant
  \tr`(\omega_{X'}\Gamma_{X'}) \leqslant 1$ (using the fact that
  $\Gamma'_{X'}\leqslant\Gamma_{X'}$ since $[\Gamma_{X'},P'_{X'}]=0$), in
  accordance with~\eqref{eq:SDP-coh-rel-entr-conddual-tromegaGamma}.  Also,
  apply $`(P'_{X'}\otimes P_{R_X})`(\cdot) `(P'_{X'}\otimes P_{R_X})$ onto the
  dual constraint~\eqref{eq:SDP-coh-rel-entr-conddual-Z} to immediately see
  that $\mu'$, $\omega'_{X'}$ and $X_{R_X}$ obey the new constraint with
  $\Gamma'_{R_X}$.  Finally, the attained dual value is
  \begin{align}
    \mu'\,`(1-\epsilon^2) - \tr`(X'_{R_X})
    \geqslant \mu\,`(1-\epsilon^2) - \tr`(X_{R_X})\ .
  \end{align}
  Hence, we now have
  \begin{align}
    2^{-\DCohz[\epsilon]\rho{X}{X'}{\Gamma'_X}{\Gamma'_{X'}}}
    \geqslant 2^{-\DCohz[\epsilon]\rho{X}{X'}{\Gamma_X}{\Gamma_{X'}}} \ ,
  \end{align}
  which completes the proof.
\end{proof}

\begin{proof}[*prop:coh-rel-entr-relate-Gamma-operator-less-than]
  Let $T_{X'R_XE}$ and $\alpha$ be the optimal solution to the semidefinite
  program for $2^{-\DCohz[\epsilon]\rho{X}{X'}{\Gamma_X}{\Gamma_{X'}}}$.  They
  are then also feasible candidates for the semidefinite program for
  $2^{-\DCohz[\epsilon]\rho{X}{X'}{\tilde\Gamma_X}{\tilde\Gamma_{X'}}}$, because
  the only condition that changes
  is~\eqref{eq:SDP-coh-rel-entr-cond-alphaSGPM}, which is obviously still
  satisfied.
\end{proof}

\begin{proof}[*prop:coh-rel-entr-rewrite-primal-problem-infinity-norm]
  Let $T_{X'R_X}$ be any candidate in the primal problem. If
  $\tr_R`*(T_{X'R_X})$ does not lie within the support of $\Gamma_{X'}$, then
  condition~\eqref{eq:SDP-coh-rel-entr-cond-alphaSGPM} is not satisfied and the
  candidate is not primal feasible; we can hence ignore it in the minimization.
  Otherwise, by conjugating
  condition~\eqref{eq:SDP-coh-rel-entr-cond-alphaSGPM} by $\Gamma_{X'}^{-1/2}$,
  we see that~\eqref{eq:SDP-coh-rel-entr-cond-alphaSGPM} is equivalent to
  \begin{align}
    \Gamma_{X'}^{-1/2}\tr_{R_X}\left[T_{X'R_X}\,t_{X\to R_X}`(\Gamma_X)\right]
    \Gamma_{X'}^{-1/2}
    \leqslant \alpha\, \Pi^{\Gamma_{X'}}_{X'}\ ,
    \label{eq:prop-coh-rel-entr-rewrite-primal-problem-infinity-norm-calc-1}
  \end{align}
  which in turn is equivalent to
  \begin{align}
    \Gamma_{X'}^{-1/2}\tr_{R_X}\left[T_{X'R_X}\,t_{X\to R_X}`(\Gamma_R)\right]
    \Gamma_{X'}^{-1/2}
    \leqslant \alpha\, \Ident\ ,
  \end{align}
  because the left hand side
  of~\eqref{eq:prop-coh-rel-entr-rewrite-primal-problem-infinity-norm-calc-1}
  is entirely within the support of its right hand side.  Now, the optimal
  $\alpha$ which corresponds to this fixed $T_{X'R_X}$ is given by
  $\norm{\Gamma_{X'}^{-1/2}\tr_{R_X}\left[T_{X'R_X}\,t_{X\to R_X}`(\Gamma_X)\right]
    \Gamma_{X'}^{-1/2}}_\infty$.
  This chain of equivalences may be followed in reverse order, establishing the
  equivalence of the minimization problems.

  The formulation in terms of channels follows immediately from the translation of one
  formalism to the other.
\end{proof}

\begin{proof}[*prop:coh-rel-entr-nonsmooth-SDP]
  In the case $\epsilon=0$, the conditions in~\eqref{eq:coh-rel-entr-def}
  reduce to
  \begin{align*}
    \mathcal{T}_{X\to X'}`(\Gamma_X) &\leqslant 2^{-y}\Gamma_{X'}\ ; \\
    \mathcal{T}_{X\leftarrow X'}^\dagger`(\Ident_{X'}) &\leqslant \Ident_X\ ; \\
    \mathcal{T}_{X\to X'}`(\sigma_{XR_X}) &= \rho_{X'R_X}\ ,
  \end{align*}
  where we write $\ket\sigma_{XR_X} = \rho_{R_X}^{1/2}\,\ket\Phi_{X:R_X}$.
  These conditions, when written in terms of the Choi matrix $T_{X'R_X}$
  corresponding to $\mathcal{T}_{X\to X'}$, yield precisely the semidefinite
  program given in the claim.
\end{proof}

\subsection{Some special cases}

In this section, we look at some instructive special cases where the coherent
relative entropy can be evaluated exactly.

The first proposition concerns identity mappings. It is a property that one
would expect very naturally: If the process matrix corresponds to the identity
mapping on the support of the input, and if the $\Gamma$ operators coincide,
then the process should be a free operation and should not require a battery.
This property may seem like a triviality, but it is in fact not so obvious to
prove: Indeed, because the coherent relative entropy is a function of the
process matrix only, the implementation can choose to implement whatever process
it likes on the complement of the support of the input state.  In other words,
this proposition tells us that there is no way to extract work by exploiting
the freedom on this complementary subspace when performing the identity map on
the support of $\sigma_X$.

\begin{proposition}[Identity mapping]
  \label{prop:coh-rel-entr-identity-mapping}
  Let $\IdentProc[X][X']{}$ be the identity map from a system $X$ to a system
  $X'\simeq X$. Assume that $\Gamma_{X'}=\IdentProc[X][X']{\Gamma_X}$. Let
  $\sigma_{X}$ be any state on $X$, let $R_X\simeq X$ and
  $\ket\sigma_{XR_X} = \sigma_X^{1/2}\,\ket\Phi_{X:R_X}$, and let
  $\ket\rho_{X'R_X}$ be the process matrix of the identity process applied on
  $\sigma_X$, i.e.\@ $\rho_{X'R_X} = \IdentProc[X][X']{\sigma_{XR_X}}$.  Then
  \begin{align}
    \DCohz{\rho}{X}{X'}{\Gamma_X}{\Gamma_{X'}} = 0\ .
  \end{align}
\end{proposition}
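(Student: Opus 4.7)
The plan is to handle each bound separately. The lower bound $\hat{D} \geq 0$ is immediate: plugging $\mathcal{T}=\mathrm{id}_{X\to X'}$ into the optimization~\eqref{eq:coh-rel-entr-def} gives $\mathcal{T}(\Gamma_X)=\Gamma_{X'}$ (so $y=0$ is feasible), $\mathcal{T}^\dagger(\Ident_{X'})=\Ident_X$, and $\mathcal{T}(\proj{\sigma}_{XR_X})=\rho_{X'R_X}$ by the very definition of the process matrix for the identity process, hence $\hat{D}\geq 0$.

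The upper bound $\hat{D}\leq 0$ requires ruling out any admissible $\mathcal{T}$ with $y>0$, equivalently any $\mathcal{T}$ satisfying $\mathcal{T}(\Gamma_X)\leq\alpha\,\Gamma_{X'}$ with $\alpha<1$. Identifying $X\simeq X'$, the aim is to prove the operator inequality
\begin{align}
\Pi^{\sigma}_{X'}\,\mathcal{T}(\Gamma_X)\,\Pi^{\sigma}_{X'}
\;\geq\;
\Pi^{\sigma}_{X'}\,\Gamma_{X'}\,\Pi^{\sigma}_{X'}\ ,
\label{eq:myplan-keyineq}
\end{align}
where $\Pi^\sigma$ denotes the projector onto the support of $\sigma_X$. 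Once~\eqref{eq:myplan-keyineq} is in hand, conjugating $\mathcal{T}(\Gamma_X)\leq\alpha\,\Gamma_{X'}$ by $\Pi^\sigma_{X'}$ yields $\Pi^\sigma\Gamma_{X'}\Pi^\sigma\leq\alpha\,\Pi^\sigma\Gamma_{X'}\Pi^\sigma$, and since $\Pi^\sigma\Gamma_{X'}\Pi^\sigma$ is nonzero by the standing support assumption on $\rho_{X'R_X}$ (which forces $\mathrm{supp}(\sigma_X)\subseteq\mathrm{supp}(\Gamma_X)$), one reads off $\alpha\geq 1$, i.e., $\hat{D}\leq 0$.

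To establish~\eqref{eq:myplan-keyineq}, the strategy is to exploit the constraint $\mathcal{T}(\proj{\sigma}_{XR_X})=\proj{\sigma}_{X'R_X}$ via a Kraus representation $\mathcal{T}(\cdot)=\sum_k K_k(\cdot)K_k^\dagger$. Purity of the output forces each $(K_k\otimes\Ident)\ket{\sigma}_{XR_X}$ to be a scalar multiple $c_k\,\ket{\sigma}_{X'R_X}$ with $\sum_k|c_k|^2=1$, equivalently $K_k\,\Pi^\sigma_X=c_k\,\Pi^\sigma_{X'}$. In block form with respect to $\Pi^\sigma$ and $\Ident-\Pi^\sigma$, every $K_k$ thus has top-left block $c_k\Ident$, vanishing bottom-left block, and free blocks $m_k^{01},m_k^{11}$ elsewhere.

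The crux is then a compact block-matrix argument combining the two primal constraints. Since $\sum_k|c_k|^2=1$ saturates the $(\Pi^\sigma,\Pi^\sigma)$ block of $\Ident-\sum_k K_k^\dagger K_k$, positive semidefiniteness of this difference (from trace non-increasing) forces its off-diagonal block to vanish as well, yielding $\sum_k\bar{c}_k\,m_k^{01}=0$. Substituting the block form of $K_k$ into $\mathcal{T}(\Gamma_X)=\sum_k K_k\,\Gamma_X\,K_k^\dagger$ and extracting the $(\Pi^\sigma,\Pi^\sigma)$ block, the two linear cross terms cancel exactly by this identity, leaving $\Pi^\sigma\Gamma_X\Pi^\sigma+\sum_k m_k^{01}(\Pi^\perp\Gamma_X\Pi^\perp)(m_k^{01})^\dagger\geq\Pi^\sigma\Gamma_X\Pi^\sigma$, which is~\eqref{eq:myplan-keyineq} after using $\Gamma_X=\Gamma_{X'}$. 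The main delicacy is this cancellation of the cross terms, which tightly couples the pure-state process-matrix constraint to the trace-nonincreasing condition; without the latter, the cross terms could contribute negatively and the inequality could fail, which accords with the intuition that the proposition is genuinely a statement about the absence of work gain from freedom off the support of $\sigma_X$.
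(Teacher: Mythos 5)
Your proof is correct, and it takes a genuinely different route from the paper's. The paper works on the dual side of the semidefinite program in \autoref{prop:coh-rel-entr-nonsmooth-SDP}: it exhibits a one-parameter family of dual-feasible candidates $(Z_{X'R_X}=\mu\,\rho_{R_X}^{-1/2}\Phi_{X'R_X}\rho_{R_X}^{-1/2},\;\omega_{X'},\;X_{R_X})$, computes their attained objective, and shows it converges to $1$ as $\mu\to\infty$ by invoking the eigenspace-perturbation estimate of \autoref{util:blow-muA-operator-to-try-cover-B-limit}. Your argument instead stays entirely on the primal side and is structural: purity of $\rho_{X'R_X}$ forces every Kraus operator $K_k$ to act as $c_k\Pi^\sigma$ on $\operatorname{supp}\sigma_X$ with $\sum_k|c_k|^2=1$; the trace-nonincreasing constraint then saturates the $(\Pi^\sigma,\Pi^\sigma)$ block of $\Ident-\sum_kK_k^\dagger K_k$, which by positive semidefiniteness kills the off-diagonal block, and this cancellation is exactly what makes the cross terms in $\Pi^\sigma\,\mathcal{T}(\Gamma_X)\,\Pi^\sigma$ vanish. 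The resulting inequality $\Pi^\sigma\,\mathcal{T}(\Gamma_X)\,\Pi^\sigma\geq\Pi^\sigma\Gamma_{X'}\Pi^\sigma$, together with $\mathcal{T}(\Gamma_X)\leq\alpha\Gamma_{X'}$ and the fact that $\Pi^\sigma\Gamma_{X'}\Pi^\sigma$ is strictly positive on $\operatorname{supp}\sigma_X$ (guaranteed by the standing support hypothesis in \autoref{def:coh-rel-entr}), forces $\alpha\geq 1$. Your route is more elementary and arguably more illuminating: it avoids SDP duality and the perturbation lemma altogether and pinpoints exactly which freedom off $\operatorname{supp}\sigma_X$ is eliminated and by which constraint, confirming the remark in the paper that the proposition is precisely a statement that no work can be extracted by manipulating that complementary subspace. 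The paper's dual argument is less transparent here but fits the SDP machinery used uniformly throughout the appendix and reuses the same perturbation lemma in the proof of \autoref{prop:coh-rel-entr-trivial-input-output}.
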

\begin{proof}[*prop:coh-rel-entr-identity-mapping]
  Let $\Phi_{X'R_X} = \IdentProc[X][X']{\Phi_{X:R_X}}$ be the unnormalized
  maximally entangled state on $X'$ and $R_X$ such that
  $\rho_{X'R_X}=\rho_{R_X}^{1/2}\Phi_{X'R_X}\rho_{R_X}^{1/2}$.
  
  First we show that $\DCohz{\rho}{X}{X'}{\Gamma_X}{\Gamma_{X'}} \geqslant
  0$. Consider the mapping $\mathcal{T}_{X\to X'}=\IdentProc[X][X']{}$ and $y=0$,
  i.e., consider the identity mapping as an implementation candidate.  This
  clearly satisfies the requirements of the maximization
  in~\eqref{eq:coh-rel-entr-def} for $\epsilon=0$, and thus
  \begin{align}
    \DCohz{\rho}{X}{X'}{\Gamma_X}{\Gamma_{X'}} \geqslant 0\ .
  \end{align}

  We prove the reverse direction by exhibiting dual candidates for the problem
  given in \autoref{prop:coh-rel-entr-nonsmooth-SDP}.  The tricky part is that
  there might not be an optimal choice of dual variables. The best we can do in
  general is to come up with a sequence of choices for dual candidates whose
  attained value converges to $1$. For any $\mu>0$, let
  \begin{align}
    Z_{X'R_X} &= \mu\rho_{R_X}^{-1/2}\Phi_{X'R_X}\rho_{R_X}^{-1/2}\ ;
    & \omega_{X'} = \left(\tr\left[\Pi^{\rho_{X'}}_{X'}\Gamma_{X'}\right]\right)^{-1}
      \, \Pi^{\rho_{X'}}_{X'}\ .
  \end{align}
  Then $\tr\left(\omega_{X'}\Gamma_{X'}\right)=1$, satisfying the dual
  constraint~\eqref{eq:SDP-coh-rel-entr-nonsmooth-conddual-tromegaGamma}. Let's
  now study~\eqref{eq:SDP-coh-rel-entr-nonsmooth-conddual-Z}:
  \begin{multline}
    \rho_{R_X}^{1/2}Z_{X'R_X}\rho_{R_X}^{1/2}
    - \Gamma_{R_X}\otimes\omega_{X'}
    \\
    = \mu\Pi^{\rho_{R_X}}_{R_X}\,\Phi_{X':R_X}\,\Pi^{\rho_{R_X}}_{R_X}
    - \left(\tr\left[\Pi^{\rho_{X'}}_{X'}\Gamma_{X'}\right]\right)^{-1}
    \Gamma_{R_X}\otimes\Pi^{\rho_{X'}}_{X'}\ .
    \label{eq:prop-coh-rel-entr-identity-mapping-calc-1}
  \end{multline}
  The operator $\Pi^{\rho_{R_X}}_{R_X}\Phi_{X'R_X}\Pi^{\rho_{R_X}}_{R_X}$ is a
  rank-1 positive operator with support within
  $\Pi^{\rho_{R_X}}_{R_X}\otimes\Pi^{\rho_{X'}}_{X'}$, and its nonzero
  eigenvalue is given by
  \begin{align}
    \tr\left(\Pi^{\rho_{R_X}}_{R_X}\Phi_{X'R}\Pi^{\rho_{R_X}}_{R_X}\right)
    = \rank\rho_{R_X}\ .
    \label{eq:prop-coh-rel-entr-identity-mapping-calc-tr-PiPhiPi-rankRho}
  \end{align}
  Let $r=\rank\rho_{R_X}$. We then have
  $\Pi^{\rho_{R_X}}_{R_X}\Phi_{X'R}\Pi^{\rho_{R_X}}_{R_X}\leqslant
  r\Pi^{\rho_{R_X}}_{R_X}\otimes\Pi^{\rho_{X'}}_{X'}$ and we may continue our
  calculation:
  \begin{align}
    \text{\eqref{eq:prop-coh-rel-entr-identity-mapping-calc-1}}
    &\leqslant \left(\mu r \Pi^{\rho_{R_X}}_{R_X} -
      \left(\tr\left[\Pi^{\rho_{X'}}_{X'}\Gamma_{X'}\right]\right)^{-1}\Gamma_{R_X} \right)
      \otimes \Pi^{\rho_{X'}}_{X'}\ .
      \label{eq:prop-coh-rel-entr-identity-mapping-calc-2}
  \end{align}
  Now, let $P_{R_X}$ be the projector onto the eigenspaces associated to the
  positive (or null) eigenvalues of the operator
  $\left(\mu r \Pi^{\rho_{R_X}}_{R_X} -
    \left(\tr\left[\Pi^{\rho_{X'}}_{X'}\Gamma_{X'}\right]\right)^{-1}\Gamma_{R_X}
  \right)$, and let
  \begin{align}
    X_{R_X} = P_{R_X} \left(
    \mu r \Pi^{\rho_{R_X}}_{R_X} -
    \left(\tr\left[\Pi^{\rho_{X'}}_{X'}\Gamma_{X'}\right]\right)^{-1} \Gamma_{R_X}
    \right) P_{R_X} \ .
  \end{align}
  Then
  \begin{align}
    \text{\eqref{eq:prop-coh-rel-entr-identity-mapping-calc-2}}
    \leqslant X_{R_X}\otimes\Ident_{X'}\ .
  \end{align}
  Hence, for any $\mu>0$, this choice of dual variables satisfies the dual
  constraints.  The value attained by this choice of variables is given by
  \begin{align}
    \tr\left[Z_{X'R}\rho_{X'R}\right] - \tr X_{R_X}
    = \mu\tr\left[\Pi^{\rho_{R_X}}_{R_X}\Phi_{X'R}\Pi^{\rho_{R_X}}_{R_X}\Phi_{X'R}\right]
    - \tr X_{R_X}\ .
    \label{eq:prop-coh-rel-entr-identity-mapping-calc-3}
  \end{align}
  As the object $\Pi^{\rho_{R_X}}_{R_X}\Phi_{X'R}\Pi^{\rho_{R_X}}_{R_X}$ is
  rank-$1$, we have thanks
  to~\eqref{eq:prop-coh-rel-entr-identity-mapping-calc-tr-PiPhiPi-rankRho} that
  $\tr\left[\left(\Pi^{\rho_{R_X}}_{R_X}\Phi_{X'R}\Pi^{\rho_{R_X}}_{R_X}\right)^2\right]
  = \left(\tr\Pi^{\rho_{R_X}}_{R_X}\Phi_{X'R}\Pi^{\rho_{R_X}}_{R_X}\right)^2 =
  r^2$. Then
  \begin{align}
    \text{\eqref{eq:prop-coh-rel-entr-identity-mapping-calc-3}}
    &= \mu r^2 - \tr X_{R_X}
      \nonumber\\
    &= \mu r^2 - \mu r \tr\left(P_{R_X}\Pi^{\rho_{R_X}}_{R_X}\right)
      + \left(\tr\left[\Pi^{\rho_{X'}}_{X'}\Gamma_{X'}\right]\right)^{-1}
      \tr\left(P_{R_X}\Gamma_{R_X}\right)
      \nonumber\\
    &\geqslant \mu r^2 - \mu r \tr\left(\Pi^{\rho_{R_X}}_{R_X}\right)
      + \left(\tr\left[\Pi^{\rho_{X'}}_{X'}\Gamma_{X'}\right]\right)^{-1}
      \tr\left(P_{R_X}\Gamma_{R_X}\right)
      \nonumber\\
    &\geqslant
      \left(\tr\left[\Pi^{\rho_{X'}}_{X'}\Gamma_{X'}\right]\right)^{-1}
      \tr\left(P_{R_X}\Gamma_{R_X}\right)\ ,
      \label{eq:prop-coh-rel-entr-identity-mapping-calc-4}
  \end{align}
  recalling that $\tr\Pi^{\rho_{R_X}}_{R_X}=\rank\rho_{R_X}=r$.

  Next episode: the Lemma awakens. Take $A=\mu r\Pi^{\rho_{R_X}}_{R_X}$ and
  $B=\left(\tr\left[\Pi^{\rho_{X'}}_{X'}\Gamma_{X'}\right]\right)^{-1}\Gamma_{R_X}$;
  \autoref{util:blow-muA-operator-to-try-cover-B-limit} then asserts that there
  exists a constant $c$ independent of $\mu$ such that
  \begin{align}
    \Pi^{\rho_{R_X}}_{R_X} \leqslant P_{R_X} + \frac{c}{\mu}\Ident\ .
  \end{align}
  Hence,
  \begin{align}
    \text{\eqref{eq:prop-coh-rel-entr-identity-mapping-calc-4}}
    \geqslant \left(\tr\left[\Pi^{\rho_{X'}}_{X'}\Gamma_{X'}\right]\right)^{-1}
    \left(\tr\left[\Pi^{\rho_{R_X}}_{R_X}\Gamma_{R_X}\right] - \frac{c}{\mu}\tr\Gamma_{R_X}\right)
    = 1 - O\left(1/\mu\right)\ .
  \end{align}
  Taking $\mu\to\infty$ yields successive feasible dual candidates with attained
  objective value converging to $1$, hence proving that
  \begin{align}
    \DCohz{\rho}{X}{X'}{\Gamma_{X}}{\Gamma_{X'}} \leqslant 0\ .
    \tag*\qedhere
  \end{align}
\end{proof}

An essentially trivial proposition immediately follows from the fact that
$\Gamma$-sub-preserving maps are admissible operations, and hence don't cost anything in
our framework:

\begin{proposition}
  \label{prop:coh-rel-entr-Gamma-subpreserving-map}
  Let $\sigma_X$ be a quantum state and let $\mathcal{E}_{X\to X'}$ be a
  $\Gamma$-sub-preserving logical process.  With the process matrix
  $\rho_{X'R} =
  \mathcal{E}_{X\to{}X'}`\big(\sigma_X^{1/2}\,\Phi_{X:R_X}\,\sigma_X^{1/2})$, we
  have for any $\epsilon\geqslant 0$,
  \begin{align}
    \DCohz[\epsilon]{\rho}{X}{X'}{\Gamma_X}{\Gamma_{X'}} \geqslant 0\ .
  \end{align}
\end{proposition}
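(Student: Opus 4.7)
The plan is to exhibit a single feasible candidate in the maximization defining the coherent relative entropy that already attains the value $0$. Concretely, I would take $\mathcal{T}_{X\to X'} := \mathcal{E}_{X\to X'}$ and $y := 0$ in the optimization~\eqref{eq:coh-rel-entr-def}, and then simply verify that all three constraints are satisfied.

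First, since $\mathcal{E}$ is a logical process (completely positive, trace preserving), it is in particular completely positive and trace nonincreasing, so $\mathcal{E}^\dagger(\Ident_{X'}) = \Ident_X \leqslant \Ident_{R_X}$ after identification $X \simeq R_X$. Second, by hypothesis $\mathcal{E}(\Gamma_X) \leqslant \Gamma_{X'} = 2^{-0}\,\Gamma_{X'}$, giving the $\Gamma$-sub-preservation condition with $y=0$. Third, by the very definition of the process matrix (see \autoref{defn:process-matrix}), we have $\mathcal{E}_{X\to X'}(\sigma_{XR_X}) = \rho_{X'R_X}$ exactly, so $P(\mathcal{E}(\sigma_{XR_X}), \rho_{X'R_X}) = 0 \leqslant \epsilon$ for every $\epsilon \geqslant 0$.

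Since $y=0$ is therefore a feasible value in the maximization, we conclude $\hat{D}^\epsilon_{X\to X'}(\rho_{X'R_X}\Vert\Gamma_X,\Gamma_{X'}) \geqslant 0$. There is no real obstacle here; the statement is essentially a sanity check confirming that the coherent relative entropy correctly recognizes $\Gamma$-sub-preserving processes as free. The only mild subtlety is ensuring the trace-nonincreasing condition on the adjoint is properly phrased under the canonical identification $X \simeq R_X$ used to write the process matrix, which follows at once from $\mathcal{E}$ being trace preserving.
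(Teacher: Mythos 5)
Your proof is correct and takes essentially the same approach as the paper: both exhibit $\mathcal{E}_{X\to X'}$ itself as a feasible candidate achieving $y=0$. The only cosmetic difference is that the paper verifies feasibility via the equivalent infinity-norm reformulation of \autoref{prop:coh-rel-entr-rewrite-primal-problem-infinity-norm}, whereas you check the three constraints of the original definition~\eqref{eq:coh-rel-entr-def} directly.
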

\begin{proof}[*prop:coh-rel-entr-Gamma-subpreserving-map]
  The process $\mathcal{E}_{X\to X'}$ itself is a valid optimization candidate
  in~\eqref{eq:prop-coh-rel-entr-rewrite-primal-problem-infinity-norm-mathcalT}, and
  clearly
  $\norm[\big]{\Gamma_{X'}^{-1/2} \, \mathcal{E}_{X\to{}X'}`\big(\Gamma_X) \,
    \Gamma_{X'}^{-1/2}}_\infty \leqslant
  \norm[\big]{\Pi^{\Gamma_{X'}}_{X'}}_\infty\leqslant 1$
  because $\mathcal{E}_{X\to X'}$ is $\Gamma$-sub-preserving.
\end{proof}

In general, the coherent relative entropy depends on the precise {logical
  process} used to map the input and output states.  However, there are some
classes of states for which the coherent relative entropy depends only on the
input and output state.

The following proposition tells us that one may map the $\Gamma_X/\tr\Gamma_X$
state to the $\Gamma_{X'}/\tr\Gamma_{X'}$ state in however way one wants, i.e.\@
regardless of the logical process, and yet in any case the coherent relative
entropy is given by the ratio $\tr\Gamma_{X'}/\tr\Gamma_X$.  This is a
consequence of allowing any $\Gamma$-preserving maps to be performed for free,
and this ratio comes about from the normalization of the respective input and
output states.

\begin{proposition}
  \label{prop:coh-rel-entr-mapping-gibbs-states-onto-each-other}
  Let $P_X$ and $P'_{X'}$ be projectors with $[P_X, \Gamma_X]=0$ and
  $[P'_{X'},\Gamma_{X'}]=0$.  Let $\rho_{X'R_X}$ be a bipartite quantum state
  with reduced states
  $\rho_{R_X}=t_{X\to R_X}`[`(P_X\Gamma_XP_X)/\tr`(P_X\Gamma_X)]$ and
  $\rho_{X'}=(P'_{X'}\Gamma_{X'}P'_{X'})/\tr`(P'_{X'}\Gamma_{X'})$.  Then, for
  any $\epsilon\geqslant 0$,
  \begin{multline}
    \DCohz[\epsilon]\rho{X}{X'}{\Gamma_X}{\Gamma_{X'}}
    \\
    = \log\tr`(P'_{X'}\Gamma_{X'}) - \log\tr`(P_X\Gamma_X) + \log`[1/`(1-\epsilon^2)]\ .
  \end{multline}
\end{proposition}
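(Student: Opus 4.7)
The plan is to prove matching upper and lower bounds by the usual primal/dual strategy for the semidefinite program in \autoref{prop:coh-rel-entr-SDP}. The structural observation driving everything is that the input reduced state $\sigma_X = P_X\Gamma_X P_X/\tr(P_X\Gamma_X)$ and the output reduced state $\rho_{X'}=P'_{X'}\Gamma_{X'}P'_{X'}/\tr(P'_{X'}\Gamma_{X'})$ are both battery states of the form~\eqref{main-eq:battery-state}; the claim extends the intuition from \autoref{main-prop:main-equiv-battery-models-rest} to the full process-matrix picture, asserting that the correlations of $\rho_{X'R_X}$ with the reference system do not affect the work cost.

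For achievability (the $\geq$ direction), let $P_{R_X}=t_{X\to R_X}(P_X)$ and $\Gamma_{R_X}=t_{X\to R_X}(\Gamma_X)$, and note that $\rho_{R_X} = P_{R_X}\Gamma_{R_X}/\tr(P_X\Gamma_X)$ with $[P_{R_X},\Gamma_{R_X}]=0$. I would define a candidate channel $\mathcal{T}_{X\to X'}$ by the Choi matrix $T_{X'R_X} = \tr(P_X\Gamma_X)\cdot \Gamma_{R_X}^{-1/2}\rho_{X'R_X}\Gamma_{R_X}^{-1/2}$, with pseudoinverses understood on the support of $\Gamma_{R_X}$. Three quick checks follow: $\tr_{X'}T_{X'R_X} = P_{R_X}\leq\Ident_{R_X}$, so $\mathcal{T}$ is trace-nonincreasing; $\mathcal{T}(\sigma_{XR_X}) = \rho_{X'R_X}$ by direct substitution; and $\mathcal{T}(\Gamma_X) = \tr(P_X\Gamma_X)\,\rho_{X'} \leq [\tr(P_X\Gamma_X)/\tr(P'_{X'}\Gamma_{X'})]\,\Gamma_{X'}$. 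Passing to $\mathcal{T}' = (1-\epsilon^2)\mathcal{T}$ rescales $\Gamma_{X'}$ by an additional $(1-\epsilon^2)$ while keeping the output process matrix at purified distance $P((1-\epsilon^2)\rho_{X'R_X},\rho_{X'R_X}) = \epsilon$ (a one-line generalized-fidelity calculation). This gives the full lower bound including the smoothing improvement.

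For the converse, I would exhibit feasible dual variables in~\eqref{eq:SDP-coh-rel-entr-dual}. Take $X_{R_X}=0$, $\omega_{X'}=P'_{X'}/\tr(P'_{X'}\Gamma_{X'})$ so that constraint~\eqref{eq:SDP-coh-rel-entr-conddual-tromegaGamma} is saturated, and $\mu = \tr(P_X\Gamma_X)/\tr(P'_{X'}\Gamma_{X'})$. To verify~\eqref{eq:SDP-coh-rel-entr-conddual-Z}, I substitute $\rho_{R_X}^{1/2} = \tr(P_X\Gamma_X)^{-1/2}\,P_{R_X}\Gamma_{R_X}^{1/2}$ and reduce the claim to $\rho_{X'R_XE}\leq P'_{X'}\otimes P_{R_X}\otimes\Ident_E$, which holds because the purification $\rho_{X'R_XE}$ is a rank-one projector contained in that subspace. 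Conjugation by $\Gamma_{R_X}^{1/2}$ then gives $\Gamma_{R_X}^{1/2}\rho_{X'R_XE}\Gamma_{R_X}^{1/2}\leq P'_{X'}\otimes P_{R_X}\Gamma_{R_X}\otimes\Ident_E\leq P'_{X'}\otimes\Gamma_{R_X}\otimes\Ident_E$, and the chosen $\mu$ makes the two sides match. The attained dual value is $\mu(1-\epsilon^2)$, and weak duality yields $\alpha\geq(1-\epsilon^2)\tr(P_X\Gamma_X)/\tr(P'_{X'}\Gamma_{X'})$, matching the primal construction.

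The main care is in the manipulation of pseudoinverses and supports—specifically, tracking how $[P_X,\Gamma_X]=0$ descends to $[P_{R_X},\Gamma_{R_X}]=0$ under partial transpose, and exploiting that $\rho_{X'R_X}$ (and hence any purification) is automatically contained in $P'_{X'}\otimes P_{R_X}$ thanks to the fixed reduced states. Once these structural facts are established, the remaining algebra is routine, and the explicit primal channel and dual triple constructed above close the argument with equality.
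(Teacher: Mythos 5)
Your proof is correct and follows essentially the same primal/dual strategy as the paper's own argument: the paper's primal choice $T_{X'R_XE}=(1-\epsilon^2)\,\rho_{R_X}^{-1/2}\rho_{X'R_XE}\rho_{R_X}^{-1/2}$ is (up to the explicit environment register) the same operator as your $(1-\epsilon^2)\tr(P_X\Gamma_X)\,\Gamma_{R_X}^{-1/2}\rho_{X'R_X}\Gamma_{R_X}^{-1/2}$, since $\rho_{R_X}\propto P_{R_X}\Gamma_{R_X}$, and the dual triple $(\mu,\omega_{X'},X_{R_X})$ you give is identical to the paper's. The only cosmetic difference is that you verify the approximation constraint via the generalized-fidelity computation $P((1-\epsilon^2)\rho,\rho)=\epsilon$ directly in the channel formulation~\eqref{eq:coh-rel-entr-def}, whereas the paper verifies the overlap condition~\eqref{eq:SDP-coh-rel-entr-cond-processmatrixclose} in the SDP with the purification onto $E$; these are equivalent by Uhlmann's theorem.
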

\begin{proof}[*prop:coh-rel-entr-mapping-gibbs-states-onto-each-other]
  Let $\ket\rho_{X'R_XE}$ be a purification of $\rho_{X'R_X}$ into a (large
  enough) system $E$, and consider the semidefinite program given by
  \autoref{prop:coh-rel-entr-SDP}.  We give feasible primal and dual candidates
  which achieve the same value.  First, let
  $T_{X'R_XE} =
  `(1-\epsilon^2)\,\rho_{R_X}^{-1/2}\,\rho_{X'R_XE}\,\rho_{R_X}^{-1/2}$.  We
  have
  $\tr_{X'E}`(T_{X'R_XE}) = `(1-\epsilon^2)\,\Pi^{\rho_{R_X}}_{R_X}\leqslant
  \Ident_{R_X}$ as required by~\eqref{eq:SDP-coh-rel-entr-cond-trnoninc}.  Also,
  since $\rho_{R_X} = P_{R_X}\Gamma_{R_X} P_{R_X}/\tr`(P_{R_X}\Gamma_{R_X})$ and
  $\rho_{X'} = P'_{X'}\Gamma_{X'} P'_{X'}/\tr`(P'_{X'}\Gamma_{X'})$, we have
  $\tr_{R_XE}`(T_{X'R_XE}\,\Gamma_{R_X}) =
  `(1-\epsilon^2)\,\tr`(P_{R_X}\Gamma_{R_X})\, \tr_{R_X}`(\rho_{X'R_X}\,P_X) =
  `(1-\epsilon^2)\,\tr`(P_{R_X}\Gamma_{R_X})\, \rho_{X'} \leqslant \alpha\,
  \Gamma_{X'}$, where we have defined
  $\alpha = `(1-\epsilon^2) \tr`(P_{R_X}\Gamma_{R_X}) /
  \tr`(P'_{X'}\Gamma_{X'})$ and noting that $[P'_{X'}, \Gamma_{X'}]=0$, hence
  satisfying~\eqref{eq:SDP-coh-rel-entr-cond-alphaSGPM}.  Finally, we have
  $\tr`\big[ \rho_{R_X}^{1/2}\,T_{X'R_XE}\,\rho_{R_X}^{1/2} \, \rho_{X'R_XE} ] =
  `(1-\epsilon^2)$ which
  satisfies~\eqref{eq:SDP-coh-rel-entr-cond-processmatrixclose}.  This choice of
  primal variables is feasible, and attains the value $\alpha$.

  Now we exhibit feasible dual candidates.  Let
  $\mu = \tr`(P_{R_X}\Gamma_{R_X}) / \tr`(P'_{X'}\Gamma_{X'})$,
  $\omega_{X'} = P'_{X'} / \tr`(P'_{X'}\,\Gamma_{X'})$ and $X_{R_X}=0$, and note
  that~\eqref{eq:SDP-coh-rel-entr-conddual-tromegaGamma} is automatically
  satisfied.  Then, since
  $\rho_{X'R_XE}\leqslant\Ident_E\otimes P'_{X'}\otimes P_{R_X}$, we have
  \begin{multline}
    \mu\,\rho_{R_X}^{1/2}\,\rho_{X'R_XE}\,\rho_{R_X}^{1/2}
    \leqslant \frac{\tr P_{R_X}\Gamma_{R_X}}{\tr P'_{X'}\Gamma_{X'}}
    \,\Ident_E\otimes P'_{X'}\otimes\rho_{R_X}
\\
    \leqslant \Ident_E\otimes\omega_{X'}\otimes\Gamma_{R_X}\ ,
  \end{multline}
  keeping in mind that $[P_{R_X},\Gamma_{R_X}]=0$, and hence
  condition~\eqref{eq:SDP-coh-rel-entr-conddual-Z} is satisfied.  The value
  attained by this choice of variables is simply
  $\mu\,(1-\epsilon^2) - \tr X_{R_X} = \alpha$, hence proving that this is the
  optimal solution of the semidefinite program.  Calculating $\,-\log\alpha$
  completes the proof.
\end{proof}

We note that for this special type of states we have the nice expression for their
relative entropy to $\Gamma$.  

\begin{proposition}
  \label{prop:coh-rel-entr-battery-state-calc-rel-entr}
  If $\Gamma\geqslant0$ and $P$ is a projector with $[P,\Gamma]=0$, then
  \begin{align}
    \DD`*{\frac{P\Gamma P}{\tr P\Gamma}}{\Gamma}
    &= \Dminz`*{\frac{P\Gamma P}{\tr P\Gamma}}{\Gamma}
    = \Dmax`*{\frac{P\Gamma P}{\tr P\Gamma}}{\Gamma}
    \nonumber\\[1ex]
    &= -\log\tr P\Gamma\ .
    \label{eq:coh-rel-entr-expression-relative-entropy-for-projected-Gamma-states}
  \end{align}
\end{proposition}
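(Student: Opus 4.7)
The plan is to reduce all three computations to a single classical calculation by simultaneously diagonalizing $P$ and $\Gamma$, which is possible by the commutation hypothesis $[P,\Gamma]=0$. Pick a common eigenbasis $\{\ket{i}\}$ and write $\Gamma=\sum_i g_i\proj{i}$ and $P=\sum_{i\in S}\proj{i}$ for an index set $S$. A preliminary reduction: any $i\in S$ with $g_i=0$ contributes nothing to $P\Gamma P$ and (since $[P,\Gamma]=0$) we may replace $P$ by $P\Pi^\Gamma$ without affecting $P\Gamma P$ or $\tr(P\Gamma)$. After this reduction, $g_i>0$ for every $i\in S$, and setting $Z=\tr(P\Gamma)=\sum_{i\in S}g_i$ we have
\[
  \rho \;:=\; \frac{P\Gamma P}{\tr(P\Gamma)} \;=\; \frac{1}{Z}\sum_{i\in S} g_i\,\proj{i},
\]
which is diagonal in the chosen basis with support exactly $P$.

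With this, the three quantities can be read off. For the Umegaki relative entropy, since $\rho$, $\log\rho$ and $\log\Gamma$ are all diagonal in $\{\ket{i}\}$,
\[
  D(\rho\Vert\Gamma)
  \;=\; \sum_{i\in S}\frac{g_i}{Z}\bigl[\log(g_i/Z)-\log g_i\bigr]
  \;=\; -\log Z\,\sum_{i\in S}\frac{g_i}{Z}
  \;=\; -\log Z.
\]
For the max-relative entropy, the commutation gives $\Gamma^{-1/2}\rho\,\Gamma^{-1/2} = Z^{-1}\sum_{i\in S}\proj{i} = Z^{-1}P$ on the support of $\Gamma$, whose operator norm is $Z^{-1}$, so $D_{\max}(\rho\Vert\Gamma)=\log(Z^{-1})=-\log Z$. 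For the min-relative entropy $D_{\min,0}(\rho\Vert\Gamma)=-\log\tr(\Pi^\rho\,\Gamma)$, the support projector of $\rho$ is (after the reduction above) exactly $P$, yielding $-\log\tr(P\Gamma)=-\log Z$.

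All three values coincide with $-\log\tr(P\Gamma)$, which is the claimed identity. The only point requiring care is the handling of eigenvectors of $P$ that lie outside the support of $\Gamma$; the commutation $[P,\Gamma]=0$ makes the trimming $P\mapsto P\Pi^\Gamma$ consistent and leaves both sides of \eqref{eq:coh-rel-entr-expression-relative-entropy-for-projected-Gamma-states} unchanged. Beyond this bookkeeping, the proof is just a one-line calculation in the shared eigenbasis, and no further ingredient is needed.
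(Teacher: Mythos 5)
Your proof is correct, but it takes a genuinely different route from the paper's. The paper computes $D_{\max}$ and $D_{\min,0}$ directly (using commutation to simplify $\Gamma^{-1/2}P\Gamma P\Gamma^{-1/2}$, and $\Pi^\rho=P$, respectively) and then concludes $D(\rho\Vert\Gamma)=-\log\tr(P\Gamma)$ without any calculation at all, by invoking the general sandwich inequality $D_{\min,0}\leqslant D\leqslant D_{\max}$ from Datta's paper. You instead diagonalize $P$ and $\Gamma$ simultaneously and evaluate all three quantities explicitly, including the Umegaki relative entropy as a classical sum. Both routes are valid; the paper's is shorter since it gets $D$ for free from the ordering lemma, whereas yours is self-contained and more elementary, needing no external lemma about the hierarchy of relative entropies. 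You are also slightly more careful than the paper about the case where $P$ has eigenvectors outside the support of $\Gamma$: the paper asserts $\Pi^\rho=P$ outright, whereas you note that one should first replace $P$ by $P\Pi^\Gamma$ (which leaves $P\Gamma P$ and $\tr(P\Gamma)$ unchanged) before that identity holds. This is a minor point since the paper elsewhere works under support assumptions, but your bookkeeping is the more robust version.
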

\begin{proof}[*prop:coh-rel-entr-battery-state-calc-rel-entr]
  Write as shorthand $\rho=P\Gamma P / \tr P\Gamma$.  Then
  \begin{align}
    2^{\Dmax{\rho}{\Gamma}}
    &= \norm{\Gamma^{-1/2}\,\rho\,\Gamma^{-1/2}}_\infty
      \nonumber\\
    &= (\tr P\Gamma)^{-1}\norm{\Gamma^{-1/2}\,P\,\Gamma\,P\,\Gamma^{-1/2}}_\infty
      \nonumber\\
    &= (\tr P\Gamma)^{-1}\norm{\Gamma^{-1/2}\,\Gamma^{1/2}\,P\,\Gamma^{1/2}\,\Gamma^{-1/2}}_\infty
      \nonumber\\
    &= (\tr P\Gamma)^{-1}\ ,
  \end{align}
  since $[P,\Gamma]=0$.  Also, observing that $\Pi^\rho = P$,
  \begin{align}
    2^{-\Dminz{\rho}{\Gamma}}
    &= \tr`(\Pi^\rho\Gamma) = \tr`(P\Gamma)\ .
  \end{align}
  The expression $\DD{\rho}{\Gamma}$ is thus also equal to $-\log\tr P\Gamma$
  since we know that
  $\Dminz{\rho}{\Gamma} \leqslant \DD{\rho}{\Gamma} \leqslant
  \Dmax{\rho}{\Gamma}$~\cite[Lemma~10]{Datta2009IEEE_minmax}.
\end{proof}

Notably, the states of the form $P\Gamma P/\tr(P\Gamma)$ for $[P,\Gamma]=0$ are precisely
those general type of states which we allowed on battery systems in
\autoref{item:prop-equiv-battery-models-equivstatement-projGamma} of
\autoref{prop:equiv-battery-models}.

In fact, we may prove a slightly more general version of
\autoref{prop:coh-rel-entr-mapping-gibbs-states-onto-each-other} for the case
$\epsilon=0$: it suffices that the reduced state on the input is of the form
$\Gamma_X/\tr\Gamma_X$, and then the coherent relative entropy is oblivious to
any correlation between input and output, or equivalently, to which process is
exactly implemented, and depends only on the reduced states on the input and the
output.

\begin{proposition}
  \label{prop:coh-rel-entr-mapping-Gibbs-to-arb}
  Let $\rho_{X'R_X}$ such that
  $\tr_{X'}\rho_{X'R_X} = \Gamma_{R_X}/\tr\Gamma_{R_X}$.  Then
  \begin{align}
    \DCohz\rho{X}{X'}{\Gamma_X}{\Gamma_{X'}}
    = - \log\tr\Gamma_X -\Dmax{\rho_{X'}}{\Gamma_{X'}} \ .
    \label{eq:prop-coh-rel-entr-mapping-Gibbs-to-arb-expression}
  \end{align}
\end{proposition}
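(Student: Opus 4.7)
The plan is to work directly with the non-smooth semidefinite program of \autoref{prop:coh-rel-entr-nonsmooth-SDP} and exploit the fact that the assumption $\rho_{R_X}=\Gamma_{R_X}/\tr\Gamma_{R_X}$ (with $\tr\Gamma_{R_X}=\tr\Gamma_X$) makes the reference state on $R_X$ proportional to $\Gamma_{R_X}$.  The key observation is that this proportionality, combined with the equality constraint~\eqref{eq:SDP-coh-rel-entr-nonsmooth-cond-correctmapping}, completely determines the value of $\tr_{R_X}[T_{X'R_X}\,\Gamma_{R_X}]$ in terms of the output marginal $\rho_{X'}$ alone, independently of the remaining details of $T_{X'R_X}$.

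More precisely, I will first observe that for any primal-feasible $T_{X'R_X}$, writing $c=\tr\Gamma_X$, one has
\begin{align}
\tr_{R_X}`[T_{X'R_X}\,\Gamma_{R_X}]
= c\,\tr_{R_X}`[T_{X'R_X}\,\rho_{R_X}]
= c\,\tr_{R_X}`[\rho_{R_X}^{1/2}\,T_{X'R_X}\,\rho_{R_X}^{1/2}]
= c\,\rho_{X'}\ ,
\end{align}
where the second equality follows by passing to the eigenbasis of $\rho_{R_X}$ (so that the square roots can be cycled under the partial trace), and the third uses constraint~\eqref{eq:SDP-coh-rel-entr-nonsmooth-cond-correctmapping}.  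Consequently, constraint~\eqref{eq:SDP-coh-rel-entr-nonsmooth-cond-alphaSGPM} reduces to the scalar inequality $c\,\rho_{X'}\leqslant\alpha\,\Gamma_{X'}$, whose minimal feasible $\alpha$ is
\begin{align}
\alpha^{\star} = c\,\norm[\big]{\Gamma_{X'}^{-1/2}\,\rho_{X'}\,\Gamma_{X'}^{-1/2}}_\infty
= \tr\Gamma_X\cdot 2^{\Dmax{\rho_{X'}}{\Gamma_{X'}}}\ ,
\end{align}
where $\rho_{X'}$ lies in the support of $\Gamma_{X'}$ by the standing assumption on $\rho_{X'R_X}$.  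Taking $-\log$ yields the claimed identity, provided this $\alpha^{\star}$ is actually attained.

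For attainability, I will exhibit a single concrete feasible candidate: the choice $T_{X'R_X}:=\rho_{R_X}^{-1/2}\,\rho_{X'R_X}\,\rho_{R_X}^{-1/2}$, with the Moore--Penrose pseudoinverse taken on the support of $\rho_{R_X}$ (equivalently the support of $\Gamma_{R_X}$).  This $T_{X'R_X}$ is positive semidefinite, satisfies~\eqref{eq:SDP-coh-rel-entr-nonsmooth-cond-correctmapping} by construction, and obeys $\tr_{X'}T_{X'R_X}=\Pi^{\rho_{R_X}}_{R_X}\leqslant\Ident_{R_X}$, meeting~\eqref{eq:SDP-coh-rel-entr-nonsmooth-cond-trnoninc}.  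By the identity displayed above it realizes $\alpha^{\star}$ exactly.

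The main thing to be careful about is the support structure when $\Gamma_X$ is not full rank: the pseudoinverse $\rho_{R_X}^{-1/2}$ must be interpreted on the support of $\Gamma_{R_X}$, and the cyclic manipulation under the partial trace must be justified by diagonalizing $\rho_{R_X}$.  Both points are routine given that $\rho_{X'R_X}$ is supported inside $\Gamma_{X'}\otimes\Gamma_{R_X}$ by hypothesis, so no genuine obstruction arises, and this completes the argument.
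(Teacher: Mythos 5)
Your proof is correct and takes essentially the same approach as the paper's: both observe that the proportionality $\rho_{R_X}=\Gamma_{R_X}/\tr\Gamma_{R_X}$, combined with the exact process-matrix constraint, forces $\tr_{R_X}[T_{X'R_X}\Gamma_{R_X}]=(\tr\Gamma_X)\,\rho_{X'}$ for every feasible $T_{X'R_X}$, making the optimal $\alpha$ independent of which feasible map is chosen. Your explicit exhibition of the feasible candidate $T_{X'R_X}=\rho_{R_X}^{-1/2}\rho_{X'R_X}\rho_{R_X}^{-1/2}$ for attainability is a minor, welcome clarification of a point the paper handles by invoking the previously established existence of an optimal primal candidate.
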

\begin{proof}[*prop:coh-rel-entr-mapping-Gibbs-to-arb]
  Take any $T_{X'R_X}$ satisfying
  $\rho_{R_X}^{1/2}T_{X'R_X}\rho_{R_X}^{1/2}=\rho_{X'R_X}$ and
  $\tr_{X'}T_{X'R_X}\leqslant\Ident_{R_X}$. Then since
  $\tr`(\Gamma_{R_X})\,\rho_{R_X} = \Gamma_{R_X}$, we have
  \begin{align}
    \tr_{R_X}`(T_{X'R_X}\Gamma_{R_X})
    &= \tr`(\Gamma_{R_X})\tr_{R_X}`(\rho_{R_X}^{1/2} T_{X'R_X}\rho_{R_X}^{1/2})
    \nonumber\\
    &= \tr`(\Gamma_{R_X}) \tr_{R_X}`(\rho_{X'R_X})
    = \tr\left(\Gamma_{R_X}\right) \rho_{X'}
    \ ,
  \end{align}
  and thus
  \begin{align}
    -\log\,\norm[\big]{
    \Gamma_{X'}^{-1/2} \tr_{R_X}`[T_{X'R_X}\Gamma_{R_X}]\, \Gamma_{X'}^{-1/2}
    }_\infty
    &= -\log\,\norm[\big]{
      \Gamma_{X'}^{-1/2}\rho_{X'}\Gamma_{X'}^{-1/2}
      }_\infty
    \nonumber \\
    &= -\Dmax{\rho_{X'}}{\Gamma_{X'}}\ .
  \end{align}
  This argument holds in particular for the optimal such $T_{X'R_X}$.
\end{proof}

Remarkably, if $\tr_{R_X}\rho_{X'R_X}= \Gamma_{X'}/\tr\Gamma_{X'}$, the coherent
relative entropy may still depend on the exact process, and does not necessarily
reduce to a difference of input and output terms as
in~\eqref{eq:prop-coh-rel-entr-mapping-Gibbs-to-arb-expression}.  This can be
seen by considering the unitary process $\mathcal{U}$ which swaps two levels
$\ket0,\ket1$, choosing $\Gamma=g_0\proj0 + g_1\proj1$ (with $g_0+g_1=1$ and
$g_0>g_1$) for both input and output, and using the input state
$\sigma=g_1\proj0+g_0\proj1$: in this case, $\sigma$ maps to $\Gamma$, but
$-\log\,\norm{\Gamma^{-1/2}\,\mathcal{U}(\Gamma)\,\Gamma^{-1/2}}_\infty =
-\Dmax{\sigma}{\Gamma}$ whereas there are processes which map $\sigma$ to
$\Gamma$, such as $\mathcal{T}`(\cdot) = \tr(\Pi^{\sigma}(\cdot))\,\Gamma$,
which achieve a coherent relative entropy of $\Dminz{\sigma}{\Gamma}$.

\subsection{Data processing inequality}

The data processing inequality is an important property desirable for an information
measure.  Intuitively, it asserts that processing information cannot make it more
\qq{valuable}.

In our case, the data processing inequality asserts that post-processing, or
applying a map to both the output state and output $\Gamma$, may only increase
the coherent relative entropy.

\begin{proposition}[Data processing inequality]
  \label{prop:coh-rel-entr-data-processing-inequality}
  Let $\rho_{X'R_X}$ be a quantum state and let
  $\Gamma_X,\Gamma_{X'}\geqslant 0$.  Let $\mathcal{F}_{X'\to X''}$ be a
  trace-preserving, completely positive map.
  Then, for any $\epsilon\geqslant 0$,
  \begin{multline}
    \DCohz[\epsilon]{\rho}{X}{X'}{\Gamma_X}{\Gamma_{X'}}
    \\
    \leqslant 
    \DCohz[\epsilon]{*\mathcal{F}_{X'\to X''}`(\rho_{X'R_X})}%
    {X}{X''}{\Gamma_{X}}{\mathcal{F}_{X'\to X''}`(\Gamma_{X'})}
    \ .
    \label{eq:prop-coh-rel-entr-data-processing-inequality-postproc}
  \end{multline}
\end{proposition}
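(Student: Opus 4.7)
The plan is to prove the inequality directly by transferring the optimal candidate from the LHS optimization to a feasible candidate for the RHS optimization by post-composition with $\mathcal{F}_{X'\to X''}$. Let $\mathcal{T}_{X\to X'}$ and $y$ be optimal (or $y$ arbitrarily close to optimal) feasible candidates for the LHS, i.e., $\mathcal{T}_{X\to X'}$ is completely positive and trace-nonincreasing, satisfies $\mathcal{T}_{X\to X'}(\Gamma_X)\leqslant 2^{-y}\Gamma_{X'}$, and satisfies the purified-distance constraint $P(\mathcal{T}_{X\to X'}(\sigma_{XR_X}),\rho_{X'R_X})\leqslant\epsilon$, where $\ket{\sigma}_{XR_X}=\rho_{R_X}^{1/2}\ket{\Phi}_{X:R_X}$.

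I then define $\mathcal{T}'_{X\to X''}=\mathcal{F}_{X'\to X''}\circ\mathcal{T}_{X\to X'}$ and check that it is a feasible candidate for the RHS at the same value $y$. Complete positivity and the trace-nonincreasing property are preserved under composition, using that $\mathcal{F}$ is itself trace-preserving. Applying $\mathcal{F}_{X'\to X''}$ to both sides of $\mathcal{T}_{X\to X'}(\Gamma_X)\leqslant 2^{-y}\Gamma_{X'}$, which preserves operator inequalities because $\mathcal{F}$ is completely positive, yields
\begin{align}
\mathcal{T}'_{X\to X''}(\Gamma_X)\leqslant 2^{-y}\,\mathcal{F}_{X'\to X''}(\Gamma_{X'})\ ,
\end{align}
which is exactly the $\Gamma$-sub-preservation condition with respect to the output operator $\mathcal{F}_{X'\to X''}(\Gamma_{X'})$ demanded on the RHS. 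For the process-matrix proximity, I invoke monotonicity of the purified distance under trace-preserving completely positive maps,
\begin{align}
P`\big(\mathcal{T}'_{X\to X''}(\sigma_{XR_X}),\, \mathcal{F}_{X'\to X''}(\rho_{X'R_X}))\leqslant P`\big(\mathcal{T}_{X\to X'}(\sigma_{XR_X}),\, \rho_{X'R_X})\leqslant\epsilon\ ,
\end{align}
where the identity channel is understood on $R_X$. Hence $\mathcal{T}'_{X\to X''}$ is feasible for the optimization defining the RHS of \eqref{eq:prop-coh-rel-entr-data-processing-inequality-postproc} with value $y$, proving the claim.

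There is no subtle obstacle: the proof is essentially structural once one notes that composition preserves the three ingredients defining a feasible candidate in~\eqref{eq:coh-rel-entr-def}. The only point worth emphasizing is that the $\Gamma$-sub-preservation condition naturally transports through $\mathcal{F}$ because $\mathcal{F}$ is completely positive, while the purified-distance condition transports through $\mathcal{F}$ because $\mathcal{F}$ is trace-preserving; both properties are needed, which is why the statement requires $\mathcal{F}$ to be a channel. If desired, the same argument can alternatively be phrased at the level of Choi matrices and the SDP of \autoref{prop:coh-rel-entr-SDP}, by showing that a primal-feasible tuple $(T_{X'R_XE},\alpha)$ for $\Gamma_{X'}$ lifts to a primal-feasible tuple for $\mathcal{F}(\Gamma_{X'})$ at the same $\alpha$, but the channel-level argument above is cleaner.
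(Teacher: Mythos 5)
Your proposal is correct and takes exactly the same route as the paper: compose the optimal $\mathcal{T}_{X\to X'}$ with $\mathcal{F}_{X'\to X''}$ and verify that the composite satisfies all three feasibility conditions (trace-nonincreasing because $\mathcal{F}$ is trace-preserving, $\Gamma$-sub-preservation because $\mathcal{F}$ is completely positive, and the purified-distance constraint by monotonicity). Nothing to add.
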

\begin{proof}[*prop:coh-rel-entr-data-processing-inequality]
  Let $\mathcal{T}_{X\to{}X'},y$ be optimal candidates for the optimization
  defining $2^{-\DCohz[\epsilon]\rho{X}{X'}{\Gamma_X}{\Gamma_{X'}}}$
  in~\eqref{eq:coh-rel-entr-def}.  We construct an optimization candidate for
  the coherent relative entropy of the post-processed state.
  Let
  $\mathcal{T}'_{X\to X''} =
  \mathcal{F}_{X'\to{}X''}\circ\mathcal{T}_{X\to{}X'}$.  Then
  $\mathcal{T}'^\dagger_{X\leftarrow{}X''}`(\Ident_{X''}) =
  \mathcal{T}^\dagger_{X\leftarrow{}X'}`(
  \mathcal{F}^\dagger_{X'\leftarrow{}X''}(\Ident_{X''}) ) \leqslant
  \Ident_{R_X}$ because $\mathcal{F}_{X'\to X''}$ is trace-preserving. Also,
  $\mathcal{T}'_{X\to{}X''}`(\Gamma_{X}) \leqslant
  \alpha\,\mathcal{F}_{X'\to{}X''}`( \Gamma_{X'} )$.  Finally, writing
  $\ket{\sigma}_{XR_X} = \rho_{R_X}^{1/2}\,\ket\Phi_{X:R_X}$, we have
  $P`( \mathcal{T}'_{X\to{}X''}`(\sigma_{XR}), \mathcal{F}_{X'\to{}X''}`(
  \rho_{X'R_X} ) ) \leqslant P`( \mathcal{T}_{X\to{}X'}`(\sigma_{XR}),
  \rho_{X'R_X} ) \leqslant \epsilon$.
\end{proof}

The case of pre-processing, i.e.\@ when a map is applied to the input before the
actual mapping is carried out, is less clear how to formulate.  Indeed, the
expression
$\DCohz[\epsilon]{*\mathcal{F}_{R_X\to{}R_{\tilde X}}`(\rho_{X'R_X})}%
{R_{\tilde{X}}}{X'}%
{\mathcal{F}_{X\to{}\tilde{X}}`(\Gamma_X)}{\Gamma_{X'}}$ would correspond to the
not-so-natural setting where one implements a process matrix defined by the
state resulting when two logical processes are applied on both the system $X$ of
interest and the reference system $R_X$ on a pure state $\ket\sigma_{XR_X}$.
However, a more general statement about composing processes can be derived in
the form of a chain rule, which is the subject of the next section.

\subsection{Chain rule}

If two individual processes are concatenated, what can be said of the coherent
relative entropy of the combined processes?  As one would expect, it turns out
that the optimal battery use of implementing directly a composition of logical
maps can only be better than the sum of the battery uses of the individual
realizations of each map.

\begin{proposition}[Chain rule]
  \label{prop:coh-rel-entr-a-chain-rule}
  Consider three systems $X, X', X''$ with corresponding
  $\Gamma_X, \Gamma_{X'}, \Gamma_{X''} \geqslant 0$, and let
  $R_X\simeq X, R_{X'}\simeq X'$.
  Let $\sigma_X$ be a quantum state.  Let $\mathcal{E}^{(1)}_{X\to X'}$ and
  $\mathcal{E}^{(2)}_{X'\to X''}$ be two completely positive, trace-nonincreasing maps
  such that
  $\tr`[\mathcal{E}^{(2)}_{X'\to X''}`(\mathcal{E}^{(1)}_{X\to X'}`(\sigma_X))]=1$.
  Let $\epsilon,\epsilon'\geqslant 0$.  Then:
  \begin{align}
    &\DCohz[\epsilon]{*\mathcal{E}^{(1)}_{X\to X'}`(\sigma_{XR_X})}%
      {X}{X'}{\Gamma_X}{\Gamma_{X'}}
      \nonumber\\
    &\hspace*{4em}
      + ~
      \DCohz[\epsilon']{*\mathcal{E}^{(2)}_{X'\to X''}`(\rho'_{X'R_X'})}%
        {X'}{X''}{\Gamma_{X'}}{\Gamma_{X''}}
      \nonumber\\
    &\hspace*{1em}
      \leqslant ~~
      \DCohz[\epsilon+\epsilon']%
        {*\mathcal{E}^{(2)}_{X'\to X''}`(\mathcal{E}^{(1)}_{X\to X'}`(\sigma_{XR_X}))}%
        {X}{X''}{\Gamma_{X}}{\Gamma_{X''}}\ ,
  \end{align}
  where $\ket\sigma_{XR_X} = \sigma_X^{1/2}\,\ket\Phi_{X:R_X}$ and
  $\ket{\rho'}_{X'R_{X'}} =
  `(\mathcal{E}^{(1)}_{X\to{}X'}`(\sigma_X))^{1/2}\,\ket\Phi_{X':R_{X'}}$.
\end{proposition}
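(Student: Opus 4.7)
The plan is to construct a candidate for the composed optimization by composing near-optimal candidates for each sub-process. Let $\mathcal{T}^{(1)}_{X\to X'}$ with parameter $y_1$ be a near-optimal candidate in the optimization defining $\hat{D}^{\epsilon}_{X\to X'}(\mathcal{E}^{(1)}(\sigma_{XR_X})\Vert\Gamma_X,\Gamma_{X'})$, and $\mathcal{T}^{(2)}_{X'\to X''}$ with parameter $y_2$ a near-optimal candidate for the second coherent relative entropy. Set $\mathcal{T}_{X\to X''} = \mathcal{T}^{(2)}\circ\mathcal{T}^{(1)}$, which is completely positive and trace-nonincreasing. The $\Gamma$-sub-preservation condition follows immediately from chaining the individual conditions: $\mathcal{T}(\Gamma_X) \leqslant 2^{-y_1}\mathcal{T}^{(2)}(\Gamma_{X'}) \leqslant 2^{-(y_1+y_2)}\Gamma_{X''}$, using complete positivity of $\mathcal{T}^{(2)}$ to preserve operator inequalities.

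The main work is to verify that $\mathcal{T}(\sigma_{XR_X})$ lies within purified distance $\epsilon+\epsilon'$ of the composed process matrix $\mathcal{E}^{(2)}(\mathcal{E}^{(1)}(\sigma_{XR_X}))$. By the triangle inequality for the purified distance, this quantity is bounded by the sum of $P\bigl(\mathcal{T}^{(2)}(\mathcal{T}^{(1)}(\sigma_{XR_X})),\mathcal{T}^{(2)}(\mathcal{E}^{(1)}(\sigma_{XR_X}))\bigr)$ and $P\bigl(\mathcal{T}^{(2)}(\mathcal{E}^{(1)}(\sigma_{XR_X})),\mathcal{E}^{(2)}(\mathcal{E}^{(1)}(\sigma_{XR_X}))\bigr)$. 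The first term is at most $\epsilon$ by monotonicity of the purified distance under the CP trace-nonincreasing map $\mathcal{T}^{(2)}$ combined with near-optimality of $\mathcal{T}^{(1)}$.

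The second term is the main obstacle: the $\epsilon'$-closeness of $\mathcal{T}^{(2)}$ to $\mathcal{E}^{(2)}$ is guaranteed only on the canonical purification $\ket{\rho'}_{X'R_{X'}}$ of $\rho^{(1)}_{X'} = \mathcal{E}^{(1)}(\sigma_X)$ used to define the second coherent relative entropy, whereas here one must apply $\mathcal{T}^{(2)}$ and $\mathcal{E}^{(2)}$ to the extension $\mathcal{E}^{(1)}(\sigma_{XR_X})$, whose reference system $R_X$ differs from $R_{X'}$ and which is generally mixed. To bridge this, I would employ an Uhlmann-type argument: pick any purification $\ket{\rho^{(1)}}_{X'R_XE}$ of $\mathcal{E}^{(1)}(\sigma_{XR_X})$ into a sufficiently large environment $E$. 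Since both $\ket{\rho^{(1)}}_{X'R_XE}$ and $\ket{\rho'}_{X'R_{X'}}$ are purifications of the same reduced state $\rho^{(1)}_{X'}$ on $X'$, Uhlmann's theorem provides an isometry $V\colon\mathscr{H}_{R_{X'}}\to\mathscr{H}_{R_X}\otimes\mathscr{H}_E$ acting only on the reference side such that $\ket{\rho^{(1)}}_{X'R_XE} = (\Ident_{X'}\otimes V)\ket{\rho'}_{X'R_{X'}}$. Because $V$ commutes with any map acting on $X'$, we get $\mathcal{T}^{(2)}(\mathcal{E}^{(1)}(\sigma_{XR_X})) = \tr_E\bigl[(\Ident_{X''}\otimes V)\,\mathcal{T}^{(2)}(\rho'_{X'R_{X'}})\,(\Ident_{X''}\otimes V)^\dagger\bigr]$ and similarly for $\mathcal{E}^{(2)}$; monotonicity of the purified distance under the common isometry-then-partial-trace channel then bounds the second term by $\epsilon'$. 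Combining the two bounds shows that $\mathcal{T}$ is a feasible candidate in the optimization defining $\hat{D}^{\epsilon+\epsilon'}_{X\to X''}$ with extracted value $y_1+y_2$; taking the supremum over the choices of $y_1$ and $y_2$ completes the proof.
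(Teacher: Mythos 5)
Your proof is correct and takes essentially the same route as the paper's: compose near-optimal candidates, chain the $\Gamma$-sub-preservation inequalities, and split the closeness requirement by the triangle inequality, handling the reference-system mismatch via an Uhlmann isometry relating the canonical purification $\ket{\rho'}_{X'R_{X'}}$ to a purification of $\mathcal{E}^{(1)}(\sigma_{XR_X})$, then invoking monotonicity of the purified distance. The paper phrases the bridging isometry in terms of a Stinespring dilation of $\mathcal{E}^{(1)}$ rather than applying Uhlmann directly to an arbitrary purification of $\mathcal{E}^{(1)}(\sigma_{XR_X})$, but this is a cosmetic difference.
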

\begin{proof}[*prop:coh-rel-entr-a-chain-rule]
  Let $\mathcal{T}^{(1)}_{X\to X'}$, $y_1$ be optimal choices
  in~\eqref{eq:coh-rel-entr-def} for
  $\DCohz[\epsilon]{*\mathcal{E}^{(1)}_{X\to{}X'}`(\sigma_{XR_X})}%
  {X}{X'}{\Gamma_X}{\Gamma_{X'}}$, and let $\mathcal{T}^{(2)}_{X\to X'}$, $y_2$
  be optimal choices for
  $\DCohz[\epsilon']{*\mathcal{E}^{(2)}_{X'\to{}X''}`(\rho'_{X'R_X'})}%
  {X'}{X''}{\Gamma_{X'}}{\Gamma_{X''}}$.
  Let $V_{X\to X'E}$ be a Stinespring dilation of $\mathcal{E}^{(1)}_{X\to X'}$,
  such that
  $\mathcal{E}^{(1)}_{X\to X'}`(\cdot) = \tr_E`[ V_{X\to X'E}\, `(\cdot)
  V^\dagger ]$.  Now, as two different purifications of
  $\mathcal{E}^{(1)}_{X\to X'}`(\sigma_{X})={\rho'}_{X'}$, there must exist a
  partial isometry $W_{R_{X'}\to R_XE}$ such that
  $V_{X\to X'E}\, \ket\sigma_{XR_X} = W_{R_{X'}\to R_XE}
  \ket{\rho'}_{X'R_{X'}}$.  Define
  $\mathcal{F}_{R_{X'}\to R_X}`(\cdot) =
  \tr_E`(W_{R_{X'}\to{}R_XE}\,`(\cdot)\,W^\dagger)$, and note that
  $\mathcal{E}^{(1)}_{X\to X'}`(\sigma_{XR_X}) =
  \mathcal{F}_{R_{X'}\to{}R_X}`(\rho'_{X'R_{X'}})$.
  Now, let
  $\mathcal{T}_{X\to X''} = \mathcal{T}^{(2)}_{X'\to X''} \circ
  \mathcal{T}^{(1)}_{X\to X'}$, and note that
  \begin{align}
    &P`\big[\mathcal{T}_{X\to X''}`(\sigma_{XR_X}),
      \mathcal{E}^{(2)}_{X'\to X''}`(\mathcal{E}^{(1)}_{X\to X'}`(\sigma_{XR_X}))]
      \nonumber\\
    &\hspace*{2em}
      \leqslant P`\big[
        \mathcal{T}^{(2)}_{X'\to X''}`( \mathcal{T}^{(1)}_{X\to X'}`(\sigma_{XR_X}) ),
        \mathcal{T}^{(2)}_{X'\to X''}`( \mathcal{E}^{(1)}_{X\to X'}`(\sigma_{XR_X}) ) ]
      \nonumber\\
    &\hspace*{3em}+ P`\big[
        \mathcal{T}^{(2)}_{X'\to X''}`( \mathcal{E}^{(1)}_{X\to X'}`(\sigma_{XR_X}) ) ,
        \mathcal{E}^{(2)}_{X'\to X''}`( \mathcal{E}^{(1)}_{X\to X'}`(\sigma_{XR_X}) )
      ]
      \nonumber\\
    &\hspace*{2em}
      \leqslant P`\big[
         \mathcal{T}^{(1)}_{X\to X'}`(\sigma_{XR_X}),
         \mathcal{E}^{(1)}_{X\to X'}`(\sigma_{XR_X}) ]
      \nonumber\\
    &\hspace*{3em}+ P`\big[
        \mathcal{T}^{(2)}_{X'\to X''}`( {\rho'}_{X'R_{X'}} ) ,
        \mathcal{E}^{(2)}_{X'\to X''}`( {\rho'}_{X'R_{X'}} )
      ]
      \nonumber\\
    &\hspace*{2em}\leqslant \epsilon + \epsilon'\ .
  \end{align}
  where in second inequality we have used twice the fact that the purified
  distance cannot decrease under application of a completely positive,
  trace-nonincreasing map, and that
  $\mathcal{E}^{(1)}_{X\to X'}`(\sigma_{XR_X}) =
  \mathcal{F}_{R_{X'}\to{}R_X}`({\rho'}_{X'R_{X'}})$.
  Observe finally that
  \begin{multline}
    \mathcal{T}_{X\to X''}(\Gamma_X) =
    \mathcal{T}^{(2)}_{X'\to X''}`(\mathcal{T}^{(1)}_{X\to X'}`(\Gamma_X))
    \leqslant 2^{-y_1}\,\mathcal{T}^{(2)}_{X'\to X''}`(\Gamma_{X'})
    \\
    \leqslant 2^{-y_1-y_2}\,\Gamma_{X''}\ ,
  \end{multline}
  proving that $\mathcal{T}_{X\to X''}$, $y=y_1+y_2$ are valid optimization
  candidates in~\eqref{eq:coh-rel-entr-def} for $\DCohz[\epsilon+\epsilon']%
  {*\mathcal{E}^{(2)}_{X'\to{}X''}`(\mathcal{E}^{(1)}_{X\to{}X'}`(\sigma_{XR_X}))}%
  {X}{X''}{\Gamma_{X}}{\Gamma_{X''}}$, proving the claim.
\end{proof}

\begin{corollary}[Chain rule in terms of states]
  \label{cor:coh-rel-entr-chain-rule-states}
  Consider systems $A,B,C$ and $R_A\simeq A$, $R_B\simeq B$.  Let
  $\Gamma_{C}\geqslant 0$, $\Gamma_{AB}\geqslant 0$ and write
  $\Gamma_A = \tr_B`[\Gamma_{AB}]$.  Let $\tau_{CR_AR_B}$ be any tripartite
  state. Then, for $\epsilon,\epsilon'\geqslant 0$,
  \begin{multline}
    \DCohz[\epsilon]{\rho}{A}{AB}{\Gamma_A}{\Gamma_{AB}}
    + \DCohz[\epsilon']{*\tau_{CR_AR_B}}{AB}{C}{\Gamma_{AB}}{\Gamma_{C}}
    \\
    \leqslant \DCohz[\epsilon+\epsilon']{\tau}{A}{C}{\Gamma_{A}}{\Gamma_{C}}\ ,
  \end{multline}
  where
  $\rho_{ABR_A} =
  \tr_{R_B}`[\tau_{R_AR_B}^{1/2}\,\Phi_{AB:R_AR_B}\,\tau_{R_AR_B}^{1/2}]$.
\end{corollary}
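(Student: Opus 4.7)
The plan is to reduce this to \autoref{prop:coh-rel-entr-a-chain-rule} by identifying $X = A$, $X' = AB$, $X'' = C$, and exhibiting an input state $\sigma_A$ together with two completely positive maps $\mathcal{E}^{(1)}_{A\to AB}$ and $\mathcal{E}^{(2)}_{AB\to C}$ whose process matrices are exactly $\rho_{ABR_A}$ and $\tau_{CR_AR_B}$. First I would set $\sigma_A = t_{R_A\to A}(\tau_{R_A})$ so that the standard purification $\ket\sigma_{AR_A} = \sigma_A^{1/2}\ket\Phi_{A:R_A}$ has $\sigma_{R_A}=\tau_{R_A}$. Partial-tracing the formula defining $\rho_{ABR_A}$ over $AB$ gives $\tr_{AB}\rho_{ABR_A}=\tau_{R_A}=\sigma_{R_A}$, which certifies the existence of a CP, trace-preserving-on-$\mathrm{supp}(\sigma_A)$ map $\mathcal{E}^{(1)}$ with $\mathcal{E}^{(1)}(\sigma_{AR_A}) = \rho_{ABR_A}$ (one may take the Choi matrix $\sigma_{R_A}^{-1/2}\rho_{ABR_A}\sigma_{R_A}^{-1/2}$ on the relevant supports).

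Next I would compute $\rho'_{AB} := \mathcal{E}^{(1)}(\sigma_A) = \tr_{R_A}\rho_{ABR_A} = t_{R_AR_B\to AB}(\tau_{R_AR_B})$ by tracing out $R_A$ in the definition of $\rho_{ABR_A}$ (using $\tr_{AB}\Phi_{AB:R_AR_B}=\Ident_{R_AR_B}$). The key technical identity to exploit is that, for any operator $X_{AB}$, $X_{AB}\ket\Phi_{AB:R_AR_B} = (t_{AB\to R_AR_B}(X_{AB}))\ket\Phi_{AB:R_AR_B}$; applying this with $X=\rho'^{1/2}_{AB}$ yields
\begin{align}
\rho'^{1/2}_{AB}\,\Phi_{AB:R_AR_B}\,\rho'^{1/2}_{AB}
= \tau_{R_AR_B}^{1/2}\,\Phi_{AB:R_AR_B}\,\tau_{R_AR_B}^{1/2} .
\end{align}
In particular, $\rho'^{1/2}_{AB}\ket\Phi_{AB:R_AR_B}$ is the standard purification of $\rho'_{AB}$ on the reference system $R_{X'} = R_AR_B$, matching the convention used in \autoref{prop:coh-rel-entr-a-chain-rule}. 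Since the reduced state on $R_AR_B$ of $\tau_{CR_AR_B}$ is $\tau_{R_AR_B} = \rho'_{R_AR_B}$, there exists a CP map $\mathcal{E}^{(2)}_{AB\to C}$ whose process matrix on input $\rho'_{AB}$ with reference $R_AR_B$ is exactly $\tau_{CR_AR_B}$.

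Finally I would verify that the composed process matrix on input $\sigma_A$ with reference $R_A$ is $\tau_{CR_A}$. Taking the partial trace over $R_B$ of $\rho'^{1/2}_{AB}\,\Phi_{AB:R_AR_B}\,\rho'^{1/2}_{AB}$ and using the identity above gives $\rho_{ABR_A}$, so
\begin{align}
\mathcal{E}^{(2)}\bigl(\mathcal{E}^{(1)}(\sigma_{AR_A})\bigr)
= \mathcal{E}^{(2)}(\rho_{ABR_A})
= \tr_{R_B}\tau_{CR_AR_B}
= \tau_{CR_A},
\end{align}
where we used that $\mathcal{E}^{(2)}$ commutes with $\tr_{R_B}$ (as $R_B$ is a spectator). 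In particular $\tr[\mathcal{E}^{(2)}(\mathcal{E}^{(1)}(\sigma_A))] = \tr\tau_C = 1$, so the hypotheses of \autoref{prop:coh-rel-entr-a-chain-rule} are met and the claimed inequality follows directly.

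The main obstacle is not any deep argument but rather pinning down the bookkeeping: checking that the $R_AR_B$ reference for $X'=AB$ appearing implicitly in the second CRE on the left-hand side really coincides with the $R_{X'}$ used in the proposition, and that the definition of $\rho_{ABR_A}$ given in the statement is precisely the partial trace over $R_B$ of the canonical purification of $\rho'_{AB}$. Both reduce to the one identity $\rho'^{1/2}_{AB}\ket\Phi_{AB:R_AR_B}=\tau_{R_AR_B}^{1/2}\ket\Phi_{AB:R_AR_B}$ noted above.
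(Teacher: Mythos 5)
Your argument is correct and is essentially the paper's own proof: same identification $X=A$, $X'=AB$, $X''=C$, same input state $\sigma_A=t_{R_A\to A}(\tau_{R_A})$, same reduction to \autoref{prop:coh-rel-entr-a-chain-rule}. The only stylistic difference is that the paper writes out the Choi-type expressions for $\mathcal{E}^{(1)}$ and $\mathcal{E}^{(2)}$ explicitly, whereas you invoke their existence via reduced-state matching and isolate the single transpose identity $\rho'^{1/2}_{AB}\ket\Phi_{AB:R_AR_B}=\tau^{1/2}_{R_AR_B}\ket\Phi_{AB:R_AR_B}$ that makes the bookkeeping go through.
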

\begin{proof}[*cor:coh-rel-entr-chain-rule-states]
  Define systems $X = A$, $X'=AB$ and $X'' = C$.  Let
  \begin{subequations}
    \begin{align}
      \mathcal{E}^{(1)}_{X\to X'}`(\cdot)
      &=
        \tr_{R_A}`[\rho_{R_A}^{-1/2}\,\rho_{ABR_A}\,\rho_{R_A}^{-1/2}\,
        t_{A\to{}R_A}`(\cdot)]\ ;
        \\
      \mathcal{E}^{(2)}_{X'\to X''}(\cdot)
      &=
        \tr_{R_AR_B}`[\tau_{R_AR_B}^{-1/2}\,\tau_{CR_AR_B}\,\tau_{R_AR_B}^{-1/2}\,
        t_{AB\to R_AR_B}`(\cdot) ]\ .
    \end{align}
  \end{subequations}
  These mappings are trace nonincreasing.  Let
  $\sigma_X = t_{R_X\to X}`(\tau_{R_X}) = t_{R_X\to X}`(\rho_{R_X})$. We see
  that
  $\mathcal{E}^{(2)}_{X'\to X''}(\mathcal{E}^{(1)}_{X\to X'}`(\sigma_X)) =
  \mathcal{E}^{(2)}_{X'\to X''}(\rho_{AB}) = \mathcal{E}^{(2)}_{X'\to{}X''}(
  t_{R_AR_B\to AB}`(\tau_{R_AR_B}) ) = \tau_C$ which has unit trace as required.
  Furthermore, let
  $\ket\sigma_{XR_X} = \sigma_X^{1/2}\,\ket\Phi_{X:R_X} =
  \sigma_A^{1/2}\,\ket\Phi_{A:R_A}$ and
  $\ket{\rho'}_{X'R_{X'}} = (\rho_{AB}^{1/2})\,\ket\Phi_{AB:R_AR_B} =
  (\tau_{R_AR_B}^{1/2})\,\ket\Phi_{ABR_AR_B}$.  Now calculate
  \begin{multline}
    \mathcal{E}^{(1)}_{X\to X'}`(\sigma_{XR_X}) =
    \Pi^{\rho_{R_A}}_{R_A} \,
    \tr_{\tilde{R}_A}`[\rho_{AB\tilde{R}_A}\,t_{A\to\tilde{R}_A}`(\Phi_{A:R_A})]
    \, \Pi^{\rho_{R_A}}_{R_A}
    \\ = \rho_{ABR_A}\ ,
  \end{multline}
  as well as
  \begin{multline}
    \mathcal{E}^{(2)}_{X'\to X''}`(\rho'_{X'R_{X'}}) = \\
    \Pi^{\tau_{R_AR_B}}_{R_AR_B} \,
    \tr_{\tilde{R}_A\tilde{R}_B}`[ \tau_{C\tilde{R}_A\tilde{R}_B} \,
    t_{AB\to\tilde{R}_A\tilde{R}_B}`(\Phi_{AB:R_AR_B}) ] \,
    \Pi^{\tau_{R_AR_B}}_{R_AR_B}
    \\ = \tau_{CR_AR_B}\ ,
  \end{multline}
  and, since
  $\mathcal{E}^{(1)}_{X\to X'}`(\sigma_{XR_X}) = \rho_{ABR_A} =
  \tr_{R_B}`[\rho'_{ABR_AR_B}]$,
  \begin{align}
    &\mathcal{E}^{(2)}_{X'\to X''}`(\mathcal{E}^{(1)}_{X\to X'}`(\sigma_{XR_X}))
    \nonumber\\
    &\hspace*{3em}=
      \tr_{R_B}`\big[ \mathcal{E}^{(2)}_{X'\to X''}`(
      \rho'_{ABR_AR_B}
    ) ]
    \nonumber\\
    &\hspace*{3em}=
      \tau_{CR_A}\ .
  \end{align}
  All conditions for \autoref{prop:coh-rel-entr-a-chain-rule} are fulfilled, and the claim
  follows.
\end{proof}

\subsection{Alternative smoothing of the
  coherent relative entropy}

There is another possible way to define the smooth coherent relative entropy
(i.e., for $\epsilon>0$), based on optimizing its non-smooth version (for
$\epsilon=0$) over all states which are $\epsilon$-close to the requested state.
This smoothing method is the method used traditionally in the smooth entropy
framework~\cite{PhDRenner2005_SQKD,PhDTomamichel2012,%
  BookTomamichel2016_Finite}.  The disadvantage of this alternative definition
is that it can no longer be formulated as a semidefinite program.  However, in
the regime of small $\epsilon$, it turns out that both definitions are equivalent
up to factors which depend only on $\epsilon$, and which do not scale with the
dimension of the system (\autoref{prop:cohrelentr-2-equiv} below).  In
particular, both quantities behave in the same way in the i.i.d.\@ limit.

\begin{thmheading}{Alternative smoothing}
  For a normalized state $\rho_{X'R_X}$, positive semidefinite
  $\Gamma_X, \Gamma_{X'}$, and for $\epsilon\geqslant 0$, we define the quantity
  \begin{multline}
    \DCohx[\epsilon]{\rho}{X}{X'}{\Gamma_X}{\Gamma_{X'}}
    \\= \max_{\hat\rho_{X'R_X}\approx_\epsilon\rho_{X'R_X}}
      \DCohz{\hat\rho}{X}{X'}{\Gamma_X}{\Gamma_{X'}}\ ,
      \label{eq:coh-rel-entr-def-smooth-alt}
  \end{multline}
  where the maximization in~\eqref{eq:coh-rel-entr-def-smooth-alt} is taken
  over (normalized) quantum states which are in the support of
  $\Gamma_X\otimes\Gamma_{X'}$ and which are close to $\rho_{X'R_X}$ in the
  purified distance, $P`(\hat\rho_{X'R_X},\rho_{X'R_X}) \leqslant \epsilon$.
\end{thmheading}

Some properties of $\DCohz[\epsilon]{\rho}{X}{X'}{\Gamma_{X}}{\Gamma_{X'}}$
carry over immediately to
$\DCohx[\epsilon]{\rho}{X}{X'}{\Gamma_X}{\Gamma_{X'}}$, which we summarize here
without explicit proof.  These propositions are straightforwardly proven by
applying the relevant property to the inner coherent relative entropy
in~\eqref{eq:coh-rel-entr-def-smooth-alt}.

\begin{proposition}[cf.~\autoref{prop:coh-rel-entr-trivial-bounds}]
  \noproofref
  \label{prop:coh-rel-entr-smooth-alt-trivial-bounds}
  For any $0\leqslant\epsilon<1$,
  \begin{subequations}
    \label{eq:prop-coh-rel-entr-smooth-alt-trivial-bounds}
  \begin{align}
    \DCohx[\epsilon]{\rho}{X}{X'}{\Gamma_X}{\Gamma_{X'}}
    &\geqslant - \log \tr\Gamma_{X} - \log\, \norm{\Gamma_{X'}^{-1}}_\infty\ ;
    \label{eq:prop-coh-rel-entr-smooth-alt-trivial-bounds-upper-bound}\\
    \DCohx[\epsilon]{\rho}{X}{X'}{\Gamma_X}{\Gamma_{X'}}
    &\leqslant \log\,\norm{\Gamma_X^{-1}}_\infty + \log\tr\Gamma_{X'}\ .
    \label{eq:prop-coh-rel-entr-smooth-alt-trivial-bounds-lower-bound}
  \end{align}
  \end{subequations}
\end{proposition}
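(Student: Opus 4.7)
My plan is to derive both inequalities of~\eqref{eq:prop-coh-rel-entr-smooth-alt-trivial-bounds} as immediate corollaries of \autoref{prop:coh-rel-entr-trivial-bounds} applied at smoothing parameter $\epsilon=0$. The key observation is that the alternative smoothing~\eqref{eq:coh-rel-entr-def-smooth-alt} is defined by a maximization of the \emph{non-smooth} coherent relative entropy $\DCohz{\hat\rho}{X}{X'}{\Gamma_X}{\Gamma_{X'}}$ over states $\hat\rho_{X'R_X}$ which lie in the support of $\Gamma_X\otimes\Gamma_{X'}$ and are $\epsilon$-close to $\rho_{X'R_X}$ in purified distance. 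Consequently, the $-\log(1-\epsilon^2)$ correction appearing on the right-hand side of~\eqref{eq:prop-coh-rel-entr-trivial-bounds} never enters the calculation, because it is always evaluated at zero.

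For the first inequality, I would note that $\hat\rho_{X'R_X}=\rho_{X'R_X}$ is itself a feasible point of the maximization in~\eqref{eq:coh-rel-entr-def-smooth-alt} for every $\epsilon\geqslant 0$: the support hypothesis is built into the setup, and $P(\rho_{X'R_X},\rho_{X'R_X})=0\leqslant\epsilon$. This yields $\DCohx[\epsilon]{\rho}{X}{X'}{\Gamma_X}{\Gamma_{X'}}\geqslant\DCohz{\rho}{X}{X'}{\Gamma_X}{\Gamma_{X'}}$, and the first inequality of \autoref{prop:coh-rel-entr-trivial-bounds} at $\epsilon=0$ then gives the desired $-\log\tr\Gamma_X-\log\,\norm{\Gamma_{X'}^{-1}}_\infty$.

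For the second inequality, I would apply the second inequality of \autoref{prop:coh-rel-entr-trivial-bounds} at $\epsilon=0$ to each feasible $\hat\rho_{X'R_X}$ in the maximization. The resulting bound $\log\,\norm{\Gamma_X^{-1}}_\infty+\log\tr\Gamma_{X'}$ depends neither on $\hat\rho$ nor on $\epsilon$, so it also bounds the maximum. No technical obstacle arises: the entire argument collapses to two one-line invocations of \autoref{prop:coh-rel-entr-trivial-bounds}, together with the trivial observation that $\hat\rho=\rho$ is always an admissible smoothing.
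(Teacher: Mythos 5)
Your proof is correct and takes exactly the approach the paper intends: the paper states \autoref{prop:coh-rel-entr-smooth-alt-trivial-bounds} without explicit proof, noting that it follows "by applying the relevant property to the inner coherent relative entropy in~\eqref{eq:coh-rel-entr-def-smooth-alt}," which is precisely your two-line argument of plugging $\hat\rho=\rho$ for the lower bound and bounding every feasible $\hat\rho$ uniformly via \autoref{prop:coh-rel-entr-trivial-bounds} at $\epsilon=0$ for the upper bound. Your observation that the $-\log(1-\epsilon^2)$ correction drops out because the inner quantity is always the non-smooth one is the right explanation for why these bounds come out cleaner than in \autoref{prop:coh-rel-entr-trivial-bounds}.
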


\begin{proposition}[cf.~\autoref{prop:coh-rel-entr-scaling-Gamma}]
  \noproofref
  \label{prop:coh-rel-entr-smooth-alt-scaling-Gamma}
  For any $a, b\geqslant 0$, 
  \begin{multline}
    \DCohx[\epsilon]\rho{X}{X'}{a\Gamma_X}{b\Gamma_{X'}}
    \\
    = \DCohx[\epsilon]\rho{X}{X'}{\Gamma_X}{\Gamma_{X'}} + \log \frac b a\ .
  \end{multline}
\end{proposition}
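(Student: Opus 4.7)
The plan is to prove this by applying the scaling property of the non-smooth coherent relative entropy (Proposition~\ref{prop:coh-rel-entr-scaling-Gamma}) directly inside the maximization in the definition~\eqref{eq:coh-rel-entr-def-smooth-alt}. The key observation is that scaling $\Gamma_X \to a\Gamma_X$ and $\Gamma_{X'} \to b\Gamma_{X'}$ by positive constants does not change their supports, so the feasible set of states $\hat\rho_{X'R_X}$ in the smoothing maximization (those supported on $\Gamma_X\otimes\Gamma_{X'}$ and $\epsilon$-close to $\rho_{X'R_X}$) is identical in both problems.

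I would proceed as follows. First, expand the left-hand side using the definition:
\begin{align}
\DCohx[\epsilon]{\rho}{X}{X'}{a\Gamma_X}{b\Gamma_{X'}}
= \max_{\hat\rho_{X'R_X}\approx_\epsilon\rho_{X'R_X}}
\DCohz{\hat\rho}{X}{X'}{a\Gamma_X}{b\Gamma_{X'}}\ ,
\end{align}
noting that the support of the candidate states is unchanged under the rescaling. Then apply Proposition~\ref{prop:coh-rel-entr-scaling-Gamma} pointwise inside the maximum to replace the integrand by $\DCohz{\hat\rho}{X}{X'}{\Gamma_X}{\Gamma_{X'}} + \log(b/a)$. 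Since $\log(b/a)$ is a constant independent of $\hat\rho_{X'R_X}$, it pulls out of the maximization, leaving exactly $\DCohx[\epsilon]{\rho}{X}{X'}{\Gamma_X}{\Gamma_{X'}} + \log(b/a)$.

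There is essentially no obstacle here: the argument reduces to the trivial fact that adding a state-independent constant commutes with taking a maximum. The only mild subtlety worth checking is the support condition on admissible $\hat\rho_{X'R_X}$, which as noted is preserved because $a, b > 0$ implies that $\Gamma_X$ and $a\Gamma_X$ (and $\Gamma_{X'}$ and $b\Gamma_{X'}$) have identical supports. This matches the general remark in the excerpt that such propositions follow "straightforwardly" by applying the relevant non-smooth property inside~\eqref{eq:coh-rel-entr-def-smooth-alt}.
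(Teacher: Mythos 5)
Your proof is correct and takes precisely the approach the paper indicates (the paper gives no explicit proof but states that these propositions "are straightforwardly proven by applying the relevant property to the inner coherent relative entropy in~\eqref{eq:coh-rel-entr-def-smooth-alt}"): you invoke \autoref{prop:coh-rel-entr-scaling-Gamma} pointwise inside the smoothing maximum and pull out the state-independent constant $\log(b/a)$. The observation that $a,b>0$ preserves the supports of the $\Gamma$ operators, and hence the feasible set of $\hat\rho_{X'R_X}$, is the only subtlety and you handle it correctly.
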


\begin{proposition}[cf.~\autoref{prop:coh-rel-entr-invar-under-isometries}]
  \noproofref
  \label{prop:coh-rel-entr-smooth-alt-invar-under-isometries}
  Let $\tilde X$, $\tilde X'$ be new systems.  Suppose there exist partial
  isometries $V_{X\to\tilde X}$ and $V'_{X'\to \tilde X'}$ such that both
  $t_{R_X\to X}`(\rho_{R_X})$ and $\Gamma_X$ are in the support of
  $V_{X\to\tilde X}$, and both $\rho_{X'}$ and $\Gamma_{X'}$ are in the support
  of $V'_{X'\to\tilde X'}$.  Then
  \begin{multline}
    \DCohx[\epsilon]{*`(V'\otimes V)\, \rho_{X'R_X}\, `(V'\otimes V)^\dagger}%
    {\tilde X}{\tilde X'}{V\Gamma_X V^\dagger}{V'\Gamma_{X'} V'^\dagger}
    \\
    = \DCohx[\epsilon]{\rho}{X}{X'}{\Gamma_X}{\Gamma_{X'}}\ .
  \end{multline}
\end{proposition}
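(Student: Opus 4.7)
The plan is to reduce the claim directly to \autoref{prop:coh-rel-entr-invar-under-isometries}, the analogous invariance statement for the non-smooth quantity, by setting up a bijection between the two smoothing balls induced by $V \otimes V'$. First I would unpack the definitions on both sides: each is a supremum of the non-smooth coherent relative entropy over normalized states $\hat\rho$ (resp.\@ $\hat\rho'$) that are $\epsilon$-close in purified distance to the fixed reference state, and whose partial transpose is supported inside the appropriate tensor product of $\Gamma$ operators. Since $V$ and $V'$ are partial isometries whose supports contain $\Gamma_X$ and $\Gamma_{X'}$ respectively, the transformed operators $V\Gamma_X V^\dagger$ and $V'\Gamma_{X'} V'^\dagger$ are supported in the images of $V$ and $V'$, so every feasible $\hat\rho'$ on the right-hand side is automatically supported in the image of $V' \otimes V$.

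Next I would exhibit the map $\hat\rho \longleftrightarrow \hat\rho' := (V'\otimes V)\,\hat\rho\,(V'\otimes V)^\dagger$ and argue that it is a bijection between the two feasible sets. The forward direction is clear; for the reverse direction I would invoke the support observation just described to unambiguously set $\hat\rho := (V'\otimes V)^\dagger \hat\rho' (V'\otimes V)$. Because $V$ and $V'$ act isometrically on the relevant subspaces, this transformation preserves both the fidelity and the trace, and therefore preserves the generalized purified distance; hence $\hat\rho \approx_\epsilon \rho_{X'R_X}$ if and only if $\hat\rho'$ is $\epsilon$-close to $(V'\otimes V)\,\rho_{X'R_X}\,(V'\otimes V)^\dagger$, and the two $\epsilon$-balls match up.

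Finally, applying \autoref{prop:coh-rel-entr-invar-under-isometries} to each paired $(\hat\rho, \hat\rho')$ shows that the non-smooth coherent relative entropies agree term-by-term along this bijection, so taking suprema yields the desired equality of smoothed quantities.

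The only step I expect to require real care is the support claim in the first paragraph—namely, that every feasible $\hat\rho'$ on the transformed side is supported in the image of $V' \otimes V$, so that the bijection is well-defined on the full optimization domain. Once this is nailed down, the isometric invariance of the purified distance and the invocation of \autoref{prop:coh-rel-entr-invar-under-isometries} are routine, and the proposition follows without further work.
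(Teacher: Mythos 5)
Your proof is correct and is essentially the approach the paper intends; the paper explicitly remarks (just above this proposition) that these statements "are straightforwardly proven by applying the relevant property to the inner coherent relative entropy" in the defining supremum, which is precisely the bijection-on-the-$\epsilon$-ball argument you carry out. The support observation you flag — that the explicit feasibility constraint (smoothing only over states supported inside the tensor product of the $\Gamma$ supports) forces any candidate on the transformed side into the range of $V'\otimes V$, so that the pull-back is well-defined and the isometric preservation of purified distance applies — is exactly the right point to nail down, and together with \autoref{prop:coh-rel-entr-invar-under-isometries} it closes the argument.
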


We now give a loose equivalent of
\autoref{prop:coh-rel-entr-mapping-gibbs-states-onto-each-other} for the
alternative smoothing of the coherent relative entropy.  The error term is
relatively loose (it scales proportionally to $n$ and to $\epsilon$), and it
does not disappear in the i.i.d.\@ limit unless the limit $\epsilon\to0$ is
taken explicitly.  For this reason, for small $\epsilon$, it might be
advantageous to use
\autoref{prop:coh-rel-entr-mapping-gibbs-states-onto-each-other} in conjunction
with \autoref{prop:cohrelentr-2-equiv}.

\begin{proposition}
  \label{prop:coh-rel-entr-smooth-cohrelentr-with-reduced-proj-of-Gamma}
  Let $P_X, P'_{X'}$ be projectors such that $[\Gamma_X,P_X]=0$ and
  $[\Gamma_{X'},P'_{X'}]=0$.  Let $\rho_{X'R_X}$ be such that
  $\rho_{R_X} = t_{X\to R_X}`(P_X\Gamma_XP_X/\tr P_X\Gamma_X)$ and
  $\rho_{X'} = P'_{X'}\Gamma_{X'} P'_{X'}/\tr P'_{X'}\Gamma_{X'}$.  Let
  $\epsilon\geqslant 0$. Then
  \begin{align}
    \DCohx[\epsilon]\rho{X}{X'}{\Gamma_X}{\Gamma_{X'}}
    = \log\frac{\tr P'_{X'}\Gamma_{X'}}{\tr P_X\Gamma_X}
    + f`*(\epsilon,\Gamma_X,\Gamma_{X'})\ ,
  \end{align}
  where the error term $f`*(\epsilon,\Gamma_X,\Gamma_{X'})$ is bounded as
  \begin{align}
    0 \leqslant f`*(\epsilon,\Gamma_X,\Gamma_{X'})
    \leqslant f_0`*(\epsilon,\Gamma_X) + f_0`*(\epsilon,\Gamma_{X'})\ ,
  \end{align}
  where
  $f_0`*(\epsilon,\Gamma) = \epsilon\log`*(\rank\Gamma-1) + \epsilon
  \norm{\log\Gamma}_\infty + h`(\epsilon)$ with the binary entropy
  $h`(\epsilon)=-\epsilon\log\epsilon-(1-\epsilon)\log(1-\epsilon)$.
\end{proposition}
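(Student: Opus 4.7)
The plan is to prove the two sandwich inequalities $0 \leq f(\epsilon, \Gamma_X, \Gamma_{X'})$ and $f(\epsilon, \Gamma_X, \Gamma_{X'}) \leq f_0(\epsilon, \Gamma_X) + f_0(\epsilon, \Gamma_{X'})$ separately. The lower bound is immediate: since $P(\rho, \rho) = 0 \leq \epsilon$, the state $\rho$ itself is a valid candidate in the smoothing optimization, so $\DCohx[\epsilon]\rho{X}{X'}{\Gamma_X}{\Gamma_{X'}} \geq \DCohz{\rho}{X}{X'}{\Gamma_X}{\Gamma_{X'}}$. Evaluating the right-hand side via \autoref{prop:coh-rel-entr-mapping-gibbs-states-onto-each-other} at $\epsilon = 0$ yields exactly $\log(\tr P'_{X'}\Gamma_{X'}) - \log(\tr P_X\Gamma_X)$, establishing the lower bound.

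For the upper bound, I would take any $\hat\rho_{X'R_X}$ with $P(\hat\rho, \rho) \leq \epsilon$ and control $\DCohz{\hat\rho}{X}{X'}{\Gamma_X}{\Gamma_{X'}}$. The key structural observation is that, by \autoref{prop:coh-rel-entr-battery-state-calc-rel-entr}, the reference marginals satisfy $D(\rho_X \| \Gamma_X) = -\log\tr(P_X\Gamma_X)$ and $D(\rho_{X'} \| \Gamma_{X'}) = -\log\tr(P'_{X'}\Gamma_{X'})$, so the reference value is the asymptotic-equipartition-style difference $D(\rho_X \| \Gamma_X) - D(\rho_{X'} \| \Gamma_{X'})$. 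The strategy is then to derive a single-shot upper bound $\DCohz(\hat\rho) \leq D(\hat\rho_X\|\Gamma_X) - D(\hat\rho_{X'}\|\Gamma_{X'}) + O(\epsilon)$ by perturbing the optimal dual candidates of the SDP in \autoref{prop:coh-rel-entr-SDP}: for $\rho$ the optimum is attained at $\omega_{X'} = P'_{X'}/\tr(P'_{X'}\Gamma_{X'})$, $X_{R_X} = 0$, and $\mu = \tr(P_X\Gamma_X)/\tr(P'_{X'}\Gamma_{X'})$, and one keeps $\omega_{X'}$ but absorbs the $O(\epsilon)$ perturbation of the dual constraint into a small positive correction $X_{R_X}$ (whose trace contributes an $\epsilon$-scale term to the objective).

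The second ingredient is a continuity bound. I would invoke an Alicki--Fannes--Winter-type estimate for the relative entropy asserting $|D(\omega\|\Gamma) - D(\omega'\|\Gamma)| \leq \epsilon\log(\rank\Gamma - 1) + \epsilon\,\|\log\Gamma\|_\infty + h(\epsilon) = f_0(\epsilon, \Gamma)$ whenever $P(\omega, \omega') \leq \epsilon$. By monotonicity of the purified distance under the partial trace, both marginals of $\hat\rho$ are $\epsilon$-close to the corresponding reference marginals, so the bound applied separately gives $|D(\hat\rho_X\|\Gamma_X) - D(\rho_X\|\Gamma_X)| \leq f_0(\epsilon, \Gamma_X)$ and $|D(\hat\rho_{X'}\|\Gamma_{X'}) - D(\rho_{X'}\|\Gamma_{X'})| \leq f_0(\epsilon, \Gamma_{X'})$. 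Combining with the single-shot bound yields the claimed additive error.

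The main obstacle is the first step of the upper bound: converting $\DCohz(\hat\rho)$, which genuinely depends on the full process matrix $\hat\rho_{X'R_X}$, into a difference of marginal relative entropies with an $\epsilon$-linear correction that is free of spurious dimension factors (anything worse than what $f_0$ already carries would ruin the bound). The delicate point is that the reference dual constraint was tight on the support of $\rho_{X'R_XE}$, so the correction $X_{R_X}$ must absorb the perturbation $\hat\rho - \rho$ while keeping $\tr X_{R_X} = O(\epsilon)$ uniformly in the dimension. A potentially cleaner alternative route is to use \autoref{prop:coh-rel-entr-restrict-Gamma-eigenspaces} together with a gentle-measurement-style argument to first replace $\hat\rho$ by a nearby state supported within the subspace projected by $P_X \otimes P'_{X'}$ at $O(\epsilon)$ cost, and then apply the exact formula from \autoref{prop:coh-rel-entr-mapping-Gibbs-to-arb} on the reduced subspace before invoking continuity.
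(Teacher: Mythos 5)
Your lower bound and the continuity step are both correct and match the paper's approach: taking $\rho$ itself as the smoothing candidate and invoking \autoref{prop:coh-rel-entr-mapping-gibbs-states-onto-each-other} at $\epsilon=0$ gives the lower bound, and the continuity estimate you invoke is exactly the paper's \autoref{util:continuity-of-relative-entropy-in-1st-arg}, applied to the two marginals whose purified distance to $\rho$'s marginals is controlled by monotonicity under partial trace. The value of the potential difference as $D(\rho_X\|\Gamma_X) - D(\rho_{X'}\|\Gamma_{X'})$ via \autoref{prop:coh-rel-entr-battery-state-calc-rel-entr} is also as in the paper.

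The gap is in your upper bound, and it is where you yourself flag the ``delicate point.'' You want a single-shot bound of the form
$\DCohz{\hat\rho}{X}{X'}{\Gamma_X}{\Gamma_{X'}} \leqslant \DD{\hat\rho_X}{\Gamma_X} - \DD{\hat\rho_{X'}}{\Gamma_{X'}} + O(\epsilon)$,
and you propose to prove it by perturbing the dual SDP candidates of \autoref{prop:coh-rel-entr-SDP} and ``absorbing the $O(\epsilon)$ perturbation'' into $X_{R_X}$. That does not work without a dimension factor: the dual constraint~\eqref{eq:SDP-coh-rel-entr-conddual-Z} must absorb the operator change in $\hat\rho_{R_X}^{1/2}\hat\rho_{X'R_XE}\hat\rho_{R_X}^{1/2}$ by a term of the \emph{product} form $\Ident_{X'E}\otimes X_{R_X}$. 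Dominating an arbitrary positive perturbation of trace $O(\epsilon)$ living on $X'R_XE$ by an operator of this product form generically forces $\tr X_{R_X}$ to scale like $\epsilon\cdot\abs{X'E}$, exactly the dimension factor that would ruin the bound. Your alternative route (projecting onto $P_X\otimes P'_{X'}$ and then using \autoref{prop:coh-rel-entr-mapping-Gibbs-to-arb}) also does not close the gap, because after projection $\hat\rho$'s reference marginal is still only $\epsilon$-close to, not exactly equal to, the restricted Gibbs state required by that proposition, and the output side would come out as a $\Dmax$ rather than a $D$.

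What you missed is that the needed ingredient is already in the paper, with \emph{no} $O(\epsilon)$ correction: \autoref{prop:coh-rel-entr-upper-bound-diff-D} states $\DCohz{\hat\rho}{X}{X'}{\Gamma_X}{\Gamma_{X'}} \leqslant \DD{\hat\sigma_X}{\Gamma_X} - \DD{\hat\rho_{X'}}{\Gamma_{X'}}$ for \emph{every} process matrix $\hat\rho$. Its proof uses a genuinely different mechanism: it dilates the optimal $\Gamma$-sub-preserving implementation to a trace-preserving $\Gamma$-preserving map on a larger system (\autoref{prop:dilation-of-Gsp-to-Gp} and \autoref{prop:equiv-battery-models}) and then applies the data-processing inequality for the quantum relative entropy. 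If you substitute that proposition for your dual-perturbation step, the rest of your argument goes through exactly as you sketch, and reproduces the paper's proof.
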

\begin{proof}[*prop:coh-rel-entr-smooth-cohrelentr-with-reduced-proj-of-Gamma]
  The lower bound is given simply as
  \begin{multline}
    \DCohx[\epsilon]\rho{X}{X'}{\Gamma_X}{\Gamma_{X'}}
    \\
    \geqslant \DCohx[\epsilon=0]\rho{X}{X'}{\Gamma_X}{\Gamma_{X'}}
    = \log \frac{\tr P'_{X'}\Gamma_{X'}}{\tr P_X\Gamma_X}\ ,
  \end{multline}
  where the latter expression is provided by
  \autoref{prop:coh-rel-entr-mapping-gibbs-states-onto-each-other}, recalling
  that for $\epsilon=0$ both versions of the smooth coherent relative entropy
  coincide exactly.  For the upper bound, let $\hat\rho_{X'R_X}$ be the optimal
  state such that $P`*(\hat\rho_{X'R_X},\rho_{X'R_X})\leqslant\epsilon$ and
  \begin{align}
    \DCohx[\epsilon]\rho{X}{X'}{\Gamma_X}{\Gamma_{X'}}
    &= \DCohx{\hat\rho}{X}{X'}{\Gamma_X}{\Gamma_{X'}}\ ,
      \label{eq:prop-coh-rel-entr-smooth-cohrelentr-with-reduced-proj-of-Gamma-upbound-calc-0b}
  \end{align}
  and invoke \autoref{prop:coh-rel-entr-upper-bound-diff-D} to get
  \begin{align}
    \text{\eqref{eq:prop-coh-rel-entr-smooth-cohrelentr-with-reduced-proj-of-Gamma-upbound-calc-0b}}
    &\leqslant \DD{\hat\rho_X}{\Gamma_X} - \DD{\hat\rho_{X'}}{\Gamma_{X'}}\ .
      \label{eq:prop-coh-rel-entr-smooth-cohrelentr-with-reduced-proj-of-Gamma-upbound-calc-1}
  \end{align}
  We have
  $D`*(\hat\rho_{R_X},\rho_{R_X})\leqslant P`*(\hat\rho_{R_X},\rho_{R_X})
  \leqslant \epsilon$ and analogously
  $D`*(\hat\rho_{X'},\rho_{X'})\leqslant\epsilon$.  By continuity of the
  relative entropy given in
  \autoref{util:continuity-of-relative-entropy-in-1st-arg}, we get
  \begin{subequations}
    \begin{align}
      \abs*{\DD{\hat\rho_{R_X}}{\Gamma_{R_X}} - \DD{\rho_{R_X}}{\Gamma_{R_X}}}
      &\leqslant f_0`*(\epsilon,\Gamma_{R_X})\ ; \nonumber\\
        \abs*{\DD{\hat\rho_{X'}}{\Gamma_{X'}} - \DD{\rho_{X'}}{\Gamma_{X'}}}
      &\leqslant  f_0`*(\epsilon,\Gamma_{X'})\ ,
    \end{align}
  \end{subequations}
  where $f_0`*(\epsilon,\Gamma)$ is as given in the claim.  On the other hand,
  \begin{align}
    \DD{\rho_{R_X}}{\Gamma_{R_X}} - \DD{\rho_{X'}}{\Gamma_{X'}}
    = \log\tr P'_{X'}\Gamma_{X'} - \log\tr P_{R_X}\Gamma_{R_X}\ ,
  \end{align}
  because $\rho_{R_X} = P_{R_X}\Gamma_{R_X} P_{R_X}/\tr P_{R_X}\Gamma_{R_X}$ and
  $\rho_{X'}= P'_{X'}\Gamma_{X'}P'_{X'}/\tr P'_{X'}\Gamma_{X'}$, as given
  by~\eqref{eq:coh-rel-entr-expression-relative-entropy-for-projected-Gamma-states}.
  This means that
  \begin{align}
    \text{\eqref{eq:prop-coh-rel-entr-smooth-cohrelentr-with-reduced-proj-of-Gamma-upbound-calc-1}}
    &\leqslant \log\tr\frac{P'_{X'}\Gamma_{X'}}{P_{R_X}\Gamma_{R_X}}
      + f_0`*(\epsilon,\Gamma_{R_X}) + f_0`*(\epsilon,\Gamma_{X'})
      \ .
      \tag*\qedhere
  \end{align}
\end{proof}

Crucially, this alternative smoothing method does not alter the quantity much in
the regime of small $\epsilon$.  In fact, both versions of the smooth coherent
relative entropy are related by a simple adjustment of the $\epsilon$ parameter,
and up to an error term which depends only on $\epsilon$ and doesn't scale with
the system size.
\begin{proposition}
  \label{prop:cohrelentr-2-equiv}
  Let $\rho_{X'R_X}$ be any quantum state.  Then for any
  $\epsilon\geqslant 0$ with $3\sqrt{\epsilon}<1$,
  \begin{align}
    \DCohx[\epsilon]{\rho}{X}{X'}{\Gamma_X}{\Gamma_{X'}}
    \leqslant
    \DCohz[3\sqrt{\epsilon}]{\rho}{X}{X'}{\Gamma_X}{\Gamma_{X'}}\ .
    \label{eq:prop-cohrelentr-2-equiv-xlez}
  \end{align}
  Conversely, for any $\epsilon>0$ with $9 \epsilon^{1/4}<1$,
  \begin{multline}
    \DCohz[\epsilon]{\rho}{X}{X'}{\Gamma_X}{\Gamma_{X'}}
    \\
    \leqslant
    \DCohx[9\epsilon^{1/4}]{\rho}{X}{X'}{\Gamma_X}{\Gamma_{X'}}
    + \log`(1/\epsilon)\ .
    \label{eq:prop-cohrelentr-2-equiv-zlex}    
  \end{multline}
\end{proposition}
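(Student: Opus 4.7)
The plan is to prove both inequalities directly from the definitions, using the triangle inequality for the purified distance together with a Powers--Stoermer operator inequality. For \eqref{eq:prop-cohrelentr-2-equiv-xlez}, I will take a state $\hat\rho_{X'R_X}$ achieving the maximum in the definition of $\DCohx[\epsilon]{\rho}{X}{X'}{\Gamma_X}{\Gamma_{X'}}$ and an (approximately) optimal pair $(\mathcal{T}_{X\to X'},y)$ in the non-smooth optimization~\eqref{eq:coh-rel-entr-def} at $\hat\rho$---so $\mathcal{T}(\hat\sigma_{XR_X}) = \hat\rho_{X'R_X}$ exactly, with $\hat\sigma_X = t_{R_X\to X}(\hat\rho_{R_X})$, and $\mathcal{T}(\Gamma_X) \leqslant 2^{-y}\Gamma_{X'}$. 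I will then use this same $\mathcal{T}$ as a feasible candidate in the SDP defining $\DCohz[3\sqrt\epsilon]{\rho}{X}{X'}{\Gamma_X}{\Gamma_{X'}}$, so the only nontrivial condition to verify is $P(\mathcal{T}(\sigma_{XR_X}), \rho_{X'R_X}) \leqslant 3\sqrt\epsilon$ with the canonical $\ket\sigma_{XR_X} = \sigma_X^{1/2}\ket\Phi_{X:R_X}$.

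By triangle inequality and monotonicity of $P$ under trace-nonincreasing completely positive maps, this is bounded by $P(\sigma_{XR_X},\hat\sigma_{XR_X}) + P(\hat\rho,\rho) \leqslant P(\sigma_{XR_X},\hat\sigma_{XR_X}) + \epsilon$. The key step is then to bound $P(\sigma_{XR_X},\hat\sigma_{XR_X})$ between the two canonical purifications via $\ket\Phi$: computing $\langle\sigma|\hat\sigma\rangle = \tr(\sigma_X^{1/2}\hat\sigma_X^{1/2})$ and expanding the squared norm gives $\|\,\ket\sigma - \ket{\hat\sigma}\,\|^2 = \|\sigma_X^{1/2} - \hat\sigma_X^{1/2}\|_2^2$, which by the Powers--Stoermer inequality $\|A^{1/2} - B^{1/2}\|_2^2 \leqslant \|A-B\|_1$ for $A,B \geqslant 0$ is at most $\|\sigma_X - \hat\sigma_X\|_1$. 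Since the partial transpose preserves fidelity, $P(\sigma_X,\hat\sigma_X) = P(\rho_{R_X},\hat\rho_{R_X}) \leqslant \epsilon$, whence $\|\sigma_X - \hat\sigma_X\|_1 \leqslant 2\epsilon$ and $P(\sigma_{XR_X},\hat\sigma_{XR_X}) \leqslant \sqrt{2\epsilon}$. The total $\sqrt{2\epsilon}+\epsilon$ is at most $3\sqrt\epsilon$ whenever $3\sqrt\epsilon < 1$, establishing \eqref{eq:prop-cohrelentr-2-equiv-xlez}.

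For \eqref{eq:prop-cohrelentr-2-equiv-zlex} the plan is to convert an optimal pair $(\mathcal{T},y)$ in the SDP for $\DCohz[\epsilon]{\rho}{X}{X'}{\Gamma_X}{\Gamma_{X'}}$ into an alt-candidate by a small perturbation. Concretely, I would set $\mathcal{T}' = (1-\delta)\,\mathcal{T} + \delta\,\mathcal{S}$ with small mixing parameter $\delta = \delta(\epsilon)$ and the trivially $\Gamma$-preserving trace-preserving reset channel $\mathcal{S}(\cdot) = \tr(\cdot)\,\Gamma_{X'}/\tr\Gamma_{X'}$, and define $\hat\rho := \mathcal{T}'(\sigma_{XR_X})$, which is now normalized and $O(\sqrt\delta)+\epsilon$-close to $\rho$ in purified distance. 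The $\Gamma$-condition on $\mathcal{T}'$ becomes $\mathcal{T}'(\Gamma_X) \leqslant (2^{-y} + \delta\,\tr\Gamma_X/\tr\Gamma_{X'})\,\Gamma_{X'}$, and bounding $2^y$ by the trivial upper bound from \autoref{prop:coh-rel-entr-trivial-bounds} together with the choice $\delta \sim \epsilon$ gives a slope degradation of order $\log(1/\epsilon)$. Composing this with a second Powers--Stoermer square-root loss as in part~(a)---needed to translate the SDP comparison (on the fixed input $\sigma_X$) into the canonical alt-comparison (on the induced input $\hat\sigma_X = t(\hat\rho_{R_X})$)---then yields the announced $\epsilon^{1/4}$ scaling after optimization in $\delta$.

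The main obstacle I anticipate in part~(b) is ensuring that $\mathcal{T}'$ actually witnesses $\DCohz{\hat\rho}{X}{X'}{\Gamma_X}{\Gamma_{X'}}$, i.e.\ that $\mathcal{T}'(\hat\sigma_{XR_X}) = \hat\rho_{X'R_X}$ with the canonical $\hat\sigma_X = t_{R_X\to X}(\hat\rho_{R_X})$: the SDP leaves $\mathcal{T}$ unconstrained off $\mathrm{supp}(\sigma_X)$ while the alt-definition rigidly links the two. I expect this to be handled by additionally replacing $\sigma_X$ with $(1-\delta)\sigma_X + \delta\,\Gamma_X/\tr\Gamma_X$ before the construction so that the support issues become immaterial, with the single parameter $\delta$ tuned to trade the slope penalty $\log(1/\delta)$ against the $O(\sqrt\delta)$ distance loss to produce the advertised constants $9$ and $\log(1/\epsilon)$.
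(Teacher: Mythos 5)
Your argument for \eqref{eq:prop-cohrelentr-2-equiv-xlez} is correct and is essentially the paper's argument: take the optimizer $\hat\rho$ of $\DCohx[\epsilon]$, take the optimal exact map $\mathcal{T}$ for $\DCohz{\hat\rho}{}{}{}{}$, and use it as a feasible candidate for $\DCohz[3\sqrt\epsilon]{\rho}{}{}{}{}$, bounding $P(\mathcal{T}(\sigma_{XR_X}),\rho_{X'R_X})$ by triangle inequality plus a bound on the distance between the two canonical purifications. The paper uses its own \autoref{lemma:cohrelentr-2-std-purifs-close} for that last step; your Powers--Stoermer route gives $P(\sigma_{XR_X},\hat\sigma_{XR_X})\leqslant\sqrt{2\epsilon}$ rather than $2\sqrt\epsilon$, a harmless variant.

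For \eqref{eq:prop-cohrelentr-2-equiv-zlex} your outline does not close, and you have in fact correctly identified the obstacle without resolving it. The alt-definition requires the map to be exactly trace-preserving on the support of the induced input $\hat\sigma_X = t(\hat\rho_{R_X})$; equivalently (Prop.~\ref{prop:coh-rel-entr-nonsmooth-SDP}, condition~\eqref{eq:SDP-coh-rel-entr-nonsmooth-cond-correctmapping}), one needs $\tr_{X'}T'$ to act as the identity on $\operatorname{supp}\hat\rho_{R_X}$. Your $\mathcal{T}'=(1-\delta)\mathcal{T}+\delta\mathcal{S}$ has $\mathcal{T}'^\dagger(\Ident)=(1-\delta)Q+\delta\Ident$, which is never $\Ident$ on a nontrivial subspace unless $Q$ already was. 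Your proposed repair---mixing the input toward $\Gamma_X/\tr\Gamma_X$---makes the problem strictly harder, since a full-rank input would force $\mathcal{T}'$ to be trace-preserving everywhere. The paper instead projects the input onto the eigenspace $P^\eta$ of $Q=\hat{\mathcal{T}}^\dagger(\Ident)$ where the eigenvalues are $\geqslant\eta$, and adds the tailored correction $R^\eta=P^\eta-P^\eta Q P^\eta$ so that the modified map is exactly trace-preserving on $P^\eta$. A second gap in your sketch is the $\Gamma$-bound: your $\mathcal{S}$ contributes $\delta\,\tr\Gamma_X/\tr\Gamma_{X'}$ to $\alpha$, and comparing this to $\hat\alpha$ through the trivial bounds of \autoref{prop:coh-rel-entr-trivial-bounds} leaves a factor like $\|\Gamma_X^{-1}\|_\infty\tr\Gamma_X$, which scales with dimension and eigenvalue spread, so "$\delta\sim\epsilon$ gives slope degradation $\log(1/\epsilon)$" does not hold universally. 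The paper avoids this with \autoref{lemma:cohrelentr-2-trdecrease-max-work-extract}, namely $\hat\alpha\geqslant\tr(Q\Gamma_X)/\tr\Gamma_{X'}$, combined with $R^\eta\leqslant(\eta^{-1}-1)Q$, yielding the dimension-free bound $\tilde\alpha\leqslant\eta^{-1}\hat\alpha$. Without those two structural ingredients---the $P^\eta/R^\eta$ construction and the $\hat\alpha\geqslant\tr(Q\Gamma_X)/\tr\Gamma_{X'}$ inequality---the reset-channel mixing cannot produce \eqref{eq:prop-cohrelentr-2-equiv-zlex}.
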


We need to prove the following lemma first.
\begin{lemma}
  \label{lemma:cohrelentr-2-trdecrease-max-work-extract}
  Let $\Gamma_X,\Gamma_{X'}\geqslant 0$.  Let $\mathcal{T}_{X\to X'}$ be a
  completely positive, trace-nonincreasing map.  Let
  $Q_X = \mathcal{T}^\dagger(\Ident_{X'})$.  Assume that the support of $Q_X$
  lies within the support of $\Gamma_X$, and that
  $\mathcal{T}_{X\to X'}`(\Gamma_X)$ lies within the support of
  $\Gamma_{X'}$. Then
  \begin{align}
    \min\,`\big{\alpha:~ \mathcal{T}_{X\to X'}(\Gamma_X) \leqslant \alpha\,\Gamma_{X'} }
    \geqslant \frac{\tr`(Q_X\Gamma_X)}{\tr\Gamma_{X'}}\ .
  \end{align}
\end{lemma}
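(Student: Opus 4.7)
The plan is to exploit the duality $\tr(A\,\mathcal{T}^\dagger(B)) = \tr(\mathcal{T}(A)\,B)$ together with the fact that $A \leqslant B$ implies $\tr A \leqslant \tr B$ whenever both operators are bounded. This reduces the inequality to a one-line trace computation.

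More concretely, I would first rewrite $\tr(Q_X\Gamma_X)$ in a more convenient form. By definition of $Q_X$ and of the adjoint channel,
\begin{align}
  \tr(Q_X\,\Gamma_X)
  = \tr`\big(\mathcal{T}^\dagger_{X\leftarrow X'}(\Ident_{X'})\,\Gamma_X)
  = \tr`\big(\Ident_{X'}\,\mathcal{T}_{X\to X'}(\Gamma_X))
  = \tr`\big(\mathcal{T}_{X\to X'}(\Gamma_X))\ .
\end{align}
Next, take any $\alpha$ that is feasible in the minimization on the left-hand side, i.e.\ any $\alpha$ such that $\mathcal{T}_{X\to X'}(\Gamma_X) \leqslant \alpha\,\Gamma_{X'}$. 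Since both sides are positive semidefinite operators, taking the trace of this operator inequality preserves it, giving $\tr(\mathcal{T}_{X\to X'}(\Gamma_X)) \leqslant \alpha\,\tr\Gamma_{X'}$. Combined with the identity above, this yields $\alpha \geqslant \tr(Q_X\Gamma_X)/\tr\Gamma_{X'}$, provided $\tr\Gamma_{X'} > 0$ (which is needed implicitly, since otherwise there is no $\Gamma_{X'}$ to speak of; the support hypothesis on $\mathcal{T}(\Gamma_X)$ ensures the ratio makes sense).

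Since this bound holds for every feasible $\alpha$, it holds for the infimum (and, if the minimum is attained by some $\alpha^*$ in the sense used later, for $\alpha^*$). The two support hypotheses — that $\operatorname{supp}(Q_X)\subseteq\operatorname{supp}(\Gamma_X)$ and $\operatorname{supp}(\mathcal{T}(\Gamma_X))\subseteq\operatorname{supp}(\Gamma_{X'})$ — are only used to guarantee that the set of feasible $\alpha$ is nonempty and that both $\tr(Q_X\Gamma_X)$ and the ratio $\tr(Q_X\Gamma_X)/\tr\Gamma_{X'}$ are well-defined finite quantities; they do not enter the core calculation. There is no real obstacle here beyond correctly unpacking the adjoint identity and invoking positivity of the trace on positive semidefinite operators.
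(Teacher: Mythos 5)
Your proof is correct and follows essentially the same argument as the paper: both reduce the claim to the identity $\tr(Q_X\Gamma_X)=\tr(\mathcal{T}^\dagger(\Ident_{X'})\Gamma_X)=\tr(\mathcal{T}(\Gamma_X))$ and then pair the operator inequality $\mathcal{T}(\Gamma_X)\leqslant\alpha\Gamma_{X'}$ against $\Gamma_{X'}/\tr\Gamma_{X'}$. The only cosmetic difference is that the paper first rewrites the optimal $\alpha$ as the operator norm $\norm{\Gamma_{X'}^{-1/2}\mathcal{T}(\Gamma_X)\Gamma_{X'}^{-1/2}}_\infty$ and then lower-bounds it by tracing against a density operator, whereas you bypass that detour and simply take the trace of the feasibility constraint directly — marginally more elementary, same substance.
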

\begin{proof}[*lemma:cohrelentr-2-trdecrease-max-work-extract]
  The optimal $\alpha$ is given by
  \begin{align}
    \alpha
    &= \norm[\big]{ \Gamma_{X'}^{-1/2}\, \mathcal{T}_{X\to X'}(\Gamma_X)\,
      \Gamma_{X'}^{-1/2} }_\infty
      \nonumber\\
    &\geqslant \tr`*[ `*(\frac{\Gamma_{X'}}{\tr \Gamma_{X'}}) \,
      \Gamma_{X'}^{-1/2}\, \mathcal{T}_{X\to X'}(\Gamma_X)\,
      \Gamma_{X'}^{-1/2} ]
      \nonumber\\
    &= `*(\tr\Gamma_{X'})^{-1}\, \tr`*[ \mathcal{T}_{X\to X'}(\Gamma_X) ]
      = `*(\tr\Gamma_{X'})^{-1}\, \tr`*[ Q_X \Gamma_X ]\ ,
  \end{align}
  where we have used that
  $\norm{\cdot}_\infty = \max_\gamma \tr`[\gamma\, (\cdot)]$ with $\gamma$
  ranging over all density operators.
\end{proof}

\begin{proof}[*prop:cohrelentr-2-equiv]
  First we prove~\eqref{eq:prop-cohrelentr-2-equiv-xlez}.  Let
  $\tilde\rho_{X'R}$ be the state which achieves the optimum in
  $\DCohx[\epsilon]{\rho}{X}{X'}{\Gamma_X}{\Gamma_{X'}}$, and let $T_{X'R_X}$,
  $\alpha$ be optimal primal variables for
  $2^{-\DCohz{\tilde\rho}{X}{X'}{\Gamma_X}{\Gamma_{X'}}}$ for the semidefinite
  program in \autoref{prop:coh-rel-entr-nonsmooth-SDP}, and denote by
  $\mathcal{T}_{X\to X'}$ the completely positive, trace-nonincreasing map
  corresponding to $T_{X'R_X}$.  Write
  $\ket{\sigma}_{XR_X} = \rho_{R_X}^{1/2}\,\ket\Phi_{X:R_X}$ and
  $\ket{\tilde\sigma}_{XR_X} = \tilde\rho_{R_X}^{1/2}\,\ket\Phi_{X:R_X}$.  Since
  $P`(\sigma_{R_X},\tilde\sigma_{R_X})\leqslant\epsilon$, we see using
  \autoref{lemma:cohrelentr-2-std-purifs-close} that
  $P(\sigma_{XR_X}, \tilde\sigma_{XR_X})\leqslant 2\sqrt\epsilon$.  The purified
  distance may not increase under the action of the trace nonincreasing map
  $\mathcal{T}_{X\to X'}$, and hence
  \begin{align}
    \hspace*{3em}&\hspace*{-3em}
    P`*(\mathcal{T}_{X\to X'}`*(\sigma_{XR_X}), \rho_{X'R_X})
    \nonumber\\
    &\leqslant
    P`*(\mathcal{T}_{X\to X'}`*(\sigma_{XR_X}), \tilde\rho_{X'R_X})
    + P(\tilde\rho_{X'R_X}, \rho_{X'R_X})
      \nonumber\\
    &\leqslant
      P`*(\mathcal{T}_{X\to X'}`*(\sigma_{XR_X}),
      \mathcal{T}_{X\to X'}`*(\tilde\sigma_{XR_X}))
      + \epsilon
      \nonumber\\
    &\leqslant 2\sqrt\epsilon + \epsilon \leqslant 3\sqrt\epsilon\ .
  \end{align}
  Hence, $\mathcal{T}_{X\to X'}$ is an optimization candidate for
  $2^{-\DCohz[3\sqrt{\epsilon}]{\rho}{X}{X'}{\Gamma_X}{\Gamma_{X'}}}$ with the
  same achieved value, proving~\eqref{eq:prop-cohrelentr-2-equiv-xlez}.

  Now we prove~\eqref{eq:prop-cohrelentr-2-equiv-zlex}.  In the remainder of
  this proof, we use the shorthand system name $R \equiv R_X$.  Let
  $\hat{T}_{X'RE}$, $\hat\alpha$ be the optimal primal variables for
  $2^{-\DCohz[\epsilon]{*\rho_{X'R}}{X}{X'}{\Gamma_X}{\Gamma_{X'}}}$.  We will
  construct an explicit $\tilde\rho_{X'R}$ close to $\rho_{X'R}$, as well as
  feasible candidates $\tilde{T}_{X'R}$ and $\tilde\alpha$ in the optimization
  for $\DCohx{*\tilde\rho_{X'R}}{X}{X'}{\Gamma_X}{\Gamma_{X'}}$ as given by
  \autoref{prop:coh-rel-entr-nonsmooth-SDP}.  We denote by
  $\hat{\mathcal{T}}_{X\to X'}$ the completely positive, trace nonincreasing map
  corresponding to $\hat{T}_{X'RE}$.
  Let $\sigma_{XR} = \rho_R^{1/2}\,\Phi_{X:R}\,\rho_R^{1/2}$ and define
  \begin{align}
      \hat\rho_{X'R} &= \hat{\mathcal{T}}_{X\to X'}(\sigma_{XR})\ .
  \end{align}
  By assumption, $P`(\hat\rho_{X'R}, \rho_{X'R})\leqslant\epsilon$ and hence
  $D`(\hat\rho_{X'R}, \rho_{X'R})\leqslant\epsilon$.  Using the fact that
  $\hat\rho_{X'R} = \rho_{X'R} + \Delta^+_{X'R} - \Delta^-_{X'R}$ for some
  $\Delta^\pm_{X'R}\geqslant 0$ with
  $\tr\Delta^+_{X'R} = \tr\Delta^-_{X'R} = D(\hat\rho_{X'R},\rho_{X'R})
  \leqslant \epsilon$, we see that
  $\tr\hat\rho_{X'R} \geqslant \tr\rho_{X'R} - \epsilon = 1 - \epsilon$.

  Define $Q = \hat{\mathcal{T}}^\dagger(\Ident_{X'})$ and note that
  $0\leqslant Q\leqslant \Ident$.  For any $0<\eta<1$, let $P^\eta$ be the
  projector onto the eigenspaces of $Q$ for which the corresponding eigenvalues
  are greater or equal to $\eta$; clearly $P^\eta$ and $Q$ commute.  Define
  $R^\eta =P^\eta - P^\eta QP^\eta$, noting that $P^\eta, Q, R^\eta$ all
  commute.  By definition, $\eta P^\eta \leqslant P^\eta QP^\eta$, and hence
  $R^\eta\leqslant (\eta^{-1}-1)\, P^\eta Q P^\eta \leqslant (\eta^{-1}-1)\, Q$.
  We may now define
  \begin{align}
    \tilde{\mathcal{T}}_{X\to X'}(\cdot) =
    \hat{\mathcal{T}}_{X\to X'}(\cdot)
    + \tr`(R^\eta \,(\cdot))\,\frac{\Gamma_{X'}}{\tr\Gamma_{X'}}\ .
  \end{align}
  The mapping $\tilde{\mathcal{T}}_{X\to X'}$ is trace non-increasing,
  \begin{align}
    \tilde{\mathcal{T}}_{X\leftarrow X'}^\dagger(\Ident_{X'})
    = Q + R^\eta 
    = P^\eta + P^{\eta,\perp}\,Q\,P^{\eta,\perp}
    \leqslant \Ident\ ,
  \end{align}
  where $P^{\eta,\perp}=\Ident - P^\eta$, keeping in mind that
  $Q = P^\eta Q P^\eta + P^{\eta,\perp} Q P^{\eta,\perp}$ and that
  $R^\eta + P^\eta Q P^\eta = P^\eta$.  Furthermore
  $\tilde{\mathcal{T}}_{X\to X'}$ is trace-preserving on the subspace spanned by
  $P^\eta$, i.e.\@
  $P^\eta\,\tilde{\mathcal{T}}^\dagger_{X\leftarrow X'}(\Ident_{X'})\,P^\eta =
  P^\eta$.  This means that for any state $\tau$ lying in the support of
  $P^\eta$, it holds that $\tr`[\tilde{\mathcal{T}}_{X\to X'}(\tau)] = 1$.
  The map $\tilde{\mathcal{T}}_{X\to X'}$ moreover satisfies
  \begin{align}
    \tilde{\mathcal{T}}_{X\to X'}`(\Gamma_X)
    &\leqslant \hat\alpha\,\Gamma_{X'} +
    \frac{\tr R^\eta \Gamma_{X}}{\tr\Gamma_{X'}}\,\Gamma_{X'}
    \nonumber\\
    &\leqslant `*(\hat\alpha + 
    `(\eta^{-1}-1)\, \frac{\tr Q\Gamma_{X}}{\tr\Gamma_{X'}})\,\Gamma_{X'}
    \leqslant \eta^{-1}\hat\alpha\,\Gamma_{X'}\ ,
  \end{align}
  where in the last inequality we have used
  \autoref{lemma:cohrelentr-2-trdecrease-max-work-extract} to see that
  $\hat\alpha \geqslant \tr(Q\Gamma_X)/\tr\Gamma_{X'}$.  We are led to define
  (surprise!) $\tilde\alpha=\eta^{-1}\hat\alpha$.
  
  It remains to find a state $\tilde{\rho}_{X'R}$ which is close to $\rho_{X'R}$
  such that
  $\tilde{\rho}_R^{1/2}\, \tilde{\mathcal{T}}_{X\to X'}`(\Phi_{X:R})\,
  \tilde{\rho}_R^{1/2} = \tilde{\rho}_{X'R}$.  First define
  \begin{align}
    \tilde\sigma_X = \frac{ P^\eta\, \sigma_X\, P^\eta }{\tr`*(P^\eta\, \sigma_X)}\ .
  \end{align}
  Observe that
  $\tr`*(P^\eta\sigma_X) \geqslant \tr`*(P^\eta Q P^\eta \sigma_X) =
  \tr`*(Q\,\sigma_X) - \tr`*(P^{\eta,\perp}\,Q\,P^{\eta,\perp} \sigma_X)
  \geqslant 1 - \epsilon - \eta$, where $P^{\eta,\perp}=\Ident - P^\eta$, using
  the fact that all eigenvalues of $Q$ within $P^{\eta,\perp}$ are less than
  $\eta$ and that
  $\tr`*(Q\,\sigma_X) = \tr`\big(\hat{\mathcal{T}}(\sigma_X)) = \tr\hat\rho_{X'}
  \geqslant 1-\epsilon$.  Then,
  using~\autoref{util:purified-distance-to-normalized-projected-normalized-state},
  \begin{align}
    P(\tilde\sigma_X,\sigma_X)
    \leqslant \frac{\sqrt{2`(\epsilon + \eta)}}{\sqrt{1 - \epsilon - \eta}}
    =: \bar\epsilon\ .
  \end{align}
  Write
  $\tilde\sigma_{XR} = \tilde\sigma_X^{1/2}\,\Phi_{X:R}\,\tilde\sigma_X^{1/2}$.
  Using \autoref{lemma:cohrelentr-2-std-purifs-close} we see that
  $P`(\tilde\sigma_{XR} , \sigma_{XR}) \leqslant
  2\sqrt{D`(\tilde\sigma_R,\rho_R)} \leqslant 2\sqrt{P`(\tilde\sigma_R,\rho_R)}
  \leqslant 2\sqrt{\bar\epsilon}$.  At this point, define
  \begin{subequations}
    \begin{align}
      \tilde\rho_{X'R} &= \tilde{\mathcal{T}}_{X\to X'}(\tilde\sigma_{XR})\ ;\\
      \bar\rho_{X'R} &= \tilde{\mathcal{T}}_{X\to X'}(\sigma_{XR})\ .
    \end{align}
  \end{subequations}
  Because $\tilde\sigma_X$ lies within the support of $P^\eta$, we have
  $\tr_{X'}\tilde\rho_{X'R} =
  \tr_{X}`\big(\tilde{\mathcal{T}}^\dagger`*(\Ident_{X'})\,\tilde\sigma_{XR}) =
  \tr_{X}`\big(\tilde{\mathcal{T}}^\dagger`*(\Ident_{X'})\,
  P^\eta\,\tilde\sigma_{XR}\,P^\eta) = \tilde\sigma_R$, and hence we have
  $\tilde{\rho}_R^{1/2}\, \tilde{\mathcal{T}}_{X\to X'}`(\Phi_{X:R})\,
  \tilde{\rho}_R^{1/2} = \tilde{\rho}_{X'R}$ as required.
  Furthermore, the purified distance cannot increase under the action of
  $\tilde{\mathcal{T}}_{X\to X'}$, so we have
  $P(\tilde\rho_{X'R}, \bar\rho_{X'R}) \leqslant 2\sqrt{\bar\epsilon}$.
  Also,
  $\bar\rho_{X'R} = \hat{\mathcal{T}}_{X\to{}X'}(\sigma_{XR}) + D_{X'R} =
  \hat\rho_{X'R} + D_{X'R}$ with
  $D_{X'R} = \tr`(R^\eta\sigma_{XR})\,`(\tr\Gamma_{X'})^{-1}\,\Gamma_{X'}$,
  noting that
  $\tr D_{X'R} \leqslant \tr`(\bar\rho_{X'R}) - \tr`(\hat\rho_{X'R}) \leqslant 1
  - (1-\epsilon) \leqslant \epsilon$; hence
  $D(\bar\rho_{X'R},\hat\rho_{X'R})\leqslant \epsilon$ and thus
  $P`(\bar\rho_{X'R},\hat\rho_{X'R})\leqslant \sqrt{2\epsilon}$.  We deduce that
  $P(\tilde\rho_{X'R},\rho_{X'R}) \leqslant P(\tilde\rho_{X'R},\bar\rho_{X'R}) +
  P(\bar\rho_{X'R},\hat\rho_{X'R}) + P(\hat\rho_{X'R},\rho_{X'R}) \leqslant
  2\sqrt{\bar\epsilon} + \sqrt{2\epsilon} + \epsilon$.

  Let's summarize: We now have a state $\tilde\rho_{X'R}$ satisfying
  $P(\tilde\rho_{X'R}, \rho_{X'R})\leqslant
  2\sqrt{\bar\epsilon}+\sqrt{2\epsilon}+\epsilon$, as well as a
  trace-nonincreasing map $\tilde{\mathcal{T}}_{X\to X'}$ satisfying
  $\tilde\rho_R^{1/2}\, \tilde{\mathcal{T}}_{X\to{}X'}`*(\Phi_{X:R})\,
  \tilde\rho_R^{1/2} = \tilde\rho_{X'R}$ and
  $\tilde{\mathcal{T}}_{X\to X'}(\Gamma_X)\leqslant
  \alpha\,\eta^{-1}\,\Gamma_{X'}$.  The claim follows by choosing
  $\eta=\epsilon$ and calculating the bounds
  $\bar\epsilon \leqslant \sqrt{8\epsilon}$ (using the assumption
  $\epsilon<1/4$) as well as
  $2\sqrt{\bar\epsilon} + \sqrt{2\epsilon} + \epsilon \leqslant `\big(4\sqrt2 +
  \sqrt2 + 1)\,\epsilon^{1/4} \leqslant 9\,\epsilon^{1/4}$.
\end{proof}

\subsection{Recovering known entropy measures}
\label{sec:recovering-known-entropies}

An interesting aspect of the coherent relative entropy is that it reduces to various
previously-known entropy measures, including the min- and max-relative
entropies~\cite{Datta2009IEEE_minmax}, as well as the conditional min- and
max-entropy~\cite{PhDRenner2005_SQKD,PhDTomamichel2012}.  These measures are already known
to be relevant in counting the work cost of specific processes in quantum
thermodynamics~\cite{Dahlsten2011NJP_inadequacy,delRio2011Nature,Aberg2013_worklike,Horodecki2013_ThermoMaj,Faist2015NatComm}.

First we present some definitions. Given a (normalized) quantum state
$\rho_{AB}$, we define the \emph{(conditional) von Neumann entropy}, the
\emph{(conditional alternative) max-entropy}, and the \emph{(conditional
  alternative) min-entropy} respectively as,%
\footnote{There exist several different variants of the min- and
  max-entropy~\cite{PhDRenner2005_SQKD,PhDTomamichel2012}; however, all the
  {max-entropies} as well as all the {min-entropies} are equivalent up to terms
  of order $\log\epsilon$ after smoothing with a parameter $\epsilon$.}
\begin{align*}
\HH[\rho]{A}[B] &= -\tr`*(\rho_{AB}\log\rho_{AB}) + \tr`*(\rho_B\log\rho_B)\ ; \\
\Hzero[\rho]{A}[B] &= \log\,\norm[\big]{\tr_A \Pi^{\rho_{AB}}_{AB}}_\infty\ ;\text{ and} \\
\Hminz[\rho]{A}[B] &= -\log\,\norm[\big]{\rho_B^{-1/2}\rho_{AB}\rho_B^{-1/2}}_\infty\ .
\end{align*}
For any $\epsilon>0$, we define the \emph{smooth (conditional alternative) max-entropy}
and \emph{smooth (conditional alternative) min-entropy} respectively as
\begin{align*}
  \Hzero[\rho][\epsilon]{A}[B]
  &= \min_{\hat\rho_{AB}\approx_\epsilon\rho_{AB}} \Hzero[\hat\rho]{A}[B]\ ; \\
  \Hminz[\rho][\epsilon]{A}[B]
  &= \max_{\hat\rho_{AB}\approx_\epsilon\rho_{AB}} \Hminz[\hat\rho]{A}[B]\ ,
\end{align*}
where the optimizations range over (normalized\footnote{One easily notices that
  the normalization of the state doesn't affect these quantities, so smoothing
  may be restricted to normalized states (in contrast to, e.g.,
  Refs.~\cite{PhDTomamichel2012,BookTomamichel2016_Finite}).}) states
$\hat\rho_{AB}$ and where $\hat\rho_{AB}\approx_\epsilon\rho_{AB}$ denotes
proximity in the purified distance, i.e.,
$P`(\hat\rho_{AB},\rho_{AB})\leqslant\epsilon$.

For a (normalized) quantum state $\rho_X$, and any $\Gamma_X\geqslant 0$, we
define the \emph{quantum relative entropy}, the \emph{relative min-entropy}, and
the \emph{relative max-entropy} respectively as,
\begin{align*}
  \DD{\rho_X}{\Gamma_X} &= \tr`*[\rho_X`*(\log_2\rho_X - \log_2\Gamma_X)]\ ; \\
  \Dminz{\rho_X}{\Gamma_X} &= -\log \tr`*[\Pi^{\rho_X}_X\Gamma_X]\ ; \\
  \Dmax{\rho_X}{\Gamma_X} &= \log\,\norm{\Gamma_X^{-1/2}\rho_X\Gamma_X^{-1/2}}_\infty\ ,
\end{align*}
recalling that $\Pi^{\rho_X}_X$ denotes the projector onto the support of
$\rho_X$.  We define the smoothed versions of the relative min- and
max-entropies as
\begin{align*}
  \Dminz[\epsilon]{\rho}{\Gamma}
  &= \max_{\hat\rho\approx_\epsilon\rho} \Dminz{\hat\rho}{\Gamma}\ ; \\
  \Dmax[\epsilon]{\rho}{\Gamma}
  &= \min_{\hat\rho\approx_\epsilon\rho} \Dmax{\hat\rho}{\Gamma}\ .
\end{align*}
where the optimizations range over normalized\footnote{These smooth quantities
  were introduced in Ref.~\cite{Datta2009IEEE_minmax} using the trace distance
  and optimizing over subnormalized states.  The two distances are tightly
  related and a simple adjustment of the $\epsilon$ parameter is required.
  Furthermore we restrict to normalized states for our convenience; the
  $\Dminz[\epsilon]{}{}$ is not affected and the $\Dmax[\epsilon]{}{}$ is at
  most shifted by a factor depending on $\log(1-\epsilon)$ only.}  states
$\hat\rho_{AB}$ such that $P`(\hat\rho_{AB},\rho_{AB})\leqslant\epsilon$.

We furthermore define the \emph{hypothesis testing relative entropy}~%
\cite{Buscemi2010IEEETIT_capacity,%
  Brandao2011IEEETIT_oneshot,%
  Wang2012PRL_oneshot,%
  Tomamichel2013_hierarchy,%
  Dupuis2013_DH,%
  Matthews2014IEEETIT_blocklength%
} for $0<\eta\leqslant1$ as
\begin{align*}
  \DHyp[\eta]{\rho}{\Gamma}
  &= -\frac1\eta \log\min_{\substack{0\leqslant Q\leqslant\Ident\\\tr`[Q\rho]\geqslant\eta}}
  \tr`*[Q\Gamma] \ .
\end{align*}

We now show that we can recover the max-entropy in the case where for both input and
output systems we have $\Gamma=\Ident$.

\begin{proposition}[Recovering the max-entropy]
  \label{prop:coh-rel-entr-recover-max-entropy}
  Let $\ket\rho_{X'R_XE}$ be any pure state on systems $R_X$, $X'$, and $E$
  with $\abs{E}\geqslant\abs{X'R_X}$. Then
  \begin{multline}
    \DCohx[\epsilon]{\rho}{X}{X'}{\Ident_X}{\Ident_{X'}}
    \\ = -\Hzero[\rho][\epsilon]{E}[X'] =
    \Hminz[\rho][\epsilon]{E}[R_X]\ .
  \end{multline}
\end{proposition}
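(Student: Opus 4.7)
My plan is to first establish the non-smooth equality
$\DCohz{\rho}{X}{X'}{\Ident_X}{\Ident_{X'}} = -\Hzero[\rho]{E}[X']$
by direct inspection of the optimization defining the coherent relative entropy, then to invoke the standard duality of the alternative min- and max-entropies for the second equality, and finally to lift both to the smooth case using Uhlmann's theorem.

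For the $\epsilon = 0$ part, I would use the reformulation in \autoref{prop:coh-rel-entr-rewrite-primal-problem-infinity-norm}, which reduces the problem to minimizing $\norm{\tr_{R_X}[T_{X'R_X}]}_\infty$ over completely positive $T_{X'R_X}$ satisfying $\tr_{X'}[T_{X'R_X}] \leqslant \Ident_{R_X}$ and $\rho_{R_X}^{1/2}\,T_{X'R_X}\,\rho_{R_X}^{1/2} = \rho_{X'R_X}$. I would then purify $T_{X'R_X}$ into $\ket{T}_{X'R_XE}$ on the same environment system $E$ as in the claim. The constraint tying $T$ to $\rho$ forces $\rho_{R_X}^{1/2}\ket{T}_{X'R_XE}$ to be a purification of $\rho_{X'R_X}$, so by Uhlmann we may arrange $\rho_{R_X}^{1/2}\ket{T}_{X'R_XE} = \ket{\rho}_{X'R_XE}$, which determines $\ket{T}$ as $\rho_{R_X}^{-1/2}\ket{\rho}_{X'R_XE}$ on the support of $\rho_{R_X}$. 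Any extension of $T_{X'R_X}$ on the orthogonal complement of $\rho_{R_X}$ only adds a positive operator to the $X'$-marginal, so the minimum is attained by the trivial extension. Writing the Schmidt decomposition $\ket{\rho}_{X'R_XE} = \sum_i \sqrt{p_i}\,\ket{i}_{R_X}\ket{\phi_i}_{X'E}$, a short calculation yields $T_{X'E} = \sum_i \proj{\phi_i}_{X'E} = \Pi^{\rho_{X'E}}_{X'E}$, so
\begin{align}
\DCohz{\rho}{X}{X'}{\Ident_X}{\Ident_{X'}}
= -\log\,\norm[\big]{\tr_E \Pi^{\rho_{X'E}}_{X'E}}_\infty
= -\Hzero[\rho]{E}[X']\,.
\end{align}

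For the second equality at $\epsilon = 0$, I would invoke the duality relation $\Hminz[\rho]{E}[R_X] = -\Hzero[\rho]{E}[X']$ valid for the pure tripartite state $\ket{\rho}_{X'R_XE}$, which is the standard duality for the alternative conditional min- and max-entropies~\cite{Tomamichel2011TIT_LeftoverHashing}.

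To promote these equalities to arbitrary $\epsilon > 0$, I would use that $\DCohx[\epsilon]$ is defined by optimizing $\DCohz{\hat\rho}{X}{X'}{\Ident_X}{\Ident_{X'}}$ over $\hat\rho_{X'R_X} \approx_\epsilon \rho_{X'R_X}$, while $\Hzero[\rho][\epsilon]{E}[X']$ is obtained by optimizing $\Hzero[\hat\rho]{E}[X']$ over $\hat\rho_{X'E} \approx_\epsilon \rho_{X'E}$. Uhlmann's theorem puts these two smoothings into bijective correspondence: any $\hat\rho_{X'R_X}$ close to $\rho_{X'R_X}$ has a purification into $E$ close to $\ket{\rho}_{X'R_XE}$ (yielding $\hat\rho_{X'E}$ close to $\rho_{X'E}$), and conversely. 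Applying the $\epsilon = 0$ identity to each $\hat\rho$ and taking the max/min then yields the claimed smooth equalities. The main technical subtlety I anticipate is checking that one may restrict to normalized $\hat\rho$ in both smoothings consistently and that the Uhlmann-matched purifications genuinely realize the extremal smoothed entropy—issues both standard in the smooth-entropy framework.
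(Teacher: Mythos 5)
Your proof is correct and takes a genuinely more streamlined route than the paper's for the non-smooth equality. The paper establishes $\DCohz{\tilde\rho}{X}{X'}{\Ident_X}{\Ident_{X'}} = -\log\,\norm[\big]{\tr_{R_X}\tilde\rho_{R_X}^{-1/2}\tilde\rho_{X'R_X}\tilde\rho_{R_X}^{-1/2}}_\infty$ by exhibiting \emph{both} a feasible primal candidate $T_{X'R_XE}=\tilde\rho_{R_X}^{-1/2}\tilde\rho_{X'R_XE}\tilde\rho_{R_X}^{-1/2}$ and a matching dual witness $(Z_{X'R_X}=\tilde\rho_{R_X}^{-1}\otimes\omega_{X'},\ X_{R_X}=0)$, invoking strong duality. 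You replace the dual-witness construction by a direct optimality argument: the constraint $\rho_{R_X}^{1/2}\,T_{X'R_X}\,\rho_{R_X}^{1/2}=\rho_{X'R_X}$ pins $T_{X'R_X}$ uniquely on the support $\Pi$ of $\rho_{R_X}$, and since $\tr_{R_X}[T]=\tr_{R_X}[\Pi T\Pi]+\tr_{R_X}[\Pi^\perp T\Pi^\perp]$ (the cross blocks vanish under the partial trace) with both terms positive semidefinite, any extension $\Pi^\perp T\Pi^\perp\neq0$ can only increase $\norm{\tr_{R_X}T}_\infty$. This legitimately forces the primal optimum without any dual argument---a nice simplification that exploits the fact that with $\Gamma_X=\Ident_X$, $\Gamma_{X'}=\Ident_{X'}$ the objective collapses to a bare infinity norm of a partial trace. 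Invoking the standard duality $\Hminz[\rho]{E}[R_X]=-\Hzero[\rho]{E}[X']$ as a black box is fine; the paper rederives it inline by passing through the complementary Schmidt bipartition, but the content is the same. For the smooth part both arguments rest on Uhlmann exactly as you propose, and the paper glosses over the same subtlety you flag (matching the smoothing over $\hat\rho_{X'R_X}$ to the smoothing over $\hat\rho_{X'E}$ through a common purification), so you are at the paper's own level of rigor there. One small cleanup worth making explicit: you purify $T_{X'R_X}$ into $\ket{T}_{X'R_XE}$ and then compute $T_{X'E}=\Pi^{\rho_{X'E}}_{X'E}$, but the quantity entering the objective is $\tr_{R_X}T_{X'R_X}$---these coincide only because $\ket{T}$ is pure, so $T_{X'}=\tr_{R_X}T_{X'R_X}=\tr_E T_{X'E}$, which is the bridge to $\norm{\tr_E\Pi^{\rho_{X'E}}}_\infty$.
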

\begin{proof}[*prop:coh-rel-entr-recover-max-entropy]
  Let $\ket{\tilde\rho}_{X'R_XE}$ be any pure quantum state.  Considering the
  semidefinite problem for
  $2^{-\DCohz{\tilde\rho}{X}{X'}{\Gamma_X}{\Gamma_{X'}}}$, let
  $T_{X'RE} =
  \tilde\rho_{R_X}^{-1/2}\tilde\rho_{X'R_XE}\tilde\rho_{R_X}^{-1/2}$.
  Conditions~\eqref{eq:SDP-coh-rel-entr-cond-trnoninc}
  and~\eqref{eq:SDP-coh-rel-entr-cond-processmatrixclose} are automatically
  satisfied.  Choosing
  $\alpha = \norm{\tr_{R_X}`[T_{X'R_X}]}_\infty =
  \norm{\tr_{R_X}\tilde\rho_{R_X}^{-1/2}\tilde\rho_{X'R_X}\tilde\rho_{R_X}^{-1/2}}_\infty$
  ensures that~\eqref{eq:SDP-coh-rel-entr-cond-alphaSGPM} is satisfied, and
  hence
  \begin{align}
    \DCohz{\tilde\rho}{X}{X'}{\Gamma_X}{\Gamma_{X'}} \geqslant 
    -\log\,\norm[\big]{
      \tr_{R_X} \tilde\rho_{R_X}^{-1/2} \tilde\rho_{X'R_X} \tilde\rho_{R_X}^{-1/2}
    }_\infty\ .
    \label{eq:coh-rel-entr-recover-max-entropy-ident-calc-first-inequality}
  \end{align}

  Now let $\omega_{X'}\geqslant0$ with $\tr\omega_{X'}=1$ such that
  $\tr`\big[\omega_{X'}\cdot\tr_R`(
  \tilde\rho_{R_X}^{-1/2}\tilde\rho_{X'R_X}\tilde\rho_{R_X}^{-1/2} )] =
  \norm[\big]{\tr_{R_X}
    \tilde\rho_{R_X}^{-1/2}\tilde\rho_{X'R_X}\tilde\rho_{R_X}^{-1/2}}_\infty$,
  and note that condition~\eqref{eq:SDP-coh-rel-entr-conddual-tromegaGamma} is
  satisfied.  Now let $X_{R_X}=0$ and
  $Z_{X'R_X} = \tilde\rho_{R_X}^{-1}\otimes\omega_{X'}$, and we see that
  \begin{align}
    \tilde\rho_{R_X}^{1/2}Z_{X'R_X}\tilde\rho_{R_X}^{1/2}
    = \Pi^{\tilde\rho_{R_X}}_{R_X}\otimes\omega_{X'}
    \leqslant \Ident_{R_X}\otimes \omega_{X'}\ .
  \end{align}
  The attained value is
  \begin{align*}
    \tr\,`\big[Z_{X'R_X}\tilde\rho_{X'R_X}]
    &= \tr\,`\big[\tilde\rho_{R_X}^{-1}\otimes\omega_{X'}\cdot\tilde\rho_{X'R_X}]\\
    &= \tr\,`\big[\omega_{X'}\cdot\tr_{R_X}`(\tilde\rho_{R_X}^{-1/2}
      \tilde\rho_{X'R_X} \tilde\rho_{R_X}^{-1/2})]\\
    &= \norm[\big]{
      \tr_{R_X}\tilde\rho_{R_X}^{-1/2}\tilde\rho_{X'R_X}\tilde\rho_{R_X}^{-1/2} }_\infty\ ,
  \end{align*}
  providing us with the opposite bound
  to~\eqref{eq:coh-rel-entr-recover-max-entropy-ident-calc-first-inequality},
  and hence proving that
  \begin{align}
    \DCohz{\tilde\rho}{X}{X'}{\Ident_X}{\Ident_{X'}} =
    -\log\,\norm[\big]{\tr_R \tilde\rho_R^{-1/2}\tilde\rho_{X'R}\tilde\rho_R^{-1/2}}_\infty\ .
    \label{eq:coh-rel-entr-recover-max-entropy-ident-calc-value-Hmin-Hmax-infnorm}
  \end{align}
  We now use this expression to show that
  \begin{align}
    \DCohz{\tilde\rho}{X}{X'}{\Ident_X}{\Ident_{X'}}
    = -\Hzero[\tilde\rho]{E}[X'] = \Hminz[\tilde\rho]{E}[R_X]\ .
    \label{eq:coh-rel-entr-recover-max-entropy-ident-calc-value-Hmin-Hmax-rhotilde}
  \end{align}
  Consider the bipartition ${EX':R}$ of the pure state
  $\ket{\tilde\rho}_{EX'R}$, and write the Schmidt decomposition
  $\ket{\tilde\rho}_{EX'R_X} =
  \tilde\rho_{EX'}^{1/2}\ket{\Phi^{\tilde\rho}}_{EX':R_X} =
  \tilde\rho_{R_X}^{1/2}\ket{\Phi^{\tilde\rho}}_{EX':R_X}$, with
  $\tr_{R_X}\Phi^{\tilde\rho}_{EX':R_X} = \Pi^{\tilde\rho_{EX'}}_{EX'}$.  Then
  \begin{align*}
    \text{\eqref{eq:coh-rel-entr-recover-max-entropy-ident-calc-value-Hmin-Hmax-infnorm}}
    &= -\log\,\norm[\big]{
      \tr_{ER_X}\tilde\rho_{R_X}^{-1/2}\tilde\rho_{EX'R_X}\tilde\rho_{R_X}^{-1/2}
      }_\infty \\
    &= -\log\,\norm[\big]{
      \tr_{ER_X} \proj\Phi^{\tilde\rho}_{EX'R_X}
      }_\infty
      \\
    &= -\log\,\norm[\big]{
      \tr_E \Pi^{\tilde\rho_{EX'}}_{EX'}
      }_\infty \\
    &= -\Hzero[\tilde\rho]{E}[X']\ .
  \end{align*}
  Similarly,
  \begin{align*}
    \text{\eqref{eq:coh-rel-entr-recover-max-entropy-ident-calc-value-Hmin-Hmax-infnorm}}
    &= -\log\,\norm[\big]{
      \tr_{ER_X}`(\tilde\rho_{R_X}^{-1/2}\tilde\rho_{EX'R_X}\tilde\rho_{R_X}^{-1/2})
      }_\infty \\
    &= -\log\,\norm[\big]{
      \tr_{X'}`(\tilde\rho_{R_X}^{-1/2}\tilde\rho_{EX'R_X}\tilde\rho_{R_X}^{-1/2})
      }_\infty \\
    &= -\log\,\norm[\big]{
      \tilde\rho_{R_X}^{-1/2}\tilde\rho_{ER_X}\tilde\rho_{R_X}^{-1/2}
      }_\infty
      = \Hminz[\tilde\rho]{E}[R_X]\ ,
  \end{align*}
  where the second equality holds because the argument of the partial trace is
  pure, and hence has the same spectrum on $ER$ as on $X'$ (by Schmidt
  decomposition).

  We now see that
  \begin{align*}
    &\DCohx[\epsilon]{\rho}{X}{X'}{\Ident_X}{\Ident_{X'}}
    \\
    &\hspace*{2em}=
      \max_{P`\big(\tilde\rho_{X'R_X},\rho_{X'R_X})\leqslant\epsilon}
      \DCohz{\tilde\rho}{X}{X'}{\Ident_X}{\Ident_{X'}}
      \\
    &\hspace*{2em}=
      \max_{P`\big(\ket{\tilde\rho}_{X'R_XE},\ket{\rho}_{X'R_XE})\leqslant\epsilon}
      \DCohz{\tilde\rho}{X}{X'}{\Ident_X}{\Ident_{X'}}
      \\
    &\hspace*{2em}=
      \max_{P`\big(\ket{\tilde\rho}_{X'R_XE},\ket{\rho}_{X'R_XE})\leqslant\epsilon}
      \Hzero[\tilde\rho]{E}[X']
      \\
    &\hspace*{2em}=
      \Hzero[\rho][\epsilon]{E}[X']\ ,
  \end{align*}
  where the second equality holds by properties of the purified distance
  (Uhlmann's theorem).  An analogous argument holds for
  $\Hminz[\rho][\epsilon]{E}[R_X]$.
\end{proof}

The min- and max-relative entropies already have known connections to
thermodynamics~\cite{Aberg2013_worklike,Horodecki2013_ThermoMaj,Brandao2015PNAS_secondlaws}
in terms of work cost of erasure and work yield of formation of a state in the presence of
a heat bath. These results are recovered here, in a fully information-theoretic context.

\begin{proposition}[Recovering the min- and max-relative entropies]
  \label{prop:coh-rel-entr-trivial-input-output}
  The min-relative entropy is recovered with a trivial output state:
  \begin{align}
      \DCohx[\epsilon]{\rho}{X}{\emptysystem}{\Gamma_X}{1}
      &= \Dminz[\epsilon]{\sigma_X}{\Gamma_X}\ ,
      \label{eq:prop-coh-rel-entr-trivial-input-output--trivial-output}
  \end{align}
  writing $\sigma_X = t_{R_X\to X}`(\rho_{R_X})$.  Furthermore the max-relative
  entropy is recovered with a trivial input state:
  \begin{align}
    \DCohx[\epsilon]{*\rho_{X'}}{\emptysystem}{X'}{1}{\Gamma_X'}
    &= -\Dmax[\epsilon]{\rho_{X'}}{\Gamma_{X'}}\ .
    \label{eq:prop-coh-rel-entr-trivial-input-output--trivial-input}
  \end{align}
\end{proposition}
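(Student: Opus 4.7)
The plan is to reduce the coherent relative entropy to its non-smooth version via the alternative smoothing definition~\eqref{eq:coh-rel-entr-def-smooth-alt}, evaluate the non-smooth quantity explicitly by specialising the semidefinite program, and then translate the outer optimisation over $\hat\rho$ to an optimisation over $\hat\sigma_X$ or $\hat\rho_{X'}$ respectively.

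For~\eqref{eq:prop-coh-rel-entr-trivial-input-output--trivial-output}, I first observe that any completely positive, trace-nonincreasing map $\mathcal{T}_{X\to\emptyset}$ is of the form $\mathcal{T}(\cdot)=\tr[Q_X(\cdot)]$ for some operator $0\leqslant Q_X\leqslant\Ident_X$. The ``process matrix'' condition in the non-smooth SDP of \autoref{prop:coh-rel-entr-nonsmooth-SDP} becomes $\tr_X[Q_X\,\sigma_{XR_X}]=\sigma_{R_X}$ (with $\sigma_X=t_{R_X\to X}(\hat\rho_{R_X})$ for the state $\hat\rho$ we are smoothing over), which, combined with $Q_X\leqslant\Ident_X$, forces $\Pi^{\sigma_X}_X Q_X \Pi^{\sigma_X}_X = \Pi^{\sigma_X}_X$; hence $Q_X=\Pi^{\sigma_X}_X + R$ for some $R\geqslant 0$ supported on the complement of $\sigma_X$. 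The constraint $\mathcal{T}(\Gamma_X)\leqslant 2^{-y}\Gamma_{X'}=2^{-y}$ then reads $\tr[Q_X\Gamma_X]\leqslant 2^{-y}$, so the optimal choice is $R=0$, giving $\DCohz{\hat\rho}{X}{\emptyset}{\Gamma_X}{1} = -\log\tr[\Pi^{\sigma_X}_X\Gamma_X]=\Dminz{\sigma_X}{\Gamma_X}$.

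For~\eqref{eq:prop-coh-rel-entr-trivial-input-output--trivial-input}, the systems $X$ and $R_X$ are both trivial, so a trace-nonincreasing CP map $\mathcal{T}_{\emptyset\to X'}$ is a subnormalised preparation $\tau_{X'}\geqslant 0$ with $\tr\tau_{X'}\leqslant 1$. The process-matrix condition in the non-smooth SDP simply pins $\tau_{X'}=\hat\rho_{X'}$, while $\mathcal{T}(\Gamma_\emptyset)=\tau_{X'}\leqslant 2^{-y}\Gamma_{X'}$ gives $-y = \log\norm{\Gamma_{X'}^{-1/2}\hat\rho_{X'}\Gamma_{X'}^{-1/2}}_\infty = \Dmax{\hat\rho_{X'}}{\Gamma_{X'}}$, so $\DCohz{\hat\rho}{\emptyset}{X'}{1}{\Gamma_{X'}} = -\Dmax{\hat\rho_{X'}}{\Gamma_{X'}}$.

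Finally, I plug these two identities into the alternative smoothing~\eqref{eq:coh-rel-entr-def-smooth-alt}. In the first case the maximisation over $\hat\rho_{R_X}$ with $P(\hat\rho_{R_X},\rho_{R_X})\leqslant\epsilon$ is, by invariance of the purified distance under the partial transpose $t_{R_X\to X}$, the same as the maximisation over $\hat\sigma_X$ with $P(\hat\sigma_X,\sigma_X)\leqslant\epsilon$, yielding $\Dminz[\epsilon]{\sigma_X}{\Gamma_X}$ by definition. In the second case the maximisation of $-\Dmax{\hat\rho_{X'}}{\Gamma_{X'}}$ over $\hat\rho_{X'}\approx_\epsilon\rho_{X'}$ is, by definition, $-\Dmax[\epsilon]{\rho_{X'}}{\Gamma_{X'}}$. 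The only step demanding some care is the support/constraint analysis in the first case, namely verifying that the equality constraint~\eqref{eq:SDP-coh-rel-entr-nonsmooth-cond-correctmapping} really collapses to $Q_X\Pi^{\sigma_X}_X=\Pi^{\sigma_X}_X$; this follows from $\tr_X[(\Ident_X-Q_X)\sigma_{XR_X}]=0$ together with positivity of $\Ident_X-Q_X$, so no genuine obstacle arises.
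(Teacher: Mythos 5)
Your proof is correct and takes a genuinely different route from the paper. The paper establishes each of the two non-smooth identities by semidefinite duality: it exhibits a primal feasible point (e.g.\ $T_{R_X}=\Pi^{\tilde\rho_{R_X}}_{R_X}$) for one inequality, and for the reverse direction constructs a sequence of dual feasible points parametrised by $\mu\to\infty$, invoking the perturbation lemma (\autoref{util:blow-muA-operator-to-try-cover-B-limit}) to show the dual objective converges to the claimed value. You instead solve the primal SDP directly by characterising the full feasible set: in the trivial-output case you observe that the conditions $Q_X\leqslant\Ident_X$ and $\tr_X[Q_X\,\sigma_{XR_X}]=\sigma_{R_X}$ jointly force $Q_X=\Pi^{\sigma_X}_X+R$ with $R\geqslant0$ supported on the orthogonal complement (the positive-semidefiniteness of $\Ident-Q$ kills the off-diagonal blocks), after which the minimisation of $\tr[Q_X\Gamma_X]$ is immediate at $R=0$; in the trivial-input case the constraints pin $T_{X'}$ entirely. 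Your approach is more elementary — it needs no dual certificates and avoids the limiting argument — at the cost of a more delicate support/block-structure analysis, which you correctly flag and which does indeed go through: the equality $\tr_X[(\Ident_X-Q_X)\,\sigma_{XR_X}]=0$ with $\Ident_X-Q_X\geqslant0$ and $\sigma_{XR_X}$ pure forces $(\Ident_X-Q_X)^{1/2}\ket\sigma_{XR_X}=0$, hence $(\Ident_X-Q_X)\,\Pi^{\sigma_X}_X=0$. Both approaches buy the same result; the paper's duality route is more systematic and generalises to the other SDP evaluations in the appendix, whereas your direct characterisation is shorter and more transparent for this particular degenerate case. The smoothing step and the invariance of purified distance and $D_{\mathrm{min},0}$ under the partial transpose $t_{R_X\to X}$ are handled correctly.
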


\begin{proof}[*prop:coh-rel-entr-trivial-input-output]
  For any state $\tilde\rho_{R_X}$, consider the semidefinite program given in
  \autoref{prop:coh-rel-entr-nonsmooth-SDP} for
  $2^{-\DCohz{\tilde\rho}{X}{\emptysystem}{\Gamma_X}{1}}$.  The choice
  $T_{R_X} = \Pi^{\tilde\rho_{R_X}}_{R_X}$ along with
  $\alpha=\tr`\Big(\Pi^{\tilde\rho_{R_X}}_{R_X}\Gamma_{R_X})$ is primal
  feasible, hence
  \begin{align}
    2^{-\DCohz{\tilde\rho}{X}{\emptysystem}{\Gamma_X}{1}}
    \leqslant 2^{-\Dminz{\tilde\rho_{R_X}}{\Gamma_{R_X}}}\ .
  \end{align}
  In the dual problem, for any $\mu>0$ let $Z_R=\mu\Pi^{\tilde\rho_{R_X}}_{R_X}$
  and $\omega_{X'}=1$. Let $P_{R_X}$ be the projector onto the eigenspaces
  associated with the positive (or null) eigenvalues of
  $`(\mu\tilde\rho_{R_X}-\Gamma_{R_X})$, and let
  $X_{R_X} = P_{R_X}\,`(\mu\tilde\rho_{R_X}-\Gamma_{R_X})\,P_{R_X}$.  Then the
  dual constraints~\eqref{eq:SDP-coh-rel-entr-nonsmooth-conddual-tromegaGamma}
  and~\eqref{eq:SDP-coh-rel-entr-nonsmooth-conddual-Z} are clearly
  satisfied. The attained value is
  \begin{multline}
    \tr`(Z_{R_X}\tilde\rho_{R_X}) - \tr`(X_R)
    = \mu\tr\tilde\rho_{R_X} - \mu\tr`(P_{R_X}\tilde\rho_{R_X})
    + \tr\left(P_{R_X}\Gamma_{R_X}\right)
    \\
    \geqslant \tr`(P_{R_X}\Gamma_{R_X})
    \geqslant \tr`\Big(\Pi^{\tilde\rho_{R_X}}_{R_X}\Gamma_R) - O`(1/\mu)\ ,
  \end{multline}
  where we have used \autoref{util:blow-muA-operator-to-try-cover-B-limit} in
  the last step. If we take $\mu\to\infty$ we get successive feasible dual
  candidates whose attained value approaches
  $2^{-\Dminz{\tilde\rho_R}{\Gamma_R}}$; hence this is the optimal value of the
  semidefinite program.
  Finally, we have
  \begin{align*}
    \DCohx[\epsilon]{\rho}{X}{\emptysystem}{\Gamma_X}{1}
    &= \max_{\tilde\rho_{R_X}\approx_\epsilon\rho_{R_X}}
      \DCohz{\tilde\rho}{X}{\emptysystem}{\Gamma_X}{1}
    \\
    &= \max_{\tilde\rho_{R_X}\approx_\epsilon\rho_{R_X}}
      \Dminz{\tilde\rho_{R_X}}{\Gamma_{R_X}}\ ,
    \\
    &=
      \Dminz[\epsilon]{\sigma_X}{\Gamma_X}\ .
  \end{align*}

  Let's now prove
  equality~\eqref{eq:prop-coh-rel-entr-trivial-input-output--trivial-input}.
  For any state $\tilde\rho_{X'}$, consider the semidefinite program given in
  \autoref{prop:coh-rel-entr-nonsmooth-SDP} for
  $2^{-\DCohz{*\tilde\rho_{X'}}{\emptysystem}{X'}{1}{\Gamma_{X'}}}$.  The choice
  $T_{X'}=\rho_{X'}$ and
  $\alpha =
  \norm[\big]{\Gamma_{X'}^{-1/2}\tilde\rho_{X'}\Gamma_{X'}^{-1/2}}_\infty =
  2^{\Dmax{\tilde\rho_{X'}}{\Gamma_{X'}}}$ clearly satisfies the primal
  constraints, and thus
  \begin{align}
    2^{-\DCohz{*\tilde\rho_{X'}}{\emptysystem}{X'}1{\Gamma_{X'}}}
    \leqslant 2^{\Dmax{\tilde\rho_{X'}}{\Gamma_{X'}}}\ .
  \end{align}
  By properties of the infinity norm, there exists a $\tau_{X'}\geqslant 0$ with
  $\tr\tau_{X'}=1$ such that
  $\norm[\big]{\Gamma_{X'}^{-1/2}\tilde\rho_{X'}\Gamma_{X'}^{-1/2}}_\infty =
  \tr`\big[\tau_{X'} \cdot
  \Gamma_{X'}^{-1/2}\tilde\rho_{X'}\Gamma_{X'}^{-1/2}]$.  Let
  $\omega_{X'} = \Gamma_{X'}^{-1/2}\tau_{X'}\Gamma_{X'}^{-1/2}$,
  $Z_{X'}=\omega_{X'}$ and $X=0$. Then the dual constraints are trivially
  satisfied and the attained value is
  \begin{align}
    \tr`[Z_{X'}\tilde\rho_{X'}] =
    \tr`\big[\Gamma_{X'}^{-1/2}\tau_{X'}\Gamma_{X'}^{-1/2}\tilde\rho_{X'}]
    = 2^{\Dmax{\tilde\rho_{X'}}{\Gamma_{X'}}}\ .
  \end{align}
  The primal and dual candidates achieve the same value, and hence this is the optimal
  solution to the semidefinite program.  We then have
  \begin{align*}
    \DCohx[\epsilon]{*\rho_{X'}}{\emptysystem}{X'}{1}{\Gamma_X'}
    &= \max_{\tilde\rho_{X'}\approx_\epsilon\rho_{X'}}
      \DCohz{\tilde\rho}{\emptysystem}{X'}{1}{\Gamma_X'}
    \\
    &= \max_{\tilde\rho_{X'}\approx_\epsilon\rho_{X'}}
      -\Dmax{\tilde\rho_{X'}}{\Gamma_{X'}}
      \\
    &= -\Dmax[\epsilon]{\rho_{X'}}{\Gamma_{X'}}\ .
      \tag*\qedhere
  \end{align*}
\end{proof}

It is clear that in \autoref{prop:coh-rel-entr-trivial-input-output} in the
case of $\epsilon=0$, we may replace the trivial system with $\Gamma=1$ by a
nontrivial system with arbitrary $\Gamma$, as long as it is in a pure eigenstate
of the $\Gamma$ operator.

\begin{corollary}
  \label{cor:coh-rel-entr-pureev-input-output}
  Let $\Gamma_X, \Gamma_{X'}\geqslant 0$.  Both following statements hold:
  \begin{enumerate}[label=(\alph*)]
  \item Let $\ket{\mathrm f}_{X'}$ be an eigenstate of $\Gamma_{X'}$ with
    eigenvalue $g_\mathrm{f}$, and let $\sigma_{X}$ be any quantum state in the
    support of $\Gamma_X$. Then:
    \begin{multline}
      \DCohz{*t_{X\to R_X}`(\sigma_X)\otimes\proj{\mathrm f}_{X'}}{X}{X'}%
      {\Gamma_X}{\Gamma_{X'}}
      \\
       = \Dminz{\sigma_X}{\Gamma_X} + \log g_\mathrm f\ .
    \end{multline}
  \item Let $\ket{\mathrm i}_X$ be an eigenstate of $\Gamma_{X}$ with eigenvalue
    $g_\mathrm i$, and let $\rho_{X'}$ be any quantum state in the support of
    $\Gamma_{X'}$. Then:
    \begin{multline}
      \DCohz{*t_{X\to R_X}`(\proj{\mathrm i}_X)\otimes\rho_{X'}}{X}{X'}%
      {\Gamma_X}{\Gamma_{X'}}
      \\
       = - \log g_\mathrm i - \Dmax{\rho_{X'}}{\Gamma_{X'}}\ .
    \end{multline}
  \end{enumerate}
\end{corollary}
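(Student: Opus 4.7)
The plan is to reduce each part to the trivial input/output case of \autoref{prop:coh-rel-entr-trivial-input-output} by chaining three structural properties already established in the appendix: the restriction of the $\Gamma$ operators to eigenspaces of commuting projectors (\autoref{prop:coh-rel-entr-restrict-Gamma-eigenspaces}), invariance under partial isometries (\autoref{prop:coh-rel-entr-invar-under-isometries}), and scaling of the $\Gamma$ operators (\autoref{prop:coh-rel-entr-scaling-Gamma}). The intuition is that when the output (respectively input) of the process is pinned to a single eigenstate of the corresponding $\Gamma$ operator, the relevant output (respectively input) system is effectively one-dimensional and carries only the scalar weight $g_{\mathrm{f}}$ (respectively $g_{\mathrm{i}}$), which can then be absorbed into an overall constant shift of the coherent relative entropy.

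For part (a), I would first apply \autoref{prop:coh-rel-entr-restrict-Gamma-eigenspaces} with $P_X = \Pi^{\Gamma_X}_X$ and $P'_{X'} = \proj{\mathrm f}_{X'}$, both of which commute with their respective $\Gamma$ operators. The required support hypothesis holds because $\sigma_X$ is assumed to lie in the support of $\Gamma_X$ and the output marginal of the process matrix is exactly $\proj{\mathrm f}_{X'}$, so the process matrix is contained in the support of $(g_{\mathrm f}\proj{\mathrm f}_{X'})\otimes\Gamma_{R_X}$. This replaces $\Gamma_{X'}$ by $g_{\mathrm f}\proj{\mathrm f}_{X'}$ without changing the value of the coherent relative entropy. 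Next, \autoref{prop:coh-rel-entr-invar-under-isometries}, applied to the partial isometry embedding a trivial one-dimensional system into the span of $\ket{\mathrm f}_{X'}$, lets us view the output as a trivial system $\emptysystem$ carrying scalar weight $g_{\mathrm f}$. Scaling the output $\Gamma$ down from $g_{\mathrm f}$ to $1$ via \autoref{prop:coh-rel-entr-scaling-Gamma} produces a shift of $+\log g_{\mathrm f}$, and the remaining trivial-output quantity is identified with $\Dminz{\sigma_X}{\Gamma_X}$ by \autoref{prop:coh-rel-entr-trivial-input-output}.

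Part (b) proceeds by the mirror-image argument applied to the input side: restrict $\Gamma_X$ to $g_{\mathrm i}\proj{\mathrm i}_X$ by choosing $P_X = \proj{\mathrm i}_X$ (the support hypothesis is now ensured by the assumption that $\rho_{X'}$ lies in the support of $\Gamma_{X'}$), embed the one-dimensional input subspace into a trivial input carrying weight $g_{\mathrm i}$ via isometry invariance, and then scale the input weight down from $g_{\mathrm i}$ to $1$. The scaling rule now yields a shift of $-\log g_{\mathrm i}$ (opposite sign from part (a) because the rescaled object sits on the input rather than the output). Finally, invoking the trivial-input case of \autoref{prop:coh-rel-entr-trivial-input-output} identifies the remaining quantity with $-\Dmax{\rho_{X'}}{\Gamma_{X'}}$.

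No individual step is a real obstacle; the corollary is essentially a bookkeeping exercise assembling previously proved structural identities. The only piece requiring any care is the verification of the support hypothesis needed to legitimately invoke \autoref{prop:coh-rel-entr-restrict-Gamma-eigenspaces}, which is exactly why the statements of (a) and (b) explicitly require $\sigma_X$ and $\rho_{X'}$ to lie in the support of the corresponding $\Gamma$ operator.
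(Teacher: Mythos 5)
Your proof is correct and follows essentially the same chain as the paper's own argument: restrict the $\Gamma$ operator to the relevant eigenspace (\autoref{prop:coh-rel-entr-restrict-Gamma-eigenspaces}), rescale (\autoref{prop:coh-rel-entr-scaling-Gamma}), reduce to a trivial system via isometry invariance (\autoref{prop:coh-rel-entr-invar-under-isometries}), then apply \autoref{prop:coh-rel-entr-trivial-input-output}. The only cosmetic difference is that you swap the order of the rescaling and the isometry step, which commute, so the argument is the same.
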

\begin{proof}[*cor:coh-rel-entr-pureev-input-output]
  First consider claim (a).  Invoking successively
  \autoref{prop:coh-rel-entr-restrict-Gamma-eigenspaces},
  \autoref{prop:coh-rel-entr-scaling-Gamma}, and
  \autoref{prop:coh-rel-entr-invar-under-isometries},
  we have (writing $\sigma_{R_X} = t_{X\to R_X}`(\sigma_X)$):
  \begin{align}
    \hspace*{4em}
    &\hspace*{-4em}
      \DCohz{*\sigma_{R_X}\otimes\proj{\mathrm f}_{X'}}{X}{X'}%
      {\Gamma_X}{\Gamma_{X'}}
      \nonumber\\
    &= \DCohz{*\sigma_{R_X}\otimes\proj{\mathrm f}_{X'}}{X}{X'}%
      {\Gamma_X}{g_\mathrm{f}\proj{\mathrm f}_{X'}}
      \nonumber\\
    &= \DCohz{*\sigma_{R_X}\otimes\proj{\mathrm f}_{X'}}{X}{X'}%
      {\Gamma_X}{\proj{\mathrm f}_{X'}}
      + \log g_\mathrm{f}
      \nonumber\\
    &= \DCohz{\sigma}{X}{\emptysystem}{\Gamma_X}{1} + \log g_\mathrm{f}\ ,
  \end{align}
  at which point we may apply \autoref{prop:coh-rel-entr-trivial-input-output}.
  Claim (b) follows analogously.
\end{proof}

Finally, we will see that the usual quantum relative entropy can also be
recovered in the regime where we consider states of the form
$\rho^{\otimes n}_{X'^nR^n}$ for $n\to\infty$.  We defer this case to
\autoref{sec:AEP}, as the proof of this property requires some additional bounds
we have yet to present.

\subsection{Bounds on the coherent relative entropy}
\label{sec:coh-rel-entr-bounds}

At this point, we further characterize the coherent relative entropy with bounds
in terms of simpler quantities depending only on the input and output states.
The main goal of this section is to prove
\autoref{prop:cohrelentr-smooth-upper-bound-min-max} and
\autoref{prop:coh-rel-entr-smooth-lower-bound-diff-Dmin-Dmax}, which will allow
us to understand our quantity's asymptotic behavior in the i.i.d.\@ regime.

We begin with a few upper bounds on the coherent relative entropy, given in
terms of a difference of relative entropies.

\begin{proposition}
  \label{prop:coh-rel-entr-upper-bound-diff-D}
  We have the upper bound
  \begin{align}
    \DCohz{\rho}{X}{X'}{\Gamma_X}{\Gamma_{X'}}
    &\leqslant \DD{\sigma_X}{\Gamma_X}
      - \DD{\rho_{X'}}{\Gamma_{X'}} \ ,
      \label{eq:coh-rel-entr-upper-bound-diff-D}
  \end{align}
  writing $\sigma_{X} = t_{R_X\to X}`(\rho_{R_X})$
\end{proposition}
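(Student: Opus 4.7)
The plan is to start from a feasible optimization candidate in the definition of $\DCohz{\rho}{X}{X'}{\Gamma_X}{\Gamma_{X'}}$, namely a completely positive trace-nonincreasing map $\mathcal{T}_{X\to X'}$ satisfying $\mathcal{T}_{X\to X'}(\sigma_{XR_X}) = \rho_{X'R_X}$ and $\mathcal{T}_{X\to X'}(\Gamma_X) \leqslant 2^{-y}\,\Gamma_{X'}$, and to derive the inequality $y \leqslant \DD{\sigma_X}{\Gamma_X} - \DD{\rho_{X'}}{\Gamma_{X'}}$ directly. Taking the supremum over $y$ will then yield the claim. The two ingredients I need are (i) the data processing inequality (DPI) for the quantum relative entropy, applied to $\mathcal{T}_{X\to X'}$ acting on $\sigma_X$ and $\Gamma_X$, and (ii) the antimonotonicity of $\DD{\rho}{\cdot}$ in its second argument, which converts the operator inequality $\mathcal{T}(\Gamma_X)\leqslant 2^{-y}\Gamma_{X'}$ into a bound on relative entropies.

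First I would observe that since $\rho_{X'R_X}$ is normalized and $\mathcal{T}$ is trace-nonincreasing, the condition $\mathcal{T}_{X\to X'}(\sigma_{XR_X}) = \rho_{X'R_X}$ forces $\mathcal{T}$ to be trace-preserving on the support of $\sigma_X$. This lets me dilate $\mathcal{T}_{X\to X'}$ to a trace-preserving map $\tilde{\mathcal{T}}_{X\to X'\oplus\mathbb{C}}$ by appending a single extra dimension $\ket f$ and setting $\tilde{\mathcal{T}}(\omega) = \mathcal{T}(\omega) + \tr\!\big[(\Ident_X-\mathcal{T}^\dagger(\Ident_{X'}))\,\omega\big]\,\proj f$. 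By construction $\tilde{\mathcal{T}}(\sigma_X) = \mathcal{T}(\sigma_X) = \rho_{X'}$ (the added component vanishes on $\sigma_X$), and $\tilde{\mathcal{T}}(\Gamma_X) = \mathcal{T}(\Gamma_X) + c\,\proj f$ for some $c\geqslant 0$. Since $\rho_{X'}$ has no overlap with $\ket f$, one checks from the definition of the Umegaki relative entropy that $\DD{\rho_{X'}}{\tilde{\mathcal{T}}(\Gamma_X)} = \DD{\rho_{X'}}{\mathcal{T}(\Gamma_X)}$.

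Applying DPI to the trace-preserving map $\tilde{\mathcal{T}}$ then yields
\begin{align}
  \DD{\rho_{X'}}{\mathcal{T}(\Gamma_X)}
  = \DD{\tilde{\mathcal{T}}(\sigma_X)}{\tilde{\mathcal{T}}(\Gamma_X)}
  \leqslant \DD{\sigma_X}{\Gamma_X}\ .
\end{align}
On the other hand, combining antimonotonicity of $\DD{\rho}{\cdot}$ in the second argument with the scaling identity $\DD{\rho}{c\Gamma} = \DD{\rho}{\Gamma} - \log c$ for $c>0$, the feasibility constraint $\mathcal{T}(\Gamma_X)\leqslant 2^{-y}\Gamma_{X'}$ gives
\begin{align}
  \DD{\rho_{X'}}{\mathcal{T}(\Gamma_X)}
  \geqslant \DD{\rho_{X'}}{2^{-y}\Gamma_{X'}}
  = \DD{\rho_{X'}}{\Gamma_{X'}} + y\ .
\end{align}
Chaining these two inequalities yields $y \leqslant \DD{\sigma_X}{\Gamma_X} - \DD{\rho_{X'}}{\Gamma_{X'}}$, as required.

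The step I expect to require most care is the dilation argument, since DPI is most cleanly stated for CPTP maps and both arguments of the relative entropy must be treated consistently; I could alternatively invoke the dilation of $\Gamma$-sub-preserving maps to $\Gamma$-preserving trace-preserving maps (\autoref{prop:dilation-of-Gsp-to-Gp}) to achieve the same effect, at the cost of tracking the extra ancillas. All other ingredients—scaling and antimonotonicity of $\DD{\cdot}{\cdot}$—are elementary, so once the DPI step is justified on the trace-nonincreasing map the bound follows immediately.
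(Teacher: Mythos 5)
Your proof is correct, and it takes a genuinely more direct route than the paper's. The paper's proof proceeds operationally: it invokes \autoref{prop:equiv-battery-models} to re-express the optimal feasible map $\mathcal{T}$ as a $\Gamma$-sub-preserving operation acting jointly on the system and an information battery, then applies \autoref{prop:dilation-of-Gsp-to-Gp} to dilate this to a trace-preserving $\Gamma$-preserving map over several ancillas, and finally applies DPI for the relative entropy on the enlarged system, tracking all the battery and ancilla contributions (using \autoref{prop:coh-rel-entr-battery-state-calc-rel-entr} to evaluate the relative entropies of battery states). Your argument bypasses all of this machinery: you work directly with the feasible $\mathcal{T}$, add a single extra dimension to make it trace preserving (with the crucial observation that the added block vanishes on $\sigma_X$ because $\mathcal{T}$ is already trace preserving on $\supp\sigma_X$, and also does not affect $\DD{\rho_{X'}}{\mathcal{T}(\Gamma_X)}$ since $\rho_{X'}$ is orthogonal to $\ket f$), apply DPI once, and combine with antimonotonicity and scaling of the second argument of the Umegaki relative entropy to read off the constraint $\mathcal{T}(\Gamma_X)\leqslant 2^{-y}\Gamma_{X'}$ as $\DD{\rho_{X'}}{\mathcal{T}(\Gamma_X)}\geqslant\DD{\rho_{X'}}{\Gamma_{X'}}+y$. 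The paper's version has the pedagogical merit of making manifest the operational reading (battery charge $=$ drop in relative entropy), in line with the surrounding narrative; your version is technically lighter and self-contained, relying only on elementary properties of $D(\cdot\Vert\cdot)$ rather than on the framework propositions. Both are correct; for the record, the support conditions you need (that $\supp\rho_{X'}\subseteq\supp\mathcal{T}(\Gamma_X)$ and $\supp\rho_{X'}\subseteq\supp\Gamma_{X'}$) are indeed guaranteed by the standing assumption that $t_{R_X\to X}(\rho_{X'R_X})$ lies in the support of $\Gamma_X\otimes\Gamma_{X'}$.
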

\begin{proof}[*prop:coh-rel-entr-upper-bound-diff-D]
  Consider the optimal solution $T_{X'R_X}$ and $\alpha$ to the primal
  semidefinite program of \autoref{prop:coh-rel-entr-nonsmooth-SDP}, and let
  $\mathcal{T}_{X\to X'}$ be the completely positive map corresponding to
  $T_{X'R_X}$, i.e.\@ defined by
  $\mathcal{T}_{X\to X'}`(\cdot) = \tr_{R_X}`[T_{X'R_X}\,
  t_{X\to{}R_X}`*(\cdot)]$.  The mapping defined in this way is completely
  positive since $T_{X'R_X}\geqslant 0$ and is trace-nonincreasing thanks to
  condition~\eqref{eq:SDP-coh-rel-entr-cond-trnoninc}.

  The map $\mathcal{T}_{X\to X'}$ thus satisfies the conditions of
  \autoref{item:prop-equiv-battery-models-equivstatement-criterionMapEnorm} of
  \autoref{prop:equiv-battery-models}.  Hence, invoking
  \autoref{item:prop-equiv-battery-models-equivstatement-infbattery} of that
  proposition, let $\tilde\Phi_{XA\to X'A'}$ be a trace nonincreasing
  $\Gamma$-sub-preserving map for large enough $A$, $A'$, with
  $\Gamma_A=\Ident_A$, $\Gamma_{A'}=\Ident_{A'}$, satisfying
  \begin{align}
    \tilde\Phi_{XA\to X'A'}`*(
    \sigma_{XR_X}\otimes`*(2^{-\lambda_1}\Ident_{2^{\lambda_1}}))
    = \rho_{X'R_X}\otimes`*(2^{-\lambda_2}\Ident_{2^{\lambda_2}})\ ,
  \end{align}
  with $\alpha=2^{-\left(\lambda_1-\lambda_2\right)}$ and
  $\ket\sigma_{XR_X} = \rho_{R_X}^{1/2}\,\ket\Phi_{X:R_X}$. (If $\alpha$ is
  irrational, the following argument may be applied to arbitrary good rational
  approximations to $\alpha$.)

  Now, dilate $\tilde\Phi_{XA\to X'A'}$ using
  \autoref{prop:dilation-of-Gsp-to-Gp} to a trace-preserving,
  $\Gamma$-preserving map $\Phi_{XAX'A'Q\to XAX'A'Q}$ with states
  $\ket{\mathrm x}_{X}, \ket{\mathrm a}_{A}, \ket{\mathrm i}_Q, \ket{\mathrm
    x'}_{X'}, \ket{\mathrm a'}_{A'}, \ket{\mathrm f}_Q$ (all of them being
  eigenstates of the respective $\Gamma$ operators), satisfying
  \begin{subequations}
    \begin{gather}
      \Phi_{XAX'A'Q}`*(\Gamma_{XAX'A'Q})
      = \Gamma_{XAX'A'Q}\ ; \\
      \begin{split}
        &\Phi_{XAX'A'Q}`*(
        \sigma_{XR_X}\otimes`*(2^{-\lambda_1}\Ident_{2^{\lambda_1}}^A)
          \otimes\proj{\mathrm{x'a'i}}_{X'A'Q} ) \\
        &\hspace*{4em}= \rho_{X'R_X}\otimes`*(
        2^{-\lambda_2}\Ident_{2^{\lambda_2}}^{A'}
        ) \otimes \proj{\mathrm{xaf}}_{XAQ}\ ;\quad\text{and}
      \end{split}%
      \label{eq:prop-coh-rel-entr-upper-bound-diff-D-proof-dilatedGPM-correctmapping}\\
      \dmatrixel{\mathrm{x\,a\,f}}{\Gamma_{XAQ}}_{XAQ}
      = \dmatrixel{\mathrm{x'a'i}}{\Gamma_{X'A'Q}}_{X'A'Q}\ .
      \label{eq:prop-coh-rel-entr-upper-bound-diff-D-proof-dilatedGPM-energylevelsdilated}
    \end{gather}
  \end{subequations}
  Using \autoref{prop:coh-rel-entr-battery-state-calc-rel-entr} recalling that
  $\Gamma_A=\Ident_A$, we see that
  \begin{subequations}
    \begin{align}
      \DD`*{2^{-\lambda_1}\Ident_{2^{\lambda_1}}^A}{\Gamma_A}
      &= -\log\tr`*(\Ident_{2^{\lambda_1}}^A\,\Gamma_A) = -\lambda_1\ ;\\
      \DD`*{2^{-\lambda_2}\Ident_{2^{\lambda_2}}^{A'}}{\Gamma_{A'}}
      &= -\log\tr`*(\Ident_{2^{\lambda_2}}^A\,\Gamma_A) = -\lambda_2 \ ,
    \end{align}
    as well as for any pure eigenstate $y$ of any positive semidefinite
    $\Gamma$,
    \begin{align}
      \DD`*{\proj{\mathrm y}}{\Gamma} &= -\log\tr\dmatrixel{\mathrm y}{\Gamma}\ .
    \end{align}
  \end{subequations}
  Then, by the data processing inequality for the relative entropy and
  with~\eqref{eq:prop-coh-rel-entr-upper-bound-diff-D-proof-dilatedGPM-correctmapping},
  \begin{align}
    \begin{split}
      0 &\leqslant
      \DD`*{\sigma_{X}\otimes`\big(2^{-\lambda_1}\Ident_{2^{\lambda_1}}^A)
        \otimes\proj{\mathrm{x'a'i}}_{X'A'Q}}{\Gamma_{XAX'A'Q}}
      \\
      &\quad-
      \DD`*{\rho_{X'}\otimes`\big(2^{-\lambda_2}\Ident_{2^{\lambda_2}}^{A'})
        \otimes\proj{\mathrm{xaf}}_{XAQ}}{\Gamma_{XAX'A'Q}}
      \nonumber\\
    \end{split} \nonumber\\
    \begin{split}
      &= \DD{\sigma_X}{\Gamma_X} +
      \DD`\big{2^{-\lambda_1}\Ident_{2^{\lambda_1}}^A}{\Gamma_A} +
      \DD{\proj{\mathrm{x'a'i}}_{X'A'Q}}{\Gamma_{X'A'Q}}
      \\
      &\quad- \DD{\rho_{X'}}{\Gamma_{X'}}
      - \DD`\big{2^{-\lambda_2}\Ident_{2^{\lambda_2}}^{A'}}{\Gamma_{A'}}
      - \DD{\proj{\mathrm{x\,a\,f}}_{XAQ}}{\Gamma_{XAQ}}
    \end{split}\nonumber\\
    \begin{split}
      &= \DD{\sigma_X}{\Gamma_X}-\DD{\rho_{X'}}{\Gamma_{X'}}
      - \lambda_1 + \lambda_2 
      \\
      &\quad- \log\,\dmatrixel{\mathrm{x'a'i}}{\Gamma_{X'A'Q}}
      + \log\,\dmatrixel{\mathrm{x\,a\,f}}{\Gamma_{XAQ}}
    \end{split}\nonumber\\
    &= \DD{\sigma_X}{\Gamma_X}-\DD{\rho_{X'}}{\Gamma_{X'}}
    - \lambda_1 + \lambda_2 \ ,
  \end{align}
  where we invoked the
  condition~\eqref{eq:prop-coh-rel-entr-upper-bound-diff-D-proof-dilatedGPM-energylevelsdilated}
  in the last step. We then have
  \begin{align}
    \DCohz\rho{X}{X'}{\Gamma_X}{\Gamma_{X'}} = \lambda_1-\lambda_2
    \leqslant \DD{\sigma_X}{\Gamma_X}-\DD{\rho_{X'}}{\Gamma_{X'}}\ .
    \tag*\qedhere
  \end{align}
\end{proof}

The following upper bound is easy to prove, although it has not found tremendous
use.

\begin{proposition}
  \label{prop:coh-rel-entr-upper-bound-diff-Dmax-Dmax}
  The coherent relative entropy may be upper bounded as
  \begin{multline}
    \DCohz{\rho}{X}{X'}{\Gamma_X}{\Gamma_{X'}}
    \\
    \leqslant \Dmax{\sigma_X}{\Gamma_X}
      - \Dmax{\rho_{X'}}{\Gamma_{X'}}\ ,
      \label{eq:coh-rel-entr-upper-bound-diff-Dmax-Dmax}
  \end{multline}
  writing $\sigma_{X} = t_{X\to R_X}`(\rho_{R_X})$
\end{proposition}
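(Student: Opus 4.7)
The plan is to exploit the alternative formulation of the coherent relative entropy given in \autoref{prop:coh-rel-entr-rewrite-primal-problem-infinity-norm}, which expresses $2^{-\DCohz{\rho}{X}{X'}{\Gamma_X}{\Gamma_{X'}}}$ as the minimum of $\|\Gamma_{X'}^{-1/2}\,\mathcal{T}(\Gamma_X)\,\Gamma_{X'}^{-1/2}\|_\infty$ over all trace-nonincreasing, completely positive maps $\mathcal{T}_{X\to X'}$ that reproduce the process matrix (i.e.\ $\mathcal{T}(\sigma_{XR_X})=\rho_{X'R_X}$ at $\epsilon=0$). The strategy is to show that every such $\mathcal{T}$ automatically satisfies the lower bound $\|\Gamma_{X'}^{-1/2}\,\mathcal{T}(\Gamma_X)\,\Gamma_{X'}^{-1/2}\|_\infty \geqslant 2^{\Dmax{\rho_{X'}}{\Gamma_{X'}}-\Dmax{\sigma_X}{\Gamma_X}}$, from which the claim follows upon taking $-\log$ of both sides.

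First, I would take any feasible $\mathcal{T}_{X\to X'}$ in the optimization and trace out $R_X$ in the constraint $\mathcal{T}(\sigma_{XR_X})=\rho_{X'R_X}$ to obtain the reduced relation $\mathcal{T}(\sigma_X)=\rho_{X'}$. Next, using the definition of the max-relative entropy, I would write the operator inequality $\sigma_X \leqslant 2^{\Dmax{\sigma_X}{\Gamma_X}}\,\Gamma_X$ (valid since $\sigma_X$ lies in the support of $\Gamma_X$), and apply the completely positive map $\mathcal{T}$ to both sides to conclude
\begin{align}
\rho_{X'} \;=\; \mathcal{T}(\sigma_X) \;\leqslant\; 2^{\Dmax{\sigma_X}{\Gamma_X}}\,\mathcal{T}(\Gamma_X)\ .
\end{align}

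Conjugating this inequality by $\Gamma_{X'}^{-1/2}$ on both sides preserves the ordering, and then taking the operator norm preserves it as well (since $A\leqslant B$ for positive semidefinite $A,B$ implies $\|A\|_\infty\leqslant\|B\|_\infty$). This gives
\begin{align}
2^{\Dmax{\rho_{X'}}{\Gamma_{X'}}} \;\leqslant\; 2^{\Dmax{\sigma_X}{\Gamma_X}}\,\bigl\|\Gamma_{X'}^{-1/2}\,\mathcal{T}(\Gamma_X)\,\Gamma_{X'}^{-1/2}\bigr\|_\infty\ .
\end{align}
Rearranging and then minimizing the right-hand side over all feasible $\mathcal{T}$ yields $2^{-\DCohz{\rho}{X}{X'}{\Gamma_X}{\Gamma_{X'}}} \geqslant 2^{\Dmax{\rho_{X'}}{\Gamma_{X'}}-\Dmax{\sigma_X}{\Gamma_X}}$, which upon taking $-\log$ is precisely~\eqref{eq:coh-rel-entr-upper-bound-diff-Dmax-Dmax}.

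This argument is essentially a direct calculation once the right formulation is chosen, so I do not anticipate any real obstacle. The only mild subtleties are verifying that $\sigma_X$ lies within the support of $\Gamma_X$ (guaranteed by the standing support assumption in the definition of the coherent relative entropy, which ensures $\Dmax{\sigma_X}{\Gamma_X}$ is finite) and ensuring the formulation of \autoref{prop:coh-rel-entr-rewrite-primal-problem-infinity-norm} is applicable in the non-smooth setting, which is the case.
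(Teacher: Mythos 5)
Your proof is correct and takes essentially the same route as the paper's: the paper works in the Choi-matrix picture (bounding $\rho_{R_X}\leqslant 2^{\Dmax{\rho_{R_X}}{\Gamma_{R_X}}}\Gamma_{R_X}$ inside the trace $\tr_{R_X}[T_{X'R_X}\,\cdot\,]$ and then using the SDP constraint $\tr_{R_X}[T_{X'R_X}\Gamma_{R_X}]\leqslant\alpha\Gamma_{X'}$), while you work in the superoperator picture via \autoref{prop:coh-rel-entr-rewrite-primal-problem-infinity-norm}, but the key chain $\rho_{X'}\leqslant 2^{\Dmax{\sigma_X}{\Gamma_X}}\,\alpha\,\Gamma_{X'}$ is identical. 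The two are the same argument up to the trivial change of representation, and both correctly use that $\Dmax{\sigma_X}{\Gamma_X}=\Dmax{\rho_{R_X}}{\Gamma_{R_X}}$ by transpose-invariance of the $\infty$-norm.
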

\begin{proof}[*prop:coh-rel-entr-upper-bound-diff-Dmax-Dmax]
  Consider an optimal solution $T_{X'R_X}$ and $\alpha$ for the primal
  semidefinite program. Then we have via the semidefinite constraints
  \begin{align}
    \rho_{X'} = \tr_{R_X}`*[T_{X'R_X}\rho_{R_X}]
    &\leqslant 2^{\Dmax{\rho_{R_X}}{\Gamma_{R_X}}} \,
    \tr_{R_X}`*[T_{X'R_X}\Gamma_{R_X}]
    \nonumber\\
    &\leqslant \alpha\, 2^{\Dmax{\rho_{R_X}}{\Gamma_{R_X}}}\, \Gamma_{X'}\ .
  \end{align}
  By definition, we have
  \begin{align}
    2^{\Dmax{\rho_{X'}}{\Gamma_{X'}}}
    = \min\{ \mu:~ \mu\,\Gamma_{X'}\geqslant \rho_{X'}\}\ ,
  \end{align}
  and thus we see that $\alpha 2^{\Dmax{\rho_{R_X}}{\Gamma_{R_X}}}$ is a
  candidate $\mu$ in this minimization. Hence
  $2^{\Dmax{\rho_{X'}}{\Gamma_{X'}}} \leqslant
  \alpha\,2^{\Dmax{\rho_{R_X}}{\Gamma_{R_X}}}$ and
  \begin{align}
    \alpha \geqslant 2^{\Dmax{\rho_{X'}}{\Gamma_{X'}} -
    \Dmax{\rho_{R_X}}{\Gamma_{R_X}}}\ .
  \end{align}
  The claim then follows from
  $\DCohz\rho{X}{X'}{\Gamma_X}{\Gamma_{X'}} = -\log\alpha$.
\end{proof}

The last of the upper bounds holds for the smooth coherent relative entropy.
The present upper bound will be used to prove one direction of the asymptotic
equipartition property.

\begin{proposition}
  \label{prop:cohrelentr-smooth-upper-bound-min-max}
  Let $\rho_{X'R_X}$ be any quantum state, and denote the corresponding input
  state by $\sigma_X=t_{R_X\to X}(\rho_{R_X})$.  Then for any
  $\epsilon,\epsilon',\epsilon''\geqslant 0$ such that
  $\bar\epsilon:=\epsilon+\epsilon'+2\epsilon'' < 1$,
  \begin{multline}
    \DCohx[\epsilon'']{\rho}{X}{X'}{\Gamma_X}{\Gamma_{X'}}
    \\
    \leqslant \Dmax[\epsilon]{\sigma_X}{\Gamma_X} - 
    \Dminz[\epsilon']{\rho_{X'}}{\Gamma_{X'}}
    - \log\,`*(1 - \bar\epsilon)\ .
  \end{multline}
\end{proposition}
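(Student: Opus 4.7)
The plan is to unfold the outer smoothing in $\DCohx[\epsilon'']{\rho}{X}{X'}{\Gamma_X}{\Gamma_{X'}}$ and pit a near-optimal implementation against near-optimal smoothings of $\sigma_X$ and $\rho_{X'}$. I would fix a maximizer $\hat\rho_{X'R_X}\approx_{\epsilon''}\rho_{X'R_X}$ in the outer smoothing, together with a trace-nonincreasing completely positive map $\mathcal{T}_{X\to X'}$ that nearly attains $\DCohz{\hat\rho}{X}{X'}{\Gamma_X}{\Gamma_{X'}} = y$, i.e., $\mathcal{T}_{X\to X'}(\Gamma_X)\leqslant 2^{-y}\Gamma_{X'}$ and $\mathcal{T}_{X\to X'}(\hat\sigma_{XR_X}) = \hat\rho_{X'R_X}$, where $\ket{\hat\sigma}_{XR_X} = \hat\rho_{R_X}^{1/2}\ket{\Phi}_{X:R_X}$. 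I also pick an optimal $\tilde\sigma_X\approx_\epsilon\sigma_X$ with $\tilde\sigma_X\leqslant 2^{\Dmax[\epsilon]{\sigma_X}{\Gamma_X}}\Gamma_X$, and an optimal $\hat\rho'_{X'}\approx_{\epsilon'}\rho_{X'}$ with $\tr(\Pi^{\hat\rho'_{X'}}_{X'}\Gamma_{X'}) = 2^{-\Dminz[\epsilon']{\rho_{X'}}{\Gamma_{X'}}}$.

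The upper-bound direction comes essentially for free: applying $\mathcal{T}_{X\to X'}$ to $\tilde\sigma_X\leqslant 2^{\Dmax[\epsilon]{\sigma_X}{\Gamma_X}}\Gamma_X$ and then testing against $\Pi^{\hat\rho'_{X'}}_{X'}$ directly yields
\[
\tr\bigl[\Pi^{\hat\rho'_{X'}}_{X'}\,\mathcal{T}_{X\to X'}(\tilde\sigma_X)\bigr]
\;\leqslant\; 2^{\Dmax[\epsilon]{\sigma_X}{\Gamma_X}\,-\,y\,-\,\Dminz[\epsilon']{\rho_{X'}}{\Gamma_{X'}}}\,.
\]

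The heart of the argument is the matching lower bound on the same trace, which is obtained by showing that $\mathcal{T}_{X\to X'}(\tilde\sigma_X)$ is close to $\hat\rho'_{X'}$. I would chain the triangle inequality for the purified distance, using monotonicity of $P$ under the trace-nonincreasing map $\mathcal{T}_{X\to X'}$, the identity $\mathcal{T}_{X\to X'}(\hat\sigma_X) = \hat\rho_{X'}$ (a partial trace of the process-matrix constraint), and the fact that the partial transpose preserves purified distance, so $P(\hat\sigma_X,\sigma_X) = P(\hat\rho_{R_X},\rho_{R_X})\leqslant\epsilon''$. This yields $P(\mathcal{T}_{X\to X'}(\tilde\sigma_X),\hat\rho'_{X'})\leqslant \epsilon + 2\epsilon'' + \epsilon' = \bar\epsilon$. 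Data processing of the fidelity under the two-outcome test $\{\Pi^{\hat\rho'_{X'}}_{X'},\,\Ident_{X'}-\Pi^{\hat\rho'_{X'}}_{X'}\}$ then gives, since $\hat\rho'_{X'}$ is normalized and supported on $\Pi^{\hat\rho'_{X'}}_{X'}$, the inequality $F(\mathcal{T}_{X\to X'}(\tilde\sigma_X),\hat\rho'_{X'})\leqslant \sqrt{\tr[\Pi^{\hat\rho'_{X'}}_{X'}\,\mathcal{T}_{X\to X'}(\tilde\sigma_X)]}$, whence $\tr[\Pi^{\hat\rho'_{X'}}_{X'}\,\mathcal{T}_{X\to X'}(\tilde\sigma_X)]\geqslant 1-\bar\epsilon^2 \geqslant 1-\bar\epsilon$.

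Combining the two bounds on that trace and taking $-\log$ yields $y\leqslant \Dmax[\epsilon]{\sigma_X}{\Gamma_X} - \Dminz[\epsilon']{\rho_{X'}}{\Gamma_{X'}} - \log(1-\bar\epsilon)$, which is the claim. The delicate point is bookkeeping the error budget to total exactly $\bar\epsilon = \epsilon + \epsilon' + 2\epsilon''$: the factor of $2$ in front of $\epsilon''$ arises because the smoothed process matrix $\hat\rho_{X'R_X}$ enters the estimate both on the input side, through the purification $\hat\sigma_{XR_X}$ that is $\epsilon''$-close to $\sigma_{XR_X}$, and on the output side, through the reduced state $\hat\rho_{X'}$ that is $\epsilon''$-close to $\rho_{X'}$, and these two contributions must be separated cleanly in the chain of triangle inequalities before absorbing the smoothings of $\sigma_X$ and $\rho_{X'}$.
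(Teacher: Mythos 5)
Your proof is correct, and it takes a genuinely different route from the paper's. The paper proves this bound by constructing an explicit dual-feasible candidate for the semidefinite program of \autoref{prop:coh-rel-entr-nonsmooth-SDP}: after fixing the outer-smoothed state $\bar\rho_{X'R_X}$ and the optimal smoothings $\tilde\sigma_X$, $\tilde\rho_{X'}$, it sets $\mu=2^{-\Dmax{\tilde\sigma_X}{\Gamma_X}}(\tr\Pi^{\tilde\rho_{X'}}_{X'}\Gamma_{X'})^{-1}$, $Z_{X'R_X}=\mu\,\Pi^{\tilde\rho_{X'}}_{X'}\otimes\Ident_{R_X}$, $\omega_{X'}=(\tr\Pi^{\tilde\rho_{X'}}_{X'}\Gamma_{X'})^{-1}\Pi^{\tilde\rho_{X'}}_{X'}$, and $X_{R_X}=\mu\Delta_{R_X}$ where $\Delta_{R_X}$ is the positive part of $\bar\rho_{R_X}-\tilde\sigma_{R_X}$; the trace-distance decomposition $\bar\rho_{R_X}\leqslant\tilde\sigma_{R_X}+\Delta_{R_X}$ (and the analogous $\bar\rho_{X'}\geqslant\tilde\rho_{X'}-\Delta_{X'}$) supply the $\epsilon+\epsilon''$ and $\epsilon'+\epsilon''$ corrections, which combine to $\bar\epsilon$ in the dual objective $\mu(1-\epsilon^2)-\tr X_{R_X}$ (with the $\epsilon^2$ from~\eqref{eq:SDP-coh-rel-entr-conddual-Z} set to zero since the inner $\DCohz{\cdot}{}{}{}{}$ is non-smooth). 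Your argument stays entirely on the primal side: you take the primal optimizer $\mathcal{T}_{X\to X'}$ for $\DCohz{\hat\rho}{X}{X'}{\Gamma_X}{\Gamma_{X'}}$ and sandwich $\tr[\Pi^{\hat\rho'_{X'}}_{X'}\,\mathcal{T}_{X\to X'}(\tilde\sigma_X)]$, using the $\Gamma$-sub-preservation property and the operational definitions of $\Dmax$ and $\Dminz$ on the top, and a purified-distance triangle chain (exploiting monotonicity of $P$ under $\mathcal{T}$ and under partial trace, and the identity $\mathcal{T}(\hat\sigma_X)=\hat\rho_{X'}$) plus classical data processing of the fidelity under the binary test $\{\Pi^{\hat\rho'_{X'}}_{X'},\Ident-\Pi^{\hat\rho'_{X'}}_{X'}\}$ on the bottom. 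The underlying ``witness'' $\Pi^{\tilde\rho_{X'}}_{X'}$ is the same object in both proofs, but yours avoids the SDP duality machinery entirely, making the error bookkeeping visible in a single triangle-inequality chain; incidentally, your route actually yields the slightly stronger correction $-\log(1-\bar\epsilon^2)$, which you then relax to $-\log(1-\bar\epsilon)$ to match the statement. The paper's dual approach has the advantage of fitting uniformly into the SDP framework used for the other propositions in that section, and the trace-distance-based bookkeeping there sidesteps the need for the fidelity data-processing step.
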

\begin{proof}[*prop:cohrelentr-smooth-upper-bound-min-max]
  Let $\bar\rho_{X'R_X}$ be the quantum state which achieves the optimum for
  $\DCohx[\epsilon'']{\rho}{X}{X'}{\Gamma_X}{\Gamma_{X'}}$, i.e., satisfying
  $P`(\bar\rho_{X'R},\rho_{X'R})\leqslant\epsilon''$ and
  $\DCohx[\epsilon'']{\rho}{X}{X'}{\Gamma_X}{\Gamma_{X'}} =
  \DCohx{\bar\rho}{X}{X'}{\Gamma_X}{\Gamma_{X'}}$.  The proof proceeds by
  constructing dual candidates for
  $2^{-\DCohx{\bar\rho}{X}{X'}{\Gamma_X}{\Gamma_{X'}}}$
  in~\eqref{eq:coh-rel-entr-nonsmooth-SDP-logvalue} achieving the value in the
  claim.  Define the quantum states $\tilde\sigma_X$, $\tilde\rho_{X'}$ as the
  optimal ones in the optimizations defining the smooth min and max relative
  entropies, i.e., satisfying $P(\tilde\sigma_X,\sigma_X)\leqslant\epsilon$,
  $P(\tilde\rho_{X'},\rho_{X'})\leqslant\epsilon'$, as well as
  \begin{subequations}
    \begin{align}
      \Dmax[\epsilon]{\sigma_{X}}{\Gamma_{X}}
      &= \Dmax{\tilde\sigma_{X}}{\Gamma_{X}}\ ; \\
      \Dminz[\epsilon']{\rho_{X'}}{\Gamma_{X'}}
      &= \Dminz{\tilde\rho_{X'}}{\Gamma_{X'}}\ .
    \end{align}
  \end{subequations}
  Let
  \begin{subequations}
    \begin{align}
      \mu &= 2^{-\Dmax{\tilde\sigma_X}{\Gamma_X}}\,
            `\big(\tr\Pi^{\tilde\rho_{X'}}_{X'}\,\Gamma_{X'})^{-1}\ ;
            \\
      Z_{X'R_X} &= \mu\,\Pi^{\tilde\rho_{X'}}_{X'}\otimes\Ident_{R_X}\ ;
      \\
      \omega_{X'} &=
                    `\big[\tr`(\Pi^{\tilde\rho_{X'}}_{X'}\,\Gamma_{X'})]^{-1} \,
                    \Pi^{\tilde\rho_{X'}}_{X'} \ .
    \end{align}
  \end{subequations}
  Condition~\eqref{eq:SDP-coh-rel-entr-nonsmooth-conddual-tromegaGamma} is
  automatically satisfied.  Writing
  $\tilde\sigma_{R_X} = t_{X\to R_X}(\tilde\sigma_X)$, we have
  $D`(\bar\rho_{R_X},\tilde\sigma_{R_X}) \leqslant
  P`(\bar\rho_{R_X},\tilde\sigma_{R_X}) \leqslant P`(\bar\rho_{R_X},\rho_{R_X}) +
  P`(\rho_{R_X},\tilde\sigma_{R_X}) \leqslant \epsilon'' + \epsilon$; hence, there
  exists $\Delta_{R_X}\geqslant 0$ such that
  $\bar\rho_{R_X} \leqslant \tilde\sigma_{R_X} + \Delta_{R_X}$ with
  $\tr\Delta_{R_X}\leqslant\epsilon''+\epsilon$.  Then,
  \begin{align}
    \bar\rho_{R_X}^{1/2}\, Z_{X'R_X}\, \bar\rho_{R_X}^{1/2}
    &= \mu\,\Pi^{\tilde\rho_{X'}}_{X'}\otimes\bar\rho_{R_X}
      \nonumber\\
    &\leqslant \mu\,\Pi^{\tilde\rho_{X'}}_{X'}\otimes`(\tilde\sigma_{R_X}+\Delta_{R_X})
      \nonumber\\
    &\leqslant \mu\,\Pi^{\tilde\rho_{X'}}_{X'}\otimes
      `\big(2^{\Dmax{\tilde\sigma_{R_X}}{\Gamma_{R_X}}}\,\Gamma_{R_X} + \Delta_{R_X})
      \nonumber\\
    &\leqslant \omega_{X'}\otimes\Gamma_{R_X} + \mu\,\Ident_{X'}\otimes\Delta_{R_X}\ ,
  \end{align}
  and we may define $X_{R_X} = \mu\,\Delta_R$ in order for
  constraint~\eqref{eq:SDP-coh-rel-entr-nonsmooth-conddual-Z} to be also
  satisfied.  The attained dual objective value is
  \begin{align}
    \text{obj.} = \tr`(Z_{X'R_X}\,\bar\rho_{X'R_X}) - \tr`(X_{R_X})
    = \mu\,`\big(\tr`\big(\Pi^{\tilde\rho_{X'}}_{X'}\,\bar\rho_{X'}) - \epsilon'' - \epsilon)\ .
    \label{eq:prop-cohrelentr-2-upper-bound-min-max-calc-1}
  \end{align}
  Analogously to the input state, now we have for the output state
  $D`(\bar\rho_{X'},\tilde\rho_{X'})\leqslant P`(\bar\rho_{X'},\tilde\rho_{X'})
  \leqslant P`(\bar\rho_{X'},\rho_{X'}) + P`(\rho_{X'},\tilde\rho_{X'})
  \leqslant \epsilon'' + \epsilon'$; there must exist $\Delta_{X'}\geqslant 0$
  with $\bar\rho_{X'}\geqslant\tilde\rho_{X'}-\Delta_{X'}$ and
  $\tr\Delta_{X'}\leqslant\epsilon''+\epsilon'$.  Hence,
  $\tr`\big(\Pi^{\tilde\rho_{X'}}_{X'}\,\bar\rho_{X'}) \geqslant
  \tr`\big(\Pi^{\tilde\rho_{X'}}_{X'}\,\tilde\rho_{X'}) -
  \tr`\big(\Pi^{\tilde\rho_{X'}}_{X'}\,\Delta_{X'}) \geqslant 1 -
  \epsilon''-\epsilon'$.  Thus,
  \begin{align}
    \text{\eqref{eq:prop-cohrelentr-2-upper-bound-min-max-calc-1}}
    &\geqslant \mu\,`(1 - \epsilon - \epsilon' - 2\epsilon'')\ .
  \end{align}
  The claim follows by noting that
  $-\log\mu = \Dmax[\epsilon]{\sigma_X}{\Gamma_X} -
  \Dminz[\epsilon']{\rho_{X'}}{\Gamma_{X'}}$.
\end{proof}

In order to formulate lower bounds on the coherent relative entropy, we
introduce a generalization of the \emph{Rob entropy} or \emph{smooth
  $S$-entropy}~\cite{Vitanov2013_chainrules}:
\begin{align}
  \begin{split}
    \Dr{\rho}{\Gamma}
    &= -\log\,\norm[\big]{\rho^{-1/2}\Gamma\rho^{-1/2}}_\infty
    \\
    &= -\log\,\min`{ \nu:~ \nu\rho \geqslant \Pi^\rho\Gamma\Pi^\rho } \ ;
  \end{split}
  \\[1ex]
  \Dr[\epsilon]{\rho}{\Gamma}
  &= \max_{\hat\rho\approx_\epsilon\rho} \Dr{\hat\rho}{\Gamma}\ .
\end{align}

\begin{proposition}
  \label{prop:coh-rel-entr-lower-bound-diff-dr-dmax}
  We have the lower bound
  \begin{align}
    \DCohz{\rho}{X}{X'}{\Gamma_X}{\Gamma_{X'}}
    \geqslant \Dr{\sigma_X}{\Gamma_X}
    - \Dmax{\rho_{X'}}{\Gamma_{X'}}\ ,
  \end{align}
  with $\sigma_{X} = t_{X\to R_X}`(\rho_{R_X})$.
\end{proposition}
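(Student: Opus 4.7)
The plan is to exhibit an explicit feasible candidate $\mathcal{T}_{X\to X'}$ in the optimization defining the non-smooth coherent relative entropy (Definition~\ref{def:coh-rel-entr}) which attains the claimed value. The key observation is that the constraints in the optimization do not restrict the behavior of $\mathcal{T}$ outside the support of $\sigma_X$, so we may simply ``zero out'' that part of the map in order to control the bound on $\mathcal{T}(\Gamma_X)$ using only input- and output-side quantities.

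Concretely, let $\mathcal{E}_{X\to X'}$ be any trace-preserving, completely positive extension of the logical process implicit in $\rho_{X'R_X}$, i.e., any channel satisfying $\mathcal{E}_{X\to X'}(\sigma_{XR_X}) = \rho_{X'R_X}$ with $\ket\sigma_{XR_X} = \sigma_X^{1/2}\,\ket\Phi_{X:R_X}$; such an extension always exists. Let $\Pi^{\sigma_X}_X$ denote the projector onto the support of $\sigma_X$, and define the candidate
\begin{align}
  \mathcal{T}_{X\to X'}(\omega_X)
  := \mathcal{E}_{X\to X'}\bigl(
       \Pi^{\sigma_X}_X\,\omega_X\,\Pi^{\sigma_X}_X
     \bigr)\ .
\end{align}
This map is completely positive by construction, is trace-nonincreasing because $\mathcal{T}^\dagger(\Ident_{X'}) = \Pi^{\sigma_X}_X\,\mathcal{E}^\dagger(\Ident_{X'})\,\Pi^{\sigma_X}_X = \Pi^{\sigma_X}_X \leqslant \Ident_X$, and reproduces the process matrix exactly: since $\sigma_{XR_X}$ is already supported on $\Pi^{\sigma_X}_X\otimes\Ident_{R_X}$, one has $\mathcal{T}(\sigma_{XR_X}) = \mathcal{E}(\sigma_{XR_X}) = \rho_{X'R_X}$.

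The remaining step is to verify the operator inequality $\mathcal{T}_{X\to X'}(\Gamma_X) \leqslant 2^{-y}\,\Gamma_{X'}$ for the target value $y = \Dr{\sigma_X}{\Gamma_X} - \Dmax{\rho_{X'}}{\Gamma_{X'}}$. By the definition of the rob entropy, $\Pi^{\sigma_X}_X\,\Gamma_X\,\Pi^{\sigma_X}_X \leqslant 2^{-\Dr{\sigma_X}{\Gamma_X}}\,\sigma_X$. Applying the positive map $\mathcal{E}_{X\to X'}$, using $\mathcal{E}_{X\to X'}(\sigma_X) = \rho_{X'}$, and then invoking the defining inequality $\rho_{X'} \leqslant 2^{\Dmax{\rho_{X'}}{\Gamma_{X'}}}\,\Gamma_{X'}$ gives
\begin{align}
  \mathcal{T}_{X\to X'}(\Gamma_X)
  \leqslant 2^{\Dmax{\rho_{X'}}{\Gamma_{X'}} - \Dr{\sigma_X}{\Gamma_X}}\,\Gamma_{X'}\ ,
\end{align}
so that $(\mathcal{T}_{X\to X'},\,y)$ is feasible in~\eqref{eq:coh-rel-entr-def}, which yields the bound. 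No real technical obstacle arises here: the whole proof reduces to identifying the correct truncated candidate, after which the two one-sided inequalities combine immediately.
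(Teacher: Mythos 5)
Your proof is correct and is essentially the paper's own argument, just cast in the channel picture rather than the Choi-matrix picture. Your candidate $\mathcal{T}(\omega) = \mathcal{E}(\Pi^{\sigma_X}\omega\,\Pi^{\sigma_X})$ has Choi matrix $\Pi^{\sigma_{R_X}}E_{X'R_X}\Pi^{\sigma_{R_X}} = \rho_{R_X}^{-1/2}\rho_{X'R_X}\rho_{R_X}^{-1/2}$, which is exactly the $T_{X'R_X}$ the paper plugs into the SDP of Proposition~\ref{prop:coh-rel-entr-nonsmooth-SDP}, and the chain of operator inequalities (sandwich $\Pi^{\sigma}\Gamma\Pi^{\sigma}\leqslant 2^{-\Dr{\sigma}{\Gamma}}\sigma$, push through the CP map, then apply the $\Dmax$ bound on the output) is the same as the paper's manipulation of the infinity norm.
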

\begin{proof}[*prop:coh-rel-entr-lower-bound-diff-dr-dmax]
  Choose the primal candidate
  $T_{X'R_X} = \rho_{R_X}^{-1/2}\rho_{X'R_X}\rho_{R_X}^{-1/2}$.  We have
  $\tr_{X'}T_{X'R_X} = \rho_{R_X}^{-1/2}\rho_{R_X}\rho_{R_X}^{-1/2} =
  \Pi^{\rho_{R_X}}_{R_X} \leqslant \Ident_{R_X}$ so our candidate
  satisifes~\eqref{eq:SDP-coh-rel-entr-nonsmooth-cond-trnoninc}.
  Also~\eqref{eq:SDP-coh-rel-entr-nonsmooth-cond-correctmapping} is satisfied by
  construction, and $\tr_{R_X}`*(T_{X'R_X}\Gamma_{R_X})$ is in the support of
  $\rho_{X'}$ and hence it lies in the support of $\Gamma_{X'}$. According to
  \autoref{prop:coh-rel-entr-rewrite-primal-problem-infinity-norm} we choose
  $\alpha=\norm[\big]{\Gamma_{X'}^{-1/2} \tr_{R_X}`*[T_{X'R_X}\Gamma_{R_X}]
    \Gamma_{X'}^{-1/2}}_\infty$ and
  \begin{align}
    \hspace*{3em}&\hspace*{-3em} 2^{-\DCohz{\rho}{X}{X'}{\Gamma_X}{\Gamma_{X'}}}
    \nonumber\\
    &\leqslant \alpha =
    \norm[\big]{\Gamma_{X'}^{-1/2} \tr_{R_X}`*[T_{X'R_X}\Gamma_{R_X}]
      \Gamma_{X'}^{-1/2}}_\infty
    \nonumber\\
    &= \norm[\big]{\Gamma_{X'}^{-1/2} \tr_{R_X}`*[
      T_{X'R_X} \Pi^{\rho_{R_X}}_{R_X} \Gamma_{R_X}\Pi^{\rho_{R_X}}_{R_X}
      ] \Gamma_{X'}^{-1/2}}_\infty
    \nonumber\\
    &\leqslant 2^{-\Dr{\rho_{R_X}}{\Gamma_{R_X}}}\,
    \norm[\big]{ \Gamma_{X'}^{-1/2} \tr_{R_X}`*[
      T_{X'R_X} \rho_{R_X} ] \Gamma_{X'}^{-1/2} }_\infty\ ,
      \label{eq:coh-rel-entr-lower-bound-diff-dr-dmax-proof-calc-1}
  \end{align}
  since by definition
  $\rho_{R_X}^{-1/2}\Gamma_{R_X}\rho_{R_X}^{-1/2} \leqslant
  2^{-\Dr{\rho_{R_X}}{\Gamma_{R_X}}}\Ident$ and thus
  $\Pi^{\rho_{R_X}}_{R_X}\Gamma_{R_X} \Pi^{\rho_{R_X}}_{R_X} \leqslant
  2^{-\Dr{\rho_{R_X}}{\Gamma_{R_X}}}\rho_{R_X}$. Then
  \begin{align*}
    \text{\eqref{eq:coh-rel-entr-lower-bound-diff-dr-dmax-proof-calc-1}}
    &= 2^{-\Dr{\rho_{R_X}}{\Gamma_{R_X}}}\,
      \norm[\big]{\Gamma_{X'}^{-1/2} \rho_{X'} \Gamma_{X'}^{-1/2}}_\infty
    \\
    &=  2^{-\Dr{\sigma_X}{\Gamma_X}} \, 2^{\Dmax{\rho_{X'}}{\Gamma_{X'}}}\ .
    \tag*\qedhere
  \end{align*}
\end{proof}

The quantity $\Dr\cdot\cdot$, when {smoothed}, is essentially equal to the
{min-relative entropy}: These two differ by a term which is logarithmic in the
{failure probability}.  In this way, the smooth quantity
$\Dr[\epsilon]\cdot\cdot$ may be related to a better known quantity with an
operational interpretation.

\begin{proposition}
  \label{prop:smooth-Drob-to-Dmin}
  Let $\epsilon>0$. Then
  \begin{align}
    \Dr[\epsilon]{\rho}{\Gamma}
    \geqslant \Dminz{\rho}{\Gamma} + \log\epsilon' \ ,
  \end{align}
  where $\epsilon' = \epsilon^2/\left(2+\epsilon^2\right)$, or equivalently,
  $\epsilon = \sqrt{2\epsilon'/(1-\epsilon')}$.
\end{proposition}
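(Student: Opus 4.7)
The plan is to exhibit an explicit smoothing state $\hat\rho$ lying within purified distance $\epsilon$ of $\rho$ that already achieves the claimed lower bound on $\Dr{\hat\rho}{\Gamma}$. We may assume $\tr(\Pi^\rho\Gamma) > 0$, since otherwise $\Dminz{\rho}{\Gamma} = +\infty$ and the inequality is vacuous. The natural candidate is a small perturbation of $\rho$ in the direction of the normalized \emph{Gibbs-like} state on the support of $\rho$, namely
\begin{align*}
  \tau \;:=\; \frac{\Pi^\rho\Gamma\Pi^\rho}{\tr(\Pi^\rho\Gamma)}\ .
\end{align*}
Concretely, I would set $k = \epsilon^2/2$ and define the normalized state $\hat\rho = (\rho + k\tau)/(1+k)$. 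Since $\tau$ is supported on $\Pi^\rho$ and $\rho$ is full-rank on its own support, one has $\Pi^{\hat\rho} = \Pi^\rho$.

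The lower bound on $\Dr{\hat\rho}{\Gamma}$ then follows from a one-line manipulation: from $\hat\rho \geqslant \frac{k}{1+k}\tau = \frac{k}{(1+k)\tr(\Pi^\rho\Gamma)}\,\Pi^\rho\Gamma\Pi^\rho$, we read off
\begin{align*}
  \Pi^{\hat\rho}\Gamma\Pi^{\hat\rho} \;=\; \Pi^\rho\Gamma\Pi^\rho
  \;\leqslant\; \frac{(1+k)\tr(\Pi^\rho\Gamma)}{k}\,\hat\rho\ ,
\end{align*}
so that by the minimization form of $\Dr{\cdot}{\cdot}$,
\begin{align*}
  \Dr{\hat\rho}{\Gamma} \;\geqslant\; -\log\tr(\Pi^\rho\Gamma) + \log\frac{k}{1+k}
  \;=\; \Dminz{\rho}{\Gamma} + \log\epsilon'\ ,
\end{align*}
using $k/(1+k) = (\epsilon^2/2)/(1+\epsilon^2/2) = \epsilon^2/(2+\epsilon^2) = \epsilon'$.

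The only nontrivial step is the purified-distance bound $P(\hat\rho,\rho)\leqslant\epsilon$. I would compute
\begin{align*}
  \rho^{1/2}\hat\rho\,\rho^{1/2} \;=\; \frac{1}{1+k}\,\rho^2 + \frac{k}{1+k}\,\rho^{1/2}\tau\rho^{1/2}
  \;\geqslant\; \frac{1}{1+k}\,\rho^2\ ,
\end{align*}
and invoke operator monotonicity of the square root to get $\sqrt{\rho^{1/2}\hat\rho\rho^{1/2}} \geqslant \rho/\sqrt{1+k}$, so $F(\hat\rho,\rho) = \tr\sqrt{\rho^{1/2}\hat\rho\rho^{1/2}} \geqslant 1/\sqrt{1+k}$. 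Consequently $P^2 = 1-F^2 \leqslant k/(1+k) = \epsilon'$, and since $\epsilon' = \epsilon^2/(2+\epsilon^2) \leqslant \epsilon^2$, we obtain $P(\hat\rho,\rho)\leqslant\sqrt{\epsilon'}\leqslant\epsilon$, so $\hat\rho$ is a valid smoothing candidate and the claim follows. The one conceptual ingredient to watch is the operator-monotonicity step in the fidelity estimate; everything else is algebra in $k$ and $\epsilon$.
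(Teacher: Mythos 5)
Your proof is correct, and it takes a genuinely different route from the paper's. The paper proceeds indirectly: it invokes the SDP characterization of the hypothesis-testing relative entropy $\DHyp[\eta]{\rho}{\Gamma}$ (with $\eta=1-\epsilon'$), extracts the optimal primal variable $Q$ together with the complementary-slackness relations, sets $\bar\rho = \Pi^Q\rho\Pi^Q$ as the smoothing candidate, and then bounds the relevant dual variable $\mu$ using the slackness conditions and the primal constraint. You instead construct an explicit mixture $\hat\rho = (\rho + k\tau)/(1+k)$ with $\tau = \Pi^\rho\Gamma\Pi^\rho/\tr(\Pi^\rho\Gamma)$ and $k=\epsilon^2/2$, observe that $\Pi^{\hat\rho}=\Pi^\rho$ (since $\hat\rho\geqslant\rho/(1+k)$ while both summands are supported inside $\Pi^\rho$), and read off $\Dr{\hat\rho}{\Gamma}\geqslant -\log\tr(\Pi^\rho\Gamma)+\log\frac{k}{1+k}$ directly from $\hat\rho\geqslant\frac{k}{1+k}\tau$. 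The fidelity bound $F(\hat\rho,\rho)\geqslant(1+k)^{-1/2}$ via operator monotonicity of the square root is clean and correct, and in fact yields $P(\hat\rho,\rho)\leqslant\sqrt{\epsilon'}=\epsilon/\sqrt{2+\epsilon^2}$, which is strictly stronger than the $\leqslant\epsilon$ needed. What your approach buys is a fully elementary, self-contained argument with no appeal to SDP duality or $\DHyp$, and a sharper distance estimate; what the paper's route buys is a conceptual link to hypothesis testing, which the authors exploit elsewhere. The only bookkeeping to watch (and you flagged it) is the degenerate case $\tr(\Pi^\rho\Gamma)=0$: there $\Dminz{\rho}{\Gamma}=+\infty$, but one should note that $\Pi^\rho\Gamma\Pi^\rho=0$ then forces $\Dr{\rho}{\Gamma}=+\infty$ as well, so the inequality holds rather than being ``vacuous'' in the strict sense; this is a wording quibble, not a gap.
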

\begin{proof}[*prop:smooth-Drob-to-Dmin]
  The proof of this proposition proceeds via the hypothesis testing relative entropy,
  $\DHyp[\eta]{\rho}{\Gamma}$. Let
  $\epsilon' = \epsilon^2/\left(2+\epsilon^2\right)$ and let $\eta=1-\epsilon'$. The
  hypothesis testing relative entropy can be written as the solution of a semidefinite
  program~\cite{Dupuis2013_DH}. Specifically, there exists $Q\geqslant 0$,
  $\mu\geqslant 0$ and $X\geqslant 0$ such that
  \begin{align}
    \label{eq:prop-smooth-Drob-to-Dmin-proof-DH-sdp-value}
    2^{-\DHyp[\eta]{\rho}{\Gamma}} = \frac1\eta \tr\left[Q\Gamma\right]
    = \mu - \frac{\tr X}{\eta}\ ,
  \end{align}
  with $Q$, $\mu$ and $X$ satisfying the conditions
  \begin{subequations}
    \begin{gather}
      Q \leqslant \Ident\ ;
      \label{eq:prop-smooth-Drob-to-Dmin-proof-DH-constraint-primal-QlessIdent} \\
      \tr\left[Q\rho\right] \geqslant \eta\ ;
      \label{eq:prop-smooth-Drob-to-Dmin-proof-DH-constraint-primal-Qrho} \\
      \mu\rho \leqslant \Gamma + X\ .
      \label{eq:prop-smooth-Drob-to-Dmin-proof-DH-constraint-dual-murho}
    \end{gather}
  \end{subequations}
  In addition, the complementary slackness relations for these variables read
  \begin{subequations}
    \begin{gather}
      XQ = X\ ;
      \label{eq:prop-smooth-Drob-to-Dmin-proof-DH-slackness-QX} \\
      \tr\left(Q\rho\right) = \eta\ ;
      \label{eq:prop-smooth-Drob-to-Dmin-proof-DH-slackness-Qrho} \\
      Q\left(\mu\rho - \Gamma - X\right) = 0\ .
      \label{eq:prop-smooth-Drob-to-Dmin-proof-DH-slackness-QmurhoX}
    \end{gather}
  \end{subequations}

  Define $\bar\rho = \Pi^Q \rho\Pi^Q$, where $\Pi^Q$ is the projector onto the support of
  $Q$. Apply $Q^{-1}\left(\cdot\right)\Pi^Q$
  onto~\eqref{eq:prop-smooth-Drob-to-Dmin-proof-DH-slackness-QmurhoX} to obtain
  \begin{align}
    \mu\bar\rho = \Pi^Q\Gamma\Pi^Q + \Pi^QX\Pi^Q \geqslant \Pi^Q\Gamma\Pi^Q\ .
  \end{align}
  In addition, because $\Pi^Q\Gamma\Pi^Q$ has support on $\Pi^Q$, then $\bar\rho$ must
  also have support on the full of $\Pi^Q$, i.e.\@ $\Pi^{\bar\rho} = \Pi^Q$. So, by
  definition of $\Dr{\bar\rho}{\Gamma}$ have that
  \begin{align}
    2^{-\Dr{\bar\rho}{\Gamma}} \leqslant \mu\ .
  \end{align}
  Also, define $\bar\rho' = \bar\rho / \tr\bar\rho$, and we can see by
  \autoref{util:purified-distance-to-normalized-projected-normalized-state} that
  $P\left(\rho,\bar\rho'\right)\leqslant \sqrt{2\epsilon'/\left(1-\epsilon'\right)} =
  \epsilon$.
  Also, $2^{-\Dr{\bar\rho'}{\Gamma}}\leqslant 2^{-\Dr{\bar\rho}{\Gamma}}$ by definition of
  $\Dr\cdot\cdot$.  Then $\bar\rho'$ is a valid optimization candidate in the definition
  of $\Dr[\tilde\epsilon]{\rho}{\Gamma}$ and
  \begin{align}
    2^{-\Dr[\tilde\epsilon]{\rho}{\Gamma}}
    \leqslant 2^{-\Dr{\bar\rho'}{\Gamma}} \leqslant \mu\ .
  \end{align}

  It thus remains to show that
  $\mu\leqslant\epsilon'^{-1}\,2^{-\Dminz{\rho}{\Gamma}}$.  Apply
  $\tr\left(\Pi^\rho\left(\cdot\right)\right)$ onto the
  constraint~\eqref{eq:prop-smooth-Drob-to-Dmin-proof-DH-constraint-dual-murho} to obtain
  \begin{align}
    \mu \leqslant \tr\left(\Pi^\rho\Gamma\right) + \tr\left(\Pi^\rho X\right)
    \leqslant \tr\left(\Pi^\rho\Gamma\right) + \tr\left(X\right)\ .
    \label{eq:prop-smooth-Drob-to-Dmin-proof-calc-almost-there}
  \end{align}
  Now, because of~\eqref{eq:prop-smooth-Drob-to-Dmin-proof-DH-sdp-value}, we have
  $0\leqslant \tr\left[Q\Gamma\right] = \mu\eta - \tr X$, and thus
  $\tr X\leqslant \mu\eta$. Combining
  with~\eqref{eq:prop-smooth-Drob-to-Dmin-proof-calc-almost-there} gives
  \begin{align}
    \mu\left(1-\eta\right) \leqslant \tr\left(\Pi^\rho\Gamma\right)\ ;
  \end{align}
  since $\epsilon'=1-\eta$ and
  $\tr\left(\Pi^\rho\Gamma\right)=2^{-\Dminz{\rho}{\Gamma}}$ we have
  $\mu\leqslant \left(1/\epsilon'\right) 2^{-\Dminz{\rho}{\Gamma}}$ and the
  claim follows.
\end{proof}

The following proposition gives a lower bound to the smooth coherent relative entropy.  This will
prove crucial to the proof of the asymptotic equipartition theorem.

\begin{proposition}
  \label{prop:coh-rel-entr-smooth-lower-bound-diff-Dmin-Dmax}
  Let $\epsilon',\epsilon''\geqslant 0$ and $\epsilon'''>0$. Let
  $\epsilon\geqslant 2\sqrt{2\epsilon'} +
  2\sqrt{2`*(\epsilon''+\epsilon''')}$. Then
  \begin{align}
    &\DCohx[\epsilon]{\rho}{X}{X'}{\Gamma_X}{\Gamma_{X'}}
      \nonumber \\
    &\qquad \geqslant
    \Dminz[\epsilon'']{\sigma_X}{\Gamma_X}
    - \Dmax[\epsilon']{\rho_{X'}}{\Gamma_{X'}}
    + \log\frac{\epsilon'''^2}{2+\epsilon'''^2}\ ,
    \label{eq:prop-coh-rel-entr-smooth-lower-bound-diff-Dmin-Dmax}
  \end{align}
  where $\sigma_{X} = t_{X\to R_X}`(\rho_{R_X})$.
\end{proposition}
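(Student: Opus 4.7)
The plan is to apply the non-smooth lower bound of \autoref{prop:coh-rel-entr-lower-bound-diff-dr-dmax} to a carefully constructed nearby process matrix $\hat\rho_{X'R_X}$, and then convert the $D_r$ term on the input side into $D_{\min,0}$ via \autoref{prop:smooth-Drob-to-Dmin}. The argument breaks into three pieces: preparing smoothed input and output marginals, building a process matrix with those marginals close to $\rho_{X'R_X}$, and exhibiting an explicit primal SDP candidate $\mathcal{T}_{X\to X'}$ built from $\hat\rho_{X'R_X}$.

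First I would pick $\tilde\sigma_X$ achieving $D_{\min,0}^{\epsilon''}(\sigma_X\Vert\Gamma_X)$ with $P(\tilde\sigma_X,\sigma_X)\leqslant\epsilon''$, and apply \autoref{prop:smooth-Drob-to-Dmin} with parameter $\epsilon'''$ to obtain $\hat\sigma_X$ with $P(\hat\sigma_X,\tilde\sigma_X)\leqslant\epsilon'''$ and $D_r(\hat\sigma_X\Vert\Gamma_X) \geqslant D_{\min,0}^{\epsilon''}(\sigma_X\Vert\Gamma_X) + \log[\epsilon'''^2/(2+\epsilon'''^2)]$; in particular $P(\hat\sigma_X,\sigma_X)\leqslant\epsilon''+\epsilon'''$. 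In parallel, pick $\tilde\rho_{X'}$ optimal for $D_{\max}^{\epsilon'}(\rho_{X'}\Vert\Gamma_{X'})$ at purified distance at most $\epsilon'$ from $\rho_{X'}$, so that $D_{\max}(\tilde\rho_{X'}\Vert\Gamma_{X'}) = D_{\max}^{\epsilon'}(\rho_{X'}\Vert\Gamma_{X'})$. Next, I would construct $\hat\rho_{X'R_X}$ with input marginal $\hat\sigma_{R_X}=t_{X\to R_X}(\hat\sigma_X)$ and output marginal $\tilde\rho_{X'}$, at purified distance at most $\epsilon$ from $\rho_{X'R_X}$. This would be done in two Uhlmann-type stages — one fixing the output marginal, one fixing the input marginal — using \autoref{lemma:cohrelentr-2-std-purifs-close} to convert a reduction perturbation of size $\delta$ in purified distance into a full-state perturbation bounded by $2\sqrt{2\delta}$. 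Applying this with $\delta=\epsilon'$ and then $\delta=\epsilon''+\epsilon'''$ and summing by the triangle inequality for purified distance yields $P(\hat\rho_{X'R_X},\rho_{X'R_X})\leqslant 2\sqrt{2\epsilon'}+2\sqrt{2(\epsilon''+\epsilon''')}\leqslant\epsilon$.

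Finally, following the primal-candidate construction in the proof of \autoref{prop:coh-rel-entr-lower-bound-diff-dr-dmax}, I would set $T_{X'R_X}=\hat\sigma_{R_X}^{-1/2}\hat\rho_{X'R_X}\hat\sigma_{R_X}^{-1/2}$ and take $\mathcal{T}_{X\to X'}$ to be the associated trace-nonincreasing, completely positive map. The same calculation as in that proof then yields $\mathcal{T}(\Gamma_X)\leqslant 2^{-D_r(\hat\sigma_X\Vert\Gamma_X)+D_{\max}(\tilde\rho_{X'}\Vert\Gamma_{X'})}\,\Gamma_{X'}$, while by construction $\mathcal{T}(\sigma_{XR_X})$ lies within purified distance $\epsilon$ of $\rho_{X'R_X}$, making $\mathcal{T}$ a feasible candidate in the SDP of \autoref{prop:coh-rel-entr-SDP} defining $\hat{D}^\epsilon$. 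Substituting the bounds from the first paragraph delivers the claimed inequality.

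The delicate step will be building $\hat\rho_{X'R_X}$ with both specified marginals simultaneously at the stated purified distance: Uhlmann's theorem fixes only one marginal at a time, and one must interleave the two stages so that neither undoes the other while keeping the full purified-distance budget within $2\sqrt{2\epsilon'}+2\sqrt{2(\epsilon''+\epsilon''')}$. The square-root structure of \autoref{lemma:cohrelentr-2-std-purifs-close} is exactly what produces the $2\sqrt{2(\cdot)}$ factors in the hypothesis on $\epsilon$, and the bookkeeping of these bounds together with the verification that the resulting $\mathcal{T}$ genuinely brings $\sigma_{XR_X}$ to within $\epsilon$ of $\rho_{X'R_X}$ will be the main technical burden.
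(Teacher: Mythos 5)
Your overall strategy matches the paper's: choose smoothed input and output marginals, construct a nearby process matrix with those marginals, and invoke the non-smooth lower bound of \autoref{prop:coh-rel-entr-lower-bound-diff-dr-dmax}. But there are two genuine gaps.

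First, the lemma you cite for the marginal-fixing step, \autoref{lemma:cohrelentr-2-std-purifs-close}, compares the two canonical purifications $\sigma_X^{1/2}\Phi_{X:R}\sigma_X^{1/2}$ and $\hat\sigma_X^{1/2}\Phi_{X:R}\hat\sigma_X^{1/2}$ of two close density operators; it does not let you modify one marginal of a given bipartite state while keeping the other marginal fixed. The tool the paper actually uses is \autoref{util:smoothing-part-of-bipartite-state-rest-exact}: given $\rho_{AB}$ and $\tilde\rho_A$ with $D(\tilde\rho_A,\rho_A)\leq\delta$, it produces $\hat\rho_{AB}$ with $\tr_B\hat\rho_{AB}=\tilde\rho_A$, $\tr_A\hat\rho_{AB}=\rho_B$ exactly, and $P(\hat\rho_{AB},\rho_{AB})\leq 2\sqrt{2\delta}$. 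Applied twice (first fixing the $X'$ marginal to $\tilde\rho_{X'}$, then the $R_X$ marginal to your $\hat\sigma_{R_X}$) and combined with the triangle inequality, this yields the $2\sqrt{2\epsilon'}+2\sqrt{2(\epsilon''+\epsilon''')}$ budget. You anticipate the shape of the argument --- the two-stage structure and the $2\sqrt{2\delta}$ factors --- and flag it as the delicate step, but the lemma you name does not perform it.

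Second, your final step is logically off. You set $T_{X'R_X}=\hat\sigma_{R_X}^{-1/2}\hat\rho_{X'R_X}\hat\sigma_{R_X}^{-1/2}$ and claim that the resulting map $\mathcal{T}$ satisfies $P(\mathcal{T}(\sigma_{XR_X}),\rho_{X'R_X})\leq\epsilon$, making it feasible in the SDP of \autoref{prop:coh-rel-entr-SDP}. But $\mathcal{T}(\sigma_{XR_X})=\rho_{R_X}^{1/2}\,T_{X'R_X}\,\rho_{R_X}^{1/2}$, and because $T_{X'R_X}$ is sandwiched by $\hat\sigma_{R_X}^{-1/2}$ rather than by $\rho_{R_X}^{-1/2}$, this operator is not $\hat\rho_{X'R_X}$; the mismatch between $\rho_{R_X}$ and $\hat\sigma_{R_X}$ contributes an additional purified-distance penalty of order $\sqrt{\epsilon''+\epsilon'''}$ that exceeds your $\epsilon$ budget. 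This is precisely the distinction between the two smoothings tracked in \autoref{prop:cohrelentr-2-equiv}. The fix is easy because the quantity in the statement is the alternative smoothing of \eqref{eq:coh-rel-entr-def-smooth-alt}, a maximum over states $\hat\rho_{X'R_X}\approx_\epsilon\rho_{X'R_X}$ of the non-smooth coherent relative entropy. So you never need to exhibit an SDP-feasible $\mathcal{T}$ for the original $\rho$: once you have $\hat\rho_{X'R_X}$ at purified distance at most $\epsilon$ from $\rho_{X'R_X}$ with marginals $\hat\sigma_{R_X}$ and $\tilde\rho_{X'}$, apply \autoref{prop:coh-rel-entr-lower-bound-diff-dr-dmax} to $\hat\rho_{X'R_X}$ directly and then take the maximum in \eqref{eq:coh-rel-entr-def-smooth-alt}.
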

\begin{proof}[*prop:coh-rel-entr-smooth-lower-bound-diff-Dmin-Dmax]
  Let $\tilde\rho_{R_X}, \tilde\rho_{X'}$ be quantum states which are optimal
  smoothed states for the quantities
  \begin{subequations}
    \begin{align}
      \Dminz[\epsilon'']{\rho_{R_X}}{\Gamma_{R_X}}
      &= \Dminz{\tilde\rho_{R_X}}{\Gamma_{R_X}}\ . \\
      \Dmax[\epsilon']{\rho_{X'}}{\Gamma_{X'}}
      &= \Dmax{\tilde\rho_{X'}}{\Gamma_{X'}}\ .
    \end{align}
  \end{subequations}
  With $\epsilon'''> 0$ and using \autoref{prop:smooth-Drob-to-Dmin}, we know
  that
  \begin{align}
    \Dr[\epsilon''']{\tilde\rho_{R_X}}{\Gamma_{R_X}}
    \geqslant
    \Dminz{\tilde\rho_{R_X}}{\Gamma_{R_X}}
    + \log\frac{\epsilon'''^2}{2+\epsilon'''^2}\ .
  \end{align}
  Let $\tilde{\tilde{\rho}}_{R_X}$ be the optimal smoothed state for
  $\Dr[\epsilon''']{\tilde\rho_{R_X}}{\Gamma_{R_X}}$, such that
  \begin{align}
    \Dr{\tilde{\tilde\rho}_{R_X}}{\Gamma_{R_X}}
    = \Dr[\epsilon''']{\tilde\rho_{R_X}}{\Gamma_{R_X}}\ .
  \end{align}
  At this point, we have
  \begin{multline}
    \Dr{\tilde{\tilde\rho}_{R_X}}{\Gamma_{R_X}}
    - \Dmax{\tilde\rho}{\Gamma_{X'}}
    \\
    \geqslant \Dminz[\epsilon'']{\rho_{R_X}}{\Gamma_{R_X}}
    - \Dmax[\epsilon']{\rho_{X'}}{\Gamma_{X'}}
    + \log\frac{\epsilon'''^2}{2+\epsilon'''^2}\ ,
    \label{eq:prop-coh-rel-entr-AEP-proof-lowbound-calc-1}
  \end{multline}
  with
  \begin{align}
    P`(\tilde\rho_{X'},\rho_{X'}) &\leqslant \epsilon'\ ;
    &
      P`(\tilde\rho_{R_X},\rho_{R_X}) &\leqslant \epsilon''\ ;
    &
      P`(\tilde{\tilde\rho}_{R_X},\tilde\rho_{R_X}) &\leqslant \epsilon'''\ .
  \end{align}
  Now, we'll apply \autoref{util:smoothing-part-of-bipartite-state-rest-exact}
  twice to construct a state close to $\rho_{X'R_X}$ which has marginals
  $\tilde\rho_{X'}$ and $\tilde{\tilde\rho}_{R_X}$ exactly. Let $\tau_{X'R_X}$
  be the quantum state given by
  \autoref{util:smoothing-part-of-bipartite-state-rest-exact} satisfying
  \begin{align}
    \tau_{X'} &= \tilde\rho_{X'}\ ;
    &
      \tau_{R_X} &= \rho_{R_X}\ ;
    &
      P`(\tau_{X'R_X},\rho_{X'R_X}) &\leqslant 2\sqrt{2\epsilon'}\ .
  \end{align}
  Applying \autoref{util:smoothing-part-of-bipartite-state-rest-exact} again, let
  $\tau'_{X'R}$ be a quantum state close to $\tau_{X'R}$ such that
  \begin{align}
    \tau'_{X'} &= \tilde\rho_{X'}\ ;
    &
      \tau'_{R_X} &= \tilde{\tilde\rho}_{R_X}\ ;
    &
      P`(\tau'_{X'R_X},\tau_{X'R_X}) &\leqslant 2\sqrt{2`(\epsilon''+\epsilon''')}\ .
  \end{align}
  We thus have by triangle inequality
  \begin{align}
    P`(\tau'_{X'R_X},\rho_{X'R_X})
    &\leqslant 2\sqrt{2\epsilon'} + 2\sqrt{2`(\epsilon''+\epsilon''')}\ .
  \end{align}
  By \autoref{prop:coh-rel-entr-lower-bound-diff-dr-dmax} we can now write
  \begin{multline}
    \DCohz{\tau'}{X}{X'}{\Gamma_X}{\Gamma_{X'}}
    \geqslant \Dr{\tau'_{R_X}}{\Gamma_{R_X}}
      - \Dmax{\tau'_{X'}}{\Gamma_{X'}}
      \\
    =  \Dr{\tilde{\tilde\rho}_{R_X}}{\Gamma_{R_X}}
      - \Dmax{\tilde\rho_{X'}}{\Gamma_{X'}}\ .
    \label{eq:prop-coh-rel-entr-AEP-proof-lowbound-calc-2}
  \end{multline}
  Observe now that $\tau'_{X'R_X}$ is a valid optimization candidate for
  $\DCohz[\epsilon]{\rho}{X}{X'}{\Gamma_X}{\Gamma_{X'}}$. Hence
  \begin{align}
    \DCohz[\epsilon]{\rho}{X}{X'}{\Gamma_X}{\Gamma_{X'}}
    \geqslant \DCohz{\tau'}{X}{X'}{\Gamma_X}{\Gamma_{X'}}\ .
    \label{eq:prop-coh-rel-entr-AEP-proof-lowbound-calc-3}
  \end{align}

  Finally, inequality~\eqref{eq:prop-coh-rel-entr-AEP-proof-lowbound-calc-3} followed
  by~\eqref{eq:prop-coh-rel-entr-AEP-proof-lowbound-calc-2}
  and~\eqref{eq:prop-coh-rel-entr-AEP-proof-lowbound-calc-1} provides us the seeked lower
  bound.
\end{proof}

We also have a bound which applies to product states, given in terms of min- and
max-relative entropies of input and output.  Physically, it asserts that a
possible strategy for implementing the product state process matrix is to
completely {erase} the input state (at a cost given by the {min-relative
  entropy}), and subsequently {prepare} the required output state (at a yield
given by the {max-relative entropy}).

\begin{proposition}[coherent relative entropy for product states]
  \label{prop:coh-rel-entr-for-product-states}
  For states $\sigma_X$ and $\rho_{X'}$, we have
  \begin{multline}
    \DCohz{*t_{X\to{}R_X}`(\sigma_X)\otimes\rho_{X'}}{X}{X'}{\Gamma_X}{\Gamma_{X'}}
    \\
    \geqslant \Dminz{\sigma_X}{\Gamma_X}
    - \Dmax{\rho_{X'}}{\Gamma_{X'}}\ .
  \end{multline}
\end{proposition}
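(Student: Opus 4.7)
The plan is to construct an explicit trace-nonincreasing, completely positive map $\mathcal{T}_{X\to X'}$ that produces the given product process matrix and attains the desired value $y = \Dminz{\sigma_X}{\Gamma_X} - \Dmax{\rho_{X'}}{\Gamma_{X'}}$ in the optimization~\eqref{eq:coh-rel-entr-def} defining the coherent relative entropy. The natural candidate is the ``measure-and-prepare'' map that first erases all information about the input (by a single binary measurement supported on $\sigma_X$) and then reprepares the fixed output $\rho_{X'}$. Concretely, I would set
\begin{align*}
  \mathcal{T}_{X\to X'}(\omega_X) = \tr\bigl(\Pi^{\sigma_X}_X\, \omega_X\bigr)\cdot \rho_{X'}\ ,
\end{align*}
which is manifestly completely positive.

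The verification then splits into three routine checks. First, the dual trace-nonincreasing condition $\mathcal{T}^\dagger_{X\leftarrow X'}(\Ident_{X'}) \leqslant \Ident_{R_X}$ becomes $(\tr\rho_{X'})\,\Pi^{\sigma_X}_X = \Pi^{\sigma_X}_X \leqslant \Ident_X$, which holds. Second, the process-matrix condition: with $\ket\sigma_{XR_X} = \sigma_X^{1/2}\,\ket\Phi_{X:R_X}$, the operator $\sigma_{XR_X}$ is already supported on $\Pi^{\sigma_X}_X$, so
\begin{align*}
  \mathcal{T}_{X\to X'}(\sigma_{XR_X})
  = \tr_X\bigl(\Pi^{\sigma_X}_X\,\sigma_{XR_X}\bigr)\otimes\rho_{X'}
  = \sigma_{R_X}\otimes\rho_{X'}\ ,
\end{align*}
which matches the required process matrix $t_{X\to R_X}(\sigma_X)\otimes\rho_{X'}$ exactly (so no $\epsilon$-slack is needed). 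Third, I would evaluate $\mathcal{T}_{X\to X'}(\Gamma_X) = \tr(\Pi^{\sigma_X}_X\,\Gamma_X)\cdot \rho_{X'} = 2^{-\Dminz{\sigma_X}{\Gamma_X}}\,\rho_{X'}$, and then use the defining inequality $\rho_{X'}\leqslant 2^{\Dmax{\rho_{X'}}{\Gamma_{X'}}}\,\Gamma_{X'}$ to conclude
\begin{align*}
  \mathcal{T}_{X\to X'}(\Gamma_X) \leqslant 2^{-\Dminz{\sigma_X}{\Gamma_X}+\Dmax{\rho_{X'}}{\Gamma_{X'}}}\,\Gamma_{X'}\ .
\end{align*}

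Thus the pair $(\mathcal{T}_{X\to X'}, y)$ with $y = \Dminz{\sigma_X}{\Gamma_X}-\Dmax{\rho_{X'}}{\Gamma_{X'}}$ is a feasible candidate in the maximization~\eqref{eq:coh-rel-entr-def}, which yields the claimed lower bound. There is essentially no hard step here; the only minor subtlety is checking that $\rho_{X'}$ lies in the support of $\Gamma_{X'}$ (needed for $\Dmax{\rho_{X'}}{\Gamma_{X'}}$ to be finite, which is tacitly assumed), and ensuring the process matrix of $\mathcal{T}$ really coincides with the prescribed $t_{X\to R_X}(\sigma_X)\otimes\rho_{X'}$; both are immediate from the fact that $\sigma_X^{1/2}\Pi^{\sigma_X}_X = \sigma_X^{1/2}$.
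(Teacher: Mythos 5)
Your proof is correct and is essentially the paper's argument: the map $\mathcal{T}_{X\to X'}(\omega_X)=\tr(\Pi^{\sigma_X}_X\omega_X)\,\rho_{X'}$ has Choi matrix $T_{X'R_X}=\Pi^{\sigma_{R_X}}_{R_X}\otimes\rho_{X'}$, which is exactly the primal candidate the paper feeds into the SDP of Proposition~\ref{prop:coh-rel-entr-nonsmooth-SDP}. You phrase the verification in terms of the channel and the definition~\eqref{eq:coh-rel-entr-def} rather than the Choi matrix and Proposition~\ref{prop:coh-rel-entr-rewrite-primal-problem-infinity-norm}, but the two formulations are identical, and your three checks (trace-nonincreasing, exact process matrix, $\Gamma$-sub-preservation factor) mirror the paper's step by step.
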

\begin{proof}[*prop:coh-rel-entr-for-product-states]
  Write $\sigma_{R_X} = t_{X\to R_X}`(\sigma_X)$.  Choose
  $T_{X'R_X} = \Pi^{\sigma_{R_X}}_{R_X} \otimes\rho_{X'}$. This choice trivially
  satisfies~\eqref{eq:SDP-coh-rel-entr-nonsmooth-cond-trnoninc}. Also,
  $\sigma_{R_X}^{1/2} T_{X'R_X}\sigma_{R_X}^{1/2} =
  \sigma_{R_X}\otimes\rho_{X'}$
  so~\eqref{eq:SDP-coh-rel-entr-nonsmooth-cond-correctmapping} is also
  satisfied.  We have that $\tr_{R_X} T_{X'R_X}\Gamma_{R_X}$ lies in the support
  of $\Gamma_{X'}$ because $\rho_{X'}$ does so, and as per
  \autoref{prop:coh-rel-entr-rewrite-primal-problem-infinity-norm} the optimal
  value of $\alpha$ corresponding to this $T_{X'R_X}$ is given by
  \begin{align}
    \alpha
    &= \norm[\big]{ \Gamma_{X'}^{-1/2} \tr_{R_X} `\big[ T_{X'R_X}\Gamma_{R_X} ]
      \Gamma_{X'}^{-1/2} }_\infty
      \nonumber\\
    &= \norm[\big]{\Gamma_{X'}^{-1/2} \tr_{R_X}`*[
      `\big(\Pi^{\sigma_{R_X}}_{R_X}\otimes\rho_{X'}) \Gamma_{R_X} ]
      \Gamma_{X'}^{-1/2}}_\infty
      \nonumber\\
    &= \tr_{R_X}`[\Pi^{\sigma_{R_X}}_{R_X} \Gamma_{R_X}] \,
      \norm[\big]{\Gamma_{X'}^{-1/2} \rho_{X'} \Gamma_{X'}^{-1/2}}_\infty
      \nonumber\\
    &= 2^{-\Dminz{\sigma_{R_X}}{\Gamma_{R_X}}}\,
      2^{\Dmax{\rho_{X'}}{\Gamma_{X'}}}\ .
  \end{align}
  This choice of $\alpha$ and $T_{X'R_X}$ is feasible for
  $2^{-\DCohz{*\sigma_{R_X}\otimes\rho_{X'}}{X}{X'}{\Gamma_X}{\Gamma_{X'}}}$,
  hence
  \begin{multline}
    \DCohz{*\sigma_{R_X}\otimes\rho_{X'}}{X}{X'}{\Gamma_X}{\Gamma_{X'}}
    \\
    \geqslant \Dminz{\sigma_X}{\Gamma_X} - \Dmax{\rho_{X'}}{\Gamma_{X'}}\ .
    \tag*\qedhere
  \end{multline}
\end{proof}

\subsection{Asymptotic equipartition property}
\label{sec:AEP}

Finally, the coherent relative entropy also obeys an asymptotic equipartition
property in the i.i.d.\@ limit.  In this limit, the coherent relative entropy
converges to the difference of relative entropies of the input and the output to
the respective $\Gamma$ operators.

Both versions of the coherent relative entropy we have introduced have the same
asymptotic behavior for small $\epsilon$.  For completeness we present the
detailed statements, including the ranges of $\epsilon$ for which the property
is proven for each quantity.

\begin{proposition}[Asymptotic equipartition property]
  \label{prop:coh-rel-entr-AEP}
  For any $\Gamma_X,\Gamma_{X'}\geqslant 0$, for any quantum state
  $\rho_{X'R_X}$, and for any $0<\epsilon<1/2$,
  \begin{multline}
    \lim_{n\to\infty}\frac1n
    \DCohx[\epsilon]`\big{*\rho_{X'R_X}^{\otimes n}}{X^n}{X'^n}%
    {\Gamma_X^{\otimes n}}{\Gamma_{X'}^{\otimes n}}
    \\[1ex]
    = \DD{\sigma_X}{\Gamma_X} - \DD{\rho_{X'}}{\Gamma_{X'}}\ ,
    \label{eq:prop-coh-rel-entr-AEP}
  \end{multline}
  where $\sigma_X = t_{X\to X'}`(\rho_{R_X})$.

  Similarly, for any $0<\epsilon<`*(18)^{-4}$,
  \begin{multline}
    \lim_{n\to\infty}\frac1n
    \DCohz[\epsilon]`\big{*\rho_{X'R_X}^{\otimes n}}{X^n}{X'^n}%
    {\Gamma_X^{\otimes n}}{\Gamma_{X'}^{\otimes n}}
    \\[1ex]
    = \DD{\sigma_X}{\Gamma_X} - \DD{\rho_{X'}}{\Gamma_{X'}}\ .
    \label{eq:prop-coh-rel-entr-AEP-2}
  \end{multline}
\end{proposition}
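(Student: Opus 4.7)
The plan is a sandwich argument: combine the upper bound from \autoref{prop:cohrelentr-smooth-upper-bound-min-max} and the lower bound from \autoref{prop:coh-rel-entr-smooth-lower-bound-diff-Dmin-Dmax}, specialize both to the i.i.d.\@ state $\rho_{X'R_X}^{\otimes n}$ with $\Gamma^{\otimes n}$ operators, divide by $n$, and invoke the already-known quantum asymptotic equipartition property for the smooth min- and max-relative entropies~\cite{Tomamichel2009IEEE_AEP,BookTomamichel2016_Finite}, namely
\begin{align*}
\lim_{n\to\infty} \tfrac{1}{n}\,\Dminz[\epsilon]{\sigma^{\otimes n}}{\Gamma^{\otimes n}}
&= \DD{\sigma}{\Gamma}\ ; \\
\lim_{n\to\infty} \tfrac{1}{n}\,\Dmax[\epsilon]{\sigma^{\otimes n}}{\Gamma^{\otimes n}}
&= \DD{\sigma}{\Gamma}\ ,
\end{align*}
valid for any fixed $0<\epsilon<1$ and any pair $(\sigma,\Gamma)$.

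For the alternative smoothing, I would first bound $\DCohx[\epsilon]\cdots$ from above. Applying \autoref{prop:cohrelentr-smooth-upper-bound-min-max} to $\rho_{X'R_X}^{\otimes n}$ with, say, smoothing parameters $\epsilon/4$, $\epsilon/4$, $\epsilon/4$ (so that $\bar\epsilon \leqslant \epsilon <1/2$), gives
\begin{align*}
&\tfrac{1}{n}\,\DCohx[\epsilon/4]{*\rho^{\otimes n}}{X^n}{X'^n}{\Gamma_X^{\otimes n}}{\Gamma_{X'}^{\otimes n}}
\\ &\quad \leqslant \tfrac{1}{n}\,\Dmax[\epsilon/4]{\sigma_X^{\otimes n}}{\Gamma_X^{\otimes n}} - \tfrac{1}{n}\,\Dminz[\epsilon/4]{\rho_{X'}^{\otimes n}}{\Gamma_{X'}^{\otimes n}} + O(1/n)\ ,
\end{align*}
and the right-hand side converges to $\DD{\sigma_X}{\Gamma_X} - \DD{\rho_{X'}}{\Gamma_{X'}}$ by the AEP for the smooth relative entropies. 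Since $\DCohx[\epsilon]{}{}$ is monotonically nondecreasing in $\epsilon$, this bounds the desired lim-sup at any $\epsilon\geqslant\epsilon/4$, which after relabeling yields the upper half of~\eqref{eq:prop-coh-rel-entr-AEP}. For the lower bound, I would choose $\epsilon',\epsilon'',\epsilon'''$ all equal to a small fixed fraction of $\epsilon$ such that $2\sqrt{2\epsilon'} + 2\sqrt{2(\epsilon''+\epsilon''')} \leqslant \epsilon$ and invoke \autoref{prop:coh-rel-entr-smooth-lower-bound-diff-Dmin-Dmax} applied to $\rho^{\otimes n}$; the $\log(\epsilon'''^2/(2+\epsilon'''^2))$ correction is an $n$-independent constant and vanishes when divided by $n$, while the two smooth relative entropies again give the required limit by AEP.

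For the original smoothing $\DCohz[\epsilon]{}{}$ in~\eqref{eq:prop-coh-rel-entr-AEP-2}, I would rely on \autoref{prop:cohrelentr-2-equiv} to transfer the statement between the two smoothings. Specifically, $\DCohz[\epsilon]{}{} \leqslant \DCohx[9\epsilon^{1/4}]{}{} + \log(1/\epsilon)$ and $\DCohx[\epsilon]{}{} \leqslant \DCohz[3\sqrt{\epsilon}]{}{}$; both corrections are $O(1)$ in $n$, so dividing by $n$ and taking $n\to\infty$ reduces~\eqref{eq:prop-coh-rel-entr-AEP-2} to~\eqref{eq:prop-coh-rel-entr-AEP}. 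The hypothesis $\epsilon<(18)^{-4}$ is precisely what is needed to keep the corresponding parameter $9\epsilon^{1/4}<1/2$ so that the upper bound of \autoref{prop:cohrelentr-smooth-upper-bound-min-max} (which requires its total smoothing parameter $<1$) can be applied after the transfer.

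The only real obstacle is bookkeeping of the smoothing parameters: one must verify that the choices above satisfy simultaneously the hypotheses of \autoref{prop:cohrelentr-smooth-upper-bound-min-max}, \autoref{prop:coh-rel-entr-smooth-lower-bound-diff-Dmin-Dmax}, and \autoref{prop:cohrelentr-2-equiv}, and that none of the $O(1)$ corrections (the $\log(1-\bar\epsilon)$, $\log(\epsilon'''^2/(2+\epsilon'''^2))$, and $\log(1/\epsilon)$ terms) depend on $n$, so they disappear after dividing by $n$. Everything else is a direct application of known AEP statements, and no new fully quantum AEP needs to be proved from scratch.
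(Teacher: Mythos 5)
Your overall strategy is the same sandwich argument the paper uses: bound the coherent relative entropy above by \autoref{prop:cohrelentr-smooth-upper-bound-min-max} and below by \autoref{prop:coh-rel-entr-smooth-lower-bound-diff-Dmin-Dmax}, invoke the AEP for smooth min- and max-relative entropies on the two single-system terms, and transfer to $\DCohz$ via \autoref{prop:cohrelentr-2-equiv}. The lower-bound bookkeeping and the transfer argument are fine. But there is a genuine logical gap in your upper-bound step.

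You apply \autoref{prop:cohrelentr-smooth-upper-bound-min-max} with the smoothing parameter $\epsilon''$ of the coherent relative entropy set to $\epsilon/4$, obtaining an upper bound on $\DCohx[\epsilon/4]$. You then invoke monotonicity of $\DCohx[\cdot]$ in $\epsilon$ to claim this ``bounds the desired lim-sup at any $\epsilon\geqslant\epsilon/4$''. That implication goes the wrong way: monotonicity gives $\DCohx[\epsilon]\geqslant\DCohx[\epsilon/4]$, so a bound $\limsup_n\frac1n\DCohx[\epsilon/4]\leqslant V$ does not bound $\limsup_n\frac1n\DCohx[\epsilon]$ from above. The monotonicity trick works for the lower bound (where you want a $\liminf$ from below, and larger $\epsilon$ can only help) but not for the upper bound. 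The fix is to take the parameter $\epsilon''$ of \autoref{prop:cohrelentr-smooth-upper-bound-min-max} equal to the given $\epsilon$ directly, and spend the remaining slack $1-2\epsilon>0$ (positive because $\epsilon<1/2$) on the smoothing parameters $\epsilon$ and $\epsilon'$ of $\Dmax[]{}{}$ and $\Dminz[]{}{}$, e.g.\ each a small fraction of $(1-2\epsilon)$. This is precisely what the paper does, and is also where the hypothesis $\epsilon<1/2$ actually gets used.

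A second, more minor point: you cite the textbook AEP for the smooth min- and max-relative entropies as a black box. The paper explicitly flags that its definitions (purified-distance smoothing over normalized states) differ in detail from those of ref.~\cite{Datta2009IEEE_minmax}, and for that reason reproves the relevant AEP via the hypothesis-testing entropy (\autoref{prop:min-max-rel-entropy-to-DHyp} and Quantum Stein's lemma). The differences do not alter the leading-order limit, so your citation is salvageable, but it silently relies on an unstated equivalence of smoothing conventions that the paper addresses head-on.
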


While the original asymptotic equipartition statements in the context of smooth
entropies (e.g., refs.~\cite{PhDRenner2005_SQKD,Datta2009IEEE_minmax})
considered first the limit $n\to\infty$, and then $\epsilon\to0$, the above
proposition is slightly more general in that the limit $\epsilon\to0$ is not
necessary (in line with, e.g., refs.~\cite{PhDTomamichel2012,%
  Tomamichel2013_hierarchy,Dupuis2013_DH}).  However, one may ask if it is
possible to take $\epsilon\to0$ simultaneously with $n\to\infty$.  We may indeed
prove such a statement using recent results on moderate deviation
analysis~\cite{Chubb2017CMP_moderate,Cheng2017arXiv_moderate}.

\begin{proposition}[Asymptotic equipartition property, take 2]
  \label{prop:coh-rel-entr-AEP-moderate}
  Consider any $\Gamma_X,\Gamma_{X'}\geqslant 0$, and  any quantum state
  $\rho_{X'R_X}$.
  Let $(\epsilon_n)$ be a sequence such that $\epsilon_n\to0$ and
  $-(1/n)\ln(\epsilon_n)\to 0$.  Then\footnote{The condition on the sequence
    $(\epsilon_n)$ corresponds to requiring that $(\epsilon_n)$ results from a
    \emph{moderate sequence} as defined in~\cite{Chubb2017CMP_moderate}.  It is
    equivalent to requiring that the sequence $(\epsilon_n)$ converges to zero
    slower than $\exp(-n)$. (For example, this is satisfied if
    $\epsilon_n\sim 1/\poly(n)$.)}
  \begin{multline}
    \lim_{n\to\infty}\frac1n
    \DCohx[\epsilon_n]`\big{*\rho_{X'R_X}^{\otimes n}}{X^n}{X'^n}%
    {\Gamma_X^{\otimes n}}{\Gamma_{X'}^{\otimes n}}
    \\[1ex]
    = \DD{\sigma_X}{\Gamma_X} - \DD{\rho_{X'}}{\Gamma_{X'}}\ ,
    \label{eq:prop-coh-rel-entr-AEP-moderate}
  \end{multline}
  and
  \begin{multline}
    \lim_{n\to\infty}\frac1n
    \DCohz[\epsilon_n]`\big{*\rho_{X'R_X}^{\otimes n}}{X^n}{X'^n}%
    {\Gamma_X^{\otimes n}}{\Gamma_{X'}^{\otimes n}}
    \\[1ex]
    = \DD{\sigma_X}{\Gamma_X} - \DD{\rho_{X'}}{\Gamma_{X'}}\ .
    \label{eq:prop-coh-rel-entr-AEP-moderate-2}
  \end{multline}
  where $\sigma_X = t_{X\to X'}`(\rho_{R_X})$.
\end{proposition}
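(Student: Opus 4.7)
The plan is to lift the fixed-$\epsilon$ AEP of \autoref{prop:coh-rel-entr-AEP} to a moderate-deviation regime by combining the sandwich bounds on the smooth coherent relative entropy (\autoref{prop:cohrelentr-smooth-upper-bound-min-max} and \autoref{prop:coh-rel-entr-smooth-lower-bound-diff-Dmin-Dmax}), the equivalence between the two smoothing conventions (\autoref{prop:cohrelentr-2-equiv}), and the moderate-deviation AEPs for $\Dmax$ and $\Dminz$ established in refs.~\cite{Chubb2017CMP_moderate,Cheng2017arXiv_moderate}. Those references provide that for any moderate sequence $(\delta_n)$, any quantum state $\omega$ and any positive semidefinite $\Xi$, both $\tfrac{1}{n}\,\Dmax[\delta_n]{\omega^{\otimes n}}{\Xi^{\otimes n}}$ and $\tfrac{1}{n}\,\Dminz[\delta_n]{\omega^{\otimes n}}{\Xi^{\otimes n}}$ converge to $\DD{\omega}{\Xi}$. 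Granted this, the proof proceeds by applying the sandwich bounds with moderate choices of smoothing parameters and checking that every additive correction is of order $\log\epsilon_n$, so that dividing by $n$ kills it under the assumption $-(1/n)\ln\epsilon_n\to 0$.

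For the lower bound I would apply \autoref{prop:coh-rel-entr-smooth-lower-bound-diff-Dmin-Dmax} with auxiliary parameters $\epsilon'=\epsilon''=\epsilon'''=\epsilon_n^2/64$, which satisfy the required hypothesis $2\sqrt{2\epsilon'}+2\sqrt{2(\epsilon''+\epsilon''')}\leqslant\epsilon_n$ for large $n$. This yields
\begin{multline*}
  \frac{1}{n}\,\DCohz[\epsilon_n]{\rho_{X'R_X}^{\otimes n}}{X^n}{X'^n}
    {\Gamma_X^{\otimes n}}{\Gamma_{X'}^{\otimes n}} \\
  \geqslant
  \frac{1}{n}\,\Dminz[\epsilon_n^2/64]{\sigma_X^{\otimes n}}{\Gamma_X^{\otimes n}}
  - \frac{1}{n}\,\Dmax[\epsilon_n^2/64]{\rho_{X'}^{\otimes n}}{\Gamma_{X'}^{\otimes n}}
  + O\!\left(\tfrac{1}{n}\log\epsilon_n\right) ,
\end{multline*}
and since $\epsilon_n^2/64$ is still a moderate sequence, taking $n\to\infty$ gives the desired liminf by the moderate AEP for $\Dmax$ and $\Dminz$. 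The matching lower bound for the alternative-smoothing $\DCohx[\epsilon_n]$ then follows from the second inequality of \autoref{prop:cohrelentr-2-equiv} rearranged as $\DCohx[\delta]\geqslant \DCohz[(\delta/9)^4] - 4\log(9/\delta)$, with $(\epsilon_n/9)^4$ again moderate and the additive term vanishing after division by $n$.

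Dually, for the upper bound I would invoke \autoref{prop:cohrelentr-smooth-upper-bound-min-max} with $\epsilon=\epsilon'=\epsilon''=\epsilon_n/4$ (so that $\bar\epsilon=\epsilon_n<1$ eventually), producing an upper bound on $\DCohx[\epsilon_n]$ that matches the desired limit modulo a correction $-\log(1-\epsilon_n)=O(\epsilon_n)$; the upper bound on $\DCohz[\epsilon_n]$ then follows from the first inequality of \autoref{prop:cohrelentr-2-equiv}, namely $\DCohz[\epsilon_n]\leqslant \DCohx[9\epsilon_n^{1/4}]+\log(1/\epsilon_n)$, the sequence $9\epsilon_n^{1/4}$ being again moderate. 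The substantive mathematical content sits entirely on the side of the cited moderate AEP for $\Dmax$ and $\Dminz$; the main (and minor) obstacle in the present argument will be the bookkeeping needed to verify that each composed parameter ($\epsilon_n^2/64$, $9\epsilon_n^{1/4}$, $(\epsilon_n/9)^4$, etc.) is itself a moderate sequence, which is immediate because $-(1/n)\ln\epsilon_n\to 0$ implies $-(1/n)\ln f(\epsilon_n)\to 0$ for any fixed polynomial $f$.
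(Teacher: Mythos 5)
Your proposal follows essentially the same route as the paper's proof: sandwich the smooth coherent relative entropy between $\Dmax$ and $\Dminz$ via \autoref{prop:cohrelentr-smooth-upper-bound-min-max} and \autoref{prop:coh-rel-entr-smooth-lower-bound-diff-Dmin-Dmax}, invoke the moderate-deviation AEP, and transfer between the two smoothing conventions with \autoref{prop:cohrelentr-2-equiv}. Two points need attention.

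First, a small but genuine parameter error in the upper bound. In \autoref{prop:cohrelentr-smooth-upper-bound-min-max} the parameter $\epsilon''$ is the smoothing parameter of the quantity being bounded, i.e.\@ the result is an upper bound on $\DCohx[\epsilon'']{\cdot}{X}{X'}{\cdot}{\cdot}$. Setting $\epsilon=\epsilon'=\epsilon''=\epsilon_n/4$ therefore bounds $\DCohx[\epsilon_n/4]$, which is $\leqslant\DCohx[\epsilon_n]$ by monotonicity and so does not control $\DCohx[\epsilon_n]$ from above. You must instead take $\epsilon''=\epsilon_n$, and then pick $\epsilon,\epsilon'$ as any moderate sequence compatible with $\bar\epsilon=\epsilon+\epsilon'+2\epsilon_n<1$; the paper uses $\tilde\epsilon_n=\tilde\epsilon'_n=\min\bigl(\epsilon_n,(1-2\epsilon_n)/3\bigr)$. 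The fix is trivial, but as written the step fails.

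Second, you attribute to refs.~\cite{Chubb2017CMP_moderate,Cheng2017arXiv_moderate} a moderate-deviation AEP stated directly for $\Dmax$ and $\Dminz$; those references formulate their results for the hypothesis-testing relative entropy $\DHyp[\eta]{\cdot}{\cdot}$. The paper bridges the gap via \autoref{prop:min-max-rel-entropy-to-DHyp}, which relates $\Dmax[\epsilon]{\cdot}{\cdot}$ and $\Dminz[\epsilon]{\cdot}{\cdot}$ to $\DHyp$ up to corrections that are polynomial in $\epsilon$ and hence killed by the $1/n$ normalization under the moderate-sequence hypothesis. Your argument implicitly relies on this translation; making it explicit (exactly as the paper does) is what turns ``substantive mathematical content sits entirely in the cited refs'' into a complete proof.

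Otherwise the logic — including the bookkeeping showing that compositions such as $\epsilon_n^2/64$, $9\epsilon_n^{1/4}$, and $(\epsilon_n/9)^4$ remain moderate sequences, and the final transfer between $\DCohz$ and $\DCohx$ via \autoref{prop:cohrelentr-2-equiv} — matches the paper's argument.
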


The proof of the asymptotic equipartition follows from bounds we have derived
above using the min- and max-relative entropies.  The latter have known
asymptotic behavior, and they converge to the usual quantum relative
entropy~\cite{Datta2009IEEE_minmax}.
Note that our definitions of the smooth min- and max-relative entropy differ in
minor details from the ones originally introduced in
ref.~\cite{Datta2009IEEE_minmax}.  For completeness, we hence provide an adapted
proof of the asymptotic equipartition property for the min- and max-relative
entropy.  Our proof proceeds via the hypothesis testing entropy, whose
asymptotic behavior has been thoroughly studied~\cite{%
  Hiai1991CMP_proper,Ogawa2000IEEETIT_Stein,%
  Bjelakovic2003arXiv_revisted,%
  Tomamichel2013_hierarchy,%
  Dupuis2013_DH,%
  Mosonyi2014CMP_hypothesis}.  This will allow us to prove
\autoref{prop:coh-rel-entr-AEP-moderate} via direct application of the results
in refs.~\cite{Chubb2017CMP_moderate,Cheng2017arXiv_moderate}.

\begin{lemma}[Bounds for min- and max-relative entropy in terms of hypothesis
  testing entropy]
  \label{prop:min-max-rel-entropy-to-DHyp}
  The following bounds hold for any $0<\epsilon<1/2$:
  \begin{subequations}
    \begin{align}
      \Dminz[\epsilon]{\sigma}{\Gamma}
      &\leqslant \DHyp[1-\epsilon]{\sigma}{\Gamma} - \log`(1-\epsilon)\ ; \\
      \Dminz[\epsilon]{\sigma}{\Gamma}
      &\geqslant \DHyp[1-\epsilon']{\sigma}{\Gamma} - \log\frac{1-\epsilon'}{\epsilon'}\ ;\\
      \Dmax[\epsilon]{\rho}{\Gamma}
      &\geqslant \DHyp[2\epsilon]{\rho}{\Gamma} - 1\ ; \\
        \Dmax[\epsilon]{\rho}{\Gamma}
      &\leqslant \DHyp[\epsilon^2/2]{\rho}{\Gamma} - \log`(1-\epsilon)\ ,
    \end{align}
    for any $0<\epsilon' \leqslant \epsilon^2/(4+2\epsilon^2)$; we may choose,
    e.g., $\epsilon'=\epsilon^2/6$.
  \end{subequations}
\end{lemma}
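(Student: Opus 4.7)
The four bounds split into two easy directions (bounds (a) and (c)), where the optimizer of the smoothed Rényi-type entropy supplies a feasible candidate in the hypothesis testing SDP, and two harder directions (bounds (b) and (d)), which must instead construct a smoothed state from an optimal hypothesis test.  Throughout, I use the convention (consistent with the SDP value appearing in the proof of \autoref{prop:smooth-Drob-to-Dmin}) that $2^{-\DHyp[\eta]{\rho}{\Gamma}} = \eta^{-1}\,\min\,\{\tr(Q\Gamma) : 0\leqslant Q\leqslant\Ident,\,\tr(Q\rho)\geqslant\eta\}$, so that each claimed inequality reduces to a clean statement about the SDP minimum $\beta^* := \min\tr(Q\Gamma)$.

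For bound (a), let $\hat\sigma^*$ attain the smoothing in $\Dminz[\epsilon]{\sigma}{\Gamma}$ and choose $Q := \Pi^{\hat\sigma^*}$.  Trace-distance continuity gives $\tr(Q\sigma)\geqslant \tr(Q\hat\sigma^*) - D(\sigma,\hat\sigma^*)\geqslant 1-\epsilon$, so $Q$ is feasible in $\DHyp[1-\epsilon]{\sigma}{\Gamma}$ with value $\tr(Q\Gamma) = 2^{-\Dminz[\epsilon]{\sigma}{\Gamma}}$; unfolding the convention, the $-\log(1-\epsilon)$ on the right of (a) cancels against the $\log(1-\epsilon)$ hidden inside $\DHyp$.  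For bound (c), take $\hat\rho^*$ optimizing $\Dmax[\epsilon]{\rho}{\Gamma}$ with $\hat\rho^*\leqslant\mu\Gamma$, $\mu = 2^{\Dmax[\epsilon]{\rho}{\Gamma}}$.  For any $Q$ feasible for $\DHyp[2\epsilon]$, $\tr(Q\hat\rho^*)\geqslant 2\epsilon - D(\rho,\hat\rho^*)\geqslant \epsilon$, and the operator inequality yields $\tr(Q\Gamma)\geqslant \epsilon/\mu$; hence $\beta^*\geqslant \epsilon/\mu$, which is precisely $\Dmax[\epsilon]\geqslant \DHyp[2\epsilon]-1$.

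For bound (b), I would mirror the proof of \autoref{prop:smooth-Drob-to-Dmin}.  Taking optimal primal--dual SDP variables $Q^*,\mu^*,X^*$ for $\DHyp[1-\epsilon']{\sigma}{\Gamma}$, set $\bar\sigma := \Pi^{Q^*}\sigma\Pi^{Q^*}$ and $\hat\sigma := \bar\sigma/\tr(\bar\sigma)$.  Since $\Pi^{Q^*}\geqslant Q^*$, we have $\tr(\bar\sigma)\geqslant 1-\epsilon'$; \autoref{util:purified-distance-to-normalized-projected-normalized-state} then gives $P(\hat\sigma,\sigma)\leqslant\sqrt{2\epsilon'/(1-\epsilon')}$, and the hypothesis $\epsilon'\leqslant \epsilon^2/(4+2\epsilon^2)$ is exactly what makes this $\leqslant\epsilon$.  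Complementary slackness ($\Pi^{Q^*}(\mu^*\sigma - \Gamma - X^*)=0$, $X^*=\Pi^{Q^*}X^*\Pi^{Q^*}$) yields $\mu^*\bar\sigma = \Pi^{Q^*}\Gamma\Pi^{Q^*} + X^*$, which, combined with the SDP value relation and the support inclusion $\Pi^{\hat\sigma}\leqslant\Pi^{Q^*}$, is what is needed to bound $\tr(\Pi^{\hat\sigma}\Gamma)$ by $\beta^*/\epsilon'$, as required.

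For bound (d) — the most delicate — I would work with the spectral decomposition of $X := \Gamma^{-1/2}\rho\Gamma^{-1/2}$ and let $\Pi^-_\mu$ project onto eigenspaces of $X$ with eigenvalue $\leqslant\mu$.  Setting $\tilde\rho := \Gamma^{1/2}\Pi^-_\mu X\Pi^-_\mu\Gamma^{1/2}$ guarantees $\tilde\rho\leqslant\mu\Gamma$ automatically, and the deficit $\tr(\rho-\tilde\rho) = \tr(\Pi^+_\mu X\,\Gamma)$ is controlled by recognizing (a suitable rescaling of) $\Pi^+_\mu$ as a feasible Neyman--Pearson test for $(\rho,\Gamma)$ at threshold $\mu$, which converts the Markov-type bound $\tr(\rho-\tilde\rho)\leqslant \epsilon^2/2$ into a direct comparison with the hypothesis testing value when $\mu$ is set to $2^{\DHyp[\epsilon^2/2]}/(1-\epsilon)$; \autoref{util:purified-distance-to-normalized-projected-normalized-state} then delivers $P(\hat\rho,\rho)\leqslant\epsilon$ after renormalization, with the $(1-\epsilon)^{-1}$ factor accounting precisely for that renormalization.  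The main obstacle lies here: since $\Pi^\pm_\mu$ commute with $X$ but not with $\Gamma$, aligning the spectral truncation with the Neyman--Pearson optimizer of $\DHyp$ requires a careful asymmetric operator argument, and is where all the nontrivial $\epsilon$-dependent slack in the stated constants is spent.
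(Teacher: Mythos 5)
Your bounds (a) and (c) are correct, and your proof of (c) is slightly more elementary than the paper's: you inline the inequality $\tr(Q\Gamma)\geqslant\epsilon/\mu$ directly from feasibility and $\hat\rho^*\leqslant\mu\Gamma$, whereas the paper invokes \cite[Prop.~4.1]{Dupuis2013_DH} for the smoothed state and then re-feeds the smoothing error back into the SDP parameter.

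For bound (b) there is a genuine gap. You project onto the full support $\Pi^{Q^*}$ of the optimal test, but this projector has no usable trace bound against $\Gamma$: eigenvalues of $Q^*$ near zero contribute negligibly to $\beta^*=\tr(Q^*\Gamma)$ but fully to $\tr(\Pi^{Q^*}\Gamma)$. The complementary-slackness identity you cite, $\mu^*\bar\sigma=\Pi^{Q^*}\Gamma\Pi^{Q^*}+X^*$, only yields $\tr(\Pi^{Q^*}\Gamma)\leqslant\mu^*\tr\bar\sigma$, and $\mu^*$ is not controlled by $\beta^*$ alone: the value relation $\beta^*=\mu^*(1-\epsilon')-\tr X^*$ leaves $\mu^*$ unbounded when $\tr X^*$ is close to $\mu^*(1-\epsilon')$. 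The reason the mirrored argument in the paper's \autoref{prop:smooth-Drob-to-Dmin} succeeds is that the Rob entropy only requires the operator inequality $\mu\bar\rho\geqslant\Pi^Q\Gamma\Pi^Q$ (a largest-eigenvalue statement); the min-relative entropy requires a trace bound on $\Pi^{\hat\sigma}\Gamma$, which that route does not supply. The paper's actual proof of (b) instead projects onto eigenspaces of $Q$ with eigenvalue at least $\eta=\epsilon'$, giving $\eta P^\eta\leqslant Q$ and hence the crucial $\tr(P^\eta\Gamma)\leqslant\eta^{-1}\tr(Q\Gamma)$. This threshold degrades the support trace to $\tr(P^\eta\sigma)\geqslant 1-2\epsilon'$, and the gentle-measurement bound becomes $\sqrt{4\epsilon'/(1-2\epsilon')}$. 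That is what the hypothesis $\epsilon'\leqslant\epsilon^2/(4+2\epsilon^2)$ is calibrated to; your tighter $\sqrt{2\epsilon'/(1-\epsilon')}$ corresponds to a support bound $\tr\bar\sigma\geqslant 1-\epsilon'$ that the threshold construction cannot achieve, and the mismatch is a symptom that the intended argument is the threshold one.

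For bound (d), you sketch a direct spectral proof in place of the paper's citation of \cite[Prop.~4.1]{Dupuis2013_DH}, and you honestly flag the obstruction: $\Pi^\pm_\mu$ commute with $X=\Gamma^{-1/2}\rho\Gamma^{-1/2}$ but not with $\Gamma$, so the deficit $\tr(\rho-\tilde\rho)=\tr(\Pi^+_\mu X\Gamma)$ is not $\tr(\Pi^+_\mu\rho)$, and $\Pi^+_\mu$ is therefore not directly a feasible test whose $\tr(Q\Gamma)$ value compares to $\DHyp$. You do not resolve this. The paper sidesteps it entirely by citing the external result and only supplies the $\log(1-\epsilon)$ normalization correction needed because its smoothing is over normalized rather than subnormalized states. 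As written, your (d) is a proposal, not a proof.
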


Recall that, as a direct consequence of Quantum Stein's
lemma~\cite{Hiai1991CMP_proper,Ogawa2000IEEETIT_Stein,Dupuis2013_DH}, we have
that for all $0<\epsilon<1$,
\begin{align}
  \lim_{n\to\infty}\frac1n
  \DHyp[\epsilon]{\sigma^{\otimes n}}{\Gamma^{\otimes n}}
  &= \DD{\sigma}{\Gamma} \ .
    \label{eq:AEP-DHyp}
\end{align}
As an immediate consequence of \autoref{prop:min-max-rel-entropy-to-DHyp} and
of~\eqref{eq:AEP-DHyp}, we find that for any $0<\epsilon<1/2$,
\begin{subequations}
  \label{eq:AEP-min-max-rel-entropy}
  \begin{align}
    \lim_{n\to\infty}\frac1n
    \Dminz[\epsilon]{\sigma_X^{\otimes n}}{\Gamma_X^{\otimes n}}
    &= \DD{\sigma_X}{\Gamma_X} \ ; \\
    \lim_{n\to\infty}\frac1n
    \Dmax[\epsilon]{\rho_{X'}^{\otimes n}}{\Gamma_{X'}^{\otimes n}}
    &= \DD{\rho_{X'}}{\Gamma_{X'}} \ ,
  \end{align}
\end{subequations}
noting that terms which scale sublinearly in $n$, for instance
$\log(1-\epsilon)$, disappear because of the factor $1/n$ in the limit
$n\to\infty$.

\begin{proof}[*prop:min-max-rel-entropy-to-DHyp]
  Let $\tilde\sigma$ be optimal for $\Dminz[\epsilon]{\sigma}{\Gamma}$, i.e.,
  $\Dminz[\epsilon]{\sigma}{\Gamma} = \Dminz{\tilde\sigma}{\Gamma} = -\log
  \tr`[\Pi^{\tilde\sigma}\,\Gamma]$ with
  $P(\sigma,\tilde\sigma)\leqslant\epsilon$.  As
  $\sigma\geqslant\tilde\sigma-\Delta$ for some $\Delta\geqslant0$ with
  $\tr\Delta\leqslant\epsilon$, we have that
  $\tr`[\Pi^{\tilde\sigma}\sigma]\geqslant 1 -
  \tr(\Pi^{\tilde\sigma}\Delta)\geqslant 1 - \epsilon$.  Then
  $\Pi^{\tilde\sigma}$ is feasible in the primal program for
  $2^{-\DHyp[1-\epsilon]{\sigma}{\Gamma}}$, achieving the value
  $(1-\epsilon)^{-1}\tr`(\Pi^{\tilde\sigma}\Gamma)$.  Hence, for any
  $0<\epsilon<1$,
  \begin{align}
    \Dminz[\epsilon]{\sigma}{\Gamma}
    \leqslant \DHyp[1-\epsilon]{\sigma}{\Gamma} - \log`(1-\epsilon)\ .
  \end{align}
  Conversely, for any $0<\epsilon'<1/2$ to be fixed later, let $Q$ be primal
  optimal for
  $2^{-\DHyp[1-\epsilon']{\sigma}{\Gamma}} = (1-\epsilon')^{-1}\tr`(Q\Gamma)$
  with $\tr`(Q\sigma)\geqslant 1-\epsilon'$.  For $\eta=\epsilon'$, Let $P^\eta$
  be the projector onto the eigenspaces of $Q$ associated to eigenvalues greater
  than or equal to $\eta$, and hence satisfying $\eta P^\eta \leqslant Q$.  It
  follows that $\tr`(Q\Gamma)\geqslant \eta \tr`(P^\eta\Gamma)$.  Now, define
  $\tilde\sigma = P^\eta\,\sigma\,P^\eta / \tr`(P^\eta\sigma)$, noting that
  $\tr`(P^\eta\sigma)\geqslant\tr`(P^\eta Q P^\eta \sigma) \geqslant
  \tr`(Q\sigma) - \tr`( (\Ident-P^\eta)\,Q\,(\Ident-P^\eta)\,\sigma ) \geqslant
  1 - \epsilon' - \eta$ (recall that all eigenvalues of
  $(\Ident-P^\eta)\,Q\,(\Ident-P^\eta)$ are less than $\eta$).
  Using~\autoref{util:purified-distance-to-normalized-projected-normalized-state},
  we see that
  $P`(\tilde\sigma,\sigma)\leqslant \sqrt{2`(\epsilon'+\eta)} /
  \sqrt{1-\epsilon'-\eta} = \sqrt{4\epsilon'/(1-2\epsilon')}$.  Now
  $\tilde\sigma$ is a valid candidate for the smoothing in
  $\Dminz[\sqrt{4\epsilon'/(1-2\epsilon')}]{\sigma}{\Gamma}$, and hence
  $ \Dminz[\sqrt{4\epsilon'/(1-2\epsilon')}]{\sigma}{\Gamma} \geqslant
  -\log\tr`(P^\eta \Gamma) \geqslant -\log`[\eta^{-1}\tr`(Q\Gamma)] =
  -\log`[`(`(1-\epsilon')/\epsilon')\,`(1-\epsilon')^{-1}\tr`(Q\Gamma)] =
  \DHyp[1-\epsilon']{\sigma}{\Gamma} - \log`[`(1-\epsilon')/\epsilon']$.  Now
  consider any $0<\epsilon<1/2$ and assume that
  $0<\epsilon'\leqslant\epsilon^2/(4+2\epsilon^2)$, noting that
  $0<\epsilon'<1/2$.  We have
  $\epsilon'\,(4+2\epsilon^2)\leqslant\epsilon^2 \;\Leftrightarrow\;
  4\epsilon'\leqslant\epsilon^2\,(1-2\epsilon') \;\Leftrightarrow\; \epsilon
  \geqslant \sqrt{4\epsilon'/(1-2\epsilon')}$,
  and thus
  $\Dminz[\epsilon]{\sigma}{\Gamma}\geqslant
  \Dminz[\sqrt{4\epsilon'/(1-2\epsilon')}]{\sigma}{\Gamma}$.  Hence, for any
  $0<\epsilon<1/2$ and for any $0<\epsilon'\leqslant\epsilon^2/(4+2\epsilon^2)$,
  we have:
  \begin{align}
    \Dminz[\epsilon]{\sigma}{\Gamma}
    \geqslant \DHyp[1-\epsilon']{\sigma}{\Gamma} - \log\frac{1-\epsilon'}{\epsilon'}\ .
  \end{align}
  
  For the max-relative entropy, for any $\rho,\Gamma$ and for any
  $0<\epsilon<1/2$, let $\tilde\rho$ be a normalized quantum state such that
  $\Dmax[\epsilon]{\rho}{\Gamma} = \Dmax{\tilde\rho}{\Gamma}$.  Let $Q$ be
  primal optimal for
  $2^{-\DHyp[2\epsilon]{\rho}{\Gamma}}=(2\epsilon)^{-1}\,\tr`(Q\Gamma)$, such
  that $\tr(Q\rho)\geqslant 2\epsilon$.  But $\tilde\rho\geqslant\rho-\Delta$
  for a $\Delta\geqslant0$ with $\tr\Delta\leqslant\epsilon$, since
  $D(\tilde\rho,\rho)\leqslant\epsilon$, and thus
  $\tr`(Q\tilde\rho)\geqslant 2\epsilon-\epsilon=\epsilon$.  Then $Q$ is primal
  feasible also for $\DHyp[\epsilon]{\tilde\rho}{\Gamma}$ and
  $2^{-\DHyp[\epsilon]{\tilde\rho}{\Gamma}} \leqslant
  \epsilon^{-1}\,\tr`(Q\Gamma) = 2\cdot 2^{-\DHyp[2\epsilon]{\rho}{\Gamma}}$.
  Then, using~\cite[Prop.~4.1]{Dupuis2013_DH},
  $\Dmax{\tilde\rho}{\Gamma} \geqslant \DHyp[\epsilon]{\tilde\rho}{\Gamma}
  \geqslant \DHyp[2\epsilon]{\rho}{\Gamma} - 1$, and hence
  \begin{align}
    \Dmax[\epsilon]{\rho}{\Gamma} \geqslant \DHyp[2\epsilon]{\rho}{\Gamma} - 1\ .
  \end{align}
  For a lower bound on $\Dmax[\epsilon]{}{}$, we
  invoke~\cite[Prop.~4.1]{Dupuis2013_DH}; however the quantity called
  $\Dmax[\epsilon]{}{}$ there optimizes over subnormalized states whereas we
  optimize over normalized states only, so we have to work a little more.  For
  any subnormalized state $\tilde\rho$ with $\tr\tilde\rho\geqslant 1-\epsilon$,
  we have by definition that
  $2^{\Dmax{\tilde\rho}{\Gamma}} =
  \norm{\Gamma^{-1/2}\,\tilde\rho\,\Gamma^{-1/2}}_\infty =
  \tr`(\tilde\rho)\,2^{\Dmax{\tilde\rho/\tr\tilde\rho}{\Gamma}} \geqslant
  `(1-\epsilon)\cdot 2^{\Dmax{\tilde\rho/\tr\tilde\rho}{\Gamma}}$, and hence
  \begin{multline}
    \min_{\substack{\tilde\rho:~ \tr\tilde\rho\leqslant 1 \\
    P(\tilde\rho,\rho)\leqslant\epsilon}} \Dmax{\tilde\rho}{\Gamma}
    \geqslant 
    \min_{\substack{\tilde\rho:~ \tr\tilde\rho\leqslant 1 \\
    P(\tilde\rho,\rho)\leqslant\epsilon}} \Dmax{\tilde\rho/\tr\tilde\rho}{\Gamma}
    + \log\,`(1-\epsilon)
    \\
    = \Dmax[\epsilon]{\rho}{\Gamma} + \log\,`(1-\epsilon)\ .
  \end{multline}
  Then, invoking~\cite[Prop.~4.1]{Dupuis2013_DH} for any $0<\epsilon<1$, and
  chaining with the above inequality,
  \begin{align*}
    \DHyp[\epsilon^2/2]{\rho}{\Gamma} \geqslant \Dmax[\epsilon]{\rho}{\Gamma}
    + \log`(1-\epsilon)\ .
    \tag*\qedhere
  \end{align*}
\end{proof}

\begin{proof}[*prop:coh-rel-entr-AEP]
  We start by upper bounding the coherent relative entropy
  $\DCohx[\epsilon]`\big{*\rho^{\otimes{}n}_{X'R_X}}{X^n}{X'^n}%
  {\Gamma_X^{\otimes{}n}}{\Gamma_{X'}^{\otimes{}n}}$.  Thanks to
  \autoref{prop:cohrelentr-smooth-upper-bound-min-max}, choosing
  $\tilde{\epsilon}=\tilde\epsilon'= (1-2\epsilon)/137414920$ 
  with $\tilde{\bar\epsilon}=\tilde{\epsilon}+\tilde{\epsilon}'+2\epsilon$, and
  then using~\eqref{eq:AEP-min-max-rel-entropy},
  \begin{align}
    &\lim_{n\to\infty}\frac1n
    \DCohx[\epsilon]`\big{*\rho_{X'R_X}^{\otimes n}}{X^n}{X'^n}%
    {\Gamma_X^{\otimes n}}{\Gamma_{X'}^{\otimes n}}
    \nonumber\\
    &\qquad \leqslant
      \lim_{n\to\infty}\frac1n`*[
      \Dmax[\tilde\epsilon]`\big{\rho_{R_X}^{\otimes n}}{\Gamma_{R_X}^{\otimes n}}
      -
      \Dminz[\tilde\epsilon']`\big{\rho_{X'}^{\otimes n}}{\Gamma_{X'}^{\otimes n}}
      -
      \log\,`*[\epsilon`(1-\tilde{\bar\epsilon})]
      ]
      \nonumber\\
    &\qquad = 
      \DD{\rho_{R_X}}{\Gamma_{R_X}}
      -
      \DD{\rho_{X'}}{\Gamma_{X'}}\ .
  \end{align}
  The lower bound is given by
  \autoref{prop:coh-rel-entr-smooth-lower-bound-diff-Dmin-Dmax}: Choosing
  $\hat\epsilon'=\hat\epsilon''=\hat\epsilon'''=\epsilon^2/197334000868$,
  \begin{align*}
    &\lim_{n\to\infty}\frac1n
    \DCohx[\epsilon]`\big{*\rho_{X'R_X}^{\otimes n}}{X^n}{X'^n}%
    {\Gamma_X^{\otimes n}}{\Gamma_{X'}^{\otimes n}}
    \\
    &\qquad \geqslant 
    \lim_{n\to\infty}\frac1n `*[
    \Dminz[\hat\epsilon'']`\big{\rho_{R_X}^{\otimes n}}{\Gamma_{R_X}^{\otimes n}}
    - \Dmax[\hat\epsilon']`\big{\rho_{X'}^{\otimes n}}{\Gamma_{X'}^{\otimes n}}
    + \log\frac{\hat\epsilon'''^2}{2 + \hat\epsilon'''^2}
    ]\
    \\
    &\qquad =
    \DD{\rho_{R_X}}{\Gamma_{R_X}}
    - \DD{\rho_{X'}}{\Gamma_{X'}}\ .
  \end{align*}

  Equation~\eqref{eq:prop-coh-rel-entr-AEP-2} follows directly
  from~\eqref{eq:prop-coh-rel-entr-AEP}, using the relations given by
  \autoref{prop:cohrelentr-2-equiv}.
\end{proof}

\begin{proof}[*prop:coh-rel-entr-AEP-moderate]
  Moderate deviation analysis provides a full characterization of the
  second-order asymptotic behavior of the hypothesis testing
  entropy~\cite{Chubb2017CMP_moderate,Cheng2017arXiv_moderate} in cases where
  $\epsilon\to0$ simultaneously with $n\to\infty$.  For our purposes and for
  simplicity we consider the leading order only: For any sequence
  $(\hat{\epsilon}_n)$ such that $\hat\epsilon_n\to 0$ and
  $-(1/n)\ln(\hat{\epsilon}_n)\to 0$, it holds that
  \begin{subequations}
    \label{eq:AEP-DHyp-moderate}
    \begin{align}
      \lim_{n\to\infty}\frac1n
      \DHyp[\hat{\epsilon}_n]{\sigma^{\otimes n}}{\Gamma^{\otimes n}}
      &= \DD{\sigma}{\Gamma} \ ;
        \label{eq:AEP-DHyp-moderate-1}
      \\
      \lim_{n\to\infty}\frac1n
      \DHyp[1-\hat{\epsilon}_n]{\sigma^{\otimes n}}{\Gamma^{\otimes n}}
      &= \DD{\sigma}{\Gamma} \ .
        \label{eq:AEP-DHyp-moderate-2}
    \end{align}
\end{subequations}
  So, we proceed analogously to the proof of \autoref{prop:coh-rel-entr-AEP} via
  the bounds we determined on the coherent relative entropy in terms of the min-
  and max-relative entropies.

  We invoke \autoref{prop:cohrelentr-smooth-upper-bound-min-max} choosing
  $\tilde{\epsilon}_n=\tilde\epsilon'_n= \min(\epsilon_n, (1-2\epsilon_n)/3)$
  and
  $\tilde{\bar\epsilon}_n = \tilde{\epsilon}_n+\tilde{\epsilon}'_n+2\epsilon_n$,
  further observing that $\tilde{\bar\epsilon}_n<1$ and
  $\tilde{\bar\epsilon}_n\leqslant 4\epsilon_n$.  Then
  \begin{align}
    &\lim_{n\to\infty}\frac1n
    \DCohx[\epsilon_n]`\big{*\rho_{X'R_X}^{\otimes n}}{X^n}{X'^n}%
    {\Gamma_X^{\otimes n}}{\Gamma_{X'}^{\otimes n}}
    \nonumber\\
    &\qquad \leqslant
      \lim_{n\to\infty}\frac1n`*[
      \Dmax[\tilde\epsilon_n]`\big{\rho_{R_X}^{\otimes n}}{\Gamma_{R_X}^{\otimes n}}
      -
      \Dminz[\tilde\epsilon'_n]`\big{\rho_{X'}^{\otimes n}}{\Gamma_{X'}^{\otimes n}}
      -
      \log\,`*[\epsilon_n `(1-\tilde{\bar\epsilon}_n)]
      ]
      \nonumber\\
      &\qquad \leqslant
      \lim_{n\to\infty}\frac1n \Bigl[
      \DHyp[\tilde\epsilon_n^2/2]%
      `\big{\rho_{R_X}^{\otimes n}}{\Gamma_{R_X}^{\otimes n}}
      -
      \DHyp[1- \tilde{\epsilon}'^2_n /6]%
      `\big{\rho_{X'}^{\otimes n}}{\Gamma_{X'}^{\otimes n}}
        \nonumber \\
      & \qquad\hphantom{\leqslant\lim_{n\to\infty}\frac1n}\hspace{1ex}
        - \log(1-\tilde\epsilon_n)
        + \log\,`*[\frac{1-\tilde{\epsilon}'^2_n/6}{\tilde{\epsilon}'^2_n/6}]
        - \log\,`*[\epsilon_n `(1-\tilde{\bar\epsilon}_n)]
      \Bigr]
        \nonumber\\
      &\qquad \leqslant
      \lim_{n\to\infty}\frac1n \Bigl[
      \DHyp[\tilde\epsilon_n^2/2]%
      `\big{\rho_{R_X}^{\otimes n}}{\Gamma_{R_X}^{\otimes n}}
      -
      \DHyp[1- \tilde{\epsilon}'^2_n /6]%
      `\big{\rho_{X'}^{\otimes n}}{\Gamma_{X'}^{\otimes n}}
        \nonumber \\
      & \qquad\hphantom{\leqslant\lim_{n\to\infty}\frac1n}\hspace{1em}
        + \log(\poly(\epsilon_n) / \poly(\epsilon_n)) \Bigr]
        \nonumber\\
    &\qquad = \DD{\rho_{R_X}}{\Gamma_{R_X}}
      -
      \DD{\rho_{X'}}{\Gamma_{X'}}\ ,
  \end{align}
  where we used \autoref{prop:min-max-rel-entropy-to-DHyp} in the second
  inequality, where $\poly(\epsilon_n)$ denotes a polynomial in $\epsilon_n$ of
  arbitrary but constant degree, and where we used~\eqref{eq:AEP-DHyp-moderate}
  for the last equality, noting that $(1/n)\log(\poly(\epsilon_n))\to0$ as
  $n\to\infty$, using the assumption in the claim that
  $-(1/n)\ln(\epsilon_n)\to0$ as $n\to\infty$.

  The other direction follows similarly: We apply
  \autoref{prop:coh-rel-entr-smooth-lower-bound-diff-Dmin-Dmax} choosing
  $\bar\epsilon'_n=\bar\epsilon''_n=\bar\epsilon'''_n=\epsilon_n^2/64$, such
  that
  $2\sqrt{2\bar\epsilon'_n}+2\sqrt{2(\bar\epsilon''_n+\bar\epsilon'''_n)} =
  2\sqrt{2\epsilon_n^2/64} + 2\sqrt{4\epsilon_n^2/64} \leqslant \epsilon_n$;
  then
  \begin{align}
    &\lim_{n\to\infty}\frac1n
    \DCohx[\epsilon_n]`\big{*\rho_{X'R_X}^{\otimes n}}{X^n}{X'^n}%
    {\Gamma_X^{\otimes n}}{\Gamma_{X'}^{\otimes n}}
      \nonumber \\
    &\qquad \geqslant 
    \lim_{n\to\infty}\frac1n `*[
    \Dminz[\bar\epsilon''_n]`\big{\rho_{R_X}^{\otimes n}}{\Gamma_{R_X}^{\otimes n}}
    - \Dmax[\bar\epsilon'_n]`\big{\rho_{X'}^{\otimes n}}{\Gamma_{X'}^{\otimes n}}
    + \log\frac{\bar\epsilon'''^2_n}{2 + \bar\epsilon'''^2_n}
    ]\
      \nonumber \\
    &\qquad \geqslant 
    \lim_{n\to\infty}\frac1n \Bigl[
      \DHyp[1-\bar\epsilon''^2_n/6]%
      `\big{\rho_{R_X}^{\otimes n}}{\Gamma_{R_X}^{\otimes n}}
    - \DHyp[\bar\epsilon'^2_n/2]`\big{\rho_{X'}^{\otimes n}}{\Gamma_{X'}^{\otimes n}}
      \nonumber \\
      & \qquad\hphantom{\leqslant\lim_{n\to\infty}\frac1n}\hspace{1em}
        - \log\,`*[\frac{1-\bar{\epsilon}''^2_n/6}{\bar{\epsilon}''^2_n/6}]
        + \log(1-\bar\epsilon'_n)
        + \log(\bar\epsilon'''^2_n/3)
        \Bigr]
    \nonumber\\
    &\qquad \geqslant 
    \lim_{n\to\infty}\frac1n \Bigl[
      \DHyp[1-\bar\epsilon''^2_n/6]%
      `\big{\rho_{R_X}^{\otimes n}}{\Gamma_{R_X}^{\otimes n}}
    - \DHyp[\bar\epsilon'^2_n/2]`\big{\rho_{X'}^{\otimes n}}{\Gamma_{X'}^{\otimes n}}
      \nonumber \\
      & \qquad\hphantom{\leqslant\lim_{n\to\infty}\frac1n}\hspace{1em}
        + \log\,`*(\poly(\epsilon_n)/\poly(\epsilon_n))
        \Bigr]
      \nonumber \\
    &\qquad =
      \DD{\rho_{R_X}}{\Gamma_{R_X}}
      - \DD{\rho_{X'}}{\Gamma_{X'}}\ .
  \end{align}

  Equation~\eqref{eq:prop-coh-rel-entr-AEP-moderate-2} follows directly
  from~\eqref{eq:prop-coh-rel-entr-AEP-moderate}, using the relations given by
  \autoref{prop:cohrelentr-2-equiv}.
\end{proof}

\section{Robustness of battery states to smoothing}
\appendixlabel{appx:sec:robustness-battery-states}

Because the {battery} system is a part of the physical implementation of the
process, we may ask why it is not included in the definition of the smooth
coherent relative entropy~\eqref{eq:coh-rel-entr-def} in a way which would allow
the physical implementation to fail to produce the appropriate output battery
state with a small probability.
Remarkably, there would have been no difference had we chosen to smooth the
battery states as well.  This follows from the following proposition, which
asserts that optimization candidates which include {smoothing} on the {battery
  states} are in fact already included in the optimization in the definition
above.  This holds for the general battery states of the form
$P_A\Gamma_A P_A /\tr`*(P_A\Gamma_A)$, for a projector $P_A$ commuting with the
$\Gamma_A$ of the battery (see
\autoref{item:prop-equiv-battery-models-equivstatement-projGamma} of
\autoref{prop:equiv-battery-models}).

\begin{proposition}[Smoothing battery states]
  \label{prop:coh-rel-entr-smoothing-battery-states}
  Let $A,A'$ be quantum systems with corresponding $\Gamma_A$, $\Gamma_{A'}$.
  Let $P_A$, $P'_{A'}$ be projectors such that $[P_A,\Gamma_A]=0$ and
  $[P_{A'},\Gamma_{A'}]=0$, and let $\Phi_{XA\to X'A'}$ be a trace
  nonincreasing, completely positive map such that
  $\Phi_{XA\to X'A'}`(\Gamma_X\otimes\Gamma_A) \leqslant
  \Gamma_{X'}\otimes\Gamma_{A'}$, and such that
  \begin{align}
    P\,\bigg[\Phi_{XA\to X'A}`*(
    \sigma_{XR}\otimes\frac{P_A\Gamma_A P_A}{\tr P_A\Gamma_A})
    \,,
    \; \rho_{X'R}\otimes\frac{P'_{A'}\Gamma_{A'}P'_{A'}}{\tr P'_{A'}\Gamma_{A'}} \bigg]
    \leqslant \epsilon
    \ ,
  \end{align}
  Then there exists a trace-nonincreasing, completely positive map
  $\mathcal{T}_{X\to X'}$ such both the following conditions hold:
  \begin{subequations}
    \begin{gather}
      P`(\mathcal{T}_{X\to X'}`(\sigma_{XR}), \rho_{X'R})\leqslant\epsilon\ ;  \\
      \mathcal{T}_{X\to X'}`(\Gamma_X) \leqslant
      \frac{\tr`(P'_{A'}\Gamma_{A'})}{\tr`(P_{A}\Gamma_{A})}\,\Gamma_{X'}\ .
    \end{gather}
  \end{subequations}
\end{proposition}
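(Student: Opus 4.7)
The plan is to produce the map $\mathcal{T}_{X\to X'}$ explicitly by preparing the input battery state, applying $\Phi$, and post-selecting on the target battery projector.  Writing $\tau_A = P_A\Gamma_A P_A/\tr(P_A\Gamma_A)$ and $\tau'_{A'} = P'_{A'}\Gamma_{A'}P'_{A'}/\tr(P'_{A'}\Gamma_{A'})$, I would set
\begin{align}
  \mathcal{T}_{X\to X'}(\omega_X)
  = \tr_{A'}\bigl[P'_{A'}\,\Phi_{XA\to X'A'}(\omega_X\otimes \tau_A)\bigr]\,.
\end{align}
The first step is to confirm that $\mathcal{T}$ is a completely positive, trace-nonincreasing map.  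The composition with $\Phi$ and with the fixed ancilla state $\tau_A$ is clearly CP and trace-nonincreasing; the only slightly delicate piece is the post-processing $\omega_{X'A'}\mapsto\tr_{A'}[P'_{A'}\omega_{X'A'}]$.  By cyclicity of the partial trace (valid because $P'_{A'}$ acts only on $A'$) together with $P_{A'}'^2=P'_{A'}$, this equals $\omega\mapsto\tr_{A'}[P'_{A'}\omega P'_{A'}]$, which is manifestly CP and bounded above by the partial trace since $P'_{A'}\leqslant\Ident_{A'}$.

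Second, I would verify the $\Gamma$-sub-preservation bound.  Since $[P_A,\Gamma_A]=0$, we have $\tau_A\leqslant \Gamma_A/\tr(P_A\Gamma_A)$.  Using positivity of $\Phi$, the hypothesis $\Phi(\Gamma_X\otimes\Gamma_A)\leqslant\Gamma_{X'}\otimes\Gamma_{A'}$, and the identity $\tr_{A'}[P'_{A'}(\Gamma_{X'}\otimes\Gamma_{A'})] = \tr(P'_{A'}\Gamma_{A'})\cdot\Gamma_{X'}$,
\begin{align}
  \mathcal{T}(\Gamma_X)
  &\leqslant \frac{1}{\tr(P_A\Gamma_A)}\,\tr_{A'}\bigl[P'_{A'}\,\Phi(\Gamma_X\otimes\Gamma_A)\bigr]
  \nonumber\\
  &\leqslant \frac{\tr(P'_{A'}\Gamma_{A'})}{\tr(P_A\Gamma_A)}\,\Gamma_{X'}\,,
\end{align}
which is exactly the second condition.

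Third, for the approximation condition, I would apply the post-processing map $\omega_{X'A'}\mapsto\tr_{A'}[P'_{A'}\omega_{X'A'}]$ to both sides of the assumed closeness $\Phi_{XA\to X'A'}(\sigma_{XR}\otimes\tau_A)\approx_\epsilon \rho_{X'R}\otimes\tau'_{A'}$.  Since $\tau'_{A'}$ is supported inside $P'_{A'}$ we have $\tr_{A'}[P'_{A'}(\rho_{X'R}\otimes\tau'_{A'})] = \rho_{X'R}$, and the left-hand side becomes $\mathcal{T}(\sigma_{XR})$; monotonicity of the purified distance under CP, trace-nonincreasing maps then gives $P(\mathcal{T}(\sigma_{XR}),\rho_{X'R})\leqslant\epsilon$.

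The only real obstacle is the CP verification of the ``post-selection'' channel $\tr_{A'}[P'_{A'}\,\cdot\,]$, which at face value looks like it could fail to be Hermiticity-preserving; this is resolved by the cyclicity-plus-idempotence identity noted in the first step, after which every other check reduces to a short positivity argument or a direct application of data processing.
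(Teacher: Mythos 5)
Your proof is correct and follows essentially the same route as the paper: you define the same candidate map $\mathcal{T}_{X\to X'}(\omega_X) = \tr_{A'}[P'_{A'}\,\Phi(\omega_X\otimes\tau_A)]$, verify the $\Gamma$-sub-preservation bound via $P_A\Gamma_A P_A\leqslant\Gamma_A$ and $\Phi$'s sub-preservation, and obtain the closeness bound from monotonicity of the purified distance under the trace-nonincreasing CP post-selection. The only small addition you make is the explicit cyclicity-plus-idempotence remark to exhibit the Kraus form of $\tr_{A'}[P'_{A'}(\cdot)]$, which the paper treats as obvious.
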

\begin{proof}[*prop:coh-rel-entr-smoothing-battery-states]
  Define, for any $\omega_X$,
  \begin{align}
    \mathcal{T}_{X\to X'}`(\omega_X)
    = \tr_{A'}\,`*[ P'_{A'}\;
    \Phi_{XA\to X'A'}`*(\omega_X\otimes\frac{P_A\Gamma_A P_A}{\tr P_A\Gamma_A})]\ .
    \label{eq:prop-coh-rel-entr-smoothing-battery-states-def-mathcalT}
  \end{align}
  Then
  \begin{multline}
    \mathcal{T}_{X\to X'}`(\sigma_{XR})
    = \tr_{A'}\,`*[ P'_{A'}\;
    \Phi_{XA\to X'A'}`*(\sigma_{XR}\otimes\frac{P_A\Gamma_A P_A}{\tr P_A\Gamma_A})]
    \\[0.5ex]
    = \tr_{A'}\,`*[ P'_{A'}\, \tilde\rho_{A'X'R} ]\ ,
  \end{multline}
  where
  $\tilde\rho_{A'X'R} := \Phi_{XA\to X'A'}`( \sigma_{XR}\otimes\frac{P_A\Gamma_A
    P_A}{\tr P_A\Gamma_A})$ satisfies
  $P`(\tilde\rho_{A'X'R}, \rho_{X'R}\otimes\frac{P'_{A'}\Gamma_{A'}P'_{A'}}{\tr
    P'_{A'}\Gamma_{A'}}) \leqslant\epsilon$ by assumption.  Using the
  monotonicity of the purified distance \cite{PhDTomamichel2012} in particular
  under the trace-nonincreasing completely positive map
  $\tr\,`\big[P'_{A'}\,`(\cdot)]$, we have
  \begin{align}
    P`\big(\mathcal{T}_{X\to X'}`(\sigma_{XR}), \rho_{X'R}) \leqslant \epsilon\ .
  \end{align}

  We also have
  \begin{multline}
    \mathcal{T}_{X\to X'}`(\Gamma_X)
    = \tr_{A'}\,`*[ P'_{A'} \, \Phi_{XA\to X'A'}`(
    \Gamma_X\otimes\frac{P_A\Gamma_A P_A}{\tr P_A\Gamma_A}) ]
    \\
    \leqslant \frac1{\tr P_A\Gamma_A}\cdot \tr_{A'}\,`*[
    P'_{A'} \, \Gamma_{X'}\otimes\Gamma_{A'} ]\ ,
  \end{multline}
  using the fact that
  $P_A\Gamma_A P_A = \Gamma_A^{1/2} P_A\Gamma_A^{1/2}\leqslant\Gamma_A$
  (because
  $[P_A,\Gamma_A]=0$) and also with the fact that $\Phi_{XA\to X'A'}$ is
  $\Gamma$-sub-preserving.  Then
  \begin{align}
    \mathcal{T}_{X\to X'}`(\Gamma_X)
    \leqslant \frac{\tr P'_{A'}\Gamma_{A'}}{\tr P_A\Gamma_A}\, \Gamma_{X'}\ ,
  \end{align}
  which completes the proof.
\end{proof}

This means that the processes which also allow \qq{fuzziness} on the battery
states are \emph{de facto} already included in the optimization defining the
smooth coherent relative entropy~\eqref{eq:coh-rel-entr-def}.  This is
formulated explicitly in the following corollary.

\begin{corollary}
  \label{cor:coh-rel-entr-smooth-expr-with-process-on-battery}
  Let $\rho_{X'R_X}$ be a subnormalized state, let
  $\Gamma_X,\Gamma_{X'}\geqslant0$ and let $\epsilon>0$.  Then
  \begin{multline}
    \DCohz[\epsilon]\rho{X}{X'}{\Gamma_X}{\Gamma_{X'}}
    \\
    = \max_{A,A', P_A,P'_{A'}, \Phi_{XA\to X'A'}}
    -\log\frac{\tr P'_{A'}\Gamma_{A'}}{\tr P_{A}\Gamma_{A}}\ ,
    \label{eq:cor-coh-rel-entr-smooth-expr-with-process-on-battery-max}
  \end{multline}
  where the optimization is performed over all systems $A$, $A'$, all operators
  $\Gamma_A$, $\Gamma_{A'}$, and all projectors $P_{A}$, $P'_{A'}$ such that
  $[P_A,\Gamma_A]=0$ and $[P'_{A'},\Gamma_{A'}]=0$, for which there is a trace
  nonincreasing, completely positive map $\Phi_{XA\to X'A'}$ satisfying
  $\Phi_{XA\to X'A'}`*( \Gamma_{X}\otimes\Gamma_A ) \leqslant
  \Gamma_{X'}\otimes\Gamma_{A'}$ as well as
  \begin{multline}
    P\,\bigg[
    \Phi_{XA\to X'A'}`*(\sigma_{XR}\otimes\frac{P_A\Gamma_A P_A}{\tr P_A\Gamma_A} )\, ,
    \\ \;
    \rho_{X'R}\otimes\frac{P'_{A'}\Gamma_{A'} P'_{A'}}{\tr P'_{A'}\Gamma_{A'}}
    \bigg] \leqslant\epsilon \ .
    \label{eq:cor-coh-rel-entr-smooth-expr-with-process-on-battery-condition-on-Phi}
  \end{multline}
\end{corollary}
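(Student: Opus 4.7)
The plan is to establish the equality in \eqref{eq:cor-coh-rel-entr-smooth-expr-with-process-on-battery-max} by proving two inequalities, each of which follows essentially from one of the two results we have already proved: \autoref{prop:coh-rel-entr-smoothing-battery-states} converts any battery-level smoothed implementation into an admissible optimization candidate for the coherent relative entropy, while \autoref{prop:equiv-battery-models} lifts any admissible candidate back into a battery implementation.

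To show LHS $\geqslant$ RHS, I would fix any tuple $(A, A', P_A, P'_{A'}, \Phi_{XA\to X'A'})$ feasible for the right-hand side, so that $\Phi_{XA\to X'A'}(\Gamma_X\otimes\Gamma_A)\leqslant\Gamma_{X'}\otimes\Gamma_{A'}$ and \eqref{eq:cor-coh-rel-entr-smooth-expr-with-process-on-battery-condition-on-Phi} holds. Then \autoref{prop:coh-rel-entr-smoothing-battery-states} produces a trace-nonincreasing, completely positive map $\mathcal{T}_{X\to X'}$ satisfying both $P(\mathcal{T}(\sigma_{XR_X}),\rho_{X'R_X}) \leqslant \epsilon$ and $\mathcal{T}(\Gamma_X) \leqslant [\tr(P'_{A'}\Gamma_{A'})/\tr(P_A\Gamma_A)]\,\Gamma_{X'}$. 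This $\mathcal{T}$ is therefore admissible in the defining optimization \eqref{eq:coh-rel-entr-def} with $y = -\log[\tr(P'_{A'}\Gamma_{A'})/\tr(P_A\Gamma_A)]$, which is exactly the RHS value attained by the chosen tuple. Passing to the supremum over tuples on the RHS yields the inequality.

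For the reverse direction LHS $\leqslant$ RHS, let $\mathcal{T}_{X\to X'}$ and $y = \DCohz[\epsilon]\rho{X}{X'}{\Gamma_X}{\Gamma_{X'}}$ be optimal in \eqref{eq:coh-rel-entr-def} (the optimum is attained since $\epsilon>0$ makes the primal strictly feasible, as noted after \autoref{prop:coh-rel-entr-SDP}), so that $\mathcal{T}(\Gamma_X)\leqslant 2^{-y}\,\Gamma_{X'}$ and $P(\mathcal{T}(\sigma_{XR_X}),\rho_{X'R_X})\leqslant\epsilon$. Now choose a single-qubit battery $W_1 = W_2 = W$ with $\Gamma_W = g_0\proj{0}_W + g_1\proj{1}_W$ where $g_0, g_1 > 0$ are tuned so that $g_1/g_0 = 2^{-y}$, and set $P_{W_1} = \proj{0}_W$ and $P'_{W_2} = \proj{1}_W$, which plainly commute with $\Gamma_W$; then $\tr(P'_{W_2}\Gamma_W)/\tr(P_{W_1}\Gamma_W) = 2^{-y}$. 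By the implication \ref{item:prop-equiv-battery-models-equivstatement-criterionMapEnorm}$\,\Rightarrow\,$\ref{item:prop-equiv-battery-models-equivstatement-projGamma} of \autoref{prop:equiv-battery-models}, there exists a trace-nonincreasing, $\Gamma$-sub-preserving map $\Phi''_{XW\to X'W}$ satisfying $\Phi''(\omega_X \otimes \tau(P_{W_1})) = \mathcal{T}(\omega_X) \otimes \tau(P'_{W_2})$ for every $\omega_X$, where $\tau(P) = P\Gamma_W P/\tr(P\Gamma_W)$. Evaluating at $\sigma_{XR_X}$ gives $\mathcal{T}(\sigma_{XR_X})\otimes\tau(P'_{W_2})$, which lies within purified distance $\epsilon$ of $\rho_{X'R_X}\otimes\tau(P'_{W_2})$ because tensoring with a fixed state preserves the purified distance. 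Thus the tuple $(W, W, P_{W_1}, P'_{W_2}, \Phi'')$ is feasible for the RHS with attained value $y$, giving the claimed inequality.

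I do not foresee any serious obstacle: the corollary is essentially a repackaging of \autoref{prop:coh-rel-entr-smoothing-battery-states} together with the equivalence of battery models in \autoref{prop:equiv-battery-models}. The only minor point to be careful about is matching the real number $2^{-y}$ with a battery ratio, but this is trivially handled by the qubit construction above, thanks to the complete freedom in choosing $g_0$ and $g_1$. Should the optimum not be attained in \eqref{eq:coh-rel-entr-def} in some edge case, one would approximate by a sequence of feasible $(\mathcal{T}_n, y_n)$ with $y_n \to y$, passing the construction above through the sequence.
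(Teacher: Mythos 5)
Your proof is correct and follows essentially the same two-step route as the paper: the "$\geqslant$" direction via \autoref{prop:coh-rel-entr-smoothing-battery-states}, and the "$\leqslant$" direction by building a single-qubit battery with $\Gamma_W = g_0\proj{0}+g_1\proj{1}$ tuned so the ratio matches $2^{-y}$ and invoking the battery-model equivalence of \autoref{prop:equiv-battery-models} on an optimal primal candidate. The only cosmetic difference is that you invoke item \ref{item:prop-equiv-battery-models-equivstatement-projGamma} (general battery) whereas the paper uses the wit special case \ref{item:prop-equiv-battery-models-equivstatement-wit}, which for your qubit choice is the same thing; your final remark about approximating the supremum is unnecessary since, as you yourself note, an optimal primal candidate always exists.
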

\begin{proof}[*cor:coh-rel-entr-smooth-expr-with-process-on-battery]
  First, let $A$, $A'$, $P_A$, $P'_{A'}$, $\Gamma_A$, $\Gamma_{A'}$ and
  $\Phi_{XA\to X'A'}$ satisfy the conditions of the
  maximization~\eqref{eq:cor-coh-rel-entr-smooth-expr-with-process-on-battery-max}.
  Let $\mathcal{T}_{X\to X'}$ the mapping given by
  \autoref{prop:coh-rel-entr-smoothing-battery-states}.  Observe that
  $\norm[\big]{
    \Gamma_{X'}^{-1/2}\mathcal{T}_{X\to{}X'}`(\Gamma_X)\,\Gamma_{X'}^{-1/2}
  }_\infty \leqslant `(\tr{P'_{A'}\Gamma_{A'}})/`(\tr{P_{A}\Gamma_{A}})$.  Note
  also that
  $P(\mathcal{T}_{X\to X'}`(\sigma_{XR}),\rho_{X'R_X}) \leqslant \epsilon$ as
  guaranteed by our previous use of
  \autoref{prop:coh-rel-entr-smoothing-battery-states}.  Hence,
  $\mathcal{T}_{X\to X'}$ is a valid candidate in the optimization given
  by \autoref{prop:coh-rel-entr-rewrite-primal-problem-infinity-norm} for
  $\DCohz[\epsilon]{\rho}{X}{X'}{\Gamma_X}{\Gamma_{X'}}$.  Hence
  \begin{align}
    \DCohz[\epsilon]{\rho}{X}{X'}{\Gamma_X}{\Gamma_{X'}}
    \geqslant -\log \frac{\tr{P'_{A'}\Gamma_{A'}}}{\tr{P_{A}\Gamma_{A}}}\ .
  \end{align}

  To show that equality is achieved
  in~\eqref{eq:cor-coh-rel-entr-smooth-expr-with-process-on-battery-max}, let
  $\mathcal{T}_{X\to X'}$ be a valid optimization candidate
  in~\eqref{eq:coh-rel-entr-def} for
  $\DCohz[\epsilon]{\rho}{X}{X'}{\Gamma_X}{\Gamma_{X'}}$ which achieves the
  optimal value
  $y=\DCohz[\epsilon]{\rho}{X}{X'}{\Gamma_R}{\Gamma_{X'}} = -\log\,\norm[\big]{
    \Gamma_{X'}^{-1/2} \mathcal{T}_{X\to{}X'}`*(\Gamma_X)\,
    \Gamma_{X'}^{-1/2}}_\infty$, with
  $P`(\mathcal{T}_{X\to X'}`(\sigma_{XR_X}), \rho_{X'R_X})\leqslant\epsilon$.
  Then $\mathcal{T}_{X\to X'}`*(\Gamma_X)\leqslant 2^{-y}\,\Gamma_{X'}$, and
  this mapping satisfies the conditions of
  \autoref{item:prop-equiv-battery-models-equivstatement-criterionMapEnorm} of
  \autoref{prop:equiv-battery-models}.  Let $A=A'$ be a qubit system with
  $P_A=\proj 0_A$, $P'_{A'}=\proj 1_{A'}$, and
  $\Gamma_A=\Gamma_{A'}=g_0\,\proj 0_A + g_1\,\proj 1_A$, with $g_0/g_1 = 2^y$.
  In virtue of \autoref{item:prop-equiv-battery-models-equivstatement-wit} of
  \autoref{prop:equiv-battery-models}, there exists a trace-nonincreasing,
  completely positive map $\Phi_{XA\to X'A'}$ such that
  $\Phi_{XA\to X'A'}`(\Gamma_{X}\otimes\Gamma_A)\leqslant
  \Gamma_{X'}\otimes\Gamma_{A'}$ and which satisfies
  $\Phi_{XA\to X'A'}`*(`*(\cdot)\,\otimes\proj0_A) =
  \mathcal{T}_{X\to{}X'}`*(\cdot)\otimes\proj1_{A'}$.  Then
  \begin{align}
    \Phi_{XA\to X'A'}`*(\sigma_{XR_X}\otimes\proj0_A)
    = \mathcal{T}_{X\to X'}`(\sigma_{XR_X})\otimes\proj1_{A'}\ ,
  \end{align}
  and hence
  \begin{multline}
    P`(\Phi_{XA\to X'A'}`*(\sigma_{XR_X}\otimes\proj0_A),
    \rho_{X'R_X}\otimes\proj1_{A'})
    \\
    = P`(\mathcal{T}_{X\to X'}`(\sigma_{XR_X}), \rho_{X'R_X})
    \leqslant \epsilon\ .
  \end{multline}
  Hence, all the conditions of the
  maximization~\eqref{eq:cor-coh-rel-entr-smooth-expr-with-process-on-battery-max}
  are satisfied, and the achieved value is indeed
  $\;-\log`[`(\tr P'_{A'}\Gamma_{A'})/`(\tr P_A\Gamma_A)] = -\log`(g_1/g_0) =
  y$.
\end{proof}

\section{Technical Utilities}
\appendixlabel{appx:sec:technical-utilities}

\begin{lemma}
  \label{util:blow-muA-operator-to-try-cover-B-limit}
  Let $A\geqslant 0$, $B\geqslant 0$ and let $\Pi$ be the projector onto the support
  of $A$. Let $\mu>0$. Define $P$ as the projector onto the eigenspaces associated to
  nonnegative eigenvalues of the operator $\left(\mu A - B\right)$. Then there exists
  a constant $c$ which is independent of $\mu$ such that
  \begin{align}
    \label{eq:util-blow-muA-operator-to-try-cover-B-limit-norm}
    \norm{\Pi - P\Pi P}_\infty \leqslant \frac{c}{\mu}\ .
  \end{align}
  In particular,
  \begin{align}
    \label{eq:util-blow-muA-operator-to-try-cover-B-limit-operator-inequality}
    \Pi \leqslant P + \frac{c}{\mu}\Ident\ .
  \end{align}
\end{lemma}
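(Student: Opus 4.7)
The plan is to prove the norm bound \eqref{eq:util-blow-muA-operator-to-try-cover-B-limit-norm} first and then derive the operator inequality \eqref{eq:util-blow-muA-operator-to-try-cover-B-limit-operator-inequality} as an immediate consequence: any Hermitian operator $X$ with $\norm{X}_\infty \leqslant c/\mu$ satisfies $X \leqslant (c/\mu)\,\Ident$, and combined with $P\Pi P \leqslant P$ (which follows from $\Pi \leqslant \Ident$ sandwiched by $P$), this yields $\Pi = P\Pi P + (\Pi - P\Pi P) \leqslant P + (c/\mu)\,\Ident$.

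For the norm bound, I view $M = \mu A - B$ via the identity $M = \mu(A - B/\mu)$, so that $P$ is also the spectral projector of $A - B/\mu$ onto $[0,\infty)$. Let $\lambda > 0$ denote the smallest nonzero eigenvalue of $A$, so the spectrum of $A$ lies in $\{0\}\cup[\lambda,\norm{A}_\infty]$. For $\mu > 4\,\norm{B}_\infty/\lambda$, Weyl's inequality places the spectrum of $A - B/\mu$ inside $[-\norm{B}_\infty/\mu,\norm{B}_\infty/\mu] \cup [\lambda/2,\norm{A}_\infty + \norm{B}_\infty/\mu]$, leaving a clean gap of width at least $\lambda/4$ around the value $\lambda/2$. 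The Davis-Kahan sin-theta theorem, applied to the spectral projectors of $A$ and $A - B/\mu$ onto $[\lambda/2,\infty)$, then produces a projector $\Pi_1$ satisfying $\norm{\Pi - \Pi_1}_\infty \leqslant c'/\mu$, where $c'$ depends only on $\lambda$ and $\norm{B}_\infty$. Since all eigenvalues of $A - B/\mu$ captured by $\Pi_1$ are at least $\lambda/2 > 0$, and $P$ is the spectral projector of the same operator onto $[0,\infty)$, we have $\Pi_1 \leqslant P$ with the two projectors commuting; in particular $P\Pi_1 = \Pi_1 P = \Pi_1$. Writing $E = \Pi - \Pi_1$, one computes $P\Pi P = \Pi_1 + PEP$, so $\Pi - P\Pi P = E - PEP$ has operator norm at most $2\norm{E}_\infty \leqslant 2c'/\mu$. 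In the complementary regime $\mu \leqslant 4\,\norm{B}_\infty/\lambda$, the trivial bound $\norm{\Pi - P\Pi P}_\infty \leqslant 2 \leqslant 8\,\norm{B}_\infty/(\lambda\mu)$ suffices, and the final constant $c$ is the larger of $2c'$ and $8\,\norm{B}_\infty/\lambda$.

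The main obstacle is securing the $O(1/\mu)$ rate rather than a weaker one. A direct elementary argument —— noting that on the range of $P^\perp$ one has $\mu A \leqslant B$, so for any unit $\ket{w}$ there $\matrixel{w}{A}{w} \leqslant \norm{B}_\infty/\mu$, which together with $\matrixel{w}{A}{w} \geqslant \lambda\,\norm{\Pi\ket{w}}^2$ gives $\norm{\Pi P^\perp}_\infty^2 \leqslant \norm{B}_\infty/(\mu\lambda)$ —— only yields the weaker rate $O(1/\sqrt{\mu})$. Obtaining the sharper $O(1/\mu)$ rate genuinely exploits the spectral gap between the large (order $\mu\lambda$) and near-zero (order $\norm{B}_\infty$) eigenvalues of $M$, which is precisely what Davis-Kahan encodes; alternatively, a resolvent representation $P = (2\pi i)^{-1}\oint_\gamma (zI - M)^{-1}\,dz$, with $\gamma$ a contour separating the positive from the nonpositive spectrum at uniform distance $\Omega(\mu)$ from both, expanded to first order in $B/\mu$, furnishes a self-contained derivation with the same rate.
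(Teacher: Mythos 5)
Your proof is correct and lives in the same circle of ideas as the paper's: both rely on a Davis--Kahan\,/\,$\sin\Theta$-type perturbation bound to get the $O(1/\mu)$ rate, and you correctly observe that the naive Cauchy--Schwarz argument only yields $O(1/\sqrt{\mu})$. The difference is in which form of the theorem you invoke. You use the version requiring an explicit eigenvalue gap between the two operators, which forces you to (i) locate the spectrum of $A - B/\mu$ via Weyl's inequality, (ii) restrict to $\mu > 4\norm{B}_\infty/\lambda$ and handle small $\mu$ by a trivial bound, and (iii) introduce the auxiliary projector $\Pi_1$, plus an implicit equal-rank argument to convert the $\sin\Theta$ bound into $\norm{\Pi-\Pi_1}_\infty \leqslant c'/\mu$. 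The paper instead invokes Bhatia's Theorem~VII.3.1~\cite{BookBhatiaMatrixAnalysis1997}, which states that for Hermitian $M, M'$ and spectral projectors $E = E_M(S_1)$, $F = E_{M'}(S_2)$ on \emph{index sets} $S_1, S_2$ with $\mathrm{dist}(S_1,S_2) \geqslant \delta$, one has $\norm{EF}_\bullet \leqslant \delta^{-1}\norm{E(M-M')F}_\bullet$ for any unitarily invariant norm; the hypothesis is a separation of the sets $S_1,S_2$, not of the actual spectra, so no Weyl step and no regime restriction are needed. Applied with $M = A$, $M' = A - B/\mu$, $E = \Pi = E_A([a_{\min},\infty))$ and $F = Q = \Ident - P = E_{A-B/\mu}((-\infty,0))$, it gives $\norm{Q\Pi}_\infty \leqslant \norm{B}_\infty/(\mu\,a_{\min})$ for all $\mu>0$ at once, and the remainder is the same two-line manipulation $\norm{\Pi - P\Pi P}_\infty \leqslant 2\norm{Q\Pi}_\infty$ you also use. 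So both routes work; the paper's buys a shorter proof, an explicit constant $c = 2\norm{A^{-1}}_\infty\norm{B}_\infty$, and no case split or auxiliary projector. (Your resolvent-contour alternative would likewise require a Weyl-type preliminary to place the contour, so it would not escape that overhead.)
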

\begin{proof}[*util:blow-muA-operator-to-try-cover-B-limit]
  This lemma follows from a result of perturbation of matrix
  eigenspaces~\cite{BookBhatiaMatrixAnalysis1997}. We'll consider the operators
  $A-\frac1\mu B$ and $A$. Let $Q=\Ident-P$ be the projector on the eigenspaces
  associated to the strictly negative eigenvalues of $A-\frac1\mu B$. Let
  $a_\mathrm{min}=\norm{A^{-1}}_\infty^{-1}$ be the smallest nonzero eigenvalue
  of $A$.  Recall that $\Pi$ projects onto the eigenspaces of $A$ associated to
  eigenvalues larger or equal to $a_\mathrm{min}$. We may now invoke
  \cite[\theoremname~VII.3.1]{BookBhatiaMatrixAnalysis1997}, which asserts that
  for any unitarily invariant norm $\norm{\cdot}_\bullet$,
  \begin{align}
    \norm{Q\Pi}_\bullet \leqslant \frac{1}{\mu a_\mathrm{min}}\norm{Q B\Pi}_\bullet
    \leqslant \frac1{\mu a_\mathrm{min}}\norm{B}_\bullet\ .
  \end{align}
  (The gap $\delta$ in \cite[\theoremname~VII.3.1]{BookBhatiaMatrixAnalysis1997}
  is here the gap between $0$ and $a_\mathrm{min}$.)  In particular, we have
  $\norm{Q\Pi}_\infty\leqslant \left(\mu
    a_\mathrm{min}\right)^{-1}\norm{B}_\infty$.  We then have
  \begin{multline}
    \norm{\Pi - P \Pi P}_\infty
    \leqslant \norm{\Pi - P\Pi}_\infty + \norm{P\Pi - P\Pi P}_\infty
    \\
    \leqslant \norm{\Pi - P\Pi}_\infty + \norm{P}_\infty\norm{\Pi - \Pi P}_\infty
    \\
    = 2\norm{\Pi - P\Pi}_\infty = 2\norm{Q\Pi}_\infty \leqslant \frac{c}\mu\ ,
  \end{multline}
  with $c=2\left(a_\mathrm{min}\right)^{-1}\norm{B}_\infty$. This
  implies~\eqref{eq:util-blow-muA-operator-to-try-cover-B-limit-operator-inequality}
  because
  \begin{align}
    \Pi - P\Pi P \leqslant \frac{c}{\mu}\Ident
    \qquad\Rightarrow\qquad
    \Pi \leqslant P\Pi P + \frac{c}{\mu}\Ident \leqslant P + \frac{c}{\mu}\Ident\ .
    \tag*\qedhere
  \end{align}
\end{proof}

\begin{lemma}
  \label{util:trace-distance-DeltaPlusMinus}
  \index{trace distance}
  Let $\rho$ and $\sigma$ be quantum states.  The trace distance $D\,`(\rho,\sigma)$
  between $\rho$ and $\sigma$ can be written as the semidefinite program in terms of
  the variables $\Delta^\pm\geqslant0$:
  \begin{subequations}
    \begin{align}
      \mathrm{minimize:}\quad & \frac12\,\tr\,`(\Delta^+ + \Delta^-) \\
      \mathrm{subject~to:}\quad & \sigma = \rho +  \Delta^+ - \Delta^-\ .
      \label{eq:util-trace-distance-DeltaPlusMinus-sdp1-constraint}
    \end{align}
  \end{subequations}
  Furthermore, $\tr\Delta^+=\tr\Delta^-=D\,`(\rho,\sigma)$ for the optimal solution.  The
  dual to this program is an alternate expression of the same quantity, in terms of the
  Hermitian variable $Z$:
  \begin{subequations}
    \begin{align}
      \mathrm{maximize:}\quad & \frac12\,\tr\,`(Z\,`(\rho - \sigma)) \\
      \mathrm{subject~to:}\quad & -\Ident \leqslant Z \leqslant \Ident\ .
    \end{align}
  \end{subequations}
\end{lemma}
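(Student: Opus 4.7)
\medskip

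My plan is to prove the primal SDP directly from the Jordan (positive/negative part) decomposition of the Hermitian operator $\rho-\sigma$, and then obtain the dual program either by Lagrangian duality or by casting the primal in the standard form of \autoref{thm:Slater}.

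For the primal, write the Jordan decomposition $\sigma - \rho = (\sigma-\rho)_+ - (\sigma-\rho)_-$ with orthogonal positive parts. Then $\Delta^+ := (\sigma-\rho)_+$, $\Delta^- := (\sigma-\rho)_-$ is feasible and achieves
\begin{align*}
\tfrac12\tr(\Delta^+ + \Delta^-) \;=\; \tfrac12\,\norm{\rho-\sigma}_1 \;=\; D(\rho,\sigma).
\end{align*}
For the converse bound, any feasible pair satisfies $\sigma-\rho = \Delta^+-\Delta^-$, so the triangle inequality for the trace norm and positivity give $\norm{\rho-\sigma}_1 \leqslant \tr\Delta^+ + \tr\Delta^-$, proving that $D(\rho,\sigma)$ is indeed the minimum. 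Taking the trace of the equality constraint~\eqref{eq:util-trace-distance-DeltaPlusMinus-sdp1-constraint} gives $\tr\Delta^+ - \tr\Delta^- = \tr(\sigma-\rho) = 0$, so at the optimum $\tr\Delta^+ = \tr\Delta^-$, and each equals $D(\rho,\sigma)$.

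For the dual, I will introduce a Hermitian Lagrange multiplier $Y$ for the equality constraint and write
\begin{align*}
L(\Delta^\pm, Y) = \tfrac12\tr(\Delta^++\Delta^-) - \tr\!\left[Y\bigl(\Delta^+-\Delta^--(\sigma-\rho)\bigr)\right]\!,
\end{align*}
then regroup as
\begin{align*}
L = \tr[Y(\sigma-\rho)] + \tr\!\left[(\tfrac12\Ident-Y)\Delta^+\right] + \tr\!\left[(\tfrac12\Ident+Y)\Delta^-\right]\!.
\end{align*}
The infimum over $\Delta^\pm\geqslant 0$ is finite iff $-\tfrac12\Ident\leqslant Y\leqslant\tfrac12\Ident$, giving the dual program: maximize $\tr[Y(\sigma-\rho)]$ subject to these bounds. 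A linear change of variables $Z=-2Y$ (noting that flipping the sign of the feasible variable preserves the feasible set, which is symmetric about $0$) yields exactly the stated form: maximize $\tfrac12\tr[Z(\rho-\sigma)]$ subject to $-\Ident\leqslant Z\leqslant \Ident$. Strong duality follows from Slater's conditions (\autoref{thm:Slater}): the primal is strictly feasible, e.g.\@ by choosing $\Delta^+ = (\sigma-\rho)_+ + \lambda\Ident$ and $\Delta^- = (\sigma-\rho)_- + \lambda\Ident$ for small $\lambda>0$, and the dual is strictly feasible by taking $Z=0$.

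The only genuinely subtle point is the sign/rescaling match between the Lagrangian's natural dual (objective $\tr[Y(\sigma-\rho)]$, bounds $\pm\tfrac12\Ident$) and the form stated in the lemma (objective $\tfrac12\tr[Z(\rho-\sigma)]$, bounds $\pm\Ident$); I would just record the substitution and appeal to symmetry of the feasible set under $Z\mapsto -Z$. Everything else is a routine application of the Jordan decomposition and of Slater's theorem.
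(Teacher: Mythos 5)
Your proof is correct, and the primal analysis (Jordan decomposition for feasibility, triangle inequality for optimality, taking the trace of the constraint to get $\tr\Delta^+=\tr\Delta^-$) matches the paper's almost exactly. Where you diverge is in the dual: you \emph{derive} the dual program via the Lagrangian and then close the duality gap by invoking Slater, whereas the paper simply \emph{exhibits} the dual certificate $Z = \Pi_+ - \Pi_-$ (the sign operator of $\rho-\sigma$), computes that it also attains $D(\rho,\sigma)$, and concludes by weak duality. The paper's route is shorter and sidesteps a small formal wrinkle in yours: \autoref{thm:Slater} as stated in the paper applies to programs written in the inequality form $\Phi(X)\geqslant B$, and your primal has an affine \emph{equality} constraint. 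Encoding $\sigma = \rho + \Delta^+ - \Delta^-$ as a pair of opposing inequalities makes the primal never strictly feasible, so your proposed $\Delta^\pm + \lambda\Ident$ does not establish primal strict feasibility in that form. You could still appeal to part (i) of Slater's theorem (primal feasible, dual strictly feasible at $Z=0$), but you would then need to check that your Lagrangian dual coincides with the standard-form dual the theorem refers to. All of this is routine, but it is more machinery than the problem needs: once you have the Lagrangian, you already have weak duality for free, and plugging in $Z=\Pi_+-\Pi_-$ gives a matching value, which closes the argument with no appeal to Slater at all. Your Lagrangian derivation is nevertheless a nice piece of exposition—it explains \emph{why} the dual has the stated form, which the paper leaves implicit.
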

\begin{proof}[*util:trace-distance-DeltaPlusMinus]
  Write $D\,`(\rho,\sigma)=\frac12\norm{\rho-\sigma}_1$ and recall that for any Hermitian
  $A$, $\norm{A}_1 = \tr\abs{A}$. Choosing $\Delta_\pm\geqslant0$ as the positive and
  negative parts of $\rho-\sigma$, i.e.\@ such that $\rho-\sigma=\Delta+-\Delta_-$, yields
  feasible candidates for the primal problem and
  $\frac12\tr`(\Delta_+ + \Delta_-)= \frac12\tr\abs{\rho-\sigma} = D(\rho,\sigma)$.  Now
  let $\Pi_\pm$ be the projectors onto the strictly positive and strictly negative parts
  of $\rho-\sigma$, respectively, and choose $Z=\Pi_+ - \Pi_-$.  Observe that
  $\Pi_\pm(\rho-\sigma) = \pm\Delta_\pm$.  Then
  $\frac12\tr(Z(\rho-\sigma)) = \frac12\tr`(\Delta_+ + \Delta_-) = D(\rho,\sigma)$.  We
  have exhibited primal and dual candidates achieving the value $D(\rho,\sigma)$, and
  hence this is the optimal solution of the semidefinite program.
  Furthermore~\eqref{eq:util-trace-distance-DeltaPlusMinus-sdp1-constraint} implies that
  $\tr\Delta_+=\tr\Delta_-$ and hence
  $\tr\Delta_+ = \tr\Delta_- = \frac12\tr`(\Delta_+ +\Delta_-) = D(\rho,\sigma)$.
\end{proof}

\begin{lemma}[Gentle measurement lemma for the purified distance]
  \label{util:purified-distance-to-normalized-projected-normalized-state}
  \index{purified distance}
  Let $\rho\geqslant 0$ with $\tr\rho=1$. Let $\epsilon\geqslant 0$. Let $\Pi$ be a
  projector such that $\tr\left(\Pi\rho\right)\geqslant 1-\epsilon$. Then
  \begin{align}
    P`*(\rho,\frac{\Pi\rho\Pi}{\tr\left(\Pi\rho\right)})
    \leqslant \frac{\sqrt{2\epsilon}}{\sqrt{1-\epsilon}}\ .
  \end{align}
\end{lemma}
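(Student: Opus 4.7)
The plan is to bound the purified distance by computing the fidelity $F(\rho,\sigma)$ directly, where I write $\sigma = \Pi\rho\Pi/p$ with $p := \tr(\Pi\rho)$, and then translate via the definition $P(\rho,\sigma) = \sqrt{1 - F^2(\rho,\sigma)}$. Since both $\rho$ and $\sigma$ are normalized, the correction term in the generalized fidelity vanishes and I can work with the standard expression $F(\rho,\sigma) = \tr[(\rho^{1/2}\sigma\rho^{1/2})^{1/2}]$.

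The key---and essentially only---algebraic observation is that $\rho^{1/2}\sigma\rho^{1/2} = p^{-1}\,\rho^{1/2}\Pi\rho\Pi\rho^{1/2} = p^{-1}\,(\rho^{1/2}\Pi\rho^{1/2})^{2}$, which uses nothing more than $\rho = \rho^{1/2}\rho^{1/2}$ to absorb a factor of $\rho^{1/2}$ into each of the two occurrences of $\Pi$. Because $\rho^{1/2}\Pi\rho^{1/2}$ is positive semidefinite, its unique positive square root is itself, and therefore $(\rho^{1/2}\sigma\rho^{1/2})^{1/2} = p^{-1/2}\,\rho^{1/2}\Pi\rho^{1/2}$. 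Taking the trace and using cyclicity yields $F(\rho,\sigma) = \tr(\Pi\rho)/\sqrt{p} = \sqrt{p}$.

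Plugging this into the definition of the purified distance gives $P(\rho,\sigma) = \sqrt{1 - p} \leqslant \sqrt{\epsilon}$, which is in fact strictly tighter than the stated bound. The stated inequality then follows from the trivial estimate $\sqrt{\epsilon} \leqslant \sqrt{2\epsilon/(1-\epsilon)}$, valid for all $\epsilon \in [0,1)$ because $1-\epsilon \leqslant 2$; if $\epsilon \geqslant 1$ the claim is vacuous since $P \leqslant 1$ always.

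I do not expect any genuine obstacle: the whole argument reduces to the identity $(\rho^{1/2}\Pi\rho^{1/2})^{2} = \rho^{1/2}\Pi\rho\Pi\rho^{1/2}$ together with the triviality of taking the operator square root of a nonnegative square. An alternative route through the trace-distance gentle measurement lemma combined with $P(\rho,\sigma) \leqslant \sqrt{2\,D(\rho,\sigma)}$ would also yield a bound of the stated form, but it would obscure the clean closed-form value $F(\rho,\sigma) = \sqrt{\tr(\Pi\rho)}$ which drops straight out of the direct computation and already supersedes the statement.
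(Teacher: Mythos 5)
Your proof is correct, and it takes a genuinely different and cleaner route than the paper's. The paper first factors the normalization out of the generalized fidelity, reducing the problem to bounding $P(\rho,\Pi\rho\Pi)$, and then invokes a gentle-measurement lemma from ref.~\cite{Berta2010_uncertainty}, landing at $P\leqslant\sqrt{2\epsilon-\epsilon^2}/\sqrt{1-\epsilon}\leqslant\sqrt{2\epsilon}/\sqrt{1-\epsilon}$. You instead compute the fidelity in closed form: the algebraic identity $\rho^{1/2}\,\Pi\rho\Pi\,\rho^{1/2}=(\rho^{1/2}\Pi\rho^{1/2})^2$ together with $\rho^{1/2}\Pi\rho^{1/2}\geqslant 0$ gives $F\bigl(\rho,\Pi\rho\Pi/\tr(\Pi\rho)\bigr)=\sqrt{\tr(\Pi\rho)}$ exactly, hence $P=\sqrt{1-\tr(\Pi\rho)}\leqslant\sqrt{\epsilon}$, which is an equality for the purified distance and strictly dominates the lemma's claimed bound (your comparison $\sqrt\epsilon\leqslant\sqrt{2\epsilon/(1-\epsilon)}$, together with the triviality of the case $\epsilon\geqslant 1/3$ where the right-hand side exceeds $1$, covers the statement). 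What your approach buys is a self-contained, citation-free proof and a sharper constant; what the paper's buys is merely reuse of an already-quoted result. Either way the statement holds, and your argument is the one I would keep.
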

\begin{proof}[*util:purified-distance-to-normalized-projected-normalized-state]
  Calculate
  \begin{align}
    P^2`*(\rho,\frac{\Pi\rho\Pi}{\tr\left(\Pi\rho\right)})
    &= 1 - F^2`*(\rho,\frac{\Pi\rho\Pi}{\tr\left(\Pi\rho\right)})
    \nonumber\\
    &= \frac1{\tr`(\Pi\rho)}\,`*[\tr`(\Pi\rho) - F^2`*(\rho, \Pi\rho\Pi)]
    \nonumber\\
    &\leqslant \frac1{\tr`(\Pi\rho)}\,`*[1 - F^2`*(\rho, \Pi\rho\Pi)]
    \nonumber\\
    &\leqslant \frac1{1-\epsilon}P^2`*(\rho,\Pi\rho\Pi)\ ,
  \end{align}
  noting that the generalized fidelity is
  $F`(\rho,\sigma)=\norm[\big]{\sqrt\rho\sqrt\sigma}_1$ as long as one of the
  states is normalized, and hence for $a>0$ we have
  $F^2`(\rho,a\sigma) = a\,F^2`(\rho,\sigma)$ if $\tr\rho=1$.  Now, applying
  \cite[\lemmaname~7]{Berta2010_uncertainty}, we have
  \begin{align}
    P`*(\rho,\frac{\Pi\rho\Pi}{\tr\left(\Pi\rho\right)})
    &\leqslant 
      \frac{\sqrt{2\epsilon - \epsilon^2}}{\sqrt{1-\epsilon}}
      = \frac{\sqrt{\epsilon\,`(2-\epsilon)}}{\sqrt{1-\epsilon}}
      \leqslant \frac{\sqrt{2\epsilon}}{\sqrt{1-\epsilon}}\ .
      \tag*\qedhere
  \end{align}
\end{proof}

\begin{lemma}[Smoothing \qq{part of} a state]
  \label{util:smoothing-part-of-bipartite-state-rest-exact}
  \index{trace distance}\index{purified distance}
  Let $\rho_{AB}$ be a bipartite normalized quantum state and let $\tilde\rho_A$ be a
  normalized quantum state such that $D(\tilde\rho_A,\rho_A)\leqslant\delta$.  Then
  there exists a normalized quantum state $\hat\rho_{AB}$ such that
  $\tr_B\hat\rho_{AB} = \tilde\rho_A$, $\tr_A\hat\rho_{AB}=\rho_B$ and
  $P(\hat\rho_{AB},\rho_{AB})\leqslant 2\sqrt{2\delta}$.
\end{lemma}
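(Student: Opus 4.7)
The main idea is to construct $\hat\rho_{AB}$ as the image of $\rho_{AB}$ under a quantum channel $\mathcal{E}_A$ applied only to system $A$. Since any such channel acts as $\mathcal{E}_A \otimes \mathrm{id}_B$, trace-preservation on $A$ automatically gives $\tr_A\hat\rho_{AB} = \tr_A\rho_{AB} = \rho_B$, so the $B$-marginal is preserved for free. It then suffices to arrange $\mathcal{E}_A(\rho_A) = \tilde\rho_A$ (which fixes the $A$-marginal exactly) while keeping the joint output close to $\rho_{AB}$ in purified distance. This sidesteps the difficulty that a direct two-step Uhlmann approach (Uhlmann on $A$, then Uhlmann on $B$) would perturb whichever marginal was fixed first.

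I would build $\mathcal{E}_A$ explicitly as follows. Decompose $\rho_A - \tilde\rho_A = A_+ - A_-$ into orthogonal positive/negative parts, with $\tr A_\pm = D(\rho_A,\tilde\rho_A) =: \delta' \leqslant \delta$, and introduce $\omega_A := \rho_A - A_+ = \tilde\rho_A - A_- \geqslant 0$ (of trace $1-\delta'$) and the normalized replacement state $\mu_A := A_-/\delta'$. The contraction $M_+ := \omega_A^{1/2}\rho_A^{-1/2}$ satisfies $M_+^\dagger M_+ = \rho_A^{-1/2}\omega_A\rho_A^{-1/2} \leqslant \Pi^{\rho_A}$ and $M_+\rho_A M_+^\dagger = \omega_A$. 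Completing it to a CPTP map via a ``replace'' branch,
\[
  \mathcal{E}_A(\sigma) := M_+\sigma M_+^\dagger + \tr\bigl[(\Ident - M_+^\dagger M_+)\sigma\bigr]\,\mu_A,
\]
a direct check gives $\mathcal{E}_A(\rho_A) = \omega_A + \delta'\mu_A = \tilde\rho_A$, so $\hat\rho_{AB} := (\mathcal{E}_A \otimes \mathrm{id}_B)(\rho_{AB})$ has both required marginals exactly.

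The principal obstacle is bounding $P(\hat\rho_{AB},\rho_{AB}) \leqslant 2\sqrt{2\delta}$. I would split
\[
  \hat\rho_{AB} = (M_+ \otimes \Ident_B)\rho_{AB}(M_+^\dagger \otimes \Ident_B) + \mu_A\otimes \tau_B,
\]
with $\tau_B := \tr_A[(\Ident - M_+^\dagger M_+)\rho_{AB}]$ of trace $\delta'$, and bound each term's contribution to the purified distance by $\sqrt{2\delta'}$. For the first term, since $\tr[(M_+^\dagger M_+ \otimes \Ident_B)\rho_{AB}] = 1-\delta'$, a gentle-operator estimate carried out directly in purified distance yields
\[
  F\bigl((\sqrt{M_+^\dagger M_+}\otimes\Ident)\,\rho_{AB}\,(\sqrt{M_+^\dagger M_+}\otimes\Ident),\, \rho_{AB}\bigr) \geqslant 1-\delta',
\]
using joint concavity of the fidelity together with the operator inequality $\sqrt\Lambda \geqslant \Lambda$ for $0\leqslant\Lambda\leqslant\Ident$; this gives $P\leqslant\sqrt{2\delta'}$. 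The partial-isometry correction coming from the polar decomposition $M_+ = U\sqrt{M_+^\dagger M_+}$ can be absorbed by an Uhlmann-type argument (choosing $U$ optimally against $\rho_{AB}$ on the support of interest). For the second term, a PSD perturbation of trace $\delta'$ raises the purified distance by at most $\sqrt{2\delta'}$ via $P\leqslant\sqrt{2D}$ and $D\leqslant\tfrac{1}{2}\tr\Delta$. The triangle inequality then yields the claimed bound $P(\hat\rho_{AB},\rho_{AB}) \leqslant 2\sqrt{2\delta'} \leqslant 2\sqrt{2\delta}$. The subtle step is upgrading the standard gentle-measurement lemma from a trace-norm estimate (which would only give $\sqrt{2\sqrt{\delta'}}$ via $P\leqslant\sqrt{2D}$) to the sharper fidelity-based estimate above.
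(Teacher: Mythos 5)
Your construction and overall strategy mirror the paper's: build an explicit CPTP channel on $A$ of ``shrink + replace'' form, apply it to the joint state (so the $B$-marginal is preserved automatically), split $\hat\rho_{AB}$ into a contracted piece and a positive remainder, and triangle-inequality the two contributions in purified distance.  The concrete operators differ slightly — you use $M_+ = \omega_A^{1/2}\rho_A^{-1/2}$ with $\omega_A = \rho_A - A_+$ and replacement state $\mu_A = A_-/\delta'$, whereas the paper uses $M_A = \tilde\rho_A^{1/2}(\rho_A+\Delta^+_A)^{-1/2}$ and a replacement proportional to $M_A\Delta^+_A M_A^\dagger$ — but the two are interchangeable for this argument.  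The real difference is that the paper outsources the gentle-operator bound to a cited lemma (Dupuis et al.\ Lemma A.4, itself from Tomamichel et al.), while you attempt a self-contained estimate, which is a nice feature.

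The one piece that needs repair is the way you handle the polar decomposition.  You cannot ``choose $U$ optimally'': $U$ in $M_+ = U\sqrt{M_+^\dagger M_+}$ is fixed, and $F(M_+\rho_{AB}M_+^\dagger,\rho_{AB})$ is not the same as $F(\sqrt{M_+^\dagger M_+}\rho_{AB}\sqrt{M_+^\dagger M_+},\rho_{AB})$ in general, so you cannot simply prove the bound for the latter and ``absorb the partial-isometry correction.''  Fortunately the detour through $\sqrt{M_+^\dagger M_+}$ is unnecessary: applying Uhlmann directly, any purification $\ket\psi$ of $\rho_{AB}$ gives $F(M_+\rho_{AB}M_+^\dagger,\rho_{AB}) \geqslant \abs{\bra\psi (M_+\otimes\Ident)\ket\psi} = \tr(M_+\rho_A) = \tr(\omega_A^{1/2}\rho_A^{1/2})$, and since $\omega_A\leqslant\rho_A$ implies $\omega_A^{1/2}\leqslant\rho_A^{1/2}$, one has $\tr(\omega_A^{1/2}\rho_A^{1/2}) \geqslant \tr\omega_A = 1-\delta'$; the generalized fidelity picks up no further correction since $\tr\rho_{AB}=1$, giving $P\leqslant\sqrt{2\delta'-\delta'^2}\leqslant\sqrt{2\delta'}$.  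Also, a small arithmetic slip: for the second term the generalized trace distance between $M_+\rho_{AB}M_+^\dagger + \mu_A\otimes\tau_B$ (trace $1$) and $M_+\rho_{AB}M_+^\dagger$ (trace $1-\delta'$) is $\tfrac12\tr(\mu_A\otimes\tau_B)+\tfrac12\abs{\delta'} = \delta'$, not $\tfrac12\delta'$, though this still yields $P\leqslant\sqrt{2\delta'}$ and the final bound $2\sqrt{2\delta'}\leqslant 2\sqrt{2\delta}$ as you claim.  With these fixes your argument is complete and arguably more self-contained than the paper's.
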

\begin{proof}[*util:smoothing-part-of-bipartite-state-rest-exact]
  Because $\tilde\rho_A$ and $\rho_A$ are $\delta$-close in trace distance, by
  \autoref{util:trace-distance-DeltaPlusMinus} there exists
  $\Delta^\pm_A\geqslant 0$ such that
  $\tr\Delta^-_A=\tr\Delta^+_A=D\left(\tilde\rho_A,\rho_A\right) \leqslant
  \delta$ and
  \begin{align}
    \tilde\rho_A = \rho_A + \Delta^+_A - \Delta^-_A\ .
  \end{align}

  Let $A = \tilde\rho_A + \Delta^-_A \geqslant 0$ and let
  $M_A = \tilde\rho_A^{1/2}\,A^{-1/2}$.  Observe that
  $M_A^\dagger M_A = A^{-1/2}\,\tilde\rho_A\,A^{-1/2} \leqslant\Ident$ since
  $\tilde\rho_A\leqslant A$.  Now define the completely positive map
  \begin{align}
    \mathcal{M}_{A\to A}`(\cdot)
    = M_A\,`(\cdot)\,M_A^\dagger
    + \tr`\big[`(\Ident-M_A^\dagger{}M_A)`(\cdot)]\,
    \xi_A\ ,
  \end{align}
  with
  $\xi_A := (M_A\,\Delta^+_A\,M_A^\dagger)/\tr`(M_A\,\Delta^+_A\,M_A^\dagger)
  \geqslant 0$ except if $\tr`(M_A\,\Delta^+_A\,M_A^\dagger)=0$, in which case
  we set $\xi_A:=\Ident_A/\abs{A}$.  In any case $\tr\xi_A=1$ and
  $\tr`(M_A\,\Delta^+_A\,M_A^\dagger)\,\xi_A = M_A\,\Delta^+_A\,M_A^\dagger$.
  The mapping $\mathcal{M}_{A\to A}$ is trace preserving:
  \begin{align}
    \mathcal{M}^\dagger`(\Ident_A)
    = M_A^\dagger\,M_A + `(\Ident - M_A^\dagger M_A)\,\tr\xi_A
    = \Ident_A\ .
  \end{align}
  We now show that $\mathcal{M}_{A\to A}`(\rho_A) = \tilde\rho_A$.  On one hand,
  using $A=\tilde\rho_A+\Delta^-_A=\rho_A+\Delta^+_A$, we have
  \begin{align}
    M_A\,\rho_A\,M_A^\dagger
    = M_A\,A\,M_A^\dagger - M_A\,\Delta^+_A\,M_A^\dagger
    = \tilde\rho_A - M_A\,\Delta^+_A\,M_A^\dagger\ .
    \label{eq:util-smoothing-part-of-bipartite-state-rest-exact-calc-1}
  \end{align}
  while noting that $\rho_A$ lies within the support of $A$ since
  $A=\rho_A+\Delta^+_A$.  We deduce that
  $\tr`(M_A\,\rho_A\,M_A^\dagger) = 1 - \tr`(M_A\,\Delta^+_A\,M_A^\dagger)$.  On
  the other hand,
  \begin{align}
    \tr`\big[`(\Ident- M_A^\dagger M_A)\,\rho_A]\,\xi_A
    &= `\big(1 - \tr`(M_A\,\rho_A\,M_A^\dagger))\,\xi_A
      \nonumber\\
    &= \tr`(M_A\,\Delta^+_A\,M_A^\dagger)\,\xi_A
      \nonumber\\
    &= M_A\,\Delta^+_A\,M_A^\dagger\ ,
    \label{eq:util-smoothing-part-of-bipartite-state-rest-exact-calc-2}
  \end{align}
  and hence,
  combining~\eqref{eq:util-smoothing-part-of-bipartite-state-rest-exact-calc-1}
  with~\eqref{eq:util-smoothing-part-of-bipartite-state-rest-exact-calc-2}
  \begin{align}
    \mathcal{M}_{A\to A}`(\rho_A)
    &= M_A\,\rho_A\,M_A^\dagger
      + \tr`\big[`(\Ident- M_A^\dagger M_A)\,\rho_A]\,\xi_A
      = \tilde\rho_A\ .
  \end{align}
  Define now the state $\hat\rho_{AB}$ as
  \begin{align}
    \hat\rho_{AB}
    = \mathcal{M}_{A\to A}\left[\rho_{AB}\right]
  \end{align}
  where the identity mapping is understood on system $B$.  By properties of
  quantum channels the state on $B$ is preserved, i.e.\@
  $\tr_A\hat\rho_{AB} = \rho_B$ (and in particular we have
  $\tr\hat\rho_{AB}=1$), and we showed above that
  $\tr_B\hat\rho_{AB}=\tilde\rho_A$.
  It remains to see that $\hat\rho_{AB}$ and $\rho_{AB}$ are close in purified
  distance. Let $\ket\rho_{ABC}$ be a purification of
  $\rho_{AB}$. Apply~\cite[\lemmaname~A.4]{Dupuis2013_DH}---itself a
  reformulation of \cite[\lemmaname~15]{Tomamichel2009IEEE_AEP}---with
  $\rho_\mathrm{Lem~A.4} = \rho_{A}$,
  $\sigma_\mathrm{Lem~A.4} = \tilde\rho_{A}$,
  $\Delta_\mathrm{Lem~A.4} = \Delta^-_A$, $G_\text{Lem~A.4}=M_A$ and
  $\ket{\psi_\mathrm{Lem~A.4}} = \ket\rho_{ABC}$ to obtain
  \begin{align}
    P`\big(M_A\rho_{ABC}M_A^\dagger, \rho_{ABC})
    \leqslant \sqrt{\left(2-\tr\Delta^-_A\right)\tr\Delta^-_A}
    \leqslant \sqrt{2\delta}\ .
  \end{align}
  This distance can only decrease if we trace out the system $C$, and thus
  $P`\big(M_A\rho_{AB}M_A^\dagger, \rho_{AB}) \leqslant \sqrt{2\delta}$.
  On the other hand, we have by definition
  \begin{align}
    \hat\rho_{AB} = M_A\,\rho_{AB}\,M_A^\dagger + \Delta'_{AB} \ ,
  \end{align}
  with
  $\Delta'_{AB} = \tr_A`[`(\Ident_A-M_A^\dagger M_A)\,\rho_{AB}]\otimes\xi_{A}
  \geqslant 0$.  Calculate
  $\tr\Delta'_{AB} = \tr`\big[`(\Ident_A-M_A^\dagger M_A)\,\rho_A] =
  \tr`(M_A\,\Delta^+_A\,M_A^\dagger) \leqslant \tr\Delta^+_A \leqslant \delta$,
  and hence $D`\big(M_A\,\rho_{AB}\,M_A^\dagger, \hat\rho_{AB})\leqslant\delta$.
  Finally, by triangle inequality and using
  $P`(\rho,\rho') \leqslant \sqrt{2 D`(\rho,\rho')}$,
  \begin{align*}
    P`(\hat\rho_{AB},\rho_{AB})
    \leqslant P`\big(\hat\rho_{AB}, M_A\,\rho_{AB}\,M_A^\dagger)
    + P`\big(M_A\,\rho_{AB}\,M_A^\dagger, \rho_{AB})
    \leqslant 2\sqrt{2\delta}\ .
    \tag*\qedhere
  \end{align*}
\end{proof}

\begin{lemma}
  \label{lemma:cohrelentr-2-std-purifs-close}
  Let $\sigma_X$, $\hat\sigma_X$ be two states.  Consider another system
  $R\simeq X$.  Then
  \begin{align}
    P(\sigma_X^{1/2}\,\Phi_{X:R}\,\sigma_X^{1/2},
    \hat\sigma_X^{1/2}\,\Phi_{X:R}\,\hat\sigma_X^{1/2})
    \leqslant 2\,\sqrt{D(\sigma_X,\hat\sigma_X)}\ .
  \end{align}
\end{lemma}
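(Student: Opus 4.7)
The plan is to compute the inner product between the two standard purifications directly, and then relate the resulting scalar to the trace distance via the Powers--Størmer inequality. Writing $\ket{\sigma}_{XR} = \sigma_X^{1/2}\ket{\Phi}_{X:R}$ and $\ket{\hat\sigma}_{XR} = \hat\sigma_X^{1/2}\ket{\Phi}_{X:R}$, I would first observe that since $\ket{\Phi}_{X:R} = \sum_k \ket{k}_X\otimes\ket{k}_R$ and the partial transpose identity yields
\begin{align}
  \bra{\sigma}_{XR}\ket{\hat\sigma}_{XR}
  \;=\; \tr\bigl(\sigma_X^{1/2}\,\hat\sigma_X^{1/2}\bigr)
  \;=:\; t\ ,
\end{align}
which is a nonnegative real number bounded above by $1$ (as it is upper bounded by $F(\sigma_X,\hat\sigma_X)\leqslant 1$). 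Since both states are pure, the (generalized) purified distance reduces to $P^2 = 1 - t^2$.

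Next, I would use $1-t^2 \leqslant 2(1-t)$ and the identity
\begin{align}
  2\bigl(1-t\bigr) = \tr\sigma_X + \tr\hat\sigma_X - 2\tr\bigl(\sigma_X^{1/2}\hat\sigma_X^{1/2}\bigr)
  = \bigl\lVert \sigma_X^{1/2} - \hat\sigma_X^{1/2} \bigr\rVert_2^2\ ,
\end{align}
valid because both states are normalized. This reduces the task to controlling the Hilbert--Schmidt distance between the square roots by the trace distance between the states themselves.

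The central step is the Powers--Størmer inequality: for any $A,B\geqslant 0$, one has $\lVert A^{1/2}-B^{1/2}\rVert_2^2 \leqslant \lVert A - B\rVert_1$. Applied with $A=\sigma_X$ and $B=\hat\sigma_X$, this gives
\begin{align}
  \bigl\lVert \sigma_X^{1/2} - \hat\sigma_X^{1/2} \bigr\rVert_2^2
  \leqslant \bigl\lVert \sigma_X - \hat\sigma_X \bigr\rVert_1
  = 2\,D(\sigma_X,\hat\sigma_X)\ .
\end{align}
Chaining the three inequalities yields $P^2 \leqslant 2D(\sigma_X,\hat\sigma_X)$, and therefore $P \leqslant \sqrt{2\,D(\sigma_X,\hat\sigma_X)} \leqslant 2\sqrt{D(\sigma_X,\hat\sigma_X)}$, as claimed.

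The only nonroutine ingredient is Powers--Størmer, which is a classical inequality and will simply be cited (e.g.\ to Bhatia's \emph{Matrix Analysis}); the remaining steps are bookkeeping with the standard purifications and elementary scalar inequalities. Note that the proof actually establishes the slightly stronger bound $P\leqslant\sqrt{2D}$; the looser form $2\sqrt{D}$ in the statement is presumably retained for convenience of use elsewhere in the paper.
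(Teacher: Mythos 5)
Your proof is correct, and it takes a genuinely different route from the paper's. The paper's argument introduces the positive operators $\Delta_X^\pm$ from the trace-distance decomposition $\hat\sigma_X = \sigma_X + \Delta^+_X - \Delta^-_X$, builds the auxiliary pure state $\ket\psi = (1+\epsilon)^{-1/2}(\sigma_X+\Delta_X^+)^{1/2}\ket\Phi_{X:R}$, lower-bounds the overlaps of both standard purifications with $\ket\psi$ via operator monotonicity of the square root, and concludes by triangle inequality; this yields $P \leqslant 2\sqrt{\epsilon/(1+\epsilon)} \leqslant 2\sqrt\epsilon$. Your approach instead computes the overlap $\langle\sigma|\hat\sigma\rangle = \tr(\sigma_X^{1/2}\hat\sigma_X^{1/2})$ directly, converts $1-t^2\leqslant 2(1-t) = \lVert\sigma_X^{1/2}-\hat\sigma_X^{1/2}\rVert_2^2$, and invokes Powers--Størmer. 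The step $t\leqslant 1$ can be seen even more cheaply than via the fidelity bound: Cauchy--Schwarz gives $t = \tr(\sigma_X^{1/2}\hat\sigma_X^{1/2}) \leqslant \sqrt{\tr\sigma_X\,\tr\hat\sigma_X} = 1$. What your route buys is a somewhat cleaner proof with no auxiliary state and the strictly stronger constant $\sqrt{2D}$ rather than $2\sqrt{D}$; what it costs is the reliance on Powers--Størmer as an external ingredient, whereas the paper's argument is self-contained modulo the gentle-measurement-type estimate from ref.~\cite{Dupuis2013_DH} used elsewhere in the paper with the same set of tools (the $\Delta^\pm$ decomposition and operator monotonicity), making the paper's version more uniform in style.
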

\begin{proof}[*lemma:cohrelentr-2-std-purifs-close]
  Let $\epsilon=D(\sigma_X,\hat\sigma_X)$.  Using the properties of the trace
  distance, let $\Delta_X^\pm\geqslant 0$ satisfy
  $\hat\sigma_X = \sigma_X + \Delta^+_X - \Delta^-_X$ with
  $\tr\Delta^+=\tr\Delta^-=\epsilon$.  Let
  $\ket\psi = (1 + \epsilon)^{-1/2}\,(\sigma_X + \Delta^+_X)^{1/2}\ket\Phi_{X:R}
  = (1 + \epsilon)^{-1/2}\,(\hat\sigma_X + \Delta^-_X)^{1/2}\ket\Phi_{X:R}$,
  noting that $\braket\psi\psi = \tr(\sigma_X+\Delta_X^+) / (1-\epsilon) = 1$.
  For any two pure states $\ket\phi, \ket\chi$ we know that
  $P(\proj\phi,\proj\chi) = `\big(1 - \abs{\braket\phi\chi}^2)^{1/2}$.  Our
  strategy for proving the claim is the following: We show that both
  \begin{subequations}
    \label{eq:lemma-cohrelentr-2-std-purifs-close-overlaps-with-psi}
    \begin{align}
      \abs{\matrixel{\psi_{XR}}{\sigma_X^{1/2}}{\Phi_{X:R}}}
      &\geqslant (1+\epsilon)^{-1/2}\ ; \\
      \abs{\matrixel{\psi_{XR}}{\hat\sigma_X^{1/2}}{\Phi_{X:R}}}
      &\geqslant (1+\epsilon)^{-1/2}\ ,
    \end{align}
  \end{subequations}
  and the claim will then follow by triangle inequality for the purified
  distance:
  $P(\sigma_X^{1/2}\,\Phi_{X:R}\,\sigma_X^{1/2},
  \hat\sigma_X^{1/2}\,\Phi_{X:R}\,\hat\sigma_X^{1/2}) \leqslant
  P(\sigma_X^{1/2}\,\Phi_{X:R}\,\sigma_X^{1/2}, \proj\psi) +
  P(\hat\sigma_X^{1/2}\,\Phi_{X:R}\,\hat\sigma_X^{1/2}, \proj\psi) \leqslant
  2\sqrt{1-1/`(1+\epsilon)} \leqslant 2\sqrt{\epsilon/`(1+\epsilon)} \leqslant
  2\sqrt\epsilon$.
  It remains to show the
  properties~\eqref{eq:lemma-cohrelentr-2-std-purifs-close-overlaps-with-psi}.
  We have
  $\matrixel{\psi_{XR}}{\sigma_X^{1/2}}{\Phi_{X:R}} =
  \matrixel{\Phi_{X:R}}{`(\sigma_X^{1/2}+\Delta_X^+)^{1/2}\sigma_X^{1/2}}{\Phi_{X:R}}
  / \sqrt{1+\epsilon} = \tr`\big[(\sigma_X+\Delta^+_X)^{1/2}
  \sigma_X^{1/2}]/\sqrt{1+\epsilon} \geqslant 1/\sqrt{1+\epsilon}$, noting that
  $`(\sigma_X+\Delta^+_X)^{1/2}\geqslant`(\sigma_X)^{1/2}$, and hence
  $\abs{\matrixel{\psi_{XR}}{\sigma_X^{1/2}}{\Phi_{X:R}}}\geqslant(1+\epsilon)^{-1/2}$.
  Similarly,
  $\matrixel{\psi_{XR}}{\hat\sigma_X^{1/2}}{\Phi_{X:R}} =
  \matrixel{\Phi_{X:R}}{`(\hat\sigma_X^{1/2}+\Delta_X^-)^{1/2}
    \hat\sigma_X^{1/2}}{\Phi_{X:R}}/\sqrt{1+\epsilon} \geqslant
  (1+\epsilon)^{-1/2}$.
\end{proof}

\begin{lemma}[Continuity of the relative entropy in its first argument]
  \label{util:continuity-of-relative-entropy-in-1st-arg}
  \index{relative entropy}
  Let $\Gamma\geqslant 0$.  Let $\rho,\sigma$ lie within the support of $\Gamma$.  Assume
  that $D`*(\rho,\sigma)\leqslant\epsilon$.  Then
  \begin{multline}
    \abs*{ \DD{\rho}{\Gamma} - \DD{\sigma}{\Gamma} }
    \\
    \leqslant
    \epsilon\log`*(\rank\Gamma - 1) + h`(\epsilon) + \epsilon\,\norm{\log\Gamma}_\infty\ ,
  \end{multline}
  where $h`(\epsilon)=-\epsilon\log\epsilon-(1-\epsilon)\log(1-\epsilon)$ is the binary
  entropy.
\end{lemma}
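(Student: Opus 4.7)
The plan is to split the relative entropy difference into an entropy difference plus a linear term in $\log\Gamma$, then apply the Fannes--Audenaert inequality to the first piece and Hölder's inequality (sharpened by exploiting $\tr(\rho-\sigma)=0$) to the second. Concretely, I would start from the identity
\begin{align}
  \DD{\rho}{\Gamma} - \DD{\sigma}{\Gamma}
  = `*[H(\sigma) - H(\rho)] + \tr`*[(\sigma-\rho)\,(-\log\Gamma)]\ ,
\end{align}
where $H(\cdot)$ is the von Neumann entropy and the expression makes sense because both $\rho$ and $\sigma$ are supported in the support of $\Gamma$ (so $\log\Gamma$ is well-defined on the relevant subspace).

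Next, I would bound the entropy difference using the Fannes--Audenaert inequality on the effective Hilbert space consisting of the support of $\Gamma$, whose dimension is $\rank\Gamma$. Since $D(\rho,\sigma)\leqslant\epsilon$, Fannes--Audenaert gives $\abs{H(\rho)-H(\sigma)} \leqslant \epsilon\log(\rank\Gamma-1) + h(\epsilon)$; this is exactly the first two terms of the claim, so no work beyond citing the inequality is needed.

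For the second piece, I would use \autoref{util:trace-distance-DeltaPlusMinus} to write $\sigma-\rho = \Delta^+-\Delta^-$ with $\Delta^\pm\geqslant 0$ and $\tr\Delta^\pm = D(\rho,\sigma)\leqslant\epsilon$. Since both $\Delta^+$ and $\Delta^-$ are supported in the support of $\Gamma$, we have $\abs{\tr(\Delta^\pm\log\Gamma)}\leqslant \norm{\log\Gamma}_\infty\tr\Delta^\pm$, yielding $\abs{\tr((\sigma-\rho)\log\Gamma)}\leqslant\epsilon\norm{\log\Gamma}_\infty$ (one may sharpen the estimate, if needed, by subtracting a multiple of the identity from $\log\Gamma$ before applying Hölder, using the fact that $\sigma-\rho$ is traceless).

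No step is truly hard here; the only subtlety is ensuring that the effective dimension for Fannes--Audenaert is $\rank\Gamma$ rather than the ambient dimension, which follows because the assumption that both $\rho$ and $\sigma$ lie in the support of $\Gamma$ lets us restrict all arguments to that subspace from the outset. Adding the two bounds and taking the absolute value yields the stated inequality.
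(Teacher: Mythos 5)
Your decomposition of the relative entropy difference into an entropy part plus a linear part in $\log\Gamma$ is exactly the paper's strategy, and the treatment of the entropy term via Fannes--Audenaert is identical. For the linear term you work with the primal decomposition $\sigma-\rho = \Delta^+-\Delta^-$ while the paper uses the dual characterization (a Hermitian $Z$ with $-\Ident\leqslant Z\leqslant\Ident$); these are two views of the same bound and either is fine.

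There is, however, a real problem that afflicts both your argument and the paper's. Your triangle-inequality step gives
\begin{align}
  \bigl|\tr\bigl[(\sigma-\rho)\log\Gamma\bigr]\bigr|
  \;\leqslant\; \norm{\log\Gamma}_\infty\,\bigl(\tr\Delta^+ + \tr\Delta^-\bigr)
  \;=\; 2\epsilon\,\norm{\log\Gamma}_\infty\ ,
\end{align}
\emph{not} $\epsilon\,\norm{\log\Gamma}_\infty$ as you assert; the dual route has the same issue because $\max_{-\Ident\leqslant Z\leqslant\Ident}\tfrac12\tr\bigl[Z(\rho-\sigma)\bigr] = D(\rho,\sigma)$, so $\tr\bigl[(\rho-\sigma)Z\bigr]\leqslant 2D(\rho,\sigma)$ for the chosen $Z$. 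Your parenthetical ``subtract a multiple of the identity and use tracelessness'' is the right instinct, but it only yields $\epsilon\bigl(\lambda_{\max}(\log\Gamma)-\lambda_{\min}(\log\Gamma)\bigr)$, which can still be as large as $2\epsilon\norm{\log\Gamma}_\infty$ when the spectrum of $\log\Gamma$ straddles zero. In fact the lemma as stated is false: take $\Gamma=\mathrm{diag}(2^L,2^{-L})$, $\rho=\mathrm{diag}(\epsilon,1-\epsilon)$, $\sigma=\mathrm{diag}(0,1)$. Then $D(\rho,\sigma)=\epsilon$, $\norm{\log\Gamma}_\infty=L$, $\rank\Gamma=2$, and a direct computation gives
\begin{align}
  \bigl|\,\DD{\rho}{\Gamma}-\DD{\sigma}{\Gamma}\,\bigr|
  = h(\epsilon) + 2\epsilon L\ ,
\end{align}
which exceeds the claimed $h(\epsilon)+\epsilon L$ for every $\epsilon,L>0$. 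The paper's own proof commits the same factor-of-two slip at the same step (it invokes the dual SDP but drops the $\tfrac12$). So the bound your argument actually establishes, $\epsilon\log(\rank\Gamma-1) + h(\epsilon) + 2\epsilon\norm{\log\Gamma}_\infty$, is the correct statement, and the lemma in the paper should be read with that correction.
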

\begin{proof}[*util:continuity-of-relative-entropy-in-1st-arg]
  First, write
  \begin{align}
    \DD{\rho}{\Gamma} = \tr\,`*[ \rho\log\rho - \rho\log\Gamma ]
     = - \HH{\rho} - \tr\,`*[\rho\log\Gamma]\ ,
  \end{align}
  and so
  \begin{align}
    \abs{\DD\rho\Gamma-\DD\sigma\Gamma}
    &\leqslant \abs*{\HH\sigma-\HH\rho}
    + \abs*{\tr\,`*[\sigma\log\Gamma] - \tr\,`*[\rho\log\Gamma]}\ .
  \end{align}
  Using the continuity bound of Audenaert~\cite{Audenaert2007JPA_sharp}, we have
  \begin{align}
    \abs*{\HH{\rho} - \HH{\sigma}} \leqslant \epsilon\,\log`*(\rank\Gamma-1) + h(\epsilon)\ ,
  \end{align}
  where the states $\rho$ and $\sigma$ can be seen as living in a subspace of the full
  Hilbert space of dimension at most $\Gamma$ (because they must both lie within the
  support of $\Gamma$), and where
  $h(\epsilon)=-\epsilon\ln\epsilon-(1-\epsilon)\ln(1-\epsilon)$ is the binary entropy.
  On the other hand,
  \begin{align*}
    \tr\rho\log\Gamma - \tr\sigma\log\Gamma
    &= \norm{\log\Gamma}_\infty\,\tr\,`\bigg[(\rho-\sigma) \frac{\log\Gamma}{\norm{\log\Gamma}_\infty} ]
    \\
    &\leqslant \norm{\log\Gamma}_\infty\,D`*(\rho,\sigma)\ ,
  \end{align*}
  as $\log\Gamma/\norm{\log\Gamma}_\infty$ is a valid candidate for $Z$ in
  \autoref{util:trace-distance-DeltaPlusMinus}.  Inverting the roles of $\rho$ and
  $\sigma$ in the equation above we finally obtain:
  \begin{align}
    \abs*{\tr\rho\log\Gamma - \tr\sigma\log\Gamma}
    \leqslant \norm{\log\Gamma}_\infty\,D`*(\rho,\sigma)
    \leqslant\norm{\log\Gamma}_\infty\cdot\epsilon\ . \tag*\qedhere
  \end{align}
\end{proof}


%

\end{document}